\newcommand{\note}[1]{\todo[color=green!40]{{#1}}}
\newcommand{\notein}[3][]{
  \todo[color=green!40, inline, caption={#2}, #1]{%
    \begin{minipage}{\textwidth-4pt}%
      #3%
    \end{minipage}%
  }%
}
\tiny\color{gray},
\theoremstyle{acmdefinition}
\newtheorem{remark}[theorem]{Remark}}
\newcommand{\derive}[2]{\genfrac{}{}{0.5pt}{0}{\begin{gathered}#1\end{gathered}}{#2}}
\newcommand{\reduce}[1]{\xrightarrow{#1}}
\newcommand{\reduceTwoLabel}[2]{{#1}\leadsto{#2}}
\newcommand{\reduceTwo}[2]{\xrightarrow{\reduceTwoLabel{#1}{#2}}}
\newcommand{\reduceTwoS}[2]{\xrightarrow{\overline{\reduceTwoLabel{#1}{#2}}}}
\newcommand{\curl}[1]{\{#1\}}
\newcommand{\mbC}{\mathbb{C}}
\newcommand{\serverRed}[4]{\langle #1 \rangle \reduceTwoS{#2}{#3} \langle #4 \rangle}
\newcommand{\serverRedInt}[4]{\langle #1, #2, #3 \rangle \downarrow #4}
\newcommand{\tx}[1]{\text{#1}}
\newcommand{\ttt}[1]{\texttt{#1}}
\newcommand{\tsc}[1]{\textsc{#1}}
\newcommand{\Rule}[3]{\derive{\begin{gathered}#1\end{gathered}}{#2}\ \textsc{#3}}
\newcommand{\setcomp}[2]{\curl{#1\ |\ #2}}
\newcommand{\unit}{\ttt{()}}
\newcommand{\wildcard}{\ttt{"*"}}
\newcommand{\types}[2]{#1 : #2}
\newcommand{\typesTo}[3]{#1 \vdash \types{#2}{#3}}
\newcommand{\subtypes}[2]{#1 <: #2}
\newcommand{\subtypesTo}[3]{#1 \vdash \subtypes{#2}{#3}}
\newcommand{\eqtypesTo}[3]{#1 \vdash #2 =:= #3}
\newcommand{\mutualTo}[3]{#1 \vdash #2 \rightleftharpoons #3}
\newcommand{\memberOf}[2]{#1 \mathrel{\in_G} #2}
\newcommand{\disjoint}[3]{#1 \vdash #2 \diamond #3}
\newcommand{\envJ}[1]{\vdash #1\ \tx{env}}
\newcommand{\typeJ}[2]{#1 \vdash #2\ \tx{type}}
\newcommand{\bigT}{\mathcal{T}}
\newcommand{\singleton}[1]{\underline{#1}}
\newcommand{\subs}[3]{#1[#2 \mapsto #3]}
\newcommand{\typeUnion}[2]{{#1} \cup {#2}}
\newcommand{\match}[3]{#1\ttt{ match }\curl{#2}_{#3}}
\newcommand{\letExp}[2]{\ttt{let }#1\ttt{ = }#2\ttt{ in }}
\newcommand{\encoded}[1]{\llbracket #1 \rrbracket}
\newcommand{\dual}[1]{\overline{#1}}
\newcommand{\nil}{\ttt{nil}}
\newcommand{\concat}[2]{{#1}\mathbin{\ttt{{:}\!{:}}}{#2}}
\newcommand{\listHead}{\ttt{head }}
\newcommand{\listTail}{\ttt{tail }}
\newcommand{\abs}[2]{\lambda \types{#1}{#2}.\ }
\newcommand{\tabs}[2]{\lambda \subtypes{#1}{#2}.\ }
\newcommand{\quant}[2]{\forall \subtypes{#1}{#2}.\ }
\newcommand{\quantT}[1]{\forall #1.\ }
\newcommand{\Chan}[1]{\tx{Chan}[#1]}
\newcommand{\ServerRef}[1]{\tx{ServerRef}[#1]}
\newcommand{\TableEntry}{\tx{TableEntry}}
\newcommand{\PPPPEntity}{\tx{P4Entity}}
\newcommand{\hole}{\ttt{[\,]}}
\newcommand{\listT}[1]{[#1]}
\newcommand{\stringTerm}[1]{\text{"\texttt{#1}"}}
\newcommand{\hlight}[1]{
  \setlength\fboxsep{1pt}%
  \colorbox{yellow}{{#1}}}
\newcommand{\hlightM}[1]{\hlight{${#1}$}}
\newcommand{\fSub}{$F_{<:}$\xspace}
\newcommand{\dom}[1]{\operatorname{dom}\!\left({#1}\right)}
\newcommand{\reduceStar}{\mathrel{\rightarrow}^{\!*}}
\newcommand{\change}[3]{#3}
\newcommand{\changeNoMargin}[1]{#1}
  \patchcmd{\@addmarginpar}{\ifodd\c@page}{\ifodd\c@page\@tempcnta\m@ne}{}{}
  \newcounter{change}
  \renewcommand{\change}[3]{%
    \refstepcounter{change}%
    {\marginpar{%
      {\color{red}\scriptsize\textbf{(\thechange)\label{#1} {#2}}}}}{{\color{red}{#3}%
    }}%
  }%
  \renewcommand{\changeNoMargin}[1]{{\color{red}{#1}}}%
\newcommand{\ourFormalLang}{$F_{\text{P4R}}$\xspace}
\newcommand{\ourDSL}{\texttt{P4R-Type}\xspace}
\begin{document}

\title{\ourDSL: a Verified API for P4 Control Plane Programs (Technical Report)}         
\titlenote{This is an extended version of the paper published at OOPSLA 2023,
           available at: \url{https://doi.org/10.1145/3622866}}             


\author{Jens Kanstrup Larsen}
\orcid{0009-0006-4039-3808}             
\affiliation{
  \position{PhD student}
  \department{DTU Compute}              
  \institution{Technical University of Denmark} 
  \streetaddress{Richard Petersens Plads, Bygning 321}
  \city{Kongens Lyngby}
  \postcode{2800}
  \country{Denmark}                    
}
\email{jekla@dtu.dk}          

\author{Roberto Guanciale}
\orcid{0000-0002-8069-6495}             
\affiliation{
  \position{Associate professor}
  \department{Department of Computer Science}              
  \institution{KTH Royal Institute of Technology}            
  \streetaddress{Lindstedtsvägen 5, Plan 5}
  \city{Stockholm}
  \postcode{114 28}
  \country{Sweden}                    
}
\email{robertog@kth.se}          

\author{Philipp Haller}
\orcid{0000-0002-2659-5271}             
\affiliation{
  \position{Associate professor}
  \department{Department of Computer Science}              
  \institution{KTH Royal Institute of Technology}            
  \streetaddress{Lindstedtsvägen 5, Plan 5}
  \city{Stockholm}
  \postcode{114 28}
  \country{Sweden}                    
}
\email{phaller@kth.se}          

\author{Alceste Scalas}
\orcid{0000-0002-1153-6164}             
\affiliation{
  \position{Associate professor}
  \department{DTU Compute}              
  \institution{Technical University of Denmark} 
  \streetaddress{Richard Petersens Plads, Bygning 321}
  \city{Kongens Lyngby}
  \postcode{2800}
  \country{Denmark}                    
}
\email{alcsc@dtu.dk}          

\begin{abstract}
    Software-Defined Networking (SDN) significantly simplifies programming,
    reconfiguring, and optimizing network devices, such as switches and routers.
    The \emph{de facto} standard for programmming SDN devices is the P4 language.
    However, the flexibility and power of P4, and SDN more generally, gives rise
    to important risks. As a number of incidents at major cloud providers have
    shown, errors in SDN programs can compromise the availability of networks,
    leaving them in a non-functional state. The focus of this paper are errors
    in control-plane programs that interact with P4-enabled network devices via
    the standardized P4Runtime API. For clients of the P4Runtime API it is easy
    to make mistakes that lead to catastrophic failures, despite the use of
    Google's Protocol Buffers as an interface definition language.

    This paper proposes \ourDSL, a novel verified P4Runtime API for Scala that
    performs static checks for P4 control plane operations, ruling out
    mismatches between P4 tables, allowed actions, and action parameters. As a
    formal foundation of \ourDSL, we present the \ourFormalLang calculus and its
    typing system, which ensure that well-typed programs never get stuck by
    issuing invalid P4Runtime operations. We evaluate the safety and flexibility
    of \ourDSL with 3 case studies. To the best of our knowledge, this is the
    first work that formalises P4Runtime control plane applications,
    and a typing discipline ensuring the correctness of P4Runtime operations.

\end{abstract}


\begin{CCSXML}
<ccs2012>
   <concept>
       <concept_id>10011007.10011006.10011039</concept_id>
       <concept_desc>Software and its engineering~Formal language definitions</concept_desc>
       <concept_significance>500</concept_significance>
       </concept>
   <concept>
       <concept_id>10011007.10011006.10011050.10011017</concept_id>
       <concept_desc>Software and its engineering~Domain specific languages</concept_desc>
       <concept_significance>300</concept_significance>
       </concept>
   <concept>
       <concept_id>10003033.10003034.10003038</concept_id>
       <concept_desc>Networks~Programming interfaces</concept_desc>
       <concept_significance>500</concept_significance>
       </concept>
 </ccs2012>
\end{CCSXML}

\ccsdesc[500]{Software and its engineering~Formal language definitions}
\ccsdesc[300]{Software and its engineering~Domain specific languages}
\ccsdesc[500]{Networks~Programming interfaces}

\keywords{Software-defined networking, P4, P4Runtime, Type systems, Semantics}  

\maketitle

\section{Introduction}
\label{sec:intro}

\emph{Software-Defined Networking (SDN)} is a modern approach to network
management where network traffic is handled by highly flexible and programmable
devices (such as switches and routers).  The high-level design of SDN is based
on a separation of the \emph{data plane} and the \emph{control plane}:
\begin{itemize}
  \item the \emph{data plane} is responsible for processing network packets by
    applying a set of programmable rules.  For example, a network router may be
    programmed to match network packets based on their content, and then
    forward, rewrite, or drop them;
  \item the \emph{control plane} is responsible for setting up and updating the
    packet processing rules executed in the data plane.  For example, a company
    may enact a new network management policy by running a control plane program
    that reconfigures their routers with new rules.
\end{itemize}
%
This separation simplifies network management and enables network administrators
to quickly and easily reconfigure and optimize network traffic flows.%


The de facto Open Source standard for SDN is P4~\cite{p4spec}. In P4, the
the data plane is programmed by specifying packet processing \emph{tables} which
select the \emph{actions} to perform when a network packet matches certain
patterns.  The P4 standard also defines a control plane API (called P4Runtime~\cite{p4rtspec}) for writing programs that query or alter the configuration of
P4-enabled network devices.

\changeNoMargin{
Unfortunately, the power and ease of automation of SDN
come with risks: a mistake in an SDN program can leave a network in a
non-functional state.  Indeed, erroneous configuration changes have compromised
the availability of entire regions of large cloud providers~\cite{Sharwood16}.%
}
\change{change:rg:1}{Add motivation}{
A recent study by~\citeN{bhardwaj2021comprehensive} shows that
38.8\% of SDN bugs are triggered when the controller
\emph{``attempts to process system configurations''}
--- i.e.~read, add, update, delete table entries;
the authors add that \emph{``this fact is astounding because a critical
motivation for SDN is to move towards automation and eliminate
configuration-based errors.''}
}
\changeNoMargin{
In this paper, we focus on statically preventing a specific form of P4Runtime
controller bug: attempting to read/insert/modify/delete P4 table entries that
do not conform to the actual table layout of the P4 data plane. Such erroneous
attempts are not statically checked by the official, weakly-typed P4Runtime
API, as we explain below. Preventing this form of bug does not avert all
possible P4 configuration processing bugs (e.g. a P4Runtime controller may
insert a well-formed but incorrect routing table entry, or omit or delete a
necessary entry) --- but it provides a baseline correctness guarantee towards
more thorough static verification of P4Runtime applications (that we discuss as
future work in \Cref{sec:conclusions}).
}

\subsection*{The Problem with Weakly-Typed P4Runtime APIs}

For a concrete example of how mistakes could happen, consider
\Cref{fig:build-table-error} (left): it is based on the P4
documentation~\cite{p4tutorial}, and shows a control plane program written in
Python using the official P4Runtime API.  The program is connected to a
P4-enabled switch, and inserts a new entry (i.e.~a packet processing rule) into
a table called \verb|IPv4_table|, meaning: \emph{``if a packet has destination
address \texttt{10.0.1.1}, then perform the action \texttt{IPv6\_forward} with
the given parameters.''}  (We provide more details about P4
in~\Cref{sec:background}.)

\begin{figure}
\begin{minipage}{0.42\linewidth}
\begin{lstlisting}[language=Python,escapechar=\%,basicstyle=\scriptsize\ttfamily,frame=single,numbersep=1mm]
table_entry = buildTableEntry(
  table_name = %{\color{red}"IPv4\_table"}%,
  match_fields = {
    %{\color{blue}"IPv4\_dst\_addr""}%: ("10.0.1.1", 32)
  },
  action_name = %{\color{red}"IPv6\_forward"}%,
  action_params = {
    %{\color{blue}"mac\_dst"}%: "08:00:00:00:01:00",
    %{\color{blue}"port"}%: 1
  })
switch.WriteTableEntry(table_entry)
\end{lstlisting}
\end{minipage}
\hfill
\begin{minipage}{0.57\linewidth}
\begin{lstlisting}[language=Scala,escapechar=\%,basicstyle=\scriptsize\ttfamily,frame=single,numbersep=1mm]
insert(switch,
  TableEntry(
    %{\color{red}"IPv4\_table"}%,
    Some(%{\color{blue}"ipv4\_dst\_addr"}%, LPM(bytes(10,0,1,1), 32)),
    %{\color{red}"IPv6\_forward"}%, %{\hlight{\color{red}// ERR: invalid action in IPv4\_table}}%
    ((%{\color{blue}"mac\_dst"}%, bytes(8,0,0,0,1,0)),
     (%{\color{blue}"port"}%, 1)),
  )
)
\end{lstlisting}
\end{minipage}
\vspace{-3mm}
\caption{Example of control plane P4 programs.  Left: a Python program using the
official P4Runtime API.  Right: the equivalent Scala 3 program using verified
API \ourDSL.}
\label{fig:build-table-error}
\end{figure}

The Python program in \Cref{fig:build-table-error} contains an error: the table
\verb|IPv4_table| in the switch does \emph{not} allow for an action called
\verb|IPv6_forward| (although that action may be allowed by other tables in the
same switch).  The P4Runtime Python API detects this discrepancy at run-time,
and throws an exception --- which may cause the program to fail half-way during
a series of related P4 rule updates, leaving the network configuration in an
inconsistent state.  The same program may have other problems: e.g.~does the
intended action for \verb|IPV4_table| actually take two parameters?  Is one of
such parameters actually called \verb|mac_dst|?  Again, the official
P4Runtime Python API would only spot these issues at run-time, by throwing
exceptions.

As this example shows, it is all too easy to make mistakes when writing control
plane programs in scripting languages (like Python) that don't perform static
checks to ensure the validity of P4Runtime operations.  However, statically
detecting such errors is not trivial: to prevent errors without being
overly-restrictive, the static checks must take into account the actual
\emph{dependencies} between the packet processing tables available in a
P4-enabled device, the actions allowed by each specific table, and the
parameters expected by each specific action.

\change{change:requirements}{Add requirements}{%
Our objective is to design and develop a strongly-typed P4Runtime API that
addresses the issues above, while satisfying \textbf{three key requirements}:

\begin{enumerate}[label={(\textbf{R{\arabic*}})}]
  \item\label{requirement:formal-foundation} the API must have a formal
    foundation for proving that well-typed programs never get stuck by issuing
    invalid P4Runtime operations or receiving unexpected responses;
  \item\label{requirement:existing-prog-lang} the API must be written and usable
    in an \emph{existing} programming language --- i.e.~the implementation of
    the formal results (from requirement \ref{requirement:formal-foundation})
    must not depend on a bespoke programming language nor type checker;
  \item\label{requirement:codegen} if the API depends on code
    generation, the amount of generated code must be minimal.
\end{enumerate}%
}

\subsection*{Our Proposal: \ourDSL and its Formal Foundation \ourFormalLang}

This paper proposes \ourDSL, a novel verified P4Runtime API for Scala 3 that
performs \emph{static} checks for P4 control plane operations, ruling out
mismatches between P4 tables, allowed actions, and action parameters.  Programs
written with \ourDSL look like the one shown in \Cref{fig:build-table-error}
(right): albeit similar to its Python equivalent, the \ourDSL program does
\emph{not} compile, because (thanks to its type constraints) the off-the-shelf
Scala 3 compiler can spot that the action on line 5 is not valid for the table
\verb|IPv4_Table|.  The Scala 3 compiler can also similarly spot any discrepancy
between a selected action and the supplied parameters.

\ourDSL has a formal foundation: \ourFormalLang, a calculus and typing system
allowing us to state and prove that \emph{``well-typed \ourFormalLang programs
never perform invalid P4Runtime operations''} (like the mistake in
\Cref{fig:build-table-error}).  \ourFormalLang is specifically designed for
implementation as a Scala 3 API, and for enabling the ``Python-like''
P4Runtime programs shown in \Cref{fig:build-table-error}.

To the best of our knowledge, this is the first work that formalises control
plane applications based on P4Runtime, and a typing discipline to ensure the
correctness of P4Runtime operations.

\subsection*{Contributions and Outline of the Paper}

\change{change:clarify-contribs}{}{%
After a background and overview (\Cref{sec:background}), we introduce our main
contributions:

\begin{enumerate}
\item The first formal model of P4Runtime networks
  (\Cref{sec:modelling-p4runtime-networks}) consisting of clients written in our
  novel formal language \ourFormalLang (\Cref{sec:p4runtime-client-syntax}) and
  servers with different configurations (\Cref{sec:p4runtime-server-model})
  interacting with each other (\Cref{sec:p4runtime-networks}).

\item A typing discipline for \ourFormalLang (\Cref{sec:type-system}) ensuring
  that if a client is well-typed w.r.t.~the configuration of the surrounding
  P4Runtime network servers (under the server-configuration-to-type encoding we
  introduce in \Cref{def:server-config-encoding}), then the client will never
  perform invalid P4Runtime operations nor get stuck
  (\Cref{theorem:preservation,theorem:progress}).  To ensure that these results
  translate into a verified P4Runtime client API in an \emph{existing}
  programming language (as per requrement~\ref{requirement:existing-prog-lang}
  above), we equip the \ourFormalLang typing system with a limited form of
  type-level computation based on \emph{match types} \cite{BlanvillainBKO22} and
  \emph{singleton types}, both available in Scala 3.  Our development of
  \ourFormalLang also contributes a novel combination of
  \begin{enumerate*}[label=\emph{(\roman*)}]
    \item match types \emph{without} default cases,
    \item structural subtyping, and
    \item singleton types%
  \end{enumerate*}: the details and challenges are explained in
  \Cref{remark:differences-blanvillain22}.  (Besides, our theory and results are
  not Scala-specific and can be embedded e.g. in dependently-typed languages
  like Coq.)

\item The first implementation of a verified P4Runtime API, called \ourDSL
  (\Cref{sec:implementation}) and published as companion artifact of this
  paper. \ourDSL is based on the formalisation and results of \ourFormalLang, is
  written and usable in Scala 3, and only depends on a small amount of
  autogenerated type definitions (based on our server-configuration-to-type
  encoding in \Cref{def:server-config-encoding}): therefore, \ourDSL satisfies
  the requirements \ref{requirement:formal-foundation},
  \ref{requirement:existing-prog-lang}, and \ref{requirement:codegen} above.  We
  demonstrate the features of \ourDSL with 3 case studies (\Cref{sec:case-study}),
  and discuss the drawbacks of alternative approaches
  (\Cref{why-match-singleton-types}).
\end{enumerate}
}

We discuss the related work in \Cref{sec:related-work} and conclude in
\Cref{sec:conclusions}.

\section{Background and Overview}
\label{sec:background}

We now provide an overview of Software Defined Networks
(\Cref{sec:overview-sdn}), the P4 data plane (\Cref{sec:overview-p4}), and P4Runtime
(\Cref{sec:overview-p4runtime}), followed by a bird's eye view of our
approach (\Cref{sec:overview-approach}).

\subsection{Software Defined Networks}
\label{sec:overview-sdn}

Software Defined Networking (SDN) is an umbrella that covers several technologies to sopport dynamic and programmable network reconfigurations. SDN can be used to improve network performance (e.g.~intelligent load balancers~\cite{belgaum2020systematic}), efficiency (e.g. network resource virtualisation and partitioning among customers of a cloud provider~\cite{ordonez2017network}), and security (AI based anomaly detection systems~\cite{garg2019hybrid}).
As mentioned in \Cref{sec:intro}, an SDN
consists (roughly speaking) of at least
two architectural components:
\begin{itemize}
\item \emph{data plane} devices with direct control of packet processing ---
  e.g.~network interface cards, or switches, or a network of such devices; and
\item a centralised or distributed \emph{controller}, which is in charge of
  interacting, via an \emph{interface}, with the data plane devices to manage
  network flows.
\end{itemize}

\subsection{Programmable Data Plane and the P4 Language}
\label{sec:overview-p4}

For many years SDN data plane elements were implemented with fixed-function application-specific integrated circuits (ASICs),
with very limited programmability.  In fact, programmable switches were two orders of magnitude slower than the corresponding fixed-function ASICs. However, newer programmable switches can run as fast as fixed-function ones. The key for this improvement is the usage of dedicated programmable accelerators, called Network Processing Units (NPUs), and FPGAs.
Programmable data-processing enables the support of customised network protocols, for example VPN-aware data processing and in-line packet inspection.
NPUs and FPGAs cannot be programmed using general purpose languages. Hence, high-speed data plane must be programmed with dedicated programming languages.
Recently, P4~\cite{p4spec} has risen as the main Domain Specific Language for data plane. P4 can be compiled to a variety of targets, including NPUs (e.g.~Intel Tofino), FPGAs, and software switches.
The key form of configuration for a P4 program are \emph{tables}, which are manipulated
by the control plane.

The P4 fragment below defines the tables \verb|IPv4_table| and \verb|IPv6_table|,
with an ``if'' statement that inspect the header of an incoming network packet and selects one of the two tables.
When the program executes \verb|IPv4_table.apply()|, the P4 system performs 3 steps:
\begin{itemize}
  \item it computes a \emph{key} value from the network packet being processed.
  In this case, the key is the IPv4 destination address of the packet;
\end{itemize}

\smallskip
\noindent%
\begin{minipage}{0.45\linewidth}
\vspace{1mm}
\begin{lstlisting}[escapechar=\%,basicstyle=\scriptsize\ttfamily,frame=single]
table %{\color{blue}"IPv4\_table"% {
  key = { hdr.ip.IPv4_dst_addr: lpm; }
  actions = { Drop_action;
              IPv4_forward; }
}

table% {\color{blue}"IPv6\_table"% {
  key = { hdr.ip.IPv6_dst_addr: lpm; }
  actions = { Drop_action;
              IPv6_forward; }
}
...
if (hdr.ip.version == 4w4)
  IPv4_table.apply();
else
  IPv6_table.apply();
\end{lstlisting}
\end{minipage}
\begin{minipage}{0.53\linewidth}
\begin{itemize}
  \item it \emph{matches} the packet against the entries
    in \verb|IPv4_table|. This matching operation can use several algorithms: in this case, \verb|IPv4_table| uses the
    longest-prefix match (\verb|lpm|), hence system chooses the table entry with the longest IP prefix matching the key. Each table entry represents
    a packet processing rule: it specifies a value to match against the packet, an \emph{action}
    to execute after a successful match, and its arguments;
  \item when a table entry matches the packet,
    the P4 system executes the corresponding action,
    which is a function that must be defined in the P4 program itself.
\end{itemize}
\end{minipage}%
\vspace{2mm}

\noindent
In this example, the definition of \verb|IPv4_table| says that a table entry can
select one of two possible actions (\verb|Drop_action| and \verb|IPV4_forward|,
defined below) to execute after a packet match:


\begin{lstlisting}[escapechar=\%,basicstyle=\scriptsize\ttfamily,frame=single]
action Drop_action() {
  outCtrl.outputPort = DROP_PORT;
}
action IPV4_forward(EthernetAddress mac_dst, PortId port) {
  packet.ethernet.dstAddr = mac_dst;
  packet.ip.version = 4w4;
  packet.ip.ttl = headers.ip.ttl - 1;
  outCtrl.outputPort = port;
}

\end{lstlisting}
\verb|Drop_action| does not require any argument and simply forwards the packet to a ``port'' that drops (i.e.~discards) it. The \verb|IPv4_forward| action requires two arguments: therefore, when a table entry
in \verb|IPv4_table| wants to invoke the action \verb|IPv4_forward|, the table entry must also specify a destination Ethernet address and a port.
In the following section we briefly discuss exemples that violate these constraints.






\subsection{P4Runtime and P4Info Metadata Files}
\label{sec:overview-p4runtime}

%
%
%
Today, applications that control P4-enabled devices use a control plane API
called P4Runtime~\cite{p4rtspec}: the control application (acting as a
P4Runtime client) connects to P4 device (which acts as a P4Runtime server) and
issues API calls to query and modify the device configuration.

Thanks to a portable design based on Google's Protobuf interface definition language, P4Runtime programs may be written in any programming language with Protobuf support --- and the official P4Runtime API implementation is written in Python \todo{Reference}.
The use of general-purpose programming languages for control plane applications is possible because their performance is less critical the one of the data plane;
moreover, general purpose languages allow for reusing existing software stacks and support a wide range of application domains.

In the usual workflow, when a P4 data plane program is compiled, it yields two
outputs:
\begin{enumerate}
  \item a packet-processing ``executable'' deployed on a P4-enabled device
    (e.g. on a switch); and
  \item a \emph{P4Info metadata file}, which summarises all the entities defined
  in the P4 program --- in particular, its tables and actions.  Each entity has
  a numeric identifier.
\end{enumerate}
To interact with a P4-enabled device (e.g.~to add entries to its tables), a
P4Runtime program uses the P4Info metadata corresponding to the P4 program
deployed on the device.


\begin{figure}
\begin{lstlisting}[escapechar=\%,basicstyle=\scriptsize\ttfamily,frame=single]
"tables": [ {    "preamble": {"name": %{\color{blue}"IPv4\_table"}%},
              "matchFields": [ {"name": %{\color{blue}"IPv4\_dst\_addr"}%, "matchType": "LPM"} ],
               "actionRefs": [ {"id": 1010} , {"id": 3030} ] },
            {    "preamble": {"name": %{\color{blue}"IPv6\_table"}%},
              "matchFields": [ {"name": %{\color{blue}"IPv6\_dst\_addr"}%, "matchType": "LPM"} ],
               "actionRefs": [ {"id": 2020} , {"id": 3030} ] } ],
"actions": [ { "preamble": {"id": 1010, "name": %{\color{blue}"IPv4\_forward"}%},
                 "params": [ {"name": %{\color{blue}"mac\_dst"}%, "bitwidth": 48},
                             {"name": %{\color{blue}"port"}%, "bitwidth": 9} ] },
             { "preamble": {"id": 2020, "name": %{\color{blue}"IPv6\_forward"}%},
                 "params": [ {"name": %{\color{blue}"mac\_dst"}%, "bitwidth": 48},
                             {"name": %{\color{blue}"port"}%, "bitwidth": 9 } ] },
             { "preamble": {"id": 3030, "name": %{\color{blue}"Drop\_action"}%},
                 "params": [ ] } ]
\end{lstlisting}
\vspace{-3mm}%
\caption{Example P4Info metadata file with the tables and actions of the P4 program in \Cref{sec:overview-p4}.
\changeNoMargin{
    For brevity, we only show actions IDs and omit tables IDs.
  }}
\label{fig:p4info}
\end{figure}

\Cref{fig:p4info} shows an example of P4Info metadata file for the P4 program in
\Cref{sec:overview-p4}.  From this metadata we can see that a P4 device running
that program has two tables: \verb|IPv4_table| and \verb|IPv6_table|.  Each
table has one key that is an address of the corresponding IP protocol version.
The entries of \verb|IPv4_table| and \verb|IPv6_table| can invoke actions
\verb|IPv4_forward| and \verb|IPv6_forward| (respectively) and must provide a
MAC address and port as action's arguments.  All table entries can invoke
\verb|Drop_action|, which has no parameters.

P4Runtime applications can change the configuration of a P4-enabled device by
adding, updating, and deleting table entries.  P4Runtime applications can also
read the table contents, possibly using \emph{wildcards} to filter the results.
As shown by the Python program in Figure~\ref{fig:build-table-error}, it is easy
to make mistakes if the language does not perform static checks on table
updates.  Specifically, the P4Info metadata in \Cref{fig:p4info} says that any
new entry added to \verb|IPv4_table| cannot use the \verb|IPv6_forward| action
--- which is the mistake highlighted in \Cref{fig:build-table-error}.

\subsection{An Overview of Our Approach}
\label{sec:overview-approach}

To address the issues described above, we propose \ourDSL: a verified P4Runtime
API for Scala 3, with a tool to translate P4Info metadata into Scala types.  Our
approach is depicted below.  

\begin{center}
  \includegraphics[width=0.75\linewidth]{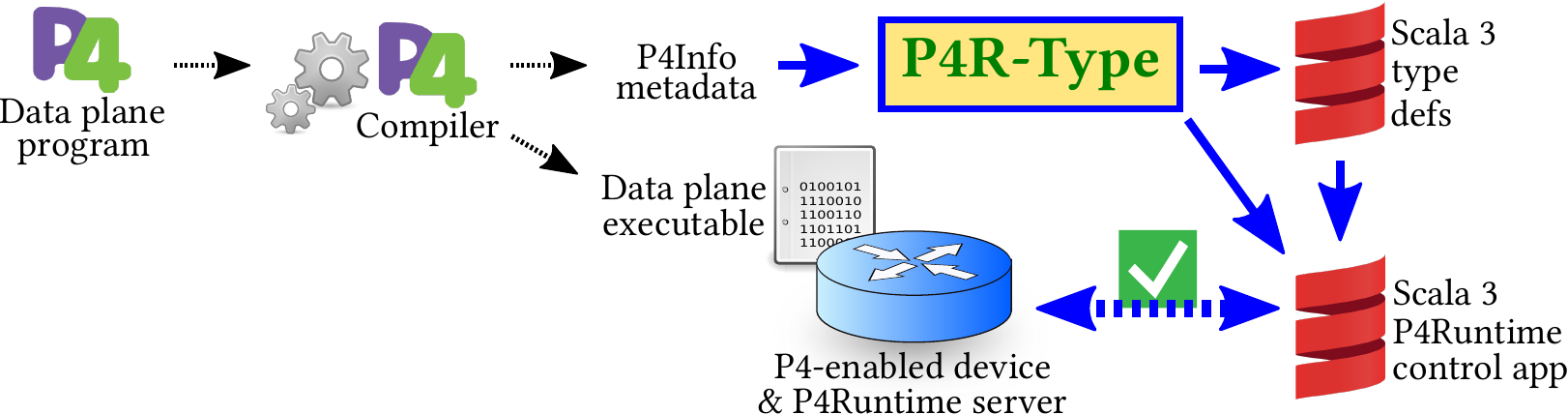}%
\end{center}

As usual, a P4 data plane program is compiled and deployed on one or more
P4-enabled network devices; the program's P4Info metadata is made available for
P4Runtime applications.  This is where \ourDSL comes into play: a programmer
can write a P4Runtime control application by importing (1) the \ourDSL library,
and (2) a set of type definitions automatically generated from P4Info metadata.
If the \ourDSL-based application type-checks, then it will be able to connect to
a P4 device (acting as P4Runtime server)
and perform P4Runtime operations that never violate the

\vspace{0.5mm}%
\noindent%
\begin{minipage}{0.55\linewidth}
\includegraphics[width=\linewidth]{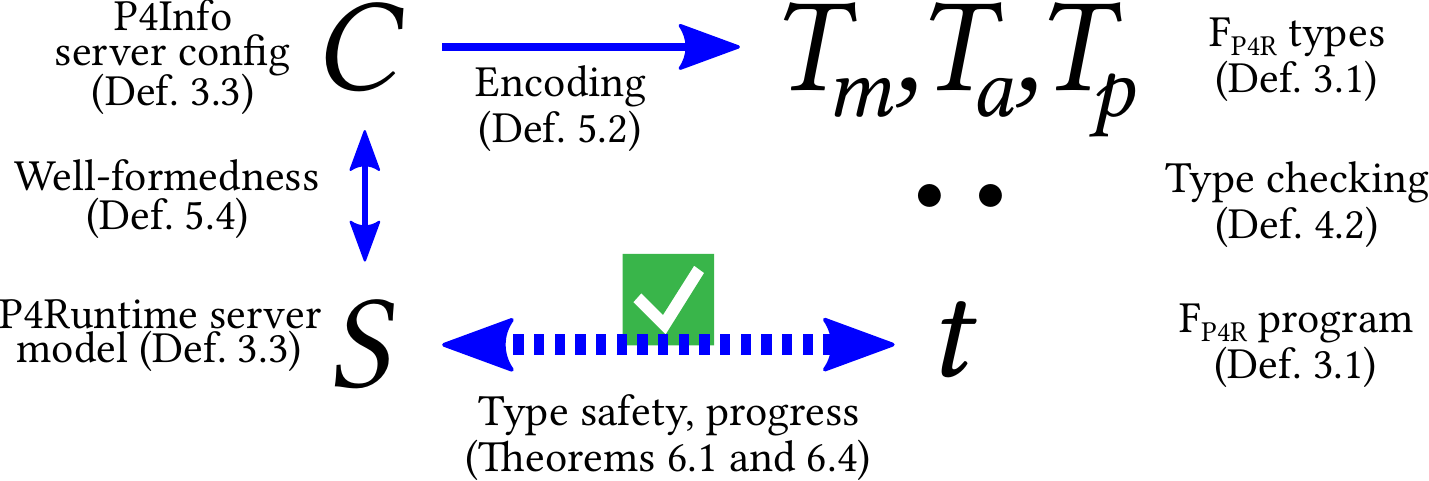}
\end{minipage}
\hfill
\begin{minipage}{0.44\linewidth}
  device configuration --- provided that the
  configuration matches the P4Info metadata.
  The design and implementation of \ourDSL is based on a formal model allowing us
  to reason about the behaviour of P4Runtime client applications and P4 devices
  acting as P4Runtime servers.
  \vspace{.5mm}
\end{minipage}

\noindent%
Our formal model is outlined above: a P4Runtime server $S$ holds tables
and actions that are well-formed w.r.t.~a configuration $C$ (which represents
P4Info metadata).
We define an encoding from a P4Info configuration $C$
into a set of types for \ourFormalLang: a formal
calculus describing P4Runtime client applications%
\change{change:rg:9}{Add Scala 3 features}{.
We design the typing discipline of \ourFormalLang
to include match types and singleton types, which are also present in
the Scala 3 typing system: this design allows us to implement our results as a
Scala 3 API (i.e., \ourDSL) reflecting the typing constraints of \ourFormalLang.%
}
Then, we prove our \Cref{theorem:preservation,theorem:progress}: %
if a \ourFormalLang program $t$ type-checks with types
encoded from a P4Info configuration $C$, then $t$ will interact correctly with
any P4Runtime server $S$ that is well-formed w.r.t.~$C$.



\change{change:clarify-section-title}{}{%
\section{A Model of P4Runtime Clients, Servers, and Networks}
\label{sec:modelling-p4runtime-networks}
\label{sec:language}
}%

We now illustrate how we model P4Runtime networks consisting of P4-enabled
devices (acting as servers), and control applications (the clients) that connect
and modify the devices' P4 table entries.
In \Cref{sec:p4runtime-client-syntax} we introduce the syntax of \ourFormalLang,
a formal language for modelling P4Runtime client programs, with the capability of
connecting to P4Runtime servers and performing P4Runtime operations.
In \Cref{sec:p4runtime-server-model} we model P4Runtime servers by focusing on
their internal configuration, i.e.~their P4 tables, packet matching methods, and
actions.
In \Cref{sec:p4runtime-networks} we formalise a P4Runtime network as a parallel
composition of P4Runtime clients and servers.

We introduce the semantics of \ourFormalLang programs, servers, and networks
later on (in \Cref{sec:semantics}) after introducing the typing system (in
\Cref{sec:type-system}).


\subsection{The \ourFormalLang Language for P4Runtime Clients}
\label{sec:p4runtime-client-syntax}

In \Cref{def:p4runtime-client-syntax} below we introduce the syntax of the
\ourFormalLang language and its types.  \ourFormalLang is designed as an
extension of \fSub (System F with subtyping~\cite{Cardelli1994SystemFSub})
augmented with:
\begin{itemize}
  \item \textbf{P4Runtime-specific operations}: primitives and types for server
   addresses and channels;
  \item \textbf{singleton types}, i.e.~types inhabited by exactly one value; and
  \item \textbf{match types}, introducing the capability of performing
   type-level computations and refining the result type of a pattern matching.
   Our match types are based on the work of~\citeN{BlanvillainBKO22} (which in
   turn formalises the corresponding feature of the Scala 3 programming
   language) %
   \change{change:refer-blanvillain-diffs}{}{%
   --- but our adaptation includes significant differences: we discuss
   them later on, in \Cref{remark:differences-blanvillain22}.%
   }
\end{itemize}

\begin{definition}[Syntax of \ourFormalLang]
  \label{def:p4runtime-client-syntax}
  The syntax of \ourFormalLang terms $t$ and types $T$ is shown in
  \Cref{fig:language-syntax-terms-types}%
  \change{change:remove-overbar-syntax}{}{ --- where $I$ (used to index records
    and pattern matching terms, and record and match types) represents a finite,
    non-empty set containing sequential natural numbers $1,2,\ldots$%
  }
  Moreover,
  \Cref{fig:language-syntax-sugar-types} introduces some frequently-used
  syntactic abbreviations.
\end{definition}

\begin{figure}
  \newcommand{\rulesSep}{0.75mm}%
  \[
  \small
  \begin{array}{r@{\;\;}rcl@{\quad}l}
    \text{Term} & t & \Coloneqq & v & \text{(Value)}\\
    & & | & x \;\;|\;\; y \;\;|\;\; z \;\;|\;\; \ldots & \text{(Variable)}\\
    & & | & \concat{t}{t} \;\;|\;\; \listHead t \;\;|\;\; \listTail t & \text{(List constructor, head, tail)}\\
    & & | & \text{\changeNoMargin{$\{f_i=t_{i}\}_{i \in I}$}} \;\;|\;\; t.f & \text{(Record constructor, field selection)}\\
    & & | & t\ t \;\;|\;\; t\ T& \text{(Term application, type application)}\\
    & & | & \letExp{x}{t} t & \text{(Binder)}\\
    & & | & \text{\changeNoMargin{$\match{t}{x_i : T_i \Rightarrow t_i}{i \in I}$}} & \text{(Pattern matching)}\\
    & & | & \hlightM{\mathit{op}} & \hlightM{\text{(P4Runtime operation)}}\\[\rulesSep]
    \hlightM{\text{P4Runtime op.}} & \hlightM{\mathit{op}} & \Coloneqq & \hlightM{\ttt{Connect}(w)} & \hlightM{\text{(Connect to server)}}\\
    & & | & \hlightM{\ttt{Read}(w,w)} & \hlightM{\text{(Read entries)}}\\
    & & | & \hlightM{\ttt{Insert}(w,w)} & \hlightM{\text{(Insert an entry)}}\\
    & & | & \hlightM{\ttt{Modify}(w,w)} & \hlightM{\text{(Modify entries)}}\\
    & & | & \hlightM{\ttt{Delete}(w,w)} & \hlightM{\text{(Delete entries)}}\\[\rulesSep]
    \text{P4Runtime arg.} & w & \Coloneqq & v \;\;|\;\; x \;\;|\;\; y \;\;|\;\; z \;\;|\;\; \ldots & \text{(Value or variable)}\\[\rulesSep]
    \text{Value} & v & \Coloneqq & \hlightM{v_G} & \hlightM{\text{(Ground value)}}\\
    & & | & \concat{v}{v} & \text{(List value)}\\
    & & | & \text{\changeNoMargin{$\{f_i=v_i\}_{i \in I}$}} & \text{(Record value)}\\
    & & | & \abs{x}{T} t & \text{(Lambda abstraction)}\\
    & & | & \tabs{X}{T} t & \text{(Type abstraction)}\\[\rulesSep]
    \text{Ground value} & v_G & \Coloneqq & \unit & \text{(Unit value)}\\
    & & | & 0 \;\;|\;\; 1 \;\;|\;\; 2 \;\;|\;\; \ldots & \text{(Integer)}\\
    & & | & \ttt{true} \;\;|\;\; \ttt{false} & \text{(Boolean)}\\
    & & | & \stringTerm{Hello} \;\;|\;\; \stringTerm{World} \;\;|\;\; \ldots & \text{(String)}\\
    & & | & \hlightM{\ttt{b}(42) \;\;|\;\; \ttt{b}(192,168,1,1) \;\;|\;\; \ldots} & \hlightM{\text{(Bytes)}}\\
    & & | & \hlightM{a_{T_m,T_a,T_p}} & \hlightM{\text{(Server address)}}\\
    & & | & \hlightM{s_{T_m,T_a,T_p}} & \hlightM{\text{(Client-server channel)}}\\[\rulesSep]
    & & | & \concat{v_G}{v_G} \;\;|\;\; \nil & \text{(List of ground values)}\\
    & & | & \text{\changeNoMargin{$\{f_i=v_{Gi}\}_{i \in I}$}} & \text{(Record of ground values)}\\
    \text{Type} & T & \Coloneqq & \top & \text{(Top type)}\\
    & & | & \tx{Int} \;|\; \tx{Bool} \;|\; \tx{String} \;|\; \tx{Unit}  \;|\; \hlight{\tx{Bytes}} \;|  \; \ldots & \text{(Basic types)}\\
    & & | & \hlightM{\ServerRef{T,T,T}} \;\;|\;\; \hlightM{\Chan{T,T,T}} & \hlightM{\text{(P4Runtime server address \& channel)}}\\
    & & | & \text{\changeNoMargin{$\{\types{f_i}{T_i}\}_{i \in I}$}} \;\;|\;\; [T] & \text{(Record and list types)}\\
    & & | & T \rightarrow T & \text{(Function type)}\\
    & & | & X \;\;|\;\; \quant{X}{T} T \;\;|\;\; T\ T & \text{(Type var., bounded quant., application)}\\
    & & | & \typeUnion{T}{T} & \text{(Union type)}\\
    & & | & \hlightM{\singleton{v_G}} & \hlightM{\text{(Singleton type for ground value $v_G$)}}\\
    & & | & \hlightM{\text{\changeNoMargin{$\match{T}{T_i \Rightarrow T'_i}{i \in I}$}}} & \hlightM{\text{(Match type)}}\\
  \end{array}
  \]
  \caption{Syntax of \ourFormalLang terms and types.  Non-standard extensions to \fSub are \hlight{highlighted}.}
  \label{fig:language-syntax-terms}
  \label{fig:language-syntax-types}
  \label{fig:language-syntax-terms-types}
\end{figure}

Most of \Cref{def:p4runtime-client-syntax} is based on standard \fSub constructs
and extensions (in particular, lists and records).  The key deviations are the
\hlight{highlighted} constructs in \Cref{fig:language-syntax-terms-types}:
\begin{itemize}
  \item a \textbf{P4Runtime operation $\mathit{op}$} allows a client to connect
    to a P4Runtime server, and query or change the entries in its configuration;
  \item a \textbf{ground value $v_G$} is a value that does not contain
    lambda nor type abstractions.  A ground value is a ``simple'' value
    (e.g.~string, integer, \ldots), or a list or record of ground values.  For
    each ground value $v_G$, there is a \textbf{singleton type
    $\singleton{v_G}$} only inhabited by $v_G$ itself;
  \item a \textbf{byte string $\ttt{b}(\ldots)$} is the byte representation of a
    sequence of integers, and has type $\tx{Bytes}$;
  \item a \textbf{server address $a_{T_m,T_a,T_p}$} represents a handle for
    connecting to a P4Runtime server (in practice, it represents its IP address
    and TCP port). A server address $a_{T_m,T_a,T_p}$ has a corresponding
    \textbf{server address type} $\ServerRef{T_m,T_a,T_p}$, where the type
    parameters reflect information available in the server's P4Info
    file:\footnote{%
      The instantiation of the type parameters $T_m,T_a,T_p$ is detailed later,
      in \Cref{eg:typable-untypable-p4runtime-ops}, \Cref{def:server-config-encoding} and \Cref{eg:server-config-types-encoded}.%
    }
    \begin{itemize}
      \item $T_m$ describes the \emph{matches} of each table in the server
        configuration;
      \item $T_a$ describes the \emph{actions} that could be performed after
        a network packet is matched;
      \item $T_p$ describes the types of the \emph{parameters} expected by each
      action.
    \end{itemize}
    For brevity, we will often just write $a$, omitting the subscript;
  \item a \textbf{client-server connection $s_{T_m,T_a,T_p}$} represents a
    communication channel (created after establishing a connection) that a
    P4Runtime server and client use to communicate.  A connection value has a
    corresponding \textbf{channel type} $\Chan{T_m,T_a,T_p}$, whose type
    arguments have the same meaning outlined above for
    $\ServerRef{T_m,T_a,T_p}$.  For brevity, we will often just write $s$,
    omitting the subscript;
  \item a \textbf{match type} describes a type-level pattern matching, as
    illustrated in \Cref{eg-match-type} below.
\end{itemize}

\begin{example}[Match Types]
  \label{eg-match-type}
  Consider the following match type:

  \smallskip\centerline{$
    \tx{Int}\; \texttt{match}\; \curl{\tx{Int} \Rightarrow \tx{Bool},\; \tx{String} \Rightarrow \tx{Unit}}%
  $}\smallskip

\change{change:rg:10}{Add explanation of continuation types}{
  We call the types $\tx{Bool}$ and $\tx{Unit}$ (i.e., the types of the expressions that can be executed after a match case is selected) \emph{continuation types} of the match type.}
  This match type ``reduces'' to the continuation type $\tx{Bool}$, because its
  guard matches the type \tx{Int}.  (More precisely, in \Cref{sec:type-system}
  we will see that the match type and the selected continuation are
  subytping-equivalent.)
  Now consider the following match type abstracted over type $T$:
  \[
  \begin{array}{rcl}
      \tx{Table\_Actions} &\triangleq& \quantT{T} T \texttt{ match } \curl{\\
      & &\qquad \singleton{\ttt{"IPv4\_table"}} \Rightarrow \singleton{\ttt{"IPv4\_forward"}} \cup \singleton{\ttt{"drop"}}\\
      & &\qquad \singleton{\ttt{"IPv6\_table"}} \Rightarrow \singleton{\ttt{"IPv6\_forward"}} \cup \singleton{\ttt{"drop}}\\
      & &\qquad \singleton{\wildcard}           \Rightarrow \singleton{\tx{\wildcard}} \;\;}
  \end{array}
  \]
  Intuitively, 
  the type\; $\tx{Table\_Actions}\ \singleton{\ttt{"IPv4\_table"}}$
  (i.e.~type $\tx{Table\_Actions}$ applied to the singleton type $\singleton{\ttt{"IPv4\_table"}}$)
  \;``reduces'' to the continuation type $\singleton{\ttt{"IPv4\_forward"}} \cup
  \singleton{\ttt{"drop"}}$ (which is a union of singleton types).  Instead,\; $\tx{Table\_Actions}\
  \singleton{\ttt{"*"}}$ \;``reduces'' to the singleton type
  $\singleton{\ttt{"*"}}$.
\end{example}

\begin{figure}
  \[
    \small
  \begin{array}{rcl}
    \forall X. T                         & \triangleq & \quant{X}{\top} T\\
    \tx{Option}                          & \triangleq & \quantT{A}\typeUnion{\curl{\tx{some} : A}}{\curl{\tx{none} : \tx{Unit}}}\\
    \TableEntry                          & \triangleq & \quantT{T_m}\quantT{T_a}\quantT{T_p}\quantT{X_n}\quant{X_a}{T_a\ X_n}
                                                \left\{\begin{array}{@{}l@{\;}l@{}}
                                                         \tx{name} : X_n, & \tx{matches} : T_m\ X_n,\\
                                                         \tx{action} : X_a, & \tx{params} : T_p\ X_a
                                                       \end{array}\right\}\\
    \PPPPEntity & \triangleq & \quantT{T_m}\quantT{T_a}\quantT{T_p}\quantT{X_n}\quant{X_a}{T_a\ X_n}
                           \left(\begin{array}{@{}l@{}}
                              \TableEntry\ T_m\ T_a\ T_p\ X_n\ X_a\\
                              \vphantom{x} \cup \ldots
                           \end{array}\right)
  \end{array}
  \vspace{-3mm}
  \]
  \caption{Syntactic sugar and abbreviations for \ourFormalLang types (based on \Cref{fig:language-syntax-types}).}
  \label{fig:language-syntax-sugar-types}
\end{figure}

We complement the syntax of \ourFormalLang with the abbreviations in
\Cref{fig:language-syntax-sugar-types} --- which introduces, in particular, these
P4Runtime-specific aliases:
\begin{itemize}
  \item ``$\TableEntry$'' is a record type describing a P4 table entry,
    as outlined in \Cref{sec:overview-p4,sec:overview-p4runtime}:
    \begin{itemize}
    \item the field ``name'' is a table name, ``action'' is the action to
      perform (using the given ``params'') whenever one of the ``matches'' is
      satisfied;
    \item the type arguments enforce dependencies between fields, as we will see
      in \Cref{sec:type-system};
    \end{itemize}
  \item ``$\PPPPEntity$'' is the union of different record types,
    each one representing a different type of P4 entity that could
    be returned when issuing a ``read'' request to a P4Runtime server.
    In this paper we focus on entities of type $\TableEntry$
    (other types of entities are elided).
\end{itemize}

\subsection{P4Runtime Server Model}
\label{sec:p4runtime-server-model}

We now introduce our abstract model of P4Runtime server
(\Cref{def:p4runtime-server-model}). The intuition is: a server $S$ models the
behaviour of a P4Runtime server with configuration $C$ --- which, in turn,
represents the information available in the server's P4Info metadata file, as
outlined in \Cref{sec:overview-p4,sec:overview-p4runtime}.

\begin{definition}[P4Runtime Server Model]
  \label{def:p4runtime-server-model}
  \label{def:p4runtime-server-config}
  \label{def:p4runtime-server-entity}
  A \emph{P4Runtime server $S$} is a 4-tuple
  $\langle C,E,a,K \rangle$ where:
  \begin{itemize}
    \item the \textbf{configuration $C$} is a mapping that contains the following fields:
      \begin{itemize}
        \item $\mathit{table\_matches}$ (abbreviated $\mathit{tm}$),
          a mapping from table names to
          \textbf{match fields}, which in turn consist of:
          \begin{itemize}
            \item $\mathit{name}$, the name of the network packet field to inspect;
            \item $\mathit{type}$, the type of packet matching algorithm used for this packet field
               (either \ttt{Exact},
                \ttt{Ternary}, \ttt{LPM}, \ttt{Range}, or \ttt{Optional}
                \cite{p4spec}).
          \end{itemize}
        \item $\mathit{table\_actions}$ (abbreviated $\mathit{ta}$),
          mapping table names to sets of allowed action names;
        \item $\mathit{action\_params}$ (abbreviated $\mathit{ap}$),
          mapping action names to sets of
          \textbf{action parameters}:%
          \begin{itemize}
            \item $\mathit{name}$, the name of the parameter;
            \item $\mathit{bitwidth}$, the size of the parameter.
          \end{itemize}
        \end{itemize}
    \item $E$ is a set of \textbf{P4Runtime entities $e$} that can be hosted on
      a P4Runtime server.  The main type of entity is a \textbf{table
      entry},\footnote{P4Runtime models several other types of entries, but they
      are out of scope for this work.} which is a record consisting of:
      \begin{itemize}
        \item $\mathit{table\_name}$, the name of table which owns the entry;
        \item $\mathit{field\_matches}$, a set of records describing packet matching rules:
          \begin{itemize}
              \item $\mathit{name}$, the name of the network packet field to inspect;
              \item a set of additional key-value entries, depending on the type
                of packet matching algorithm used for this field (for example,
                when using the matching algorithm type
                \ttt{Range}, the key-value entries are called $'\ttt{low}'$ and $'\ttt{high}'$);
          \end{itemize}
        \item $\mathit{action\_name}$, the name of the action that the entry
          applies upon network packet match;
        \item $\mathit{action\_args}$, a set of \textbf{action argument} records,
          which in turn contains:
          \begin{itemize}
            \item $\mathit{name}$, the name of the associated parameter;
            \item $\mathit{value}$, the value provided as the argument;
          \end{itemize}
      \end{itemize}
    \item $a$ is the \textbf{address} where the server listens for connections;
    \item $K$ is a set of \textbf{channels}: the active connections between the
      server and its clients.
    \end{itemize}
\end{definition}


As mentioned in the opening of \Cref{sec:modelling-p4runtime-networks}, the
P4Runtime server model formalised in \Cref{def:p4runtime-server-model} focuses
on the configuration of a P4Runtime server, by abstracting from other
implementation details (e.g.~its implementation language).  An example
server configuration can be seen in \Cref{fig:p4runtime-server-config-ex}.

\begin{figure}
  \[
  \small
  \begin{array}{rcl}
    C.\mathit{table\_matches} &=& \curl{\ttt{"IPv4\_table"} \mapsto \curl{\mathit{name} \mapsto \ttt{"IPv4\_dst\_addr"}, \mathit{type} \mapsto \ttt{LPM}},\\
                              & &\;\;\ttt{"IPv6\_table"} \mapsto \curl{\mathit{name} \mapsto \ttt{"IPv6\_dst\_addr"}, \mathit{type} \mapsto \ttt{LPM}}}\\
    C.\mathit{table\_actions} &=& \curl{\ttt{"IPv4\_table"} \mapsto [\ttt{"IPv4\_forward"}, \ttt{"Drop"}],\\
                              & &\;\;\ttt{"IPv6\_table"} \mapsto [\ttt{"IPv6\_forward"}, \ttt{"Drop"}]}\\
    C.\mathit{action\_params} &=& \curl{\ttt{"IPv4\_forward"} \mapsto [\; \curl{\mathit{name} \mapsto \ttt{"mac\_dst"}, \mathit{bitwidth} \mapsto 48},\\
                        & & \phantom{\;\;\ttt{"IPv4\_forward"} \mapsto [\;} \curl{\mathit{name} \mapsto \ttt{"port"}, \mathit{bitwidth} \mapsto 9}\;]\\
                              & &\;\;\ttt{"IPv6\_forward"} \mapsto [\; \curl{\mathit{name} \mapsto \ttt{"mac\_dst"}, \mathit{bitwidth} \mapsto 48},\\
                     & & \phantom{\;\;\ttt{"IPv6\_forward"} \mapsto  [\;}\curl{\mathit{name} \mapsto \ttt{"port"}, \mathit{bitwidth} \mapsto 9}\;]\\
                              & &\;\;\ttt{"Drop"} \mapsto []\;}
  \end{array}
  \]
  \vspace{-3mm}
  \caption{Example of a P4Runtime server configuration $C$ (by
    \Cref{def:p4runtime-server-config}). This JSON-like structure models the
    P4Info metadata describing the configuration of an actual P4 device (as
    outlined in \Cref{sec:overview-p4,sec:overview-p4runtime}).}
  \label{fig:p4runtime-server-config-ex}
\end{figure}

\subsection{P4Runtime Networks}
\label{sec:p4runtime-networks}

We conclude this section by formalising the syntax of a network of P4Runtime
servers and clients.

\begin{definition}[P4Runtime Network]
  \label{def:p4runtime-network-syntax}
  A \emph{P4Runtime network} is a parallel composition of clients (i.e.~terms
  $t$ of the grammar in \Cref{def:p4runtime-client-syntax}) and servers
  (i.e.~tuples $S$ conforming to \Cref{def:p4runtime-server-model}):
  \[
      N \quad \Coloneqq \quad t \;\;\;|\;\;\; S \;\;\;|\;\;\; N\,|\,N
  \]
\end{definition}


\section{The \ourFormalLang Typing System}
\label{sec:type-system}

\Cref{def:typing-env,def:type-system} formalise the typing system of
\ourFormalLang.  The typing system is based on System~\fSub~\cite{Cardelli1994SystemFSub}
extended with singleton types and match types
\cite{BlanvillainBKO22}, plus new typing rules for the P4Runtime-specific
operations we introduced in \Cref{def:p4runtime-client-syntax}.

\begin{definition}[Typing Environment]
  \label{def:typing-env}
  A \emph{typing environment $\Gamma$} is a mapping from term or type variables
  to types, that we syntactically represent as follows:
  \[
    \begin{array}{rcll}
    \Gamma & \Coloneqq & \emptyset & \text{(Empty typing environment)} \\
           & | & \Gamma,\, \types{x}{T} & \text{(Term variable $x$ has type $T$)} \\ 
           & | & \Gamma,\, \subtypes{X}{T} & \text{(Type variable $X$ has upper bound $T$)} \\ 
    \end{array}
  \]
\end{definition}

\begin{definition}[The \ourFormalLang Typing System]
  \label{def:type-system}
  The \emph{\ourFormalLang typing system} consists of the following
  mutually-defined, inductive judgements:
  \[
    \begin{array}{c@{\quad}l@{\hspace{-10mm}}l}
      \envJ{\Gamma} & \text{($\Gamma$ is a valid typing environment)} & \text{(\Cref{fig:typing-env-valid})}\\
      \typeJ{\Gamma}{T} & \text{($T$ is a valid type in $\Gamma$)} & \text{(\Cref{fig:type-valid})}\\
      \disjoint{\Gamma}{T}{T'} & \text{(Types $T$ and $T'$ are disjoint in $\Gamma$)} & \text{(\Cref{def:type-disjoint})}\\
      \subtypesTo{\Gamma}{T}{T'} & \text{($T$ is subtype of $T'$ in $\Gamma$ --- assuming $\typeJ{\Gamma}{T}$ and $\typeJ{\Gamma}{T'}$)} & \text{(\Cref{fig:subtyping-rules})}\\
      \eqtypesTo{\Gamma}{T}{T'} & \text{($T$ and $T'$ are subtyping-equivalent in $\Gamma$, i.e.~$\subtypesTo{\Gamma}{T}{T'}$ and $\subtypesTo{\Gamma}{T'}{T}$)}\\
      \typesTo{\Gamma}{t}{T} & \text{($t$ has type $T$ in $\Gamma$ --- assuming $\typeJ{\Gamma}{T}$)} & \text{(\Cref{fig:typing-rules})}
    \end{array}
  \]
\end{definition}

\begin{figure}
\[
\begin{array}{c}
    \Rule
        {}{\envJ{\emptyset}}
        {Env-$\emptyset$}\qquad
    \Rule
        {\typeJ{\Gamma}{T} \quad x \not\in \dom{\Gamma}}
        {\envJ{\Gamma, \types{x}{T}}}
        {Env-V}\qquad
    \Rule
        {\typeJ{\Gamma}{T} \quad X \not\in \dom{\Gamma}}
        {\envJ{\Gamma, X <: T}}
        {Env-TV}
\end{array}
\]
\vspace{-3mm}
\caption{Typing environment validity rules.}
\label{fig:typing-env-valid}
\end{figure}

\begin{figure}
    \newcommand{\rowSep}{4mm}
    \[
    \small
    \begin{array}{c}
    \hlightM{\Rule
        {\envJ{\Gamma}}
        {\typeJ{\Gamma}{\singleton{v_G}}}
        {Type-Val}}\quad
    \Rule
        {\envJ{\Gamma}}
        {\typeJ{\Gamma}{\tx{Int}}{}}
        {Type-Int}\quad
    \Rule
        {\envJ{\Gamma}}
        {\typeJ{\Gamma}{\tx{Bool}}{}}
        {Type-Bool}\quad
    \Rule
        {\envJ{\Gamma}}
        {\typeJ{\Gamma}{\tx{Unit}}{}}
        {Type-Unit}\\[\rowSep]
    \Rule
        {\envJ{\Gamma}}
        {\typeJ{\Gamma}{\tx{String}}{}}
        {Type-String}\qquad
    \Rule
        {\envJ{\Gamma}}
        {\typeJ{\Gamma}{\tx{Bytes}}{}}
        {Type-Bytes}\qquad
    \Rule
        {\envJ{\Gamma, \subtypes{X}{T}, \Gamma'}}
        {\typeJ{\Gamma, \subtypes{X}{T}, \Gamma'}{X}}
        {Type-TV}\\[\rowSep]
    \hlightM{\Rule
        {\typeJ{\Gamma}{T_1} \quad \typeJ{\Gamma}{T_2} \quad \typeJ{\Gamma}{T_3}}
        {\typeJ{\Gamma}{\ServerRef{T_1,T_2,T_3}}}
        {Type-SR}}\quad
    \hlightM{\Rule
        {\typeJ{\Gamma}{T_1} \quad \typeJ{\Gamma}{T_2} \quad \typeJ{\Gamma}{T_3}}
        {\typeJ{\Gamma}{\Chan{T_1,T_2,T_3}}}
        {Type-Chan}}\\[\rowSep]
    \Rule
        {\typeJ{\Gamma}{T_1} \quad \typeJ{\Gamma}{T_2}}
        {\typeJ{\Gamma}{T_1 \rightarrow T_2}}
        {Type-Abs}\qquad
    \Rule
        {\typeJ{\Gamma, \subtypes{X}{T_1}}{T_2}}
        {\typeJ{\Gamma}{\quant{X}{T_1} T_2}}
        {Type-TAbs}\\[\rowSep]
    \Rule
        {\typeJ{\Gamma}{T_1} \quad \typeJ{\Gamma}{T_2} \\
         \eqtypesTo{\Gamma}{T_1}{\quant{X}{T_{11}} T_{12}} \quad \typeJ{\Gamma}{\quant{X}{T_{11}} T_{12}} \quad \subtypesTo{\Gamma}{T_2}{T_{11}}}
        {\typeJ{\Gamma}{T_1\ T_2}}
        {Type-TApp}\\[\rowSep]
    \Rule
        {\forall i \in I : \typeJ{\Gamma}{T_i}}
        {\typeJ{\Gamma}{\curl{f_i : T_i}_{i \in I}}}
        {Type-Rec}\qquad
    \Rule
        {\typeJ{\Gamma}{T}}
        {\typeJ{\Gamma}{[T]}}
        {Type-List}\qquad
    \Rule
        {\typeJ{\Gamma}{T_1} \quad \typeJ{\Gamma}{T_2}}
        {\typeJ{\Gamma}{\typeUnion{T_1}{T_2}}}
        {Type-Union}\\[\rowSep]
    \hlightM{\Rule
        {\typeJ{\Gamma}{T_s} \quad \forall i \in I : \typeJ{\Gamma}{T_i} \quad \forall i \in I : \typeJ{\Gamma}{T'_i}}
        {\typeJ{\Gamma}{\match{T_s}{T_i \Rightarrow T'_i}{i \in I}}}
        {Type-Match}}
\end{array}
\]
\caption{Type validity rules. Non-standard extensions to \fSub are \hlight{highlighted}.}
\label{fig:type-valid}
\end{figure}

Most type validity rules in \Cref{fig:type-valid} are standard. The exceptions
are \hlight{highlighted}:
\begin{itemize}
  \item by rule \textsc{Type-Val}, any ground value $v_G$ (i.e.~any value that
    does \emph{not} contain lambda or type abstractions, by
    \Cref{def:p4runtime-client-syntax}) has a corresponding singleton type
    $\singleton{v_G}$;
  \item rules \textsc{Type-SR} and \textsc{Type-Chan} say that our new server
    address and client-server channel types are well-formed if all their type
    arguments are well-formed;

  \item \change{change:type-match}{}{%
      by rule \textsc{Type-Match}, a match type is well-formed if the scrutinee
      type ($T_s$), the types being matched against ($T_i$), and the
      continuation types $(T'_i)$ are well-formed.%
  }
\end{itemize}

\begin{figure}
    \newcommand{\rowSep}{4mm}
    \[
    \small
    \begin{array}{@{}c@{}}
    \Rule
        {}
        {\subtypesTo{\Gamma}{T}{T}}
        {ST-Refl}\qquad
    \Rule
        {}
        {\subtypesTo{\Gamma}{T}{\top}}
        {ST-$\top$}\qquad
    \hlightM{\Rule
        {\memberOf{v_G}{T}} 
        {\subtypesTo{\Gamma}{\singleton{v_G}}{T}}
        {ST-Val}}\qquad
    \Rule
        {\envJ{\Gamma, \subtypes{X}{T}, \Gamma'}}
        {\subtypesTo{\Gamma, \subtypes{X}{T}, \Gamma'}{X}{T}}
        {ST-Var}\\[\rowSep]
    \Rule
        {\subtypesTo{\Gamma}{T_1}{T_2}\quad\subtypesTo{\Gamma}{T_2}{T_3}}
        {\subtypesTo{\Gamma}{T_1}{T_3}}
        {ST-Trans}\qquad
    \Rule
        {}{\eqtypesTo{\Gamma}{\typeUnion{T_1}{T_2}}{\typeUnion{T_2}{T_1}}}
        {ST-$\cup$Comm}\\[\rowSep]
    \Rule
        {}{\eqtypesTo{\Gamma}{\typeUnion{(\typeUnion{T_1}{T_2})}{T_3}}{\typeUnion{T_1}{(\typeUnion{T_2}{T_3})}}}
        {ST-$\cup$Assoc}\qquad
    \Rule
        {\subtypesTo{\Gamma}{T_1}{T_2}}
        {\subtypesTo{\Gamma}{[T_1]}{[T_2]}}
        {ST-List}\\[\rowSep]
    \Rule
        {\subtypesTo{\Gamma}{T_1}{T_2}}
        {\subtypesTo{\Gamma}{T_1}{\typeUnion{T_2}{T_3}}}
        {ST-$\cup$R}\quad
    \Rule
        {\forall i \in I:\ \subtypesTo{\Gamma}{T_i}{T_i'}}
        {\subtypesTo{\Gamma}{\curl{f_i : T_i}_{i \in I}}{\curl{f_i : T_i'}_{i \in I}}}
        {ST-Rec}\\[\rowSep]
    \hlightM{\Rule
        {\subtypesTo{\Gamma}{T_s}{T_k} \quad \forall i \in I: i < k \Rightarrow \disjoint{\Gamma}{T_s}{T_i}}
        {\eqtypesTo{\Gamma}{T_s\ttt{ match }\curl{T_i \Rightarrow T_i'}_{i \in I}}{T_k'}}
        {ST-Match1}}\\[\rowSep]
    \hlightM{\Rule
        {\subtypesTo{\Gamma}{T_s}{T_s'} \quad \forall i \in I: \subtypesTo{\Gamma}{T_i'}{T_i''}}
        {\subtypesTo{\Gamma}{T_s\ttt{ match }\curl{T_i \Rightarrow T_i'}_{i \in I}}{T_s'\ttt{ match }\curl{T_i \Rightarrow T_i''}_{i \in I}}}
        {ST-Match2}}\\[\rowSep]
    \Rule
        {\subtypesTo{\Gamma}{T_1'}{T_1}\quad\subtypesTo{\Gamma}{T_2}{T_2'}}
        {\subtypesTo{\Gamma}{T_1 \rightarrow T_2}{T_1' \rightarrow T_2'}}
        {ST-Abs}\qquad
    \Rule
        {\subtypesTo{\Gamma}{T_1'}{T_1}\quad\subtypesTo{\Gamma, X <: T_1'}{T_2}{T_2'}}
        {\subtypesTo{\Gamma}{(\forall X <: T_1.\ T_2)}{(\forall X <: T_1'.\ T_2')}}
        {ST-TAbs}\\[\rowSep]
    \Rule
        {\subtypesTo{\Gamma}{T_2}{T_1}}
        {\eqtypesTo{\Gamma}{(\quant{X}{T_1} T_0)\ T_2}{\subs{T_0}{X}{T_2}}}
        {ST-App}
    \quad
    \begin{minipage}{0.5\linewidth}
        \small
        \(
          \text{where:}
          \begin{array}{@{}r@{\;}c@{\;}l@{}}
              \subs{\emptyset}{X}{T}                   &=& \emptyset\\
              \subs{(\Gamma, \types{x}{T'})}{X}{T}     &=& (\subs{\Gamma}{X}{T}), \types{x}{\subs{T'}{X}{T}}\\
              \subs{(\Gamma, \subtypes{X'}{T'})}{X}{T} &=& (\subs{\Gamma}{X}{T}), \subtypes{X'}{\subs{T'}{X}{T}}
          \end{array}
        \)
    \end{minipage}
    \end{array}
\]
\caption{Subtyping rules.}
\label{fig:subtyping-rules}
\end{figure}

The subtyping rules in \Cref{fig:subtyping-rules} are also standard, with the
following \hlight{highlighted} exceptions:

\begin{itemize}
  \item by rule \textsc{ST-Val}, if a ground value $v_G$ belongs to type $T$
    (by the relation ``$\memberOf{v_G}{T}$'' defined in \Cref{sec:memberof-relation}),
    then the singleton type $\singleton{v_G}$ is subtype of $T$.  For example:
    since $42 \in \tx{Int}$, then we have $\subtypesTo{\Gamma}{\singleton{42}}{\tx{Int}}$
    (assuming $\typeJ{\Gamma}{\tx{Int}}$);
  \item rule \textsc{ST-Match1} (adapted from \cite{BlanvillainBKO22}) says that
    a match type is subtyping-equivalent to the continuation type $T'_k$ if all
    match type cases before $k$ (i.e.~all $T_i$ with $i < k$) are
    \emph{disjoint} form $T_k$, according to \Cref{def:type-disjoint} below;
  \item rule \textsc{ST-Match2} (also adapted from \cite{BlanvillainBKO22}) says
    that match types are covariant in both the type being matched, and the
    continuation types.
\end{itemize}

The type disjointness judgement\; $\disjoint{\Gamma}{T_1}{T_2}$ \;(used in rule
\textsc{ST-Match1}) is formalised in \Cref{def:type-disjoint} below: the
intuition is that two types $T_1$ and $T_2$ are disjoint when they have no
common subtypes, hence there exists no value that can have both types $T_1$ and
$T_2$.

\begin{definition}[Disjointness of Types]
  \label{def:type-disjoint}
  Two types \emph{$T_1$ and $T_2$ are disjoint in $\Gamma$}, written
  $\disjoint{\Gamma}{T_1}{T_2}$, iff:
  \begin{enumerate}
    \item $\typeJ{\Gamma}{T_1}$ \;and\; $\typeJ{\Gamma}{T_2}$; \;and
    \item \(
      \not\exists T_3 :\; \typeJ{\Gamma}{T_3} \;\text{ and }\; \subtypesTo{\Gamma}{T_3}{T_1} \;\text{ and }\; \subtypesTo{\Gamma}{T_3}{T_2}
    \)
  \end{enumerate}
\end{definition}

\begin{example}[Subtyping and Disjointness in Match Types]
  \label{eg:match-subtyping-disjointness}
  Consider the following match type:

  \smallskip\centerline{\(
    \quantT{X} \match{X}{\tx{Int} \Rightarrow \singleton{42}, \;\tx{Bool} \Rightarrow \singleton{\stringTerm{Hello}}}{}
  \)}\smallskip

  \noindent%
  The type\;
  $(\quantT{X} \match{X}{\tx{Int} \Rightarrow \singleton{42}, \tx{Bool} \Rightarrow \singleton{\stringTerm{Hello}}}{})\ \singleton{\ttt{true}}$%
  \;is subtyping-equivalent (i.e.~``reduces'') to
  $\singleton{\stringTerm{Hello}}$, by the subtyping rule \textsc{ST-Match1}
  in \Cref{fig:subtyping-rules}.  The rule
  first checks whether $\singleton{\ttt{true}}$ is a subtype of $\tx{Int}$,
  which it is not.  Since it is also disjoint from $\tx{Int}$
  (the two types do not share a common subtype),
  the rule then proceeds to the next case. Here,
  $\singleton{\ttt{true}}$ is a subtype of $\tx{Bool}$, and so
  the type ``reduces'' to the case $\singleton{\stringTerm{Hello}}$.
\end{example}

\begin{figure}
    \[
    \small
    \newcommand{\rowSep}{4mm}
    \begin{array}{@{}c@{}}
        \hlightM{\Rule
            {}
            {\typesTo{\Gamma}{v_G}{\singleton{v_G}}}
            {T-Val}}\qquad
        \Rule
            {\forall i \in I: \typesTo{\Gamma}{t_i}{T_i}}
            {\typesTo{\Gamma}{\curl{f_i=t_i}_{i \in I}}{\curl{f_i : T_i}_{i \in I}}}
            {T-Rec}\qquad
        \Rule
            {\typesTo{\Gamma}{t_0}{T_0} \quad \typesTo{\Gamma}{t_1}{\listT{T_0}}}
            {\typesTo{\Gamma}{\concat{t_0}{t_1}}{\listT{T_0}}}
            {T-List}\\[\rowSep]
        \Rule
            {\typesTo{\Gamma}{t}{\listT{T}}}
            {\typesTo{\Gamma}{\listHead t}{\tx{Option}\ T}}
            {T-Head}\qquad
        \Rule
            {\typesTo{\Gamma}{t}{\listT{T}}}
            {\typesTo{\Gamma}{\listTail t}{\tx{Option}\ \listT{T}}}
            {T-Tail}\qquad
        \Rule
            {\envJ{\Gamma, \types{x}{T}, \Gamma'}}
            {\typesTo{\Gamma, \types{x}{T}, \Gamma'}{x}{T}}
            {T-Var}\\[\rowSep]
        \Rule
            {\typesTo{\Gamma}{t_0}{T_0} \quad \typesTo{\Gamma, \types{x}{T_0}}{t_1}{T_1}}
            {\typesTo{\Gamma}{\letExp{x}{t_0} t_1}{T_1}}
            {T-Let}\quad
        \Rule
            {\typesTo{\Gamma,\types{x}{T_1}}{t}{T_2}}
            {\typesTo{\Gamma}{\abs{x}{T_1} t}{T_1 \rightarrow T_2}}
            {T-Abs}\quad
        \Rule
            {\typesTo{\Gamma}{t_1}{T_1 \rightarrow T_2} \quad \typesTo{\Gamma}{t_2}{T_1}}
            {\typesTo{\Gamma}{t_1\ t_2}{T_2}}
            {T-App}\\[\rowSep]
        \Rule
            {\typesTo{\Gamma,\subtypes{X}{T_1}}{t}{T_2}}
            {\typesTo{\Gamma}{(\tabs{X}{T_1} t)}{(\quant{X}{T_1} T_2)}}
            {T-TAbs}\qquad
        \Rule
            {\typesTo{\Gamma}{t_1}{\quant{X}{T_1} T_0} \quad \subtypesTo{\Gamma}{T_2}{T_1}}
            {\typesTo{\Gamma}{t_1\ T_2}{(\quant{X}{T_1} T_0)\ T_2}}
            {T-TApp}\\[\rowSep]
        \Rule
            {\typesTo{\Gamma}{t_r}{\curl{\types{f_i}{T_i}}_{i \in I}} \quad k \in I}
            {\typesTo{\Gamma}{t_r.f_k}{T_k}}
            {T-Field}\qquad
        \Rule
            {\typesTo{\Gamma}{t}{T} \quad \subtypesTo{\Gamma}{T}{T'}}
            {\typesTo{\Gamma}{t}{T'}}
            {T-Sub}\\[\rowSep]
        \hlightM{\Rule
            {\typesTo{\Gamma}{t_s}{T_s} \quad \forall i \in I: \typesTo{\Gamma, \types{x_i}{T_i}}{t_i}{T_i'} \quad \subtypesTo{\Gamma}{T_s}{\cup_{i \in I} T_i}}
            {\typesTo{\Gamma}{t_s \ttt{ match }\curl{x_i : T_i \Rightarrow t_i}_{i \in I}}{T_s \ttt{ match } \curl{T_i \Rightarrow T_i'}_{i \in I}}}
            {T-Match}}\\[\rowSep]
        \hlightM{\Rule
            {\typesTo{\Gamma}{t}{\ServerRef{T_m,T_a,T_p}}}
            {\typesTo{\Gamma}{\ttt{Connect}(t)}{\Chan{T_m,T_a,T_p}}}
            {T-OpC}}\\[\rowSep]
        \hlightM{\Rule
            {\typesTo{\Gamma}{t_c}{\Chan{T_m,T_a,T_p}} \quad \typesTo{\Gamma}{t_e}{\PPPPEntity\ T_m\ T_a\ T_p\ X_n\ X_a}}
            {\typesTo{\Gamma}{\ttt{Read}(t_c,t_e)}{[\PPPPEntity\ T_m\ T_a\ T_p\ X_n\ X_a]}}
            {T-OpR}}\\[\rowSep]
        \hlightM{\Rule
            {\typesTo{\Gamma}{t_c}{\Chan{T_m,T_a,T_p}} \quad \typesTo{\Gamma}{t_e}{\PPPPEntity\ T_m\ T_a\ T_p\ X_n\ X_a}}
            {\typesTo{\Gamma}{\ttt{Insert}(t_c,t_e)}{\tx{Bool}}}
            {T-OpI}}\\[\rowSep]
        \hlightM{\Rule
            {\typesTo{\Gamma}{t_c}{\Chan{T_m,T_a,T_p}} \quad \typesTo{\Gamma}{t_e}{\PPPPEntity\ T_m\ T_a\ T_p\ X_n\ X_a}}
            {\typesTo{\Gamma}{\ttt{Modify}(t_c,t_e)}{\tx{Bool}}}
            {T-OpM}}\\[\rowSep]
        \hlightM{\Rule
            {\typesTo{\Gamma}{t_c}{\Chan{T_m,T_a,T_p}} \quad \typesTo{\Gamma}{t_e}{\PPPPEntity\ T_m\ T_a\ T_p\ X_n\ X_a}}
            {\typesTo{\Gamma}{\ttt{Delete}(t_c,t_e)}{\tx{Bool}}}
            {T-OpD}}
    \end{array}
    \]
\caption{Typing rules for \ourFormalLang terms. Non-standard extensions to \fSub are \hlight{highlighted}.}
\label{fig:typing-rules}
\end{figure}

Finally, \Cref{fig:typing-rules} includes several (\hlight{highlighted})
non-standard typing rules for \ourFormalLang terms:

\begin{itemize}
  \item by rule \textsc{T-Val}, a ground value $v_G$ is typed by the singleton
    type $\singleton{v_G}$.  E.g.~$42$ has type $\singleton{42}$, hence (via the
    subsumption rule \textsc{T-Sub} and \textsc{ST-Val} in
    \Cref{fig:subtyping-rules}) we also have that $42$ has type $\tx{Int}$;
  \item by rule \textsc{T-Match} (adapted from \cite{BlanvillainBKO22}), a
    pattern matching term is typed with a match type of a similar shape. The
    clause ``$\subtypesTo{\Gamma}{T_s}{\cup_{i \in I} T_i}$'' ensures that
    pattern matching is exhaustive;
  \item by rule \textsc{T-OpC}, a connection operation to a server address $t$
    of type $\ServerRef{T_m,T_a,T_p}$ has channel type $\Chan{T_m,T_a,T_p}$ ---
    i.e. the type of the channel returned by the connection maintains
    type-level information about the server configuration;
  \item by rule \textsc{T-OpR}, the query operation $\ttt{Read}(t_c,t_e)$ is typed
    as follows:
    \begin{enumerate}
      \item the query argument $t_e$ has type $\PPPPEntity$
        (\Cref{fig:language-syntax-sugar-types}) applied to type parameters that
        match those of the type of $t_c$ (expected to be a channel).
        Intuitively, this means that $t_e$ can only be
        a P4 entity supported by the P4Runtime server connected over $t_c$; and
      \item the read operation returns a list of type $\PPPPEntity$ applied to
        type arguments that match those of the type of $t_c$.  Intuitively,
        this means that the returned list is expected to only contain entities supported by the P4Runtime server
        connected via $t_c$;
    \end{enumerate}
  \item rules \textsc{T-OpI}, \textsc{T-OpM}, and \textsc{T-OpD} have type
    constraints similar to \textsc{T-OpR} above: their argument $t_e$ must be a
    P4 entity supported by the server connected over channel $t_c$.  All
    these operations return a boolean value (indicating whether the operation had
    an effect).
\end{itemize}

\begin{example}[Typable and Untypable \ourFormalLang Operations]
  \label{eg:typable-untypable-p4runtime-ops}
  \change{change:example-self-contained}{}{%
  Consider the following types:\footnote{%
    \changeNoMargin{%
      You may notice a similarity between the types used in
      \Cref{eg:typable-untypable-p4runtime-ops} and the P4Runtime server
      configuration in \Cref{fig:p4runtime-server-config-ex}:
      indeed, those types capture the constraints of that server configuration.
      We will reprise the topic in
      \Cref{sec:p4runtime-server-model-semantics}.%
      }
    }
  \[
  \begin{array}{@{}r@{\;\;}c@{\;\;}l@{}}
    T_m &=& \quantT{T} T \ttt{ match }\{\\
    & &\qquad \singleton{\stringTerm{IPv4\_table}} \Rightarrow \curl{\mathit{name} : \singleton{\stringTerm{IPv4\_dst\_addr}}, \mathit{value} : \tx{Bytes}, \mathit{prefixLen} : \tx{Bytes}}\\
    & &\qquad \singleton{\stringTerm{IPv6\_table}} \Rightarrow \curl{\mathit{name} : \singleton{\stringTerm{IPv6\_dst\_addr}}, \mathit{value} : \tx{Bytes}, \mathit{prefixLen} : \tx{Bytes}} \;\}\\
    T_a &=& \quantT{T} T \ttt{ match }\{\\
    & &\qquad \singleton{\stringTerm{IPv4\_table}} \Rightarrow \typeUnion{\singleton{\stringTerm{IPv4\_forward}}}{\singleton{\stringTerm{Drop}}}\\
    & &\qquad \singleton{\stringTerm{IPv6\_table}} \Rightarrow \typeUnion{\singleton{\stringTerm{IPv6\_forward}}}{\singleton{\stringTerm{Drop}}} \;\}\\
    T_p &=& \quantT{A} A \ttt{ match }\{\\
    & &\qquad \singleton{\stringTerm{IPv4\_forward}} \Rightarrow \curl{\mathit{mac\_dst} : \tx{Bytes}, \mathit{port} : \tx{Bytes}}\\
    & &\qquad \singleton{\stringTerm{IPv6\_forward}} \Rightarrow \curl{\mathit{mac\_dst} : \tx{Bytes}, \mathit{port} : \tx{Bytes}}\\
    & &\qquad \singleton{\stringTerm{Drop}} \Rightarrow \tx{Unit}\;\}\\
  \end{array}
  \]
  Assume that the connection $s$ has type $\Chan{T_m,T_a,T_p}$.
  }%
  Then, we can type the following P4Runtime operation:
  \begin{align*}
    \small%
    \ttt{Insert}(s, \curl{ & \stringTerm{IPv4\_table},\;\; \curl{\mathit{name} = \stringTerm{IPv4\_dst\_addr}, \mathit{value} = \ttt{b}(10,1,0,0), \mathit{prefixLen} = 32},\\
    &\stringTerm{IPv4\_forward},\;\; \curl{\mathit{mac\_dst} = \ttt{b}(8,0,0,0,10,1), \mathit{port} = \ttt{b}(1)} \;})
  \end{align*}
  \changeNoMargin{%
    The typing succeeds because,
    by rule \textsc{T-OpI} in \Cref{fig:typing-rules},
    the types $T_m,T_a,T_p$ used to instantiate
    the type of $s$ (i.e.~$\Chan{T_m,T_a,T_p}$) can correctly type the
    record passed as second argument of \ttt{Insert}.  The type of such a record
    is\; $\PPPPEntity\ T_m\ T_a\ T_p\ \singleton{\stringTerm{IPv4\_table}}\ \singleton{\stringTerm{IPv4\_forward}}$\; (see \Cref{fig:language-syntax-sugar-types}),
    and its type-checking succeeds because:
    \begin{enumerate}
    \item $T_m\ \singleton{\stringTerm{IPv4\_table}}$ \;reduces to a record type
      $\curl{\mathit{name}: \singleton{\stringTerm{IPv4\_dst\_addr}},\; \ldots}$
      that types the record instance
      $\curl{\mathit{name} = \stringTerm{IPv4\_dst\_addr},\; \ldots}$
      (i.e.~the table match fields);
    \item $T_a\ \singleton{\stringTerm{IPv4\_table}}$ \;reduces to the union type
      $\typeUnion{\singleton{\stringTerm{IPv4\_forward}}\,}{\,\singleton{\stringTerm{Drop}}}$,
      which types the string $\stringTerm{IPv4\_forward}$
      (i.e.~the name of the action applied upon a packet match);
    \item $T_p\ \singleton{\stringTerm{IPv4\_forward}}$ \;reduces to a record type
      $\curl{\mathit{mac\_dst} : \tx{Bytes},\; \ldots}$
      that types the record instance $\curl{\mathit{mac\_dst} = \ldots}$
      (i.e.~the action parameters).%
    \end{enumerate}
    }

    Instead, the following P4Runtime operation cannot be typed:
  \begin{align*}
    \small%
    \ttt{Insert}(s, \curl{ & \stringTerm{IPv4\_table},\;\; \curl{\mathit{name} = \stringTerm{IPv4\_dst\_addr}, \mathit{value} = \ttt{b}(10,1,0,0), \mathit{prefixLen} = 32},\\
    &\stringTerm{IPv6\_forward},\;\; \curl{\mathit{mac\_dst} = \ttt{b}(8,0,0,0,10,1), \mathit{port} = \ttt{b}(1)} })
  \end{align*}
  The typing fails because $\text{\changeNoMargin{$T_a$}}\ \singleton{\stringTerm{IPv4\_table}}$
  reduces to $\typeUnion{\singleton{\stringTerm{IPv4\_forward}}}{\singleton{\stringTerm{Drop}}}$,
  which cannot type the string $\stringTerm{IPv6\_forward}$.
  \changeNoMargin{Intuitively this means that, according to type $T_a$,
  $\stringTerm{IPv6\_forward}$
  is an invalid action for the table $\stringTerm{IPv4\_forward}$.}%
\end{example}

\change{change:blanvillain22-diff}{}{%
\begin{remark}[Differences with \citeN{BlanvillainBKO22}]
\label{remark:differences-blanvillain22}

Our formulation of match types differs from the original presentation by
\citeN{BlanvillainBKO22} in 3 significant aspects: these differences are
non-trivial and interplay with each other in subtle ways, making our
formalisation and proofs quite challenging.

\begin{enumerate}
\item \citeN{BlanvillainBKO22} use a \emph{nominal} type system which models
  class hierarchies, abstracting from class fields and data.  Instead, we need
  data in order to represent P4Runtime tables in \ourFormalLang and in our
  results; moreover, our implementation (\Cref{sec:implementation}) does not
  make significant use of class hierarchies.  Therefore, unlike
  \citeN{BlanvillainBKO22}, we adopt standard data types (records, lists\ldots)
  with \emph{structural} typing and subtyping, and we support singleton types
  --- and consequently, we adapt the match-typing-related rules accordingly.

\item Unlike \citeN{BlanvillainBKO22}, our match types do \emph{not} include a
  mandatory default case.  With the default case, a match type can be
  ``reduced'' (i.e. proven subtype-equivalent) to the type in its default case,
  if the scrutinee type does not match any other case.  We removed the mandatory
  default case because it is not needed (and is actually undesirable) for our
  modelling of P4Runtime table types.  Moreover, the Scala 3 compiler does
  \emph{not} require programmers to specify a default case in their match types
  --- and since our API \ourDSL leverages this feature, we formalised the
  typing system of \ourFormalLang accordingly.  A default match type case can be
  obtained (when needed) by adding a branch that matches the top type $\top$.

\item Correspondingly, our match terms do \emph{not} include a mandatory
  default case (unlike \citeN{BlanvillainBKO22}).  Consequently, our typing rule
  \textsc{T-Match} (\Cref{fig:typing-rules}) has an additional constraint
  w.r.t.~\citeN{BlanvillainBKO22}: the scrutinee type must be a subtype of the
  union of all case types, thus ensuring that the pattern matching is exhaustive
  (the Scala 3 compiler performs similar checks). Notably, match term
  exhaustiveness is needed to prove progress (\Cref{theorem:progress}); instead,
  \citeN{BlanvillainBKO22} do not check match term exhaustiveness because their
  default match case ensures that a match term can always be reduced.
\end{enumerate}
\end{remark}
}%


\section{Semantics of \ourFormalLang Programs and P4Runtime Networks}
\label{sec:semantics}

In this section we formalise the semantics of 
\ourFormalLang programs (\Cref{sec:p4runtime-client-semantics}), P4Runtime servers
(\Cref{sec:p4runtime-server-model-semantics}), and networks of clients and servers
(\Cref{sec:p4runtime-network-semantics}).

\subsection{Semantics of \ourFormalLang Programs}
\label{sec:p4runtime-client-semantics}

We introduce the labelled transition system (LTS) semantics of
\ourFormalLang. \Cref{def:p4runtime-client-semantics} below formalises an
\emph{early} semantics, where each transition label denotes either an internal
computation ($\tau$), or a possible input/output interaction with the
surrounding environment.  This style of \emph{early} LTS semantics is inspired
by the $\pi$-calculus \cite{Sangiorgi01}, and
\change{change:why-early-semantics}{}{%
allows us to formalise and reason about the interactions between \ourFormalLang
programs and P4Runtime servers (formalised later in
\Cref{def:p4runtime-network-semantics}) while keeping the respective syntax and
semantics decoupled.%
}%

\begin{definition}[Semantics of \ourFormalLang]
  \label{def:p4runtime-client-semantics}
  Assume a predicate ``$v \in T$'' which holds iff value $v$ belongs to
  type $T$.  We define the \emph{labelled
  transition system (LTS) semantics} of \ourFormalLang as a
  transition relation $t \reduce{\alpha} t'$, where the label $\alpha$ is defined as:
  \[
  \begin{array}{r@{\;\;}r@{\;\;}c@{\;\;}l@{\quad}l}
    \text{Transition label} & \alpha & \Coloneqq & \tau & \text{(Internal transition)}\\
    & & | & \reduceTwoLabel{\tx{connect}(a)}{s} & \text{(Connect to server address $a$, getting channel $s$)}\\
    & & | & \reduceTwoLabel{\tx{read}(s,v)}{v'} & \text{(Perform query $v$ on channel $s$, getting result $v'$)}\\
    & & | & \reduceTwoLabel{\tx{insert}(s,v)}{v'} & \text{(Insert $v$ on channel $s$, getting result $v'$)}\\
    & & | & \reduceTwoLabel{\tx{modify}(s,v)}{v'} & \text{(Modify $v$ on channel $s$, getting result $v'$)}\\
    & & | & \reduceTwoLabel{\tx{delete}(s,v)}{v'} & \text{(Delete $v$ on channel $s$, getting result $v'$)}\\
  \end{array}
  \]
  
  The transition relation $t \reduce{\alpha} t'$ is defined in
  \Cref{fig:p4runtime-client-semantics}, where the context transition rule \textsc{E-$\mbC$} uses
  an \emph{evaluation context $\mbC$} (defined below) which represents a
  \ourFormalLang term 
  with one hole $\hole$:

  \smallskip
  \centerline{$
  \begin{array}{rcl}
    \mbC & \Coloneqq &\hole \;\;|\;\; \concat{\mbC}{t} \;\;|\;\; \concat{v}{\mbC} \;\;|\;\; \listHead{\mbC} \;\;|\;\;\listTail{\mbC} \;\;|\;\; \ttt{let }x\ttt{ = }\mbC\ttt{ in }t\\
    & | & \mbC\;t \;\;|\;\; v\;\mbC \;\;|\;\; \mbC\;T \;\;|\;\; \mbC.f \;\;|\;\; \mbC \ttt{ match }\{\types{x_i}{T_i} \Rightarrow t_i\}_{i \in I}\\
    & | & \curl{f_i = \gamma_i}_{i \in I} \quad \tx{where\; } \exists k \in I: \forall i \in I:\begin{cases}i < k \text{ \;implies\; } \gamma_i = v_i\\i = k \text{ \;implies\; } \gamma_i = \mbC\\i > k \text{ \;implies\; } \gamma_i = t_i\end{cases}
  \end{array}
  $}
\end{definition}

\begin{figure}
  \newcommand{\rowSep}{1.2mm}
  \begin{gather*}
      \Rule{}{(\abs{x}{T} t)\ v \reduce{\tau} \subs{t}{x}{v}}{E-App}\qquad
      \Rule{}{(\tabs{X}{T_1} t)\ T_2 \reduce{\tau} \subs{t}{X}{T_2}}{E-TApp}\\[\rowSep]
      \Rule{}{\letExp{x}{v} t \reduce{\tau} \subs{t}{x}{v}}{E-Let}\qquad
      \Rule{v = \curl{f_i=v_i}_{i \in I} \quad k \in I}{v.f_k \reduce{\tau} v_k}{E-Field}\\[\rowSep]
      \Rule{}{\listHead \concat{v_0}{v_1} \reduce{\tau} \curl{\tx{some} = v_0}}{E-Head1}\qquad
      \Rule{}{\listHead \nil \reduce{\tau} \curl{\tx{none} = \unit}}{E-Head2}\\[\rowSep]
      \Rule{}{\listTail \concat{v_0}{v_1} \reduce{\tau} \curl{\tx{some} = v_1}}{E-Tail1}\qquad
      \Rule{}{\listTail \nil \reduce{\tau} \curl{\tx{none} = \unit}}{E-Tail2}\\[\rowSep]
    \Rule
          {k \in I \quad
           v \in T_k \quad
           \forall j \in I: j < k \implies
           v \not\in T_j
          }
          {v \ttt{ match }\curl{\types{x_i}{T_i} \Rightarrow t_i}_{i \in I} \;\reduce{\tau}\; \subs{t_k}{x_k}{v}}
          {E-Match}\\[\rowSep]
    \hlightM{\Rule
        {a \in \ServerRef{T_m, T_a, T_p} \quad s \in \Chan{T_m, T_a, T_p}}
        {\ttt{Connect}(a)  \reduceTwo{\tx{connect}(a)}{s} s}
        {E-Connect}}\\[\rowSep]
    \hlightM{\Rule
        {s \in \Chan{T_m,T_a,T_p} \quad v' \in \listT{\PPPPEntity\ T_m\ T_a\ T_p\ X_n\ X_a}}
        {\ttt{Read}(s,v)   \reduceTwo{\tx{read}(s,v)}{v'} {v'}}
        {E-Read}}\\[\rowSep]
    \hlightM{\Rule
        {v' \in \tx{Bool}}
        {\ttt{Insert}(s,v) \reduceTwo{\tx{insert}(s,v)}{v'} {v'}}
        {E-Insert}}
    \qquad
    \hlightM{\Rule
        {v' \in \tx{Bool}}
        {\ttt{Modify}(s,v) \reduceTwo{\tx{modify}(s,v)}{v'} {v'}}
        {E-Modify}}\\[\rowSep]
    \hlightM{\Rule
        {v' \in \tx{Bool}}
        {\ttt{Delete}(s,v) \reduceTwo{\tx{delete}(s,v)}{v'} {v'}}
        {E-Delete}}
    \qquad
    \Rule
        {t \reduce{\alpha} t'}
        {\mbC[t] \reduce{\alpha} \mbC[t']}
        {E-$\mbC$}
  \end{gather*}
  \caption{LTS semantics of \ourFormalLang terms.  Non-standard extensions to \fSub are \hlight{highlighted}.}
  \label{fig:p4runtime-client-semantics}
\end{figure}

Most rules in \Cref{def:p4runtime-client-semantics} are standard, except for the
ones \hlight{highlighted} in \Cref{fig:p4runtime-client-semantics}:
\begin{itemize}
  \item by rule \textsc{E-Connect}, the term $\ttt{Connect}(a)$ transitions by
    producing a channel $s$, whose type conforms to the type of the server
    address $a$.  The transition label ``$\reduceTwoLabel{\tx{connect}(a)}{s}$''
    means that the term is trying to interact with the surrounding environment:
    hence, as we will see in \Cref{sec:p4runtime-server-model-semantics}, the
    client expects a P4Runtime server to emit the dual label
    ``$\dual{\reduceTwoLabel{\tx{connect}(a)}{s}}$'' --- meaning that the server
    is listening on address $a$ and can produce channel $s$;
  \item by rule \textsc{E-Read}, the term $\ttt{Read}(s,v)$ transitions by
    producing a value $v'$, which is a list of $\PPPPEntity$ instances
    (\Cref{fig:language-syntax-sugar-types}) whose type conforms to the type of
    channel $s$.  The transition label means that the term expects to interact
    with a P4Runtime server on channel $s$;
  \item rules \textsc{E-Insert}, \textsc{E-Modify}, and \textsc{E-Delete} work
    similarly, and produce a boolean value describing whether the operation
    had an effect or not.
\end{itemize}

\change{change:explain-record-reduction}{}{%
\ourFormalLang terms are evaluated from left to right, using the evaluation
contexts $\mbC$ in \Cref{def:p4runtime-client-semantics}.  For instance, the
last case ``$\curl{f_i = \gamma_i}_{i \in I}$'' represents a record whose fields
$f_i$ are indexed by $i \in I$, where $I$ is a set of consecutive natural
numbers $1..n$ (as per \Cref{def:p4runtime-client-syntax}); all fields to the
left of $f_k$ (for some $k \in I$) are already fully-evaluated into values
$v_i$; the field $f_k$ is a context with a hole, which is going to be evaluated
next; and all fields to the right of $f_k$ are arbitrary terms $t_i$, which may
be evaluated after $f_k$.%
}

\subsection{Semantics of P4Runtime Servers}
\label{sec:p4runtime-server-model-semantics}

To define our P4Runtime server semantics (in
\Cref{def:p4runtime-server-semantics} later on), we need to ensure that a server
$S$ will only answer to well-typed requests from its clients, and that the
server entities are well-typed w.r.t.~the server configuration $C$.  To this
end, we formalise an encoding of a server configuration $C$ into \ourFormalLang
types (\Cref{def:server-config-encoding} below).  Intuitively, this describes
how to turn the P4Info metadata of a P4 device into a set of types describing
the device tables, actions, etc.

\begin{definition}[Encoding of a Server Configuration into \ourFormalLang Types]
  \label{def:server-config-encoding}
  Given a P4Runtime server configuration $C$, we define the \emph{encoding}
  $\encoded{\cdots}$ of its entries into \ourFormalLang types in \Cref{fig:p4runtime-server-config-enc}.
\end{definition}

\begin{figure}
  \(%
  \begin{array}{@{}rcl@{}}
  \encoded{C.\mathit{table\_matches}} &=& \quantT{T} T \ttt{ match } \{\\
                                      & & \qquad \singleton{t} \Rightarrow \encoded{C.\mathit{table\_matches}(t)},\\
                                      & & \qquad \singleton{\wildcard} \Rightarrow \singleton{\wildcard}\\
                                      & & \}_{\mathit{t} \in dom(C.\mathit{table\_matches})}\\
\encoded{C.\mathit{table\_matches}(t)} &=& \typeUnion{\curl{\mathit{mf}.name : \encoded{\mathit{mf}.type}}_{\mathit{mf} \in C.\mathit{table\_matches}(t)}}{\singleton{\wildcard}}\\
                \encoded{\ttt{Exact}} &=& \curl{\mathit{value} : \tx{Bytes}}\\
              \encoded{\ttt{Ternary}} &=& \tx{Option}\ \curl{\mathit{value} : \tx{Bytes}, \mathit{mask} : \tx{Bytes}}\\
                  \encoded{\ttt{LPM}} &=& \tx{Option}\ \curl{\mathit{value} : \tx{Bytes}, \mathit{prefixLen} : \tx{Int}}\\
                \encoded{\ttt{Range}} &=& \tx{Option}\ \curl{\mathit{low} : \tx{Bytes}, \mathit{high} : \tx{Bytes}}\\
             \encoded{\ttt{Optional}} &=& \tx{Option}\ \curl{\mathit{value} : \tx{Bytes}}\\
  \encoded{C.\mathit{table\_actions}} &=& \quantT{T} T \ttt{ match } \{\\
                                      & & \qquad \singleton{t} \Rightarrow \encoded{C.\mathit{table\_actions}(t)},\\
                                      & & \qquad \singleton{\wildcard} \Rightarrow \singleton{\wildcard}\\
                                      & & \}_{\mathit{t} \in dom(C.\mathit{table\_actions})}\\
\encoded{C.\mathit{table\_actions}(t)} &=& \bigcup_{a \in C.\mathit{table\_actions}(t)} \singleton{a}\\
  \encoded{C.\mathit{action\_params}} &=& \quantT{A} A \ttt{ match } \{\\
                                      & & \qquad \singleton{a} \Rightarrow \encoded{C.\mathit{action\_params}(a)},\\
                                      & & \qquad \singleton{\wildcard} \Rightarrow \tx{Unit}\\
                                      & & \}_{\mathit{a} \in dom(C.\mathit{action\_params})}\\
\encoded{C.\mathit{action\_params}(a)} &=& \begin{cases} \tx{Unit} &\tx{if } C.\mathit{action\_params}(a) = \emptyset \\\curl{p.\mathit{name} : \tx{Bytes}}_{p \in C.\mathit{action\_params}(a)} & \tx{otherwise}\end{cases}
  \end{array}
  \)
\caption{Definition of the encoding operation $\encoded{\cdots}$ from P4Runtime configurations to \ourFormalLang types.}
\label{fig:p4runtime-server-config-enc}
\end{figure}

\begin{example}[Server Configuration Representation as \ourFormalLang Types]
    \label{eg:server-config-types-encoded}
    Consider the P4Runtime server configuration in
    \Cref{fig:p4runtime-server-config-ex}: by \Cref{def:server-config-encoding},
    its encoding into the \ourFormalLang types is shown in
    \Cref{fig:p4runtime-server-config-ex-enc}.
    \change{change:fig12-refer-example}{}{%
      (The same types are also used in
      \Cref{eg:typable-untypable-p4runtime-ops}, where they are called
      $T_m,T_a,T_p$.)%
    }%
\end{example}

\begin{figure}
  \[\small
  \begin{array}{@{}r@{\;}c@{\;}l@{}}
    \encoded{C.\mathit{table\_matches}} &=& \quantT{T} T \ttt{ match }\{\\
    & &\qquad \singleton{\ttt{"IPv4\_table"}} \Rightarrow \curl{\mathit{name} : \singleton{\ttt{"IPv4\_dst\_addr"}}, \mathit{value} : \tx{Bytes}, \mathit{prefixLen} : \tx{Bytes}}\\
    & &\qquad \singleton{\ttt{"IPv6\_table"}} \Rightarrow \curl{\mathit{name} : \singleton{\ttt{"IPv6\_dst\_addr"}}, \mathit{value} : \tx{Bytes}, \mathit{prefixLen} : \tx{Bytes}} \;\}\\
    \encoded{C.\mathit{table\_actions}} &=& \quantT{T} T \ttt{ match }\{\\
    & &\qquad \singleton{\ttt{"IPv4\_table"}} \Rightarrow \typeUnion{\singleton{\ttt{"IPv4\_forward"}}}{\singleton{\ttt{"Drop"}}}\\
    & &\qquad \singleton{\ttt{"IPv6\_table"}} \Rightarrow \typeUnion{\singleton{\ttt{"IPv6\_forward"}}}{\singleton{\ttt{"Drop"}}} \;\}\\
    \encoded{C.\mathit{action\_params}} &=& \quantT{A} A \ttt{ match }\{\\
    & &\qquad \singleton{\ttt{"IPv4\_forward"}} \Rightarrow \curl{\mathit{mac\_dst} : \tx{Bytes}, \mathit{port} : \tx{Bytes}}\\
    & &\qquad \singleton{\ttt{"IPv6\_forward"}} \Rightarrow \curl{\mathit{mac\_dst} : \tx{Bytes}, \mathit{port} : \tx{Bytes}}\\
    & &\qquad \singleton{\ttt{"Drop"}} \Rightarrow \tx{Unit}\;\}\\
  \end{array}
  \]
  \vspace{-3mm}
  \caption{The encoding of the server configuration $C$ in
  \Cref{fig:p4runtime-server-config-ex} into \ourFormalLang types.
  \changeNoMargin{(The same types are also used in
  \Cref{eg:typable-untypable-p4runtime-ops}, where they are called
  $T_m,T_a,T_p$.)}}
  \label{fig:p4runtime-server-config-ex-enc}
\end{figure}

From now on, we will assume that each P4Runtime server is \emph{well-formed} by
\Cref{def:p4runtime-server-well-formed} below: it means that each entity belongs
to the $\PPPPEntity$ type (\Cref{fig:language-syntax-sugar-types}) instantiated
with type parameters that correspond to the type-encoded server configuration
(by \Cref{def:server-config-encoding}).

\begin{definition}[P4Runtime Entity Conformity and Server Well-Formedness]
  \label{def:p4runtime-server-well-formed}
  \label{def:p4runtime-entity-conformity}
  A P4Runtime entity $e$ \emph{conforms} to a server configuration $C$ iff:
  \[
    \exists X_n, X_a:
    e \in \PPPPEntity\ \encoded{C.\mathit{table\_matches}}\ \encoded{C.\mathit{table\_actions}}\ \encoded{C.\mathit{action\_params}}\ X_n\ X_a
  \]
  %
  The predicate $\mathit{Conforms}(e,C)$ holds iff entity $e$ conforms to the
  configuration $C$.
  A P4Runtime server \emph{$\langle C,E,a,K \rangle$ is well-formed}
  iff $\forall e \in E : \mathit{Conforms}(e, C)$.
\end{definition}

The key insight behind \Cref{def:p4runtime-server-well-formed} is that the
instantiation of $\PPPPEntity$ can only reduce to an actual type if the argument
$X_n$ is a valid table name in $C$, and if $X_a$ is a valid action for table
$X_n$.
\note{An example would not hurt}
\Cref{def:p4runtime-server-well-formed} directly leads to the following property,
which will allow us to prove the results in \Cref{sec:results}: if a client
sends to the server a well-typed value $v$, the server will consider it conformant.

\begin{proposition}[Conformance of Well-Typed Values]
    \label{def:config-enc-equiv}
    For any server $S = \langle C,E,a,K \rangle$ and any value $v$, we have:
    \begin{equation*}
        \mathit{Conforms}(v, C) \iff \typesTo{\emptyset}{v}{\PPPPEntity\ \encoded{C.\mathit{table\_matches}}\ \encoded{C.\mathit{table\_actions}}\ \encoded{C.\mathit{action\_params}}\ X_n\ X_a}
    \end{equation*}
\end{proposition}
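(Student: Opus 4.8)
The plan is to reduce the statement to the claim that the \emph{semantic} membership predicate $v \in T$ used in the server model (the parameter of \Cref{def:p4runtime-client-semantics}) coincides with the \emph{syntactic} typing judgement $\typesTo{\emptyset}{v}{T}$ whenever $v$ is a value and $T$ is a well-formed instantiation of $\PPPPEntity$. Unfolding $\mathit{Conforms}(v,C)$ via \Cref{def:p4runtime-entity-conformity}, the left-hand side says ``there exist closed types $X_n,X_a$ such that $v \in \PPPPEntity\ \encoded{C.\mathit{table\_matches}}\ \encoded{C.\mathit{table\_actions}}\ \encoded{C.\mathit{action\_params}}\ X_n\ X_a$''; reading the right-hand side with the same (implicit) existential over $X_n,X_a$, it then suffices to prove, for each such well-formed instantiation $T$, that $v \in T \iff \typesTo{\emptyset}{v}{T}$.

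The first step is a \emph{reduction lemma} for $T = \PPPPEntity\ \encoded{C.\mathit{table\_matches}}\ \encoded{C.\mathit{table\_actions}}\ \encoded{C.\mathit{action\_params}}\ X_n\ X_a$. For $T$ to be well-formed, rule \textsc{Type-TApp} requires the type arguments to satisfy the bounds of $\PPPPEntity$ (and of $\TableEntry$); since the bodies of the $\encoded{\cdots}$ abstractions are match types whose scrutinees are precisely the singleton type arguments, this forces $X_n$ to select one branch of $\encoded{C.\mathit{table\_matches}}$ (i.e.~$X_n$ is a singleton table-name type of $C$, or the wildcard $\singleton{\wildcard}$) and $X_a$ to select a branch of $\encoded{C.\mathit{table\_actions}}\ X_n$ consistently with that table. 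Under these constraints, repeated use of the application-reduction rule \textsc{ST-App} and the match-reduction rule \textsc{ST-Match1} --- whose disjointness side conditions hold because distinct singleton string types share no common subtype (\Cref{def:type-disjoint}) and the wildcard branch is last --- shows that $T$ is subtyping-equivalent to a concrete record type with fields $\mathit{name} : X_n$, $\mathit{matches} : \encoded{C.\mathit{table\_matches}}\ X_n$, $\mathit{action} : X_a$, $\mathit{params} : \encoded{C.\mathit{action\_params}}\ X_a$ (whose field types reduce further by the same argument), in union with the elided $\PPPPEntity$ cases, which are treated identically.

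For the forward direction, suppose $v \in T$. Since semantic membership is stable under subtyping-equivalence, $v$ belongs to the concrete record type above; focusing on $\TableEntry$, $v = \curl{\mathit{name}=v_n, \mathit{matches}=v_m, \mathit{action}=v_a, \mathit{params}=v_p}$ is a record of \emph{ground} values (the reduced type admits no lambda or type abstraction), with $v_n \in X_n$, $v_a \in X_a$, and $v_m,v_p$ in the corresponding nested record / $\tx{Bytes}$ / $\tx{Unit}$ types. By \textsc{T-Val}, each ground field $v_G$ has type $\singleton{v_G}$; by \textsc{ST-Val} --- using that the semantic $\in$ agrees with the relation $\memberOf{\cdot}{\cdot}$ (\Cref{sec:memberof-relation}) on ground values --- we get $\subtypesTo{\emptyset}{\singleton{v_G}}{T'}$ for the field's target type $T'$. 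Combining with \textsc{T-Rec}, \textsc{T-Sub}, and the reduction lemma yields $\typesTo{\emptyset}{v}{T}$. The backward direction is the mirror image: from $\typesTo{\emptyset}{v}{T}$, inversion on the typing derivation of a record value (only \textsc{T-Rec}, \textsc{T-Val}, \textsc{T-Sub} apply) together with inversion on the subtyping used in \textsc{T-Sub} (rules \textsc{ST-Rec}, \textsc{ST-Match1}, \textsc{ST-App}, \textsc{ST-Val}, \textsc{ST-Trans}) shows that every field of $v$ is a ground value typed by a singleton below its target type, hence --- again by the $\memberOf{\cdot}{\cdot}$/$\in$ agreement --- each field lies in its target type, so $v \in T$.

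The main obstacle is the reduction lemma together with the two properties it leans on: (i) stability of the semantic predicate $v \in T$ under the subtyping-equivalence induced by \textsc{ST-App} and \textsc{ST-Match1}, and (ii) the agreement, on ground values and compatibly with type equivalence, between $v \in T$ and the $\memberOf{\cdot}{\cdot}$ relation hard-wired into \textsc{ST-Val}/\textsc{T-Val}. Both rest on the disjointness premises of \textsc{ST-Match1} genuinely holding for the $\encoded{\cdots}$ match types --- i.e.~that distinct singleton string types have no common subtype --- and on those match types reducing \emph{deterministically} to a single branch, so that ``$v \in$ (match type)'' is unambiguous. Threading the elided union members of $\PPPPEntity$ through the whole argument, and the $\tx{Unit}$ continuation used for parameter-less actions, is routine but adds bookkeeping.
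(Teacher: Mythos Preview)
The paper does not give a proof of this proposition at all: it treats it as immediate from definitions. Recall that \Cref{def:p4runtime-client-semantics} \emph{assumes} a predicate ``$v \in T$'' which ``holds iff value $v$ belongs to type $T$''; elsewhere the paper uses this interchangeably with the closed typing judgement (e.g.\ in the progress proof: ``since $\typesTo{\emptyset}{a}{\tx{ServerRef}[\ldots]}$, we have $a \in \tx{ServerRef}[\ldots]$ by definition'', and in the proof of \Cref{theorem:progress-general}: ``From \cref{def:p4runtime-client-semantics}, we have $v \in \PPPPEntity\ \ldots$, which is the same as $\mathit{Conforms}(v,C)$''). Under that reading, unfolding $\mathit{Conforms}$ via \Cref{def:p4runtime-entity-conformity} gives the biconditional literally, with nothing further to prove.

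Your proposal takes a genuinely different route: you treat $v \in T$ as an \emph{independent} semantic notion and set out to prove it coincides with $\typesTo{\emptyset}{v}{T}$ on the relevant class of types, via a reduction lemma for the $\PPPPEntity$ instantiation, a ground-value analysis, and typing/subtyping inversion. This is considerably more work than the paper intends, and its correctness hinges on exactly the two properties you flag as ``main obstacles'': stability of $v \in T$ under the equivalences generated by \textsc{ST-App}/\textsc{ST-Match1}, and agreement of $v \in T$ with $\memberOf{\cdot}{\cdot}$ on ground values. The paper never states or proves either property --- it simply does not need them, because it collapses $v \in T$ and $\typesTo{\emptyset}{v}{T}$ definitionally. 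Your argument is a reasonable blueprint for what one \emph{would} have to do if $v \in T$ were defined independently, but relative to the paper's setup it is over-engineered and rests on assumptions the paper never discharges.
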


\begin{definition}[P4Runtime Server Semantics]
  \label{def:p4runtime-server-semantics}
  We define the \emph{semantics of a P4Runtime server $S$} as a relation $S
  \reduce{\dual{\alpha}} S'$ (where $\alpha$ is from
  \Cref{def:p4runtime-client-semantics}) inductively defined by the rules in
  \Cref{fig:p4runtime-server-semantics}.
\end{definition}

\begin{figure}
\newcommand{\rowSep}{2mm}
\begin{gather*}
    \Rule
        {s \tx{ is a fresh channel} \\[-1mm]
         s \in \tx{Chan}[\encoded{C.\mathit{table\_matches}}, \encoded{C.\mathit{table\_actions}}, \encoded{C.\mathit{action\_params}}]}
        {\serverRed{C, E, a, K}{\tx{connect}(a)}{s}{C, E, a, K \cup \curl{s}}}
        {Sv-Connect}\\[\rowSep]
    \Rule
        {s \in K \quad \mathit{Conforms}(v,C) \quad \serverRedInt{C}{E}{\tx{read}(v)}{v'}}
        {\serverRed{C, E, a, K}{\tx{read}(s,v)}{v'}{C, E, a, K}}
        {Sv-Read}\\[\rowSep]
    \Rule
        {s \in K \quad \mathit{Conforms}(v,C) \quad \serverRedInt{C}{E}{\tx{insert}(v)}{(E',v')}}
        {\serverRed{C, E, a, K}{\tx{insert}(s,v)}{v'}{C, E', a, K}}
        {Sv-Insert}\\[\rowSep]
    \Rule
        {s \in K \quad \mathit{Conforms}(v,C) \quad \serverRedInt{C}{E}{\tx{modify}(v)}{(E',v')}}
        {\serverRed{C, E, a, K}{\tx{modify}(s,v)}{v'}{C, E', a, K}}
        {Sv-Modify}\\[\rowSep]
    \Rule
        {s \in K \quad \mathit{Conforms}(v,C) \quad \serverRedInt{C}{E}{\tx{delete}(v)}{(E',v')}}
        {\serverRed{C, E, a, K}{\tx{delete}(s,v)}{v'}{C, E', a, K}}
        {Sv-Delete}
\end{gather*}
\caption{LTS semantics of a P4Runtime server.}
\label{fig:p4runtime-server-semantics}
\end{figure}

The P4Runtime server semantics in \Cref{def:p4runtime-server-semantics} show how
the internal configuration of a P4Runtime server evolves, and how the server
responds to queries from clients. The semantics are based on the
\citeN{p4rtspec}.

The server semantics focus on checking the conformance of requests from the
clients, and computes a response using an abstract evaluation predicate
``$\serverRedInt{\_}{\_}{\_}{\_}$'': the details of this predicate are not
crucial --- but we assume that it always yields a well-typed response, i.e. a
boolean or an entity that conforms to the server configuration $C$.\footnote{%
  For reference, the semantics of the predicate ``$\serverRedInt{C}{E}{\tx{read}(v)}{v'}$''
  are available in the appendix, in \Cref{fig:p4runtime-server-semantics-internal-read-eval}.}

\begin{itemize}
  \item By rule \textsc{Sv-Connect}, a server listening on address $a$ can
    accept a client connection by generating a unique channel instance $s$,
    adding $s$ to the set of established connections $K$, and producing a
    transition label $\dual{\reduceTwoLabel{{\tx{connect}(a)}}{s}}$.
    Importantly, the channels $s$ belongs to a \ourFormalLang channel type whose
    type arguments are obtained by encoding the server configuration $C$
    (by the encoding $\encoded{\cdots}$ in \Cref{def:server-config-encoding}).
  \item By rule \textsc{Sv-Read}, the server can handle a client's read request
    by emitting a label $\dual{\reduceTwoLabel{{\tx{read}(s,v)}}{v'}}$, provided
    that the connection $s$ belongs to the set of established connections $K$,
    and the query argument $v$ conforms to the server configuration (by
    \Cref{def:p4runtime-entity-conformity});
  \item Rules \textsc{Sv-Insert}, \textsc{Sv-Modify}, and \textsc{Sv-Delete}
    have requirements similar to \textsc{Sv-Read}, except that they can modify
    the server entities in $E$ --- e.g.~by adding or removing P4 table
    entries.
\end{itemize}


\subsection{Semantics of P4Runtime Networks}
\label{sec:p4runtime-network-semantics}

We now formalise the semantics of the P4Runtime networks introduced in
\Cref{def:p4runtime-network-syntax}.

\begin{definition}[P4Runtime Network Semantics]
  \label{def:p4runtime-network-semantics}
  The \emph{LTS semantics of a P4Runtime network}
  is defined by the following rules, where $\alpha$ ranges over the labels
  introduced in \Cref{def:p4runtime-client-semantics}: (for brevity, we omit
  the symmetric rules)
  \[
    \Rule
        {N \reduce{\alpha} N'}
        {N \,|\, N'' \;\reduce{\alpha}\; N' \,|\, N''}
        {Net-$\alpha$}
    \qquad
    \Rule
        {N_1 \reduce{\dual{\alpha}} N_1' \quad N_2 \reduce{\alpha} N_2'}
        {N_1 \,|\, N_2 \;\reduce{\tau}\; N_1' \,|\, N_2'}
        {Net-Comm}
  \]
  We often write $N \reduce{} N'$ instead of $N \reduce{\tau} N'$, and
  $\reduceStar$ for the reflexive and transitive closure of\, $\reduce{}$.
\end{definition}

By \Cref{def:p4runtime-network-syntax}, a network $N$ is a parallel composition
of any number of P4Runtime clients and servers.  According to the semantics in
\Cref{def:p4runtime-network-semantics}, a network $N$ can perform a transition
$\alpha$ even when composed with another network $N''$ (rule
\textsc{Net-$\alpha$}); and if two networks fire dual labels $\alpha$ and
$\dual{\alpha}$, then they can synchronise when composed, producing a
$\tau$-transition: this allows a P4Runtime client and server to interact, as
illustrated in \Cref{eg:p4runtime-network} below.

\begin{example}[A Simple P4Runtime Network]
  \label{eg:p4runtime-network}
  We give a brief example of how a network reduction could look using our semantics.
  Consider the \ourFormalLang term:

  \smallskip\centerline{$
    \letExp{c}{\ttt{Connect}(a)}\ttt{Insert}(c,v)
  $}\smallskip

  This term attempts to connect to a P4Runtime server $a$ and insert a value $v$
  (a P4 table entry).   If we compose this \ourFormalLang term with a P4Runtime
  server, the resulting network reduces as:
  \change{change:rg:11}{Fixed transition and show reduction of let}{
  \[
    \small
    \derive
      {\derive
        {s \tx{ is a fresh channel} \\[-1mm]
         s \in \Chan{\encoded{C.\mathit{tm}}, \encoded{C.\mathit{ta}}, \encoded{C.\mathit{ap}}}}
        {\langle C,E,a,K \rangle \;\reduceTwoS{\tx{connect}(a)}{s}\; \langle C,E,a,K \cup \curl{s} \rangle}
      \quad
      \derive
        {a \in \ServerRef{T_m,T_a,T_p} \quad s \in \Chan{T_m,T_a,T_p}}
        {\ttt{Connect}(a) \;\reduceTwo{\tx{connect}(a)}{s}\; s}}
      {\langle C,E,a,K \rangle \,|\, \letExp{c}{\ttt{Connect}(a)}\ttt{Insert}(c,v) \;\reduce{\tau}\; \langle C,E,a,K \cup \curl{s} \rangle \,|\, \letExp{c}{s}\ttt{Insert}(c,v) }
  \]
  Then, the ``let'' expression substitutes the variable $c$ with the channel
  $s$:
  \[
    \small
    \langle C,E,a,K \cup \curl{s} \rangle \,|\, \letExp{c}{s}\ttt{Insert}(c,v)
    \;\reduce{\tau}\; \langle C,E,a,K \cup \curl{s} \rangle \,|\, \ttt{Insert}(s,v)
  \]

    Finally,
}
  the network may reduce as follows: (here, let $K' = K \cup \curl{s}$ and
  $E' = E \cup \curl{v}$)
  \begin{equation*}
    \derive
      {\derive
        {s \in K' \quad \mathit{Conforms}(v, C) \quad \serverRedInt{C}{E}{\tx{insert}(v)}{\ttt{true}}}
        {\langle C,E,a,K' \rangle \;\reduceTwoS{\tx{insert}(s,v)}{\ttt{true}}\; \langle C,E',a,K' \rangle}
      \quad
      \derive
        {\ttt{true} \in \tx{Bool}}
        {\ttt{Insert}(s,v) \;\reduceTwo{\tx{insert}(s,v)}{\ttt{true}}\; \ttt{true}}}
      {\langle C,E,a,K' \rangle \,|\, \ttt{Insert}(s,v) \;\reduce{\tau}\; \langle C,E',a,K' \rangle \,|\, \ttt{true}}
  \end{equation*}

  The above example succeeds because the inserted value $v$ conforms to the
  server configuration $C$ --- but this may not always happen: in particular, if
  $\mathit{Conforms}(v, C)$ does \emph{not} hold (e.g.~because $v$
  refers to a P4 table name that does not exist in $C$), then the server and
  client cannot synchronise and they get stuck.
  To prevent this situation, we introduced the type system for \ourFormalLang in \Cref{sec:type-system}, and we prove its properties in \Cref{sec:results}.
\end{example}


\section{Results}
\label{sec:results}
\label{sec:type-system-properties}

We now present our main results: well-typed \ourFormalLang terms enjoy
type preservation (\Cref{theorem:preservation}) and progress
(\Cref{theorem:progress}). Type preservation ensures that if a
well-typed \ourFormalLang term $t$ performs any number of
$\tau$-transitions (either with internal computations, or by
interacting with P4Runtime servers in a surrounding network) and
becomes $t'$, then $t'$ is also well-typed with the same type.

\begin{restatable}[Type preservation]{theorem}{lemPreservation}
    \label{theorem:preservation}
    If\, ${\typesTo{\Gamma}{t}{T}}$ and $t \mid N \reduceStar t' \mid N'$, then ${\typesTo{\Gamma}{t'}{T}}$.
\end{restatable}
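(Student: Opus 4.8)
The plan is to strip off the network layer, reduce to a single-step subject-reduction lemma for \ourFormalLang terms, and prove that lemma by induction on the reduction derivation. By induction on the number of $\tau$-steps in $t \mid N \reduceStar t' \mid N'$, it suffices to handle one step $t \mid N \reduce{\tau} t'' \mid N''$. Inverting the network rules \textsc{Net-$\alpha$} and \textsc{Net-Comm} and their symmetric variants (\Cref{def:p4runtime-network-semantics}), either the rest-network fires and $t'' = t$ (so there is nothing to prove), or the client performs a single labelled transition $t \reduce{\beta} t''$, where $\beta$ is $\tau$ or one of $\tx{connect}(a)\leadsto s$, $\tx{read}(s,v)\leadsto v'$, $\tx{insert}(s,v)\leadsto v'$, $\tx{modify}(s,v)\leadsto v'$, $\tx{delete}(s,v)\leadsto v'$ (the client always emits the un-barred label; the server only gates \emph{when} the step is enabled, and servers carry no type, so $N''$ is irrelevant to the type of $t''$). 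The crux is therefore: \emph{if $\typesTo{\Gamma}{t}{T}$ and $t \reduce{\beta} t'$ for any label $\beta$, then $\typesTo{\Gamma}{t'}{T}$.}

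I would prove this by induction on the derivation of $t \reduce{\beta} t'$ (case analysis on the last rule of \Cref{fig:p4runtime-client-semantics}), using the standard toolkit: (i) generation/inversion lemmas for $\typesTo{\Gamma}{\cdot}{\cdot}$ that absorb uses of \textsc{T-Sub}; (ii) inversion lemmas for the subtyping of \Cref{fig:subtyping-rules} (contra-/co-variance of $\to$, covariance of records, lists, and $\forall$, and the shape of subtypes of a singleton), which presuppose showing that subtyping admits transitivity elimination / has a syntax-directed reformulation even with singleton and match types; (iii) term- and type-substitution lemmas; (iv) a context lemma --- for every evaluation context $\mbC$, $\typesTo{\Gamma}{\mbC[t]}{T}$ yields a $T''$ with $\typesTo{\Gamma}{t}{T''}$ such that $\typesTo{\Gamma}{t'}{T''}$ implies $\typesTo{\Gamma}{\mbC[t']}{T}$ --- which discharges \textsc{E-$\mbC$} by the induction hypothesis at the hole; and (v) a bridge between the semantic membership predicate ``$v \in T$'' of \Cref{def:p4runtime-client-semantics} and typing, namely $v \in T$ implies $\typesTo{\emptyset}{v}{T}$ (of which \Cref{def:config-enc-equiv} is the specialisation to $\PPPPEntity$-types), together with the fact that $\disjoint{\Gamma}{\singleton{v_G}}{T}$ holds exactly when $\memberOf{v_G}{T}$ does not hold. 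With these, the $\lambda$-calculus-style cases (\textsc{E-App}, \textsc{E-TApp}, \textsc{E-Let}, \textsc{E-Field}, \textsc{E-Head1}, \textsc{E-Head2}, \textsc{E-Tail1}, \textsc{E-Tail2}) are routine; and the P4Runtime-operation cases (\textsc{E-Connect}, \textsc{E-Read}, \textsc{E-Insert}, \textsc{E-Modify}, \textsc{E-Delete}) go through because the produced value is pinned by the rule's semantic side condition (e.g.\ $s \in \Chan{T_m,T_a,T_p}$, $v' \in [\PPPPEntity\ T_m\ T_a\ T_p\ X_n\ X_a]$, or $v' \in \tx{Bool}$), which by (v) is exactly --- up to subtyping --- the result type prescribed by the corresponding rule \textsc{T-OpC}/\textsc{T-OpR}/\textsc{T-OpI}/\textsc{T-OpM}/\textsc{T-OpD}; one also uses that $\ServerRef{\cdot,\cdot,\cdot}$ and $\Chan{\cdot,\cdot,\cdot}$ admit no nontrivial subtyping, so an address/channel value has a unique such type (recorded in its subscripts), which pins $T_m,T_a,T_p$.

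The delicate case is \textsc{E-Match}, $v\ \ttt{match}\ \curl{\types{x_i}{T_i} \Rightarrow t_i}_{i \in I} \reduce{\tau} \subs{t_k}{x_k}{v}$ with $k$ the \emph{least} index such that $v \in T_k$, whereas (modulo \textsc{T-Sub}) \textsc{T-Match} assigns the redex the match \emph{type} $T_s\ \ttt{match}\ \curl{T_i \Rightarrow T_i'}_{i \in I}$ built from the static scrutinee type $T_s$, with $\subtypesTo{\Gamma}{T_s\ \ttt{match}\ \{\cdots\}}{T}$. In general $T_s$ is too coarse for \textsc{ST-Match1} to select branch $k$ (take $T_s = \typeUnion{\tx{Int}}{\tx{Bool}}$, first branch $\tx{Int}$, and $v = \ttt{true}$), so the match type need not be subtyping-equivalent to $T_k'$. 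The fix is to \emph{re-type} the redex at the most precise scrutinee type: when $v$ is a ground value $v_G$, derive $\typesTo{\Gamma}{v_G\ \ttt{match}\ \{\cdots\}}{\singleton{v_G}\ \ttt{match}\ \{T_i \Rightarrow T_i'\}}$ via \textsc{T-Val}+\textsc{T-Match} (exhaustiveness $\subtypesTo{\Gamma}{\singleton{v_G}}{\cup_i T_i}$ follows from $v_G \in T_k$); then $\subtypesTo{\Gamma}{\singleton{v_G}\ \ttt{match}\ \{\cdots\}}{T_s\ \ttt{match}\ \{\cdots\}}$ by \textsc{ST-Match2}; and since $v_G \in T_k$ and $v_G \not\in T_j$ for $j<k$ give $\subtypesTo{\Gamma}{\singleton{v_G}}{T_k}$ and $\disjoint{\Gamma}{\singleton{v_G}}{T_j}$ via (v), \textsc{ST-Match1} yields $\eqtypesTo{\Gamma}{\singleton{v_G}\ \ttt{match}\ \{\cdots\}}{T_k'}$. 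The substitution lemma --- with $\typesTo{\Gamma}{v_G}{T_k}$ recovered from $v_G \in T_k$ via (v) --- then gives $\typesTo{\Gamma}{\subs{t_k}{x_k}{v_G}}{T_k'}$, and chaining these (sub)typings with $\subtypesTo{\Gamma}{T_s\ \ttt{match}\ \{\cdots\}}{T}$ and \textsc{T-Sub} closes the case. For a non-ground value scrutinee (a $\lambda$- or $\Lambda$-abstraction, or a list/record containing one) the same argument works with a ``principal typing of values'' lemma playing the role of $\singleton{v_G}$; note that evaluation never descends under a binder, so every value in redex position is closed, which is what makes (v) applicable. Apart from \textsc{E-Match}, the argument follows the standard subject-reduction template; the other point demanding genuine care is re-establishing the subtyping inversion lemmas (ii) in the presence of match and singleton types.
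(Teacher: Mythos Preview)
Your proposal is correct and lands on the same key ideas as the paper: strip the network layer to a single client step (the paper does this via a decomposition proposition and a one-step corollary), then prove term-level subject reduction using substitution, minimal/principal types, a singleton-disjointness lemma, and subtyping inversion; your handling of \textsc{E-Match} via re-typing the scrutinee at $\singleton{v_G}$, invoking \textsc{ST-Match1} for the selected branch and \textsc{ST-Match2} to climb back to the original match type, is exactly the paper's argument.

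The one organisational difference is that you run the core lemma by induction on the \emph{reduction} derivation (supplemented by a context/replacement lemma for \textsc{E-$\mbC$} and typing-inversion lemmas that absorb \textsc{T-Sub}), whereas the paper runs it by induction on the \emph{typing} derivation (so \textsc{E-$\mbC$} is handled by recursing on the typing of the subterm, and no separate replacement lemma is stated). Both routes need the same heavy machinery --- the paper's Lemmas on minimal types, typing inversion, and the rather intricate subtyping inversion through match types and type applications --- so neither is materially simpler. Two small remarks: your phrasing ``$\disjoint{\Gamma}{\singleton{v_G}}{T}$ holds exactly when $\memberOf{v_G}{T}$ does not'' is too strong (the paper only needs, and proves, the implication $\subtypesTo{\Gamma\not}{\singleton{v_G}}{T} \Rightarrow \disjoint{\Gamma}{\singleton{v_G}}{T}$, which is what \textsc{E-Match}'s side condition gives once ``$v\in T$'' is read as typability); and you are right to flag the non-ground-scrutinee case --- the paper's appendix proof in fact tacitly writes $\singleton{v_s}$ and invokes minimal-types as if the scrutinee were ground, so your caveat is well placed rather than a gap on your side.
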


Our progress result (\Cref{theorem:progress}) ensures that if a well-typed
\ourFormalLang term $t$ is not a value, then it can perform
further $\tau$-transitions --- either through internal computation, or by
interacting with a P4Runtime server in a surrounding network.  This second
situation means that a well-typed client and server never get stuck, and
therefore:
\begin{itemize}
  \item a well-typed client will never attempt to read or update P4 entities
    unsupported by the server, and
  \item a well-typed client will support all values returned by a server after
    any operation.
\end{itemize}

For this result to hold, we must ensure that all clients attempting to connect
to a server will succeed, and that any client-server channel has a client-side
type that matches the server configuration.  To guarantee this, we consider
\emph{well-typed networks} according to \Cref{def:well-typed-network} below. But
first, we introduce a notion of \emph{network congruence}
(\Cref{def:network-congruence} below) allowing us to abstract from the order in
which clients and servers appear in a parallel composition.

\begin{definition}[Network Congruence]
  \label{def:network-congruence}
  $\equiv$ is the least congruence between networks such that:
  \[
    N_1 \mid N_2 \,\equiv\, N_2 \mid N_1
    \qquad
    (N_1 \mid N_2) \mid N_3 \,\equiv\, N_1 \mid (N_2 \mid N_3)
  \]
\end{definition}

\begin{definition}[Well-typed Network]
  \label{def:well-typed-network}
  We say that \emph{a network $N$ is well-typed} iff for all P4Runtime clients
  $t$ such that $N \equiv t \mid N_0$ (for some $N_0$), we have:
  \todoin{Revise notation}{Make sure the server notation below is correct}
  \begin{enumerate}
    \item $\typesTo{\emptyset}{t}{T}$ (for some $T$);
    \item for all server addresses $a$ occurring in $t$:
      \begin{itemize}
        \item there is exactly one server $S = \langle C, E, a, K \rangle$
            such that $N_0 \equiv S \mid N_1$; and
        \item $a \in \tx{ServerRef}\!\left[\encoded{C.\mathit{table\_matches}}, \encoded{C.\mathit{table\_actions}}, \encoded{C.\mathit{action\_params}}\right]$
      \end{itemize}
    \item for all client-server channels $s$ occurring in $t$:
      \begin{itemize}
        \item there is exactly one server $S = \langle C, E, a, K \rangle$
          with $s \in K$, and such that $N_0 \equiv S \mid N_1$; and
        \item $s \in \tx{Chan}\!\left[\encoded{C.\mathit{table\_matches}}, \encoded{C.\mathit{table\_actions}}, \encoded{C.\mathit{action\_params}}\right]$
      \end{itemize}
    \end{enumerate}
\end{definition}

We now have all ingredients to formalise progress (\Cref{theorem:progress}), and
the resulting \Cref{corollary:type-soundness}: well-typed networks only stop
reducing when all P4Runtime clients terminate successfully.

\begin{restatable}[Progress]{theorem}{progress}
    \label{theorem:progress}
    Take any well-typed network $N$, and take any P4Runtime client $t$
    such that $N \equiv t \mid N_0$ (for some $N_0$). Then either:
    \begin{itemize}
      \item $t$ is fully-reduced into a value; or
      \item $t \reduce{} t'$, and correspondingly,
        $N \reduce{} N' \equiv t' \mid N_0$ with $N'$ well-typed; or
      \item there is a server $S$ such that $N_0 \equiv S \mid N_1$ and $t \mid S \reduce{} t' \mid S'$,
        and correspondingly, $N \reduce{} N' \equiv t' \mid S' \mid N_1$ with
        $N'$ well-typed.
    \end{itemize}
\end{restatable}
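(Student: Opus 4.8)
The plan is to proceed by structural induction on the typing derivation $\typesTo{\emptyset}{t}{T}$ (item~(1) of \Cref{def:well-typed-network}), following the standard progress-for-$F_{<:}$ template but with extra cases for the P4Runtime operations. First I would establish a canonical forms lemma: if $\typesTo{\emptyset}{v}{T}$ and $T$ is subtyping-equivalent to a function type, then $v$ is a lambda abstraction; similarly for type abstractions, records, lists, and — crucially — for $\Chan{T_m,T_a,T_p}$ and $\ServerRef{T_m,T_a,T_p}$, where canonical forms must be channel values $s_{T_m,T_a,T_p}$ and address values $a_{T_m,T_a,T_p}$ respectively (this is where subtyping via \textsc{ST-Val}, \textsc{ST-Sub}, and the rigidity of \textsc{Type-SR}/\textsc{Type-Chan} needs care). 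I would also need an inversion lemma for each typing rule up to subsumption. Given $t$ is not a value, I split on the shape of $t$: for the purely functional constructs ($t_1\,t_2$, $t_1\,T_2$, $t_r.f_k$, $\concat{t_0}{t_1}$, $\listHead t$, $\listTail t$, $\letExp{x}{t_0}t_1$, records, and \texttt{match}) the argument is the usual one — either a subterm is not a value and we step under the evaluation context $\mbC$ via \textsc{E-$\mbC$}, or all subterms are values and canonical forms gives us a redex; for the \texttt{match} case I additionally invoke the exhaustiveness premise $\subtypesTo{\Gamma}{T_s}{\cup_{i\in I}T_i}$ of \textsc{T-Match} together with the assumed predicate $v\in T$ to show some case matches, so \textsc{E-Match} applies.

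The interesting cases are the five P4Runtime operations. For $\ttt{Connect}(t)$: if $t$ is not a value, step under $\mbC$; if $t$ is a value, by \textsc{T-OpC} and inversion $t$ has type $\ServerRef{T_m,T_a,T_p}$, so by canonical forms $t = a_{T_m,T_a,T_p}$, and item~(2) of well-typedness tells us there is a unique server $S=\langle C,E,a,K\rangle$ in $N_0$ with $a\in\tx{ServerRef}[\encoded{C.\mathit{tm}},\encoded{C.\mathit{ta}},\encoded{C.\mathit{ap}}]$; then \textsc{E-Connect} fires on the client side producing a fresh $s$, \textsc{Sv-Connect} fires on the server side, and \textsc{Net-Comm} synchronises them, giving the third disjunct of the theorem. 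For $\ttt{Read}/\ttt{Insert}/\ttt{Modify}/\ttt{Delete}(t_c,t_e)$: if either argument is a non-value, step under $\mbC$; otherwise by the relevant \textsc{T-Op$\cdot$} rule $t_c$ has channel type and $t_e$ has type $\PPPPEntity\ T_m\ T_a\ T_p\ X_n\ X_a$, so canonical forms gives $t_c = s_{T_m,T_a,T_p}$ and item~(3) of well-typedness gives a unique server $S$ with $s\in K$ and $s\in\tx{Chan}[\encoded{C.\mathit{tm}},\encoded{C.\mathit{ta}},\encoded{C.\mathit{ap}}]$; here I use \Cref{def:config-enc-equiv} (Conformance of Well-Typed Values) to conclude $\mathit{Conforms}(t_e,C)$ from the typing of $t_e$, which is exactly the premise needed so that the matching server rule (\textsc{Sv-Read} etc.) applies — the abstract predicate $\serverRedInt{C}{E}{\cdot}{\cdot}$ is assumed total and to yield a well-typed response, so the server can always step. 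Then \textsc{E-Read}/\textsc{E-Insert}/\ldots on the client and \textsc{Net-Comm} complete the synchronisation. Finally, I must discharge the ``$N'$ well-typed'' obligation in each disjunct: this follows from type preservation (\Cref{theorem:preservation}) for the client $t'$, from the fact that server reductions only grow $K$ or mutate $E$ while preserving conformity (using that $\serverRedInt{}{}{}{}$ returns conformant entities and that \textsc{Sv-Connect} adds a channel of exactly the encoded type), and from the observation that the rewritten network has the same set of client terms and addresses occurring in them.

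The main obstacle I expect is the canonical forms / inversion machinery for channel and server-reference types in the presence of match types and structural subtyping: because \textsc{ST-Match1} and \textsc{ST-Match2} can make a match type subtyping-equivalent to a $\Chan{\cdots}$ or $\ServerRef{\cdots}$ type, the inversion lemma must reason up to ``reduction'' of match types, and I would need an auxiliary lemma characterising when a well-formed type is subtyping-equivalent to a channel/server-reference type (essentially: only other channel/server-reference types, or match types that reduce to one). A second, subtler obstacle is keeping the well-typedness invariant of \Cref{def:well-typed-network} stable under the synchronisation step for \texttt{Connect}: the new channel $s$ must be recorded consistently on both the client side (it appears in $t'$ with type $\Chan{T_m,T_a,T_p}$) and the server side (it is added to $K$ with $s\in\tx{Chan}[\encoded{C.\mathit{tm}},\ldots]$), and one must check that $T_m,T_a,T_p$ coincide with the encoded configuration — which holds because $a$'s type was constrained by item~(2) and \textsc{E-Connect}/\textsc{Sv-Connect} propagate exactly those arguments. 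Everything else is routine $F_{<:}$ bookkeeping.
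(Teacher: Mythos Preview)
Your proposal is correct and follows essentially the same strategy as the paper: induction on the typing derivation of the client, standard $F_{<:}$ progress for the functional fragment, and for each P4Runtime operation use items~(2)/(3) of \Cref{def:well-typed-network} to locate the unique matching server and \Cref{def:config-enc-equiv} to discharge the $\mathit{Conforms}$ premise of the corresponding \textsc{Sv-$\cdot$} rule. The paper factors the argument a bit more finely---through a ``Channel consistency'' lemma that tracks how server addresses and channel values propagate through a single client reduction (labelled by $\alpha$), a ``Term progress'' lemma that classifies the possible labels of any reduction of a well-typed closed client, and a ``Progress with explicit invariants'' theorem for a single client--server pair that is then lifted to arbitrary networks via congruence---but the content and the key obstacles you flag (inversion for match types, stability of the invariant under \textsc{Sv-Connect}) are the same. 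One small inaccuracy: you plan to ``step under $\mbC$'' when an argument of a P4Runtime operation is not yet a value, but the grammar restricts $\mathit{op}$ arguments to $w ::= v \mid x$ and there is no evaluation context with a hole inside an $\mathit{op}$; in the empty typing environment such arguments are therefore already values, so that case simply does not arise.
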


\begin{corollary}[Type soundness]
  \label{corollary:type-soundness}
  Take any well-typed network $N$.  If\, $N \reduceStar N'$ and $N'$ cannot
  perform further $\tau$-transitions, then all P4Runtime clients in $N'$ are
  fully-reduced into values.
\end{corollary}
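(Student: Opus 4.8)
The plan is to derive the corollary almost immediately from \Cref{theorem:progress} by an argument on the shape of an irreducible network. First I would take a well-typed network $N$, assume $N \reduceStar N'$, and assume $N'$ cannot perform any further $\tau$-transition. The first thing to establish is that well-typedness is preserved along the reduction $N \reduceStar N'$, so that $N'$ is itself a well-typed network in the sense of \Cref{def:well-typed-network}; this follows by a straightforward induction on the length of the reduction sequence, where each single step is handled by the last two bullets of \Cref{theorem:progress} (which explicitly guarantee that the resulting network $N'$ is again well-typed), together with \Cref{theorem:preservation} to carry along the typing of each client term.

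Next I would argue by contradiction: suppose some P4Runtime client $t$ inside $N'$ is \emph{not} fully reduced into a value. By \Cref{def:p4runtime-network-syntax} and \Cref{def:network-congruence}, we can write $N' \equiv t \mid N_0'$ for some $N_0'$, and since $N'$ is well-typed, $t$ is closed and well-typed ($\typesTo{\emptyset}{t}{T}$ for some $T$). Applying \Cref{theorem:progress} to this $t$: the first alternative ($t$ is a value) is excluded by assumption, so one of the other two alternatives holds — either $t \reduce{} t'$ with $N' \reduce{} N'' \equiv t' \mid N_0'$, or there is a server $S$ with $N_0' \equiv S \mid N_1$ and $t \mid S \reduce{} t' \mid S'$ with $N' \reduce{} N'' \equiv t' \mid S' \mid N_1$. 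In both cases, using rules \textsc{Net-$\alpha$} and \textsc{Net-Comm} of \Cref{def:p4runtime-network-semantics} (together with congruence $\equiv$, which is respected by the semantics), $N'$ performs a $\tau$-transition. This contradicts the hypothesis that $N'$ is irreducible, so no such non-value client $t$ can exist; hence every P4Runtime client in $N'$ is fully reduced into a value.

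I expect the only delicate point to be bookkeeping around the congruence $\equiv$ and the precise statement of \Cref{theorem:progress}: one must check that \Cref{theorem:progress} applies to \emph{every} client occurring in $N'$ (not just one), that congruence can indeed be used to isolate an arbitrary client from the parallel composition, and that a transition of a sub-network lifts to a transition of the whole network. These are routine consequences of \Cref{def:p4runtime-network-semantics,def:network-congruence}, but they are where a careful write-up spends its words; the genuinely hard reasoning has already been done in \Cref{theorem:preservation,theorem:progress}, so the corollary itself is short.
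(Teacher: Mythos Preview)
Your proposal is correct and follows exactly the route the paper intends: the corollary is stated in the paper without an explicit proof, as an immediate consequence of \Cref{theorem:preservation} and \Cref{theorem:progress}, and your argument spells out precisely that standard derivation (preserve well-typedness along $\reduceStar$, then apply Progress to any non-value client to contradict irreducibility).

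One small remark on your first step: you justify preservation of network well-typedness along $N \reduceStar N'$ by invoking the last two bullets of \Cref{theorem:progress}. Strictly speaking, Progress guarantees that \emph{some} reduction from a well-typed $N$ yields a well-typed successor, whereas you need that \emph{the particular} $\tau$-step taken does. This gap is closed by the network-decomposition fact (Proposition~\ref{lem:network-semantics-comp} in the appendix): every $\tau$-transition of a network arises either from a single client's internal step or from a client--server synchronisation, which are exactly the shapes Progress covers. You already flag this kind of bookkeeping as the ``delicate point'', so your plan is sound; just be aware that the appeal is really to the decomposition lemma together with Progress, not to Progress alone.
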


\section{Implementation of \ourDSL: a Scala 3 API based on \ourFormalLang}
\label{sec:implementation}

We now outline the implementation \ourDSL, our verified API for programming
P4Runtime client applications, based on our formalisation
of \ourFormalLang and its typing system (\Cref{sec:language,sec:type-system}).
\change{change:impl-mention-artifact}{}{%
\ourDSL is published as companion artifact of this paper, and its latest version
is available at:

\smallskip\centerline{%
    \url{https://github.com/JensKanstrupLarsen/P4R-Type/}
}\smallskip%
}

Our typing system (\Cref{sec:type-system}) is designed to take advantage of
Scala 3 features (in particular, match types \cite{BlanvillainBKO22}): this
naturally leads to implementing \ourDSL as a Scala 3 API. Consequently, the
interactions between a client using \ourDSL and one or more P4 devices have the
properties presented in \Cref{sec:results}: all read/insert/modify/delete
operations are type-safe, and they enjoy progress and preservation (if
both client and device use the same P4Info file).

The implementation of \ourDSL consists of:
\emph{(1)} a type-parametric API for P4Runtime operations (\texttt{connect},
    \texttt{read}, \texttt{insert}, etc.) (\Cref{sec:implementation:api}), and
\emph{(2)} a software tool that turns a P4Info file into a set of Scala 3
    types
    which constrain the \ourDSL API
    (\Cref{sec:implementation:p4info-to-scala}).%

\subsection{Type-Parametric API for P4Runtime Operations}
\label{sec:implementation:api}

The \ourDSL API consists of the five P4Runtime operations detailed in
\Cref{sec:language}: \ttt{connect}, \ttt{read}, \ttt{insert}, \ttt{modify}, and
\ttt{delete}.  We implement these operations as methods equipped with the
strict type parameters shown in \Cref{fig:typing-rules} (rules \textsc{T-OpC},
\textsc{T-OpI}, \textsc{T-OpM}, \textsc{T-OpD}). The operations closely
correspond to the operations in the P4Runtime protobuf API~\cite{p4rtspec}.
Under the hood, these methods use the loosely-typed P4Runtime protobuf
specification and RPC,\footnote{\url{https://github.com/p4lang/p4runtime}}
with (de-)serialisation from/to Scala objects based on the ScalaPB
library:\footnote{\url{https://scalapb.github.io/}}

\begin{itemize}
  \item \ttt{connect} uses the \texttt{StreamChannel} RPC to establish a
    connection; 
  \item \ttt{read} uses the \texttt{Read} RPC to read table entries from the
    server;
  \item \ttt{insert}, \ttt{modify}, and \ttt{delete} use the \texttt{Write} RPC
    to update the server.
\end{itemize}

The signature of the API methods also align with the formal API:
\begin{align*}
  \letExp{\ttt{read}}{&\lambda T_m.\; \lambda T_a.\; \lambda T_p.\; \tabs{X_n}{\tx{TableName}} \tabs{X_a}{T_a\ X_n}\\
                      &\abs{c}{\tx{Chan}[T_m,T_a,T_p]} \\
                      &\abs{x}{\curl{\tx{name} : X_n,\; \tx{matches} : T_m\ X_n,\; \tx{action} : X_a,\; \tx{params} : T_p\ X_a}}\ttt{Read}(c,x)\\
                      &\qquad}\ldots
\end{align*}
\begin{lstlisting}[language=Scala]
  def read[TM[_], TA[_], TP[_]]
    (c: FP4Channel[TM, TA, TP], tableEntry: FP4TableEntry[TM, TA, TP, _, _])
    : Seq[FP4TableEntry[TM, TA, TP, _, _]] 
    = ...
\end{lstlisting}
\noindent
In the code snippet above, the two types \ttt{FP4Channel} and
\ttt{FP4TableEntry} are also part of \ourDSL.
Each of these types take the same type parameters as
their equivalents in \Cref{fig:language-syntax-types,fig:language-syntax-sugar-types}; such type
parameters are usually constrained by the context and inferred by the Scala 3 compiler,
hence the user does not need to write them explicitly. The \ttt{FP4Channel} type is simply
a case class that contains the table entry values (table name, parameters,
etc.), while the \ttt{FP4Channel} is an abstract class containing the methods
for serialization (\ttt{toProto}) and deserialization (\ttt{fromProto}).

\subsection{Translation of P4 Device Configuration Metadata (P4Info) into Scala 3 Types}
\label{sec:implementation:p4info-to-scala}

\ourDSL includes a tool that implements the encoding in
\Cref{def:server-config-encoding}: the tool takes a P4Info file
(representing a P4 device's tables, actions, \ldots) and generates three Scala 3
types, which can be used to instantiate the type parameters $T_m, T_a, T_p$ (see
\Cref{sec:language,sec:type-system}) to guarantee type safety and progress. Such
generated types are called \texttt{TableMatchFields}, \texttt{TableActions}, and
\texttt{ActionParams}:
\begin{itemize}
  \item type \texttt{TableMatchFields} can instantiate $T_m$, and maps table
    names to their match fields;
  \item type \texttt{TableActions} can instantiate $T_a$, and maps table names
    to their action names;
  \item type \texttt{ActionParams} can instantiate $T_p$, and maps action names
    to their parameter types.
\end{itemize}
A programmer can use \ourDSL to connect to a P4 device and obtain a typed
channel constrained by the 3 types above (akin to $\Chan{T_m,T_a,T_p}$ in
\Cref{sec:language}); when using our type-parametric API
(\Cref{sec:implementation:api}) on this typed channel, only operations
compatible with the P4 device can be performed; otherwise, a type error occurs
(just like our type system in \Cref{sec:type-system} prevents invalid
operations).

We now illustrate in more detail the \ourDSL-generated types that can
instantiate the type parameters $T_m,T_a,T_p$, using
\Cref{fig:p4runtime-server-config-ex-enc} as an example.

\medskip\noindent%
\textbf{The type parameter $T_m$ (match fields of a P4 table)}
can be instantiated with the higher-kinded type \texttt{TableMatchFields}, which
takes a parameter \ttt{TN} (expected to be a known table name).
\begin{lstlisting}[language=Scala]
  type TableMatchFields[TN] = TN match
      case "IPv4_table" => ("IPv4_dst_addr", P4.LPM)
      case "IPv6_table" => ("IPv6_dst_addr", P4.LPM)
      case "*" => "*"
\end{lstlisting}
The type above matches a table name \ttt{TN} with one of
the known table names (represented as singleton string types) and yields tuple
types pairing \texttt{TN}'s field names 
with their type of packet match (\texttt{P4.Exact}, \texttt{P4.Ternary},
\texttt{P4.LPM}, \ldots which are types provided by \ourDSL).  As per P4Runtime
standard, table fields can be optionally undefined, unless they perform a
\texttt{P4.Exact} packet match.


\medskip\noindent%
\textbf{The type parameter $T_a$ (P4 table actions)} can be instantiated with type
\ttt{TableAction}, that matches a table name \ttt{TN}
to yield the valid actions for \ttt{TN} (which may include the wildcard \ttt{*}).
\begin{lstlisting}[language=Scala]
type TableActions[TN] <: ActionName = TN match
    case "IPv4_table" => "IPv4_forward" | "Drop" | "*"
    case "IPv6_table" => "IPv6_forward" | "Drop" | "*"
    case "*" => "*"
\end{lstlisting}


\medskip\noindent%
\textbf{The type parameter $T_p$ (action parameters)} can be instantiated with
type \ttt{ActionParams}, that matches an action name \ttt{AN}
to yield the parameter types for \ttt{AN}. Each parameter type is a tuple with
the name of the parameter (as a singleton string type) and the value type.
\begin{lstlisting}[language=Scala]
type ActionParams[AN] = AN match
    case "IPv4_forward" => (("mac_dst", ByteString), ("port", ByteString))
    case "IPv6_forward" => (("mac_dst", ByteString), ("port", ByteString))
    case "Drop" => Unit
    case "*" => Unit
\end{lstlisting}

%

\medskip%
All three types above also accept a \emph{wildcard} singleton type \ttt{"*"} as
a parameter, representing the request of querying all/any table match fields,
actions, or parameters.



\change{change:case-studies-title-intro}{}{%
\section{Case Studies and Discussion of Alternative Designs}
\label{sec:case-study}

In this section we demonstrate the usefulness of having compile-time checked
P4Runtime queries, by illustrating three case studies implemented using \ourDSL.
We discuss one case study in detail (update of multiple switches, in
\Cref{sec:case-study-multiple-switches}) and outline two more
(port forwarding and load balancing, in \Cref{sec:case-study-port-fw,sec:case-study-load-bal}):
these applications derive and extend the tunnelling example in
the P4Runtime tutorials,\footnote{%
  \url{https://github.com/p4lang/tutorials/tree/master/exercises/p4runtime}%
} %
and are all included in the software artifact that accompanies this paper.

\subsection{Updating a Network with Multiple P4 Switches}
\label{sec:case-study-multiple-switches}
}%

\begin{figure}
    \begin{tikzpicture}
        \node[circle, draw] (h1) at (-1, 2) {N1};
        \node[circle, draw] (h2) at (5, 2) {N2};
        \node[circle, draw] (h3) at (-1, 1) {N3};
        \node[circle, draw] (h4) at (5, 1) {N4};
        \node[rectangle, draw] (s1) at (1, 2) {S1};
        \node[rectangle, draw] (s2) at (3, 2) {S2};
        \node[rectangle, draw] (s3) at (1, 1) {S4};
        \node[rectangle, draw] (s4) at (3, 1) {S4};
        \draw[dashed] (h1) -- (s1);
        \draw[dashed] (h2) -- (s2);
        \draw[dashed] (h3) -- (s3);
        \draw[dashed] (h4) -- (s4);
        \draw (s1) -- (s2);
        \draw (s1) -- (s3);
        \draw (s1) -- (s4);
        \draw (s2) -- (s4);
        \draw (s2) -- (s3);
        \draw (s3) -- (s4);
    \end{tikzpicture}
    \caption{Network topology used in the case studies
    (\Cref{sec:case-study}): N1--N4 are networks, and S1--S4 are switches.}
    \label{fig:case-study-topology}
\end{figure}
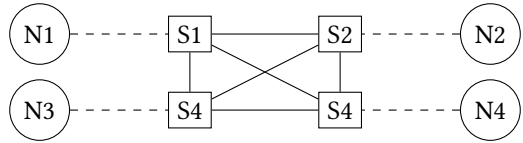

\Cref{fig:case-study-topology} shows the case study network. It contains four
networks (N1--N4) which are connected through the bridge established by the
switches (S1--S4). Switch S1 and S2 use the same subnet mask (\ttt{10.1.0.0}),
as do switch S3 and S4 (\ttt{10.2.0.0}).%
\begin{figure}
    \centering
    \begin{minipage}{.45\textwidth}
    \begin{lstlisting}[basicstyle=\tiny\ttfamily,frame=tlrb]{config1.p4}
control Process(...) {
  action drop() { ... }
  action ipv4_forward(macAddr_t dstAddr,
                      egressSpec_t port) {
    ...
  }
  table firewall {
    key = { hdr.ipv4.dstAddr : lpm; }
    actions = { drop; NoAction; }
  }
  table ipv4_lpm {
    key = { hdr.ipv4.dstAddr : lpm; }
    actions = { ipv4_forward; NoAction; }
  }
  apply {
    if (hdr.ipv4.isValid()) {
      if (firewall.apply().miss) {
        ipv4_lpm.apply();
} } } }
    \end{lstlisting}
    \end{minipage}\hfill
    \begin{minipage}{.45\textwidth}
    \begin{lstlisting}[basicstyle=\tiny\ttfamily,frame=tlrb]{config2.p4}
control Process(...) {
  action drop() { ... }
  action forward_packet(egressSpec_t port) {
    ...
  }
  table firewall {
    key = { hdr.ipv4.dstAddr : lpm; }
    actions = { drop; NoAction; }
  }
  table ipv4_table {
    key = { hdr.ipv4.srcAddr : exact;
            hdr.ipv4.dstAddr : lpm; }
    actions = { forward_packet; drop; }
  }
  apply {
    if (hdr.ipv4.isValid()) {
      if (firewall.apply().miss) {
        ipv4_table.apply();
} } } }
    \end{lstlisting}
    \end{minipage}
    \centering
    \vspace{-5mm}
    \caption{The packet-processing sections of the P4 files of switches S1 and S3 (left) and S2 and S4 (right).}
    \label{fig:case-study-p4-files}
\end{figure}
Each switch is configured with a general firewall table for all network
traffic, as well as a more specific IPv4 forwarding table for its own subnet.
For this reason, the switches use different configuration files, shown in
\Cref{fig:case-study-p4-files}. All of the switches should share the same
entries for the firewall table. Switch S1 is the master switch for forwarding
rules related to subnet \ttt{10.1.0.0}, while switch S3 is the master switch
for forwarding rules related to subnet \ttt{10.2.0.0}, meaning that S2 and S4
should replicate their table entries, respectively.

The replication of table entries must be done periodically by an external
controller. For this case study, we implement a controller in \ourDSL
that performs this replication, which should:
\begin{enumerate}
  \item Insert a set of \ttt{firewall} table entries into all four switches.
  \item Read all entries from the \ttt{ipv4\_lpm} table on S1,
    then insert them into S2.
  \item Read all entries from the \ttt{ipv4\_table} table on S3,
    then insert them into S4.
\end{enumerate}

\begin{figure}
  \begin{lstlisting}[language=Scala]
val s1 = config1.connect(0, "127.0.0.1", 50051)
val s2 = config1.connect(1, "127.0.0.1", 50052)
val s3 = config2.connect(2, "127.0.0.1", 50053)
val s4 = config2.connect(3, "127.0.0.1", 50054)

// Writing firewall entries
for (s <- List(s1,s2,s3,s4))
   for (ip <- List(bytes(10,0,42,43), bytes(10,0,13,0), bytes(10,0,37,0)))
       insert(s, FP4TableEntry("Process.firewall",
                               Some("hdr.ipv4.dstAddr", FP4_LPM(ip, 32)),
                               "Process.drop", () ))

// Reading config1 IPv4 entries
val c1_ipv4_entries = read(s1, FP4TableEntry("Process.ipv4_lpm", "*", "*", ()))

// Writing config1 IPv4 entries
for (entry <- c1_ipv4_entries) insert(s2, entry)

// Reading config2 IPv4 entries
val c2_ipv4_entries = read(s3, FP4TableEntry("Process.ipv4_table", "*", "*", ()))

// Writing config2 IPv4 entries
for (entry <- c2_ipv4_entries) insert(s4, entry)
  \end{lstlisting}
  \vspace{-3mm}
  \caption{The replication program written in \ourDSL.}
  \label{fig:case-study-dsl}
\end{figure}

\change{change:rg:3}{Add possible errors}{
When a programmer uses our \ourDSL API, the Scala compiler
spots several possible errors that may occur when updating multiple
switches with different P4 configurations:
\begin{itemize}
  \item Using non-existent table or action names (e.g., due to typos)
  \item   Inserting the wrong type of entries in a table (e.g., wrong number of match
        fields)
        \item   Using an existing action in an existing table that does not support it (e.g., an entry in \ttt{firewall} referencing \ttt{ipv4\_forward})
        \item   Passing the wrong type of arguments to an action (e.g., an entry in \ttt{ipv4\_lpm} referencing action \ttt{ipv4\_forward},
        but passing only one argument)
\end{itemize}
}

\paragraph{Generated Types in Scala 3.} Using the types generated by the tool, the replication program written in
\ourDSL is shown in \Cref{fig:case-study-dsl}. Note that the API
interface is relatively minimal and similar to the Python API. For instance,
compare the \ttt{insert} call in line 9--11 to the Python code in \Cref{fig:build-table-error}.
The difference here is that an error like the one shown in \Cref{fig:build-table-error}
would be caught at compile time by the Scala 3 type system. For example, using \stringTerm{Process.forward\_packet} instead of \stringTerm{Process.drop}
on line 11 would yield a type error: \emph{``a value of type \ttt{s.TA[("Process.firewall")]}
is required''}.

On lines 1-4, the connection to each switch is established. Note that the
\ttt{connect} methods are specific to each configuration, unlike the other
P4Runtime operations which are part of a generic package: \ttt{connect} returns
an \ttt{FP4Channel} instance with predefined type parameters, which in turn
constrain the \texttt{read}/\texttt{insert}/\texttt{modify}/\texttt{delete}
operations that can be performed on that channel.
%
%
%
Consider e.g.~lines 9--11 in \Cref{fig:case-study-dsl}: in the \ttt{insert}
call, the \ttt{tableEntry} parameter is constrained to only accept table entries
that satisfy the switch configuration of channel \texttt{s}.  Since \texttt{s}
ranges over a list of channels having two different types of switches
(\ttt{config1} and \ttt{config2}), such entries must be valid in \emph{both} switch
configurations. Since both configurations share a \stringTerm{Process.firewall}
table, the program compiles. Instead, if an otherwise valid entry for e.g. the
\stringTerm{Process.ipv4\_lpm} table is provided, the code would not compile,
as that table is defined in \ttt{config1} but not in \ttt{config2}.


\change{change:additional-applications}{New applications}{%
\subsection{Port Forwarding Management}
\label{sec:case-study-port-fw}

We implemented a control plane program for \emph{port forwarding}, which is a
Network Address Translations (NAT) service typically offered e.g.~by Internet
routers.  We use the same topology as in \Cref{fig:case-study-topology}, but we
assume that N1, N2, and N3 are local networks, while N4 is an external network.
The goal is to allow external clients to connect to servers hosted in the
internal networks.  To this end, S4 applies a set of NAT rules saying e.g.~that:
\begin{itemize}
  \item each packet received on the external S4 interface, and having
    destination IP address \ttt{1.2.3.4} and port \ttt{42}, should be translated
    to have destination IP address \ttt{10.1.0.4} and port 1042
    (and vice versa for the internal S4 interface).
\end{itemize}
We developed a program (using \ourDSL) that offers a command line interface to
connect to S4 and query, add, and delete its NAT rules.  The program reads and
modifies two P4 tables called \texttt{nat\_ingress} and \texttt{nat\_egress}
containing the translations for incoming and outgoing packets.
Translated packets are then forwarded according to the entries of a table called
\ttt{ipv4\_forward} (similar to the one used in
\Cref{sec:case-study-multiple-switches}).

\subsection{Load Balancing}
\label{sec:case-study-load-bal}
We implemented a control plane program for load balancing packet transfers. We
use the same topology as in \Cref{fig:case-study-topology}, and the goal is
for S1 to equally destribute all packets bound for N4 between its outgoing
ports to S2, S3 and S4.
To implement this, we use a P4 entity called \emph{counter},\footnote{%
  \changeNoMargin{%
    Counters are not modelled in \ourFormalLang; they can be
    easily added e.g.~as a new case to the union type of $\PPPPEntity$
    (\Cref{fig:language-syntax-sugar-types}).%
  }%
} which can be incremented by the data plane and read by the control plane.  We
configure the data plane of S1 with one counter per output port, and rules that
increment a counter every time a packet is forwarded through the corresponding
port.  Our control plane program then periodically reads the counter values (using
the \ourDSL method \texttt{readCounter}, similar to \texttt{read} for P4
tables)
and updates the packet forwarding rules (using the \ourDSL method \texttt{modify}).%
}

\change{change:why-match-types}{}{%
\subsection{On the Role of Match Types and Singleton Types}
\label{why-match-singleton-types}

We now discuss whether our results could be achieved with a different design
that, while still satisfying requirements \ref{requirement:formal-foundation},
\ref{requirement:existing-prog-lang}, and \ref{requirement:codegen} in
\Cref{sec:intro}, would not rely on match types nor singleton types, and would
be therefore less tied to the Scala 3 programming language.  Let us consider the
case study in \Cref{sec:case-study-multiple-switches}, and how we could address
it in a subset of Scala 3 \emph{without} match nor singleton types.

To ensure that the table entries described in a P4Info file are constructed
correctly, we would need to generate a dedicated data type for each table, with
argument types capturing the constraints on actions and parameters. We would also
need to constrain channel types to valid table entry types, to ensure that
\texttt{read}/\texttt{insert}/\texttt{modify}/\texttt{delete} only use table
entries of the correct type.  E.g.~in the case of the first P4Info metadata in
\Cref{fig:case-study-p4-files} we might generate a set of type definitions
like:%
}

\begin{lstlisting}[language=Scala,basicstyle=\scriptsize\ttfamily]
package config1

case class ActionWildcard()
case class ActionDrop()
case class ActionForwardPacket(addr: ByteString, port: ByteString)

type FirewallAction = ActionDrop | ActionWildcard
case class FirewallTableEntry(fields: Option[FP4_LPM], action: FirewallAction)

type IPV4Action = ActionDrop | ActionForwardPacket | ActionWildcard
case class IPV4TableEntry(fields: (FP4_Exact, Option[FP4_LPM]), action: IPV4Action)

def connect(...): FP4Channel[FirewallTableEntry | IPV4TableEntry] = ...
\end{lstlisting}

\changeNoMargin{%
A program written with the resulting API would look like:%
}

\begin{lstlisting}[language=Scala,basicstyle=\scriptsize\ttfamily]
val s1 = config1.connect(0, "127.0.0.1", 50051)
insert(s1, config1.FirewallTableEntry(Some(FP4_LPM(...)), config1.ActionDrop))
\end{lstlisting}

\changeNoMargin{%
The type definitions outlined above are roughly as compact as the match types we
generate.\footnote{%
\changeNoMargin{%
These definitions may be more verbose in languages without the type union
``\texttt{|}'', going against requirement \ref{requirement:codegen}.  E.g.~in
F\# or OCaml, \texttt{FirewallAction} and \texttt{IPV4Action} would be rendered
as labelled sum types, and each action used in more than one table would result
in duplicated label definition (in this example, this would apply to
\texttt{ActionDrop} and \texttt{ActionWildcard}).%
}%
}
However, the main drawback of such type definitions
is that they are substantially more laborious to formalise: we would need to
extend the typing system of \ourFormalLang (\Cref{def:type-system}) with a
nominal environment to collect type definitions, and the formal encoding from
P4Info metadata to types would be significantly more complex than our
\Cref{def:server-config-encoding}.  As a consequence, stating and proving
results like our \Cref{theorem:preservation,theorem:progress} would be
considerably harder, hampering requirement \ref{requirement:formal-foundation}.

On the practical side, another drawback of the type definitions outlined above
is that they would make the API more cumbersome and limited:
e.g.~it would be hard or
impossible to write code like lines 7--11 in \Cref{fig:case-study-dsl}, where
the \texttt{insert} operation works on channels with different P4 configurations
\texttt{config1} and \texttt{config2}. The reason is that channels \texttt{s1}
and \texttt{s2} would only support table entries of type
\texttt{config1.FirewallTableEntry}, whereas channels \texttt{s3} and
\texttt{s4} would only support \texttt{config2.FirewallTableEntry}: such types
would be unrelated and could not be unified, hence a programmer would need to
duplicate the code of the \texttt{insert} operations.  One might try to mitigate
this duplication by leveraging structural typing (available e.g. in TypeScript,
or in OCaml \texttt{struct}s) --- but then, the signature of the API method
\texttt{insert} would become non-trivial and the feasibility of this approach
would require further research.  Instead, the match types produced by our
encoding in \Cref{def:server-config-encoding} allow the Scala compiler to verify
that the table entries for \texttt{"Process.firewall"} have the same type under
both \texttt{config1} and \texttt{config2}, hence the code in
\Cref{fig:case-study-dsl} type-checks.%
}


\section{Related Work}\label{sec:related-work}
The programmability of SDNs comes at the cost of
complexity and attack surfaces of modern networks~\cite{kreutz2013towards}.
Several proposals address complementary problems
to our work by giving formal semantics to the data plane language~\cite{doenges2021petr4, alshnakat2022hol4p4, peterson2023p4cub} and by developing
static~\cite{liu2018p4v,stoenescu2018debugging,Eichholz2022POPL}%
\change{change:cite-eichholz}{}{} %
and dynamic~\cite{notzli2018p4pktgen, shukla2020p4consist} analysis tools for the data plane.

Several tools have been developed to verify various network
properties. Header Space Analysis~\cite{kazemian2012header} is a framework
that can analyse reachability and identify loops, among other properties, of dynamic networks. Both the data plane and the control plane must be represented in the abstract framework.
NetKAT~\cite{anderson2014netkat}, and the more recent DyNetKat~\cite{caltais2022dynetkat}, provides a network language
which encompasses both data plane and control plane,
with formal semantics and develops syntactic techniques for proving network reachability, non-interference, and correctness of program transformation.
Batfish~\cite{fogel2015general} uses a declarative approach to define network behavior via logical relations that represent both data and control plane. The framework allows to check if any reachable network configurations (and packets) can violate forwarding properties.

The main difference with our proposal is that these models are
non-executable specifications and are much more abstract than the languages used to program SDNs.
Therefore, they do not directly provide a method to program the
control plane. Verifying actual control software using these models requires to map software behavior to these specifications, which is extremely hard when the control plane is developed using a general-purpose language like Python or Scala.
Moreover, many of these models
assume a configurable, but not programmable,
data plane, which supports a limited and predefined
set of protocols (e.g., SDNs using OpenFlow \cite{OpenFlow08}).
Instead, our proposal provides a programming framework
for the control plane that can interact with arbitrary P4 data planes,
and that can statically prevent invalid table manipulations.








\section{Conclusion and Future Work}\label{sec:conclusions}

\change{change:conclusion-contribs}{}{%
We presented \ourDSL, a novel verified API for P4Runtime programs written in
Scala 3.  As a foundation for \ourDSL, we presented the first formal model of
P4Runtime networks, where servers interact with client applications written in
the calculus \ourFormalLang; we also developed a typing system for
\ourFormalLang (including match types and singleton types, inspired by Scala 3)
and proved that well-typed \ourFormalLang clients interact correctly with the
surrounding servers (\Cref{theorem:preservation,theorem:progress}). These
correctness results are inherited by actual P4 control programs that use
our \ourDSL API.

This paper is a stepping stones toward a broader objective: a fully-verified P4
development pipeline encompassing the verification of \emph{both} the control
plane and the data plane, ensuring that configuration updates applied by control
programs never compromise desired network properties. %
This objective determines our future work, outlined below.%
}

While our type system is sound in the sense that well-typed programs never get
stuck, a server may still in some cases reject an update by producing a
$\ttt{false}$ response value (for \ttt{Insert}, \ttt{Modify} or \ttt{Delete}).
Not all these cases can be statically verified (e.g.~trying to insert a table
entry that already exists in the server), but some cases may be prevented by
further typing constraints.
For example, instead of using the same $\PPPPEntity$ type for all of the
operations that handle table entries, we may adopt distinct formats or
restrictions on table entries for distinct operations --- e.g.~the \ttt{Insert}
operation does not in general accept entries where $\mathit{table\_matches} =
\wildcard$, but the \ttt{Read} operation always does.  A solution to this could
be to generate a distinct set of match types for each operation: this should not
drastically change the formalization nor the proofs.
Network properties like reachability of a node,
enforcement of access control list,
and presence of loops,
for systems with programmable data plane cannot be verified by looking only at
the control plane.
In order to verify these properties,
we plan to extend our semantics with P4Runtime stream messages
and integrate it with existing semantics of P4.
\change{change:conclusion-contribs-other-servers}{}{%
We may also need to formalise more detailed P4Runtime server semantics, e.g.~to
model P4 network elements that perform delayed table updates, have background
processes, or communicate with each other.  We expect that, thanks to our
adoption of an early LTS semantics for clients and servers
(\Cref{sec:p4runtime-client-semantics}), we will be able to adapt the server
semantics, while reusing most of the current proofs and results involving
\ourFormalLang clients.%
}%


\begin{acks}                            
  This work was partially supported by the DTU Nordic Five Tech Alliance grant
  ``Safe and secure software-defined networks in P4'' %
  and %
  the \grantsponsor{HorizonEurope}{Horizon Europe}{https://research-and-innovation.ec.europa.eu/funding/funding-opportunities/funding-programmes-and-open-calls/horizon-europe_en}
  grant no.~\grantnum{HorizonEurope}{101093006} ``TaRDIS.''
\end{acks}

\bibliography{main}

\appendix
\section{Additional Definitions}
\label{app:definitions}

\subsection{Substitutions}
\label{app:definitions:substitutions}

\subsubsection{Term Substitution in Terms}
\begin{align*}
    \subs{v}{x}{t}\                                                        &=\ v\\
    \subs{y}{x}{t}\                                                        &=\ t \quad (x = y)\\
    \subs{y}{x}{t}\                                                        &=\ y \quad (x \neq y)\\
    \subs{(\concat{t_1}{t_2})}{x}{t}\                                      &=\ \concat{(\subs{t_1}{x}{t})}{\subs{t_2}{x}{t}}\\
    \subs{(\listHead t_l)}{x}{t}\                                          &=\ \listHead \subs{t_l}{x}{t}\\
    \subs{(\listTail t_l)}{x}{t}\                                          &=\ \listTail \subs{t_l}{x}{t}\\
    \subs{\curl{f_1=t_1, ..., f_n=t_n}}{x}{t}\                             &=\ \curl{f_1=\subs{t_1}{x}{t}, ..., f_n=\subs{t_n}{x}{t}}\\
    \subs{(t_r.f)}{x}{t}\                                                  &=\ (\subs{t_r}{x}{t}).f\\
    \subs{(\abs{y}{T} t_0)}{x}{t}\                                         &=\ \abs{y}{T} \subs{t_0}{x}{t}\\
    \subs{(t_1\ t_2)}{x}{t}\                                               &=\ (\subs{t_1}{x}{t})\ (\subs{t_2}{x}{t})\\
    \subs{(\tabs{X}{T} t_0)}{x}{t}\                                        &=\ \tabs{X}{T} \subs{t_0}{x}{t}\\
    \subs{(t_1\ T_2)}{x}{t}\                                               &=\ (\subs{t_1}{x}{t})\ T_2\\
    \subs{(\letExp{y}{t_0} t_1)}{x}{t}\                                    &=\ \letExp{y}{(\subs{t_0}{x}{t})} (\subs{t_1}{x}{t})\\
    \subs{(\match{t_s}{\types{x_i}{T_i} \Rightarrow t_i}{i \in I})}{x}{t}\ &=\ \match{\subs{t_s}{x}{t}}{\types{x_i}{T_i} \Rightarrow \subs{t_i}{x}{t}}{i \in I}\\
    \subs{op(\overline{w})}{x}{t}\                                         &=\ op(\overline{\subs{w}{x}{t}})
\end{align*}

\subsubsection{Type Substitution in Terms}
\begin{align*}
    \subs{v}{X}{T}\                                                        &=\ v\\
    \subs{x}{X}{T}\                                                        &=\ x\\
    \subs{(\concat{t_1}{t_2})}{X}{T}\                                      &=\ \concat{(\subs{t_1}{x}{t})}{\subs{t_2}{X}{T}}\\
    \subs{(\listHead t_l)}{X}{T}\                                          &=\ \listHead \subs{t_l}{X}{T}\\
    \subs{(\listTail t_l)}{X}{T}\                                          &=\ \listTail \subs{t_l}{X}{T}\\
    \subs{\curl{f_1=t_1, ..., f_n=t_n}}{X}{T}\                             &=\ \curl{f_1=\subs{t_1}{X}{T}, ..., f_n=\subs{t_n}{X}{T}}\\
    \subs{(t_r.f)}{X}{T}\                                                  &=\ (\subs{t_r}{X}{T}).f\\
    \subs{(\abs{x}{T_1} t_0)}{X}{T}\                                       &=\ \abs{x}{\subs{T_1}{X}{T}} \subs{t_0}{X}{T}\\
    \subs{(t_1\ t_2)}{X}{T}\                                               &=\ (\subs{t_1}{X}{T})\ (\subs{t_2}{X}{T})\\
    \subs{(\tabs{Y}{T_1} t_0)}{X}{T}\                                      &=\ \tabs{Y}{\subs{T_1}{X}{T}} \subs{t_0}{X}{T}\\
    \subs{(t_1\ T_2)}{X}{T}\                                               &=\ (\subs{t_1}{X}{T})\ (\subs{T_2}{X}{T})\\
    \subs{(\letExp{x}{t_0} t_1)}{X}{T}\                                    &=\ \letExp{x}{(\subs{t_0}{X}{T})} (\subs{t_1}{X}{T})\\
    \subs{(\match{t_s}{\types{x_i}{T_i} \Rightarrow t_i}{i \in I})}{X}{T}\ &=\ \match{\subs{t_s}{X}{T}}{\types{x_i}{\subs{T_i}{X}{T}} \Rightarrow \subs{t_i}{X}{T}}{i \in I}\\
    \subs{op(\overline{w})}{X}{T}\                                         &=\ op(\subs{\overline{w}}{X}{T})
\end{align*}

\subsubsection{Type Substitution in Types}
\begin{align*}
    \subs{\top}{X}{T}\                                           &=\ \top\\
    \subs{\bigT}{X}{T}\                                          &=\ \bigT \quad \bigT \in \curl{\tx{Int}, \tx{Bool}, \tx{String}, ...}\\
    \subs{(\ServerRef{T_m,T_a,T_p})}{X}{T}\                      &=\ \ServerRef{\subs{T_m}{X}{T}, \subs{T_a}{X}{T}, \subs{T_p}{X}{T}}\\
    \subs{(\Chan{T_m,T_a,T_p})}{X}{T}\                           &=\ \Chan{\subs{T_m}{X}{T}, \subs{T_a}{X}{T}, \subs{T_p}{X}{T}}\\
    \subs{\curl{\types{f_1}{T_1}, ..., \types{f_n}{T_n}}}{X}{T}\ &=\ \curl{\types{f_1}{\subs{T_1}{X}{T}}, ..., \types{f_n}{\subs{T_n}{X}{T}}}6\\
    \subs{[T_l]}{X}{T}\                                          &=\ [\subs{T_l}{X}{T}]\\
    \subs{(T_1 \rightarrow T_2)}{X}{T}\                          &=\ \subs{T_1}{X}{T} \rightarrow \subs{T_2}{X}{T}\\
    \subs{Y}{X}{T}\                                              &=\ T \quad (X = Y)\\
    \subs{Y}{X}{T}\                                              &=\ Y \quad (X \neq Y)\\
    \subs{(\quant{Y}{T_1} T_0)}{X}{T}\                           &=\ \quant{Y}{\subs{T_1}{X}{T}} \subs{T_0}{X}{T}\\
    \subs{(T_1\ T_2)}{X}{T}\                                     &=\ \subs{T_1}{X}{T}\ \subs{T_2}{X}{T}\\
    \subs{(\typeUnion{T_1}{T_2})}{X}{T}\                         &=\ \typeUnion{\subs{T_1}{X}{T}}{\subs{T_2}{X}{T}}\\
    \subs{(\match{T_s}{T_i \Rightarrow T_i'}{i \in I})}{X}{T}\   &=\ \match{\subs{T_s}{X}{T}}{\subs{T_i}{X}{T} \Rightarrow \subs{T_i'}{X}{T}}{i \in I}\\
    \subs{\singleton{v}}{X}{T}\                                  &=\ \singleton{v}
\end{align*}

\subsubsection{"$\memberOf{v_G}{T}$" Relation}
\label{sec:memberof-relation}
\begin{gather*}
    \Rule
        {v_G \tx{ is an integer}}
        {\memberOf{v_G}{\tx{Int}}}
        {In-Int}\qquad
    \Rule
        {v_G \in \curl{\ttt{true}, \ttt{false}}}
        {\memberOf{v_G}{\tx{Bool}}}
        {In-Bool}\qquad
    \Rule
        {v_G = \unit}
        {\memberOf{v_G}{\tx{Unit}}}
        {In-Unit}\\\\
    \Rule
        {v_G \tx{ is a string}}
        {\memberOf{v_G}{\tx{String}}}
        {In-String}\qquad
    \Rule
        {v_G \tx{ is a byte sequence}}
        {\memberOf{v_G}{\tx{Bytes}}}
        {In-Bytes}\qquad
    \Rule
        {v_G = \nil}
        {\memberOf{v_G}{\listT{T}}}
        {In-List1}\\\\
    \Rule
        {\memberOf{v_{G1}}{T} \quad \memberOf{v_{G2}}{\listT{T}}}
        {\memberOf{\concat{v_{G1}}{v_{G2}}}{\listT{T}}}
        {In-List2}\qquad
    \Rule
        {\forall i \in I : \memberOf{v_{Gi}}{T_i}}
        {\memberOf{\curl{f_i = v_{Gi}}_{i \in I \cup J}}{\curl{f_i : T_i}_{i \in I}}}
        {In-Rec}\\\\
    \Rule
        {}
        {\memberOf{a_{T_m,T_a,T_p}}{\ServerRef{T_m,T_a,T_p}}}
        {In-ServerRef}\qquad
    \Rule
        {}
        {\memberOf{s_{T_m,T_a,T_p}}{\Chan{T_m,T_a,T_p}}}
        {In-Chan}
\end{gather*}

\subsection{Partial Internal Server Semantics}
\begin{figure}[H]
    \newcommand{\rowSep}{1.5mm}
    \begin{gather*}
        \Rule
            {v.\tx{table\_name} = \wildcard}
            {\serverRedInt{C}{E}{\tx{read}(v)}{E}}
            {}\qquad
        \Rule
            {v.\tx{table\_name} \neq \wildcard\\[-1.5mm]
             \serverRedInt{C}{\setcomp{e \in E}{e.\tx{table\_name} = v.\tx{table\_name}}}{\tx{read}_M(v)}{e'}}
            {\serverRedInt{C}{E}{\tx{read}(v)}{e'}}
            {}\\[\rowSep]
        \Rule
            {v.\tx{matches} = \wildcard \quad \serverRedInt{C}{E}{\tx{read}_A(v)}{e}}
            {\serverRedInt{C}{E}{\tx{read}_M(v)}{e}}
            {}\qquad
        \Rule
            {v.\tx{matches} \neq \wildcard \\[-1.5mm]
             \serverRedInt{C}{\setcomp{e \in E}{e.\tx{matches} = v.\tx{matches}}}{\tx{read}_A(v)}{e'}}
            {\serverRedInt{C}{E}{\tx{read}_M(v_m,v_a)}{e}}
            {}\\[\rowSep]
        \Rule
            {v.\tx{action} = \wildcard \quad \serverRedInt{C}{E}{\tx{read}_P(v)}{e}}
            {\serverRedInt{C}{E}{\tx{read}_A(v)}{e}}
            {}\qquad
        \Rule
            {v.\tx{action} \neq \wildcard \\[-1.5mm]
             \serverRedInt{C}{\setcomp{e \in E}{e.\tx{action} = v.\tx{action}}}{\tx{read}_P(v)}{e'}}
            {\serverRedInt{C}{E}{\tx{read}_A(v)}{e'}}
            {}\\[\rowSep]
        \Rule
            {v.\tx{priority} = \wildcard}
            {\serverRedInt{C}{E}{\tx{read}_P(v)}{E}}
            {}\qquad
        \Rule
            {v.\tx{priority} \neq \wildcard}
            {\serverRedInt{C}{E}{\tx{read}_P(v)}{\setcomp{e \in E}{e.\tx{priority} = v.\tx{priority}}}}
            {}
    \end{gather*}
    \caption{Internal evaluation semantics for ``read'' operations in a P4Runtime server (used in \Cref{fig:p4runtime-server-semantics}).}
    \label{fig:p4runtime-server-semantics-internal-read-eval}
\end{figure}


\section{Proofs}
This appendix section presents the proofs for type safety.
The proof strategy closely mirrors the approach in \citeN{BlanvillainBKO22},
with the main deviations being the proof cases for new rules.
\todoin{Include}{Should the above text be included?}
\todoin{Move to main text}{For all the proofs, we assume the Barendregt variable convention.}

\begin{lemma}[Disjointness of values]\label{disjoint-value}
    $\subtypesTo{\Gamma\not}{\singleton{v}}{T} \Rightarrow \disjoint{\Gamma}{\singleton{v}}{T}$
\end{lemma}
\begin{proof}
The proof of the proposition is by contradiction.\\
Suppose that ${\subtypesTo{\Gamma\not}{\singleton{v}}{T}}$, but ${\exists T' : \subtypesTo{\Gamma}{T'}{\singleton{v}} \wedge \subtypesTo{\Gamma}{T'}{T}}$.\\
Since $\singleton{v}$ can only subtype to itself by \tsc{ST-Refl}, we have from the left conjunction that $T' = \singleton{v}$.\\
But then we can show from the right conjunction that $\subtypesTo{\Gamma}{\singleton{v}}{T}$, which is a contradiction.
\end{proof}

\begin{lemma}[Disjointness of values]\label{disjoint-value}
    $\subtypesTo{\Gamma\not}{\singleton{v}}{T} \Rightarrow \disjoint{\Gamma}{\singleton{v}}{T}$
\end{lemma}
\begin{proof}
The proof of the proposition is by contradiction.\\
Suppose that ${\subtypesTo{\Gamma\not}{\singleton{v}}{T}}$, but ${\exists T' : \subtypesTo{\Gamma}{T'}{\singleton{v}} \wedge \subtypesTo{\Gamma}{T'}{T}}$.\\
Since $\singleton{v}$ can only subtype to itself by \tsc{ST-Refl}, we have from the left conjunction that $T' = \singleton{v}$.\\
But then we can show from the right conjunction that $\subtypesTo{\Gamma}{\singleton{v}}{T}$, which is a contradiction.
\end{proof}

\subsection{Permutation}
\begin{lemma}[Permutation]\label{permutation}
If $\envJ{\Gamma}$, $\envJ{\Gamma'}$, and $\Gamma'$ is a permutation of $\Gamma$, then
\begin{enumerate}
    \item If $\typeJ{\Gamma}{T}$, then $\typeJ{\Gamma'}{T}$.
    \item If $\subtypesTo{\Gamma}{T}{T'}$, then $\subtypesTo{\Gamma'}{T}{T'}$.
    \item If $\typesTo{\Gamma}{t}{T}$, then $\typesTo{\Gamma'}{t}{T}$.
\end{enumerate}
\end{lemma}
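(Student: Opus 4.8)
The plan is to prove the three statements \emph{simultaneously}, by induction on the derivation of the hypothesis (respectively $\typeJ{\Gamma}{T}$, $\subtypesTo{\Gamma}{T}{T'}$, $\typesTo{\Gamma}{t}{T}$), since the judgements of \Cref{def:type-system} are mutually inductively defined. Simultaneity is genuinely needed: e.g.~rule \textsc{Type-TApp} (part 1) has subtyping and subtyping-equivalence premises, and rules \textsc{T-TApp} and \textsc{T-Match} (part 3) have subtyping premises. Two elementary facts about ``$\Gamma'$ is a permutation of $\Gamma$'' are used throughout: the relation is \emph{symmetric} (so $\Gamma$ is also a permutation of $\Gamma'$, and the statement may be applied in either direction), and it \emph{preserves domains}, i.e.~$\dom{\Gamma} = \dom{\Gamma'}$. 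I will also freely use the hypothesis $\envJ{\Gamma'}$, and the immediate corollary of part 2 for subtyping-equivalence (since $\eqtypesTo{\Gamma}{T}{T'}$ just abbreviates $\subtypesTo{\Gamma}{T}{T'}$ together with $\subtypesTo{\Gamma}{T'}{T}$).

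The vast majority of cases are routine congruences: a rule derives a judgement in $\Gamma$ from premises that are judgements in $\Gamma$ (or in $\Gamma$ extended with one fresh binding), so one reapplies the same rule in $\Gamma'$ after invoking the induction hypotheses. Base cases (\textsc{Env-$\emptyset$}, \textsc{ST-Refl}, \textsc{ST-$\top$}, \textsc{Type-Int}, \ldots) are immediate, as is \textsc{ST-Val}, whose premise $\memberOf{v_G}{T}$ does not mention the environment at all (by \Cref{sec:memberof-relation}). Two families warrant attention. First, the \textbf{lookup rules} \textsc{Type-TV}, \textsc{ST-Var}, \textsc{T-Var}: their only premise is $\envJ{\Gamma_1,\, b,\, \Gamma_2}$ where $b$ is the relevant binding and $\Gamma = \Gamma_1, b, \Gamma_2$; since $\Gamma'$ is a permutation of $\Gamma$ it contains the very same binding $b$, and $\envJ{\Gamma'}$ holds, so the rule fires in $\Gamma'$ unchanged. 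Second, the \textbf{binding-introducing rules} \textsc{Env-V}, \textsc{Env-TV}, \textsc{Type-TAbs}, \textsc{ST-TAbs}, \textsc{T-Abs}, \textsc{T-TAbs}, \textsc{T-Let}, \textsc{T-Match}: here some premise is derived under $\Gamma, b$ for a binding $b$ (e.g.~$\types{x}{T_1}$ or $\subtypes{X}{T_1}$). I extend $\Gamma'$ with the \emph{same} $b$: then $\Gamma', b$ is a permutation of $\Gamma, b$, and it is a valid environment (the freshness side condition still holds because $\dom{\Gamma'} = \dom{\Gamma}$, and the type in $b$ is well-formed in $\Gamma'$ by the part-1 induction hypothesis); so the induction hypothesis applies to the premise under $\Gamma', b$, and the rule reassembles in $\Gamma'$.

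The only step requiring genuine thought is rule \textsc{ST-Match1}, whose premises include the disjointness conditions $\disjoint{\Gamma}{T_s}{T_i}$ for $i < k$. By \Cref{def:type-disjoint} this means $\typeJ{\Gamma}{T_s}$, $\typeJ{\Gamma}{T_i}$, and that \emph{no} $T_3$ satisfies $\typeJ{\Gamma}{T_3}$, $\subtypesTo{\Gamma}{T_3}{T_s}$, and $\subtypesTo{\Gamma}{T_3}{T_i}$. The two well-formedness parts transfer by part 1. For the non-existence part I argue by contradiction: if some $T_3$ were a common subtype of $T_s$ and $T_i$ in $\Gamma'$, then applying the present statement \emph{in the reverse direction}, from $\Gamma'$ back to $\Gamma$ (legitimate, since $\Gamma$ is a permutation of $\Gamma'$ and $\envJ{\Gamma}$ is a hypothesis), would produce the same witness in $\Gamma$, contradicting $\disjoint{\Gamma}{T_s}{T_i}$. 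Hence $\disjoint{\Gamma'}{T_s}{T_i}$, and \textsc{ST-Match1} applies in $\Gamma'$. This reliance on symmetry of the permutation relation for the \emph{negative} occurrence of subtyping hidden inside disjointness is the main (and essentially only) subtlety; to keep the induction well-founded one takes the measure to be derivation height, with disjointness unfolded to its defining clauses, so that the reverse instance invoked above is already covered. Everything else is bookkeeping, and the cases \textsc{ST-Match2}, \textsc{T-Match}, \textsc{Type-Match} fall under the routine congruence or binding-introducing patterns above.
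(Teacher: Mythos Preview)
Your proposal is correct and follows essentially the same approach as the paper: a (simultaneous) induction on derivations, with the same treatment of the lookup rules, the binder-introducing rules, and the \textsc{ST-Match1} disjointness premise (the paper merely splits off part~(3) as a separate induction that appeals to the already-established parts~(1) and~(2), which is only an organisational difference). Your handling of disjointness via the symmetry of the permutation relation is in fact more explicit than the paper's (which just writes ``by the I.H.'' at that point); neither treatment fully spells out how the negative occurrence of subtyping inside \Cref{def:type-disjoint} interacts with the induction measure, but that is a foundational subtlety of the system rather than a flaw specific to your plan.
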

\begin{proof}
(1) and (2) are proven simultaneously by induction on the two derivations $\typeJ{\Gamma}{T}$ and $\subtypesTo{\Gamma}{T}{T'}$.
\begin{enumerate}
    \item By induction on the derivation of $\typeJ{\Gamma}{T}$.
        \begin{itemize}
            \item \tsc{Type-Int}, \tsc{Type-Bool}, \tsc{Type-Unit}, \tsc{Type-String}, \tsc{Type-Bytes}\\
                Immediate from using the same rule with $\envJ{\Gamma'}$ in the premise.
            \item \tsc{Type-TV}\\
                where $\Gamma = \Gamma_1, \subtypes{X}{T}, \Gamma_2$\\
                If $\Gamma'$ is a permutation of $\Gamma$, then there must be some $\Gamma_1'$ and $\Gamma_2'$
                such that $\Gamma' = \Gamma_1', \subtypes{X}{T}, \Gamma_2$.
                The result then follows from \tsc{Type-TV}.
            \item \tsc{Type-SR}, \tsc{Type-Chan}, \tsc{Type-Abs}, \tsc{Type-Rec}, \tsc{Type-List}, \tsc{Type-Union}, \tsc{Type-Match}
                The result follows from the I.H. and the relevant rule.
            \item \tsc{Type-TAbs}
                \begin{equation}
                    \derive
                        {\typeJ{\Gamma, \subtypes{X}{T_1}}{T_2}}
                        {\typeJ{\Gamma}{\quant{X}{T_1} T_2}}
                \end{equation}
                where $T = \quant{X}{T_1} T_2$\\
                Before we can use the I.H., we must show that $\Gamma', \subtypes{X}{T_1}$ is indeed well-formed.
                The only rule that could have been used for well-formedness of the environment $\Gamma, \subtypes{X}{T_1}$ is \tsc{Env-TV}:
                \begin{equation}
                    \derive
                        {\typeJ{\Gamma}{T_1} \quad X \notin dom(\Gamma)}
                        {\envJ{\Gamma, \subtypes{X}{T_1}}}
                \end{equation}
                By the I.H., $\typeJ{\Gamma'}{T_1}$. If $\Gamma'$ is a permutation of $\Gamma$ and $X \notin dom(\Gamma)$, then $X \notin dom(\Gamma')$,
                so we can conclude by \tsc{Env-TV} that $\envJ{\Gamma', \subtypes{X}{T_1}}$.
                Then, we can use the I.H. with environment $(\Gamma', \subtypes{X}{T_1})$ to show $\typeJ{\Gamma', \subtypes{X}{T_1}}{T_2}$.
                The conclusion then follows from \tsc{Type-TAbs}.
            \item \tsc{Type-TApp}
                \begin{adjustwidth}{-2cm}{2cm}
                \begin{gather}
                    \derive
                        {\typeJ{\Gamma}{T_1} \quad \typeJ{\Gamma}{T_2}\\
                        \eqtypesTo{\Gamma}{T_1}{\quant{X}{T_{11}} T_{12}} \quad \typeJ{\Gamma}{\quant{X}{T_{11}} T_{12}} \quad \subtypesTo{\Gamma}{T_2}{T_{11}}}
                        {\typeJ{\Gamma}{T_1\ T_2}}
                \end{gather}
                \end{adjustwidth}
                where $T = T_1\ T_2$\\
                By the I.H., $\typeJ{\Gamma'}{T_1}$, $\typeJ{\Gamma'}{T_2}$ and $\typeJ{\Gamma'}{\quant{X}{T_{11}} T_{12}}$.
                By part (2) of the lemma, $\eqtypesTo{\Gamma'}{T_1}{\quant{X}{T_{11}} T_{12}}$ and $\subtypesTo{\Gamma'}{T_2}{T_{11}}$.
                The result follows from \tsc{Type-TApp}.
        \end{itemize}
    \item By induction on the derivation of $\subtypesTo{\Gamma}{T}$.
        \begin{itemize}
            \item \tsc{ST-Refl}, \tsc{ST-Val}, \tsc{ST-$\top$}, \tsc{ST-$\cup$Comm}, \tsc{ST-$\cup$}\\
                Immediate by using the same rule with environment $\Gamma'$.
            \item \tsc{ST-Var}\\
                where $X = T \quad \Gamma = \Gamma_1, \subtypes{X}{T'}, \Gamma_2$\\
                If $\Gamma'$ is a permutation of $\Gamma$, then there must be some $\Gamma_1'$ and $\Gamma_2'$
                such that $\Gamma' = \Gamma_1', \subtypes{X}{T'}, \Gamma_2$.
                The result then follows from \tsc{ST-Var}.
            \item \tsc{ST-Trans}, \tsc{ST-$\cup$L}, \tsc{ST-$\cup$R}, \tsc{ST-Rec}, \tsc{ST-List},
                  \tsc{ST-Match2}, \tsc{ST-Abs}, \tsc{ST-App}\\
                The result follows from applying the I.H. and the relevant rule.
            \item \tsc{ST-Match1}
                \begin{equation}
                    \derive
                        {\subtypesTo{\Gamma}{T_s}{T_k} \quad \forall i \in I : i < k \Rightarrow \disjoint{\Gamma}{T_s}{T_i}}
                        {\eqtypesTo{\Gamma}{\match{T_s}{T_i \Rightarrow T_i'}{i \in I}}{T_k'}}
                \end{equation}
                where $T = \match{T_s}{T_i \Rightarrow T_i'}{i \in I} \quad T' = T_k'$\\
                By the I.H., $\subtypesTo{\Gamma'}{T_s}{T_k}$. If we expand the disjointness premise, we get
                \begin{equation}
                    \forall i \in I: i < k \Rightarrow (\not\exists T_s' : \typeJ{\Gamma}{T_s'} \wedge \subtypesTo{\Gamma}{T_s'}{T_s} \wedge \subtypesTo{\Gamma}{T_s'}{T_i})
                \end{equation}
                By part (1) of the lemma, we have that $\typeJ{\Gamma'}{T_s'}$
                Now, by the I.H., we have
                \begin{align}
                    \forall i \in I: i < k \Rightarrow (\not\exists T_s' : & \typeJ{\Gamma'}{T_s'}            \nonumber\\
                                                                   \wedge\ & \subtypesTo{\Gamma'}{T_s'}{T_s}  \\
                                                                   \wedge\ & \subtypesTo{\Gamma'}{T_s'}{T_i}) \nonumber
                \end{align}
                The result follows from \tsc{ST-Match1}.
            \item \tsc{ST-TAbs}\\
                \begin{equation}
                    \derive
                        {\subtypesTo{\Gamma}{T_1'}{T_1} \quad \subtypesTo{\Gamma, \subtypes{X}{T_1'}}{T_2}{T_2'}}
                        {\subtypesTo{\Gamma}{(\quant{X}{T_1} T_2)}{(\quant{X}{T_1'} T_2')}}
                \end{equation}
                where $T = \quant{X}{T_1} T_2 \quad T' = \quant{X}{T_1'} T_2'$\\
                By the I.H., we have $\subtypesTo{\Gamma'}{T_1'}{T_1}$.
                Before we can use the I.H. on the right premise, we must show that $\Gamma', \subtypes{X}{T_1'}$ is indeed well-formed.
                The only rule that could have been used for well-formedness of the environment $\Gamma, \subtypes{X}{T_1'}$ is \tsc{Env-TV}:
                \begin{equation}
                    \derive
                        {\typeJ{\Gamma}{T_1'} \quad X \notin dom(\Gamma)}
                        {\envJ{\Gamma, \subtypes{X}{T_1'}}}
                \end{equation}
                By the I.H., $\typeJ{\Gamma'}{T_1'}$. If $\Gamma'$ is a permutation of $\Gamma$ and $X \notin dom(\Gamma)$, then $X \notin dom(\Gamma')$,
                so we can conclude by \tsc{Env-TV} that $\envJ{\Gamma', \subtypes{X}{T_1'}}$.
                Now we can use the I.H. to show that $\subtypesTo{\Gamma', \subtypes{X}{T_1'}}{T_2}{T_2'}$,
                after which the result follows from \tsc{ST-TAbs}.
        \end{itemize}
    \item By induction on the derivation of $\typesTo{\Gamma}{t}{T}$.
        \begin{itemize}
            \item \tsc{T-Val}\\
                Immediate by using \tsc{T-Val} with environment $\Gamma'$.
            \item \tsc{T-Rec}, \tsc{T-List}, \tsc{T-Head}, \tsc{T-Tail}, \tsc{T-App},
                  \tsc{T-Field}, \tsc{T-OpC}, \tsc{T-OpR}, \tsc{T-OpI},
                  \tsc{T-OpM}, \tsc{T-OpD}\\
                The result follows from applying the I.H. and the relevant rule.
            \item \tsc{T-Var}\\
                where $\Gamma = \Gamma_1, \types{x}{T}, \Gamma_2$\\
                If $\Gamma'$ is a permutation of $\Gamma$, there must be a $\Gamma_1'$ and $\Gamma_2'$ such that
                $\Gamma' = \Gamma_1', \types{x}{T}, \Gamma_2'$.
                The result then follows from \tsc{T-Var}.
            \item \tsc{T-Let}\\
                \begin{equation}
                    \derive
                        {\typesTo{\Gamma}{t_0}{T_0} \quad \typesTo{\Gamma, \typesTo{x}{T_0}}{t_1}{T_1}}
                        {\typesTo{\Gamma}{\letExp{x}{t_0} t_1}{T_1}}
                \end{equation}
                where $t = \letExp{x}{t_0} t_1 \quad T = T_1$\\
                By the I.H., $\typesTo{\Gamma'}{t_0}{T_0}$.
                Before we can use the I.H. on the right premise, we must show that $\Gamma', \typesTo{x}{T_0}$ is indeed well-formed.
                The only rule that could have been used for well-formedness of the environment $\Gamma, \subtypes{x}{T_0}$ is \tsc{Env-V}:
                \begin{equation}
                    \derive
                        {\typeJ{\Gamma}{T_0} \quad x \notin dom(\Gamma)}
                        {\envJ{\Gamma, \types{x}{T_0}}}
                \end{equation}
                By the I.H., $\typeJ{\Gamma'}{T_0}$. If $\Gamma'$ is a permutation of $\Gamma$ and $X \notin dom(\Gamma)$, then $X \notin dom(\Gamma')$,
                so we can conclude by \tsc{Env-V} that $\envJ{\Gamma', \types{x}{T_0}}$.
                Now we can use the I.H. to show that $\typesTo{\Gamma', \types{x}{T_0}}{t_1}{T_1}$,
                after which the result follows from \tsc{T-Let}.
            \item \tsc{T-Abs}\\
                The approach for this case is the same as the one for the right premise in \tsc{T-Let}.
            \item \tsc{T-TAbs}\\
                The approach for this case is analogous to the one for \tsc{T-Abs},
                the difference being that \tsc{Env-TV} is used in the well-formedness derivation instead of \tsc{Env-V}.
            \item \tsc{T-TApp}
                \begin{equation}
                    \derive
                        {\typesTo{\Gamma}{t_1}{\quant{X}{T_1}T_0} \quad \subtypesTo{\Gamma}{T_2}{T_1}}
                        {\typesTo{\Gamma}{t_1\ T_2}{(\quant{X}{T_1 T_0})\ T_2}}
                \end{equation}
                where $t = t_1\ T_2 \quad T = (\quant{X}{T_1 T_0})\ T_2$\\
                By the I.H. on the left premise, we have $\typesTo{\Gamma'}{t_1}{\quant{X}{T_1}T_0}$.
                By part (2) of the lemma, we have $\subtypesTo{\Gamma'}{T_2}{T_1}$.
                The result then follows from \tsc{T-TApp}.
            \item \tsc{T-Match}\\
                \begin{equation}
                    \derive
                        {\typesTo{\Gamma}{t_s}{T_s} \quad \forall i \in I : \typesTo{\Gamma, x_i : T_i}{t_i}{T_i'} \quad \subtypesTo{\Gamma}{T_s}{\cup_{i \in I} T_i}}
                        {\typesTo{\Gamma}{\match{t_s}{x_i : T_i \Rightarrow t_i}{i \in I}}{\match{T_s}{T_i \Rightarrow T_i'}{i \in I}}}
                \end{equation}
                where $t = \match{t_s}{x_i : T_i \Rightarrow t_i}{i \in I} \quad T = \match{T_s}{T_i \Rightarrow T_i'}{i \in I}$\\
                By the I.H., $\typesTo{\Gamma'}{t_s}{T_s}$.
                Using the approach outlined in the previous cases, we show that $\envJ{\Gamma', x_i : T_i}$.
                We then use the I.H. to show $\forall i \in I : \typesTo{\Gamma', x_i : T_i}{t_i}{T_i'}$.
                By part (2) of the lemma, we have $\subtypesTo{\Gamma'}{T_s}{\cup_{i \in I} T_i}$,
                after which the result follows from \tsc{T-Match}.
        \end{itemize}
\end{enumerate}
\end{proof}

\subsection{Weakening}
\begin{lemma}[Weakening]\label{weakening}\hfill
\begin{enumerate}
    \item If $\typeJ{\Gamma}{T}$ and $\envJ{\Gamma, \subtypes{X}{T'}}$, then $\typeJ{\Gamma, \subtypes{X}{T'}}{T}$.
    \item If $\typeJ{\Gamma}{T}$ and $\envJ{\Gamma, \types{x}{T'}}$, then $\typeJ{\Gamma, \subtypes{x}{T'}}{T}$.
    \item If $\subtypesTo{\Gamma}{T}{T'}$ and $\envJ{\Gamma, \subtypes{X}{T''}}$, then $\subtypesTo{\Gamma, \subtypes{X}{T''}}{T}{T'}$.
    \item If $\subtypesTo{\Gamma}{T}{T'}$ and $\envJ{\Gamma, \types{x}{T''}}$, then $\subtypesTo{\Gamma, \types{x}{T''}}{T}{T'}$.
    \item If $\typesTo{\Gamma}{t}{T}$ and $\envJ{\Gamma, \subtypes{X}{T'}}$, then $\typesTo{\Gamma, \subtypes{X}{T'}}{t}{T}$.
    \item If $\typesTo{\Gamma}{t}{T}$ and $\envJ{\Gamma, \types{x}{T'}}$, then $\typesTo{\Gamma, \types{x}{T'}}{t}{T}$.
\end{enumerate}
\end{lemma}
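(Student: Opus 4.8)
The plan is to prove the six statements by induction on derivations, grouped according to the way their defining judgements are mutually recursive. Since type validity $\typeJ{\Gamma}{T}$, subtyping $\subtypesTo{\Gamma}{T}{T'}$, disjointness $\disjoint{\Gamma}{T}{T'}$, and environment validity $\envJ{\Gamma}$ are all mutually inductive, I would first establish parts (1)--(4) --- together with the auxiliary fact that environment validity is preserved by the same two extensions --- by a \emph{simultaneous} induction on the derivations of $\typeJ{\Gamma}{T}$ and $\subtypesTo{\Gamma}{T}{T'}$. Parts (5) and (6) then follow by a separate induction on the derivation of $\typesTo{\Gamma}{t}{T}$, invoking parts (1)--(4) whenever a typing rule carries a type-validity or subtyping premise. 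In every part the two sub-cases (extending with a type variable $\subtypes{X}{T'}$ versus a term variable $\types{x}{T'}$) are handled uniformly, relying on the hypothesis that the extended environment is already well-formed.

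For the axiom and leaf rules (\tsc{Type-Int}, \tsc{Type-Bool}, \ldots, \tsc{ST-Refl}, \tsc{ST-$\top$}, \tsc{ST-Val}, \tsc{ST-$\cup$Comm}, \tsc{ST-$\cup$Assoc}, \tsc{T-Val}) and the variable rules (\tsc{Type-TV}, \tsc{ST-Var}, \tsc{T-Var}) the conclusion is immediate: the only side-condition is well-formedness of the environment, which is assumed, and for the variable rules the referenced variable still occurs at the same position in the extended context. For the purely congruent rules (\tsc{Type-SR}, \tsc{Type-Chan}, \tsc{Type-Rec}, \tsc{Type-List}, \tsc{Type-Union}, \tsc{Type-Match}, \tsc{ST-Trans}, \tsc{ST-Rec}, \tsc{ST-List}, \tsc{ST-Match2}, \tsc{ST-Abs}, \tsc{ST-App}, \tsc{ST-$\cup$R}, \tsc{T-Rec}, \tsc{T-List}, \tsc{T-Head}, \tsc{T-Tail}, \tsc{T-App}, \tsc{T-Field}, \tsc{T-Sub}, and the P4Runtime-operation rules \tsc{T-OpC}, \tsc{T-OpR}, \tsc{T-OpI}, \tsc{T-OpM}, \tsc{T-OpD}) I would apply the induction hypothesis to each premise and re-apply the same rule.

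The binder-introducing rules (\tsc{Type-TAbs}, \tsc{Type-TApp}, \tsc{ST-TAbs}, \tsc{T-Abs}, \tsc{T-TAbs}, \tsc{T-Let}, \tsc{T-Match}, \tsc{T-TApp}) require a small dance: the relevant subderivation lives in a context extended by an inner binding, so the induction hypothesis appends the new binding \emph{at the end} and I must re-order it into place. For \tsc{T-Abs}, say, the subderivation is $\typesTo{\Gamma, \types{y}{T_1}}{t}{T_2}$ and the goal is in $\Gamma, \subtypes{X}{T'}, \types{y}{T_1}$; I would first extract $\typeJ{\Gamma}{T_1}$ from well-formedness of $\Gamma, \types{y}{T_1}$, weaken it by part (1) to $\typeJ{\Gamma, \subtypes{X}{T'}}{T_1}$, conclude $\envJ{(\Gamma, \subtypes{X}{T'}), \types{y}{T_1}}$, apply the induction hypothesis to obtain the weakened judgement with $\subtypes{X}{T'}$ appended last, and finally use \Cref{permutation} to move $\subtypes{X}{T'}$ in front of $\types{y}{T_1}$; freshness of $X$ (Barendregt convention) discharges the domain conditions. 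The other binder cases are analogous, using parts (2)--(4) for their subsidiary type-validity and subtyping premises.

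The step I expect to be the main obstacle is \tsc{ST-Match1}, whose premise $\disjoint{\Gamma}{T_s}{T_i}$ is a \emph{negative} statement (non-existence of a common subtype) and hence is not weakened by the generic induction hypothesis. I would prove separately that disjointness is preserved by the extension: given a hypothetical common subtype $T_3$ in $\Gamma, \subtypes{X}{T'}$, the bound $T'$ is well-formed in $\Gamma$ (from $\envJ{\Gamma, \subtypes{X}{T'}}$) and $X$ does not occur in $T_s$ or $T_i$ (both well-formed in $\Gamma$), so $\subs{T_3}{X}{T'}$ is a common subtype of $T_s$ and $T_i$ in $\Gamma$, contradicting $\disjoint{\Gamma}{T_s}{T_i}$. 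This is the only place needing a type-substitution property; since substitution is usually established \emph{after} weakening, the alternative (and arguably cleaner) route is to prove this disjointness-weakening fact directly, by a small induction on the subtyping derivations of $T_3 <: T_s$ and $T_3 <: T_i$, observing that the fresh variable $X$ can only ``flow upward'' to its bound $T'$ via \tsc{ST-Var}/\tsc{ST-Trans}. With that fact in hand, \tsc{ST-Match1} goes through like the remaining cases.
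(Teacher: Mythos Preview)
Your overall structure matches the paper's: a simultaneous induction for the type-validity and subtyping parts, then separate inductions on the typing derivation for (5) and (6), with \Cref{permutation} used to reorder bindings in the binder-introducing cases. Two points of divergence are worth noting.

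First, a minor simplification you missed: the paper does not fold parts (2) and (4) into the simultaneous induction. It simply observes that no type-validity or subtyping rule ever inspects a term-variable binding $\types{x}{T'}$, so these parts hold trivially in every case. Your plan to include them in the mutual induction is not wrong, just unnecessarily heavy.

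Second, and more interestingly, your treatment of \tsc{ST-Match1} is \emph{more} careful than the paper's. The paper's proof for this case expands the disjointness premise to its negated-existential form and then writes, essentially, ``by part (1) and by the I.H.\ we obtain the same negation in the extended environment''---but as you correctly observe, this reasoning runs in the wrong direction: weakening tells you that anything derivable in $\Gamma$ is still derivable in $\Gamma,\subtypes{X}{T''}$, which does not by itself rule out a \emph{new} common subtype that mentions $X$. Your proposed repair (substitute $T''$ for $X$ in the hypothetical witness, using that $X$ is fresh for $T_s$ and $T_i$) is the standard fix, and your awareness of the potential circularity with the substitution lemma is apt; the ``direct'' alternative you sketch---showing that any use of the fresh $X$ in a subtyping derivation can be replaced by its bound---is exactly what is needed to break that cycle. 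So on this point your proposal is an improvement over what the paper actually writes.
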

\begin{proof}
(1) and (3) are proven simultaneously by induction on the two derivations $\typeJ{\Gamma}{T}$ and $\subtypesTo{\Gamma}{T}{T'}$.
\begin{enumerate}
    \item By induction on the derivation of $\typeJ{\Gamma}{T}$.
        \begin{itemize}
            \item \tsc{Type-Int}, \tsc{Type-Bool}, \tsc{Type-Unit}, \tsc{Type-String}, \tsc{Type-Bytes}\\
                The conclusion is immediate by using $\envJ{\Gamma, \subtypes{X}{T'}}$ with the relevant rule.
            \item \tsc{Type-TV}
                \begin{equation}
                    \derive
                        {\envJ{\Gamma', \subtypes{X'}{T''}}}
                        {\typeJ{\Gamma', \subtypes{X'}{T''}}{X'}}
                \end{equation}
                where $T = X' \quad \Gamma = \Gamma', \subtypes{X'}{T''}$\\
                We know that $X \neq X'$, as otherwise $(\Gamma', \subtypes{X'}{T''}, \subtypes{X}{T'})$ would not be well-formed.
                Then, we can simply use the same rule \tsc{Type-TV} with the new environment to get the desired result.
            \item \tsc{Type-SR}, \tsc{Type-Chan}, \tsc{Type-Abs}, \tsc{Type-Rec}, \tsc{Type-List}, \tsc{Type-Union}, \tsc{Type-Match}\\
                The result follows from the I.H. on the premises and the relevant rule.
            \item \tsc{Type-TAbs}
                \begin{equation}
                    \derive
                        {\typeJ{\Gamma, \subtypes{X'}{T_1}}{T_2}}
                        {\typeJ{\Gamma}{\quant{X'}{T_1} T_2}}
                \end{equation}
                where $T = X'$\\
                By the I.H. (using environment $(\Gamma, \subtypes{X'}{T_1})$), we have ${\typeJ{\Gamma, \subtypes{X'}{T_1}, \subtypes{X}{T'}}{T_2}}$,
                from which we can use \tsc{Type-TAbs} to get the result.
            \item \tsc{Type-TApp}\\
                \begin{adjustwidth}{-2cm}{2cm}
                \begin{gather}
                    \derive
                        {\typeJ{\Gamma}{T_1} \quad \typeJ{\Gamma}{T_2}\\
                         \eqtypesTo{\Gamma}{T_1}{\quant{X'}{T_{11}} T_{12}} \quad \typeJ{\Gamma}{\quant{X'}{T_{11}} T_{12}} \quad \subtypesTo{\Gamma}{T_2}{T_{11}}}
                        {\typeJ{\Gamma}{T_1\ T_2}}
                \end{gather}
                \end{adjustwidth}
                where $T = T_1\ T_2$\\
                By the I.H., we have ${\typeJ{\Gamma, \subtypes{X}{T'}}{T_1}}$, ${\typeJ{\Gamma, \subtypes{X}{T'}}{T_2}}$
                and ${\typeJ{\Gamma, \subtypes{X}{T'}}{\quant{X'}{T_{11}} T_{12}}}$.
                Now, from part (3) of the lemma, we get ${\eqtypesTo{\Gamma, \subtypes{X}{T'}}{T_1}{\quant{X'}{T_{11}} T_{12}}}$
                and ${\subtypesTo{\Gamma, \subtypes{X}{T'}}{T_2}{T_{11}}}$.
                Then we simply use \tsc{Type-TApp} to obtain the conclusion.
        \end{itemize}
    \item None of the type well-formedness rules relate to typing assumptions, so the conclusion trivially holds in all cases.
    \item By induction on the derivation of $\subtypesTo{\Gamma}{T}{T'}$.
        \begin{itemize}
            \item \tsc{ST-Refl}, \tsc{ST-$\top$}, \tsc{ST-Val}, \tsc{ST-$\cup$Comm}, \tsc{ST-$\cup$Assoc}\\
                Using the applicable rule with the new context $(\Gamma, \subtypes{X}{T''})$ gives the conclusion directly.
            \item \tsc{ST-Var}\\
                where $T = X' \quad \Gamma = \Gamma', \subtypes{X'}{T'}$\\
                We know that $X \neq X'$, as otherwise $(\Gamma', \subtypes{X'}{T'}, \subtypes{X}{T''})$ would not be well-formed.
                Then, we can simply use the same rule \tsc{ST-Var} with the new environment to get the desired result.
            \item \tsc{ST-Trans}, \tsc{ST-$\cup$L}, \tsc{ST-$\cup$R}, \tsc{ST-Rec}, \tsc{ST-List},
                  \tsc{ST-Match2}, \tsc{ST-Abs}, \tsc{ST-App}\\
                The conclusion follows from applying the I.H. to the premises, and then applying the relevant rule.
            \item \tsc{ST-Match1}\\
                where $T = T_s\ttt{ match }\curl{T_i \Rightarrow T_i'}_{i \in I} \quad T' = T_k'$\\
                From the I.H., $\subtypesTo{\Gamma, \subtypes{X}{T''}}{T_s}{T_k}$. As for the second premise, it can be expanded top
                \begin{equation}
                    \forall i \in I: i < k \Rightarrow (\not\exists T_s' : \typeJ{\Gamma}{T_s'} \wedge \subtypesTo{\Gamma}{T_s'}{T_s} \wedge \subtypesTo{\Gamma}{T_s'}{T_i})
                \end{equation}
                By part (1) of the lemma, we have that $\typeJ{\Gamma, \subtypes{X}{T''}}{T_s'}$
                Now, by the I.H., we have
                \begin{align}
                    \forall i \in I: i < k \Rightarrow (\not\exists T_s' : & \typeJ{\Gamma, \subtypes{X}{T''}}{T_s'}            \nonumber\\
                                                                   \wedge\ & \subtypesTo{\Gamma, \subtypes{X}{T''}}{T_s'}{T_s}  \\
                                                                   \wedge\ & \subtypesTo{\Gamma, \subtypes{X}{T''}}{T_s'}{T_i}) \nonumber
                \end{align}
                The conclusion follows from \tsc{ST-Match1}.
            \item \tsc{ST-TAbs}\\
                where $T = \quant{X'}{T_1} T_2 \quad T' = \quant{X'}{T_1'} T_2'$\\
                By the I.H., ${\subtypesTo{\Gamma, \subtypes{X}{T''}}{T_1'}{T_1}}$.
                Also by the I.H. with environment $(\Gamma, \subtypes{X'}{T_1'})$,
                we have ${\subtypesTo{\Gamma, \subtypes{X'}{T_1'}, \subtypes{X}{T''}}{T_2}{T_2'}}$.
                Then by \cref{permutation} we also have
                ${\subtypesTo{\Gamma, \subtypes{X}{T''}, \subtypes{X'}{T_1'}}{T_2}{T_2'}}$.
                The result follows from \tsc{ST-TAbs}.
        \end{itemize}
    \item None of the subtyping rules relate to typing assumptions, so the conclusion trivially holds in all cases.
    \item By induction on $\typesTo{\Gamma}{t}{T}$:
        \begin{itemize}
            \item \tsc{T-Val}
                \begin{equation}
                    \derive
                        {v\tx{ is not an abstraction}}
                        {\typesTo{\Gamma}{v}{\singleton{v}}}
                \end{equation}
                where $t = v \quad T = \singleton{v}$\\
                The conclusion is obtained directly from using the context $(\Gamma, \subtypes{X}{T'})$ with \tsc{T-Val}.
            \item \tsc{T-Rec}, \tsc{T-List}, \tsc{T-Head}, \tsc{T-Tail}, \tsc{T-App},
                  \tsc{T-Field}, \tsc{T-OpC}, \tsc{T-OpR}, \tsc{T-OpI},
                  \tsc{T-OpM}, \tsc{T-OpD}\\
                The conclusion follows from applying the I.H. on the premises and then applying the relevant rule.
            \item \tsc{T-Var}
                where $t = x \quad \Gamma = \Gamma', \types{x}{T}$\\
                We take the environment to be $(\Gamma', \types{x}{T}, \subtypes{X}{T'})$
                and get the conclusion directly from \tsc{T-Var}.
            \item \tsc{T-Let}
                \begin{equation}
                    \derive
                        {\typesTo{\Gamma}{t_0}{T_0} \quad \typesTo{\Gamma,\types{x}{T_0}}{t_1}{T_1}}
                        {\typesTo{\Gamma}{\letExp{x}{t_0} t_1}{T_1}}
                \end{equation}
                where $t = \letExp{x}{t_0}{t_1} \quad T = T_1$\\
                By applying the I.H. (using the environment $(\Gamma, \types{x}{T_0})$ in the second premise),
                we have $\typesTo{\Gamma, \subtypes{X}{T'}}{t_0}{T_0}$ and $\typesTo{\Gamma, \types{x}{T_0}, \subtypes{X}{T'}}{t_1}{T_1}$.
                From \cref{permutation}, we have $\typesTo{\Gamma, \subtypes{X}{T'}, \types{x}{T_0}}{t_1}{T_1}$,
                which lets us use \tsc{T-Let} to obtain the result.
            \item \tsc{T-Abs}
                \begin{equation}
                    \derive
                        {\typesTo{\Gamma,\types{x}{T_1}}{t_0}{T_2}}
                        {\typesTo{\Gamma}{\abs{x}{T_1} t_0}{T_1 \rightarrow T_2}}
                \end{equation}
                where $t = \abs{x}{T_1} t_0 \quad T = T_1 \rightarrow T_2$\\
                By applying the I.H. with environment $(\Gamma, \types{x}{T_1})$,
                we get $\typesTo{\Gamma, \types{x}{T_1}, \subtypes{X}{T'}}{t_0}{T_2}$.
                From \cref{permutation}, we have $\typesTo{\Gamma, \subtypes{X}{T'}, \types{x}{T_1}}{t_0}{T_2}$,
                which lets us use \tsc{T-Abs} to obtain the result.
            \item \tsc{T-TAbs}
                \begin{equation}
                    \derive
                        {\typesTo{\Gamma,\subtypes{X'}{T_1}}{t_0}{T_2}}
                        {\typesTo{\Gamma}{(\tabs{X'}{T_1} t_0)}{(\quant{X'}{T_1} T_2)}}
                \end{equation}
                where $t = \tabs{X'}{T_1} t_0 \quad T = \quant{X'}{T_1} T_2$\\
                After applying the I.H. (using the environment $(\Gamma,\subtypes{X'}{T_1})$),
                we get $\typesTo{\Gamma, \subtypes{X'}{T_1}, \subtypes{X}{T'}}{t_0}{T_2}$.
                Then by \cref{permutation}
                we also have $\typesTo{\Gamma, \subtypes{X}{T'}, \subtypes{X'}{T_1}}{t_0}{T_2}$.
                The result follows from \tsc{T-TAbs}.
            \item \tsc{T-TApp}
                \begin{equation}
                    \derive
                        {\typesTo{\Gamma}{t_1}{\quant{X'}{T_1} T_0} \quad \subtypesTo{\Gamma}{T_2}{T_1}}
                        {\typesTo{\Gamma}{t_1\ T_2}{(\quant{X'}{T_1} T_0)\ T_2}}
                \end{equation}
                where $t = t_1\ T_2 \quad T = (\quant{X}{T_1} T_0)\ T_2$\\
                By the I.H. on the left premise, we have $\typesTo{\Gamma, \subtypes{X}{T'}}{t_1}{\quant{X'}{T_1} T_0}$.
                By part (3) of the lemma on the right premise, we have $\subtypesTo{\Gamma, \subtypes{X}{T'}}{T_2}{T_1}$.
                The result then follows from \tsc{T-TApp}.
            \item \tsc{T-Match}
                \begin{equation}
                    \derive
                        {\typesTo{\Gamma}{t_s}{T_s} \quad \forall i \in I: \typesTo{\Gamma, \types{x_i}{T_i}}{t_i}{T_i'} \quad \subtypesTo{\Gamma}{T_s}{\cup_{i \in I} T_i}}
                        {\typesTo{\Gamma}{t_s \ttt{ match }\curl{x_i : T_i \Rightarrow t_i}_{i \in I}}{T_s \ttt{ match } \curl{T_i \Rightarrow T_i'}_{i \in I}}}
                \end{equation}
                where $t = t_s \ttt{ match }\curl{x_i : T_i \Rightarrow t_i}_{i \in I} \quad T_s \ttt{ match } \curl{T_i \Rightarrow T_i'}_{i \in I}$\\
                By applying the I.H. (using the environment $(\Gamma, \types{x_i}{T_i})$ in the second premise),
                we have $\typesTo{\Gamma, \subtypes{X}{T'}}{t_s}{T_s}$ and $\forall i \in I: \typesTo{\Gamma, \types{x_i}{T_i}, subtypes{X}{T'}}{t_i}{T_i'}$.
                By \cref{permutation}, we have $\forall i \in I: \typesTo{\Gamma, subtypes{X}{T'}, \types{x_i}{T_i}}{t_i}{T_i'}$,
                and by part (3) of the lemma, we have $\subtypesTo{\Gamma, \subtypesTo{X}{T'}}{T_s}{\cup_{i \in I} T_i}$,
                after which we can use \tsc{T-Match} to obtain the result.
            \item \tsc{T-Sub}
                \begin{equation}
                    \derive
                        {\typesTo{\Gamma}{t}{T''} \quad \subtypesTo{\Gamma}{T''}{T}}
                        {\typesTo{\Gamma}{t}{T}}
                \end{equation}
                By the I.H., we have ${\typesTo{\Gamma, \subtypes{X}{T'}}{t}{T''}}$.
                From part (3) of the lemma, we also have ${\subtypesTo{\Gamma, \subtypes{X}{T'}}{T''}{T}}$.
                Then we get the conclusion by applying \tsc{T-Sub}.
        \end{itemize}
    \item The proof is by induction on $\typesTo{\Gamma}{t}{T}$:
        \begin{itemize}
            \item \tsc{T-Val}\\
                where $t = v \quad T = \singleton{v}$\\
                The conclusion is obtained directly from using the context $(\Gamma, \subtypes{x}{T'})$ with \tsc{T-Val}.
            \item \tsc{T-Rec}, \tsc{T-List}, \tsc{T-Head}, \tsc{T-Tail}, \tsc{T-App},
                  \tsc{T-Field}, \tsc{T-OpC}, \tsc{T-OpR}, \tsc{T-OpI},
                  \tsc{T-OpM}, \tsc{T-OpD}\\
                The conclusion follows from applying the I.H. on the premises and then applying the relevant rule.
            \item \tsc{T-Var}\\
                where $t = x' \quad \Gamma = \Gamma', \typesTo{x}{T}$\\
                We take the environment to be $(\Gamma', \types{x}{T}, \types{x}{T'})$
                and get the conclusion directly from \tsc{T-Var}.
            \item \tsc{T-Let}\\
                \begin{equation}
                    \derive
                        {\typesTo{\Gamma}{t_0}{T_0} \quad \typesTo{\Gamma,\types{x'}{T_0}}{t_1}{T_1}}
                        {\typesTo{\Gamma}{\letExp{x'}{t_0} t_1}{T_1}}
                \end{equation}
                where $t = \letExp{x'}{t_0}{t_1} \quad T = T_1$\\
                By applying the I.H. (using the environment $(\Gamma, \types{x'}{T_0})$ in the second premise),
                we have $\typesTo{\Gamma, \types{x}{T'}}{t_0}{T_0}$ and $\typesTo{\Gamma,\types{x'}{T_0},\types{x}{T'}}{t_1}{T_1}$.
                By \cref{permutation}, we have $\typesTo{\Gamma,\types{x}{T'},\types{x'}{T_0}}{t_1}{T_1}$,
                from which we can use \tsc{T-Let} to obtain the conclusion.
            \item \tsc{T-Abs}
                \begin{equation}
                    \derive
                        {\typesTo{\Gamma,\types{x'}{T_1}}{t}{T_2}}
                        {\typesTo{\Gamma}{\abs{x'}{T_1} t}{T_1 \rightarrow T_2}}
                \end{equation}
                where $t = \abs{x'}{T_1} t_0 \quad T = T_1 \rightarrow T_2$\\
                By applying the I.H. with environment $(\Gamma, \types{x'}{T_1})$,
                we have $\typesTo{\Gamma,\types{x'}{T_1},\types{t}{T'}}{t}{T_2}$.
                By \cref{permutation}, we have $\typesTo{\Gamma,\types{x}{T'},\types{x'}{T_1}}{t}{T_2}$,
                from which we can use \tsc{T-Abs} to obtain the conclusion.
            \item \tsc{T-TAbs}
                \begin{equation}
                    \derive
                        {\typesTo{\Gamma,\subtypes{X}{T_1}}{t}{T_2}}
                        {\typesTo{\Gamma}{(\tabs{X}{T_1} t)}{(\quant{X}{T_1} T_2)}}
                \end{equation}
                where $t = \tabs{X}{T_1} t_0 \quad T = \quant{X}{T_1} T_2$\\
                After applying the I.H. (using the environment $(\Gamma,\subtypes{X}{T_1})$),
                we have $\typesTo{\Gamma,\subtypes{X}{T_1},\types{x}{T'}}{t}{T_2}$.
                By \cref{permutation}, we have $\typesTo{\Gamma,\types{x}{T'},\subtypes{X}{T_1}}{t}{T_2}$,
                from which we can use \tsc{T-TAbs} to obtain the conclusion.
            \item \tsc{T-TApp}
                \begin{equation}
                    \derive
                        {\typesTo{\Gamma}{t_1}{\quant{X}{T_1} T_0} \quad \subtypesTo{\Gamma}{T_2}{T_1}}
                        {\typesTo{\Gamma}{t_1\ T_2}{(\quant{X}{T_1} T_0)\ T_2}}
                \end{equation}
                where $t = t_1\ T_2 \quad T = (\quant{X}{T_1} T_0)\ T_2$\\
                By the I.H. on the left premise, we have $\typesTo{\Gamma, \types{x}{T'}}{t_1}{\quant{X}{T_1} T_0}$.
                By part (4) of the lemma on the right premise, we have $\subtypesTo{\Gamma, \subtypes{x}{T'}}{T_2}{T_1}$.
                The result then follows from \tsc{T-TApp}.
            \item \tsc{T-Match}
                \begin{equation}
                    \derive
                        {\typesTo{\Gamma}{t_s}{T_s} \quad \forall i \in I: \typesTo{\Gamma, \types{x_i}{T_i}}{t_i}{T_i'} \quad \subtypesTo{\Gamma}{T_s}{\cup_{i \in I} T_i}}
                        {\typesTo{\Gamma}{t_s \ttt{ match }\curl{x_i : T_i \Rightarrow t_i}_{i \in I}}{T_s \ttt{ match } \curl{T_i \Rightarrow T_i'}_{i \in I}}}
                \end{equation}
                where $t = t_s \ttt{ match }\curl{x_i : T_i \Rightarrow t_i}_{i \in I} \quad T_s \ttt{ match } \curl{T_i \Rightarrow T_i'}_{i \in I}$\\
                By applying the I.H. (using the environment $(\Gamma, \types{x_i}{T_i})$ in the second premise),
                we have $\typesTo{\Gamma, \types{x}{T'}}{t_s}{T_s}$ and $\forall i \in I: \typesTo{\Gamma, \types{x_i}{T_i}, \types{x}{T'}}{t_i}{T_i'}$.
                By \cref{permutation}, we have $\forall i \in I: \typesTo{\Gamma, \types{x}{T'}, \types{x_i}{T_i}}{t_i}{T_i'}$,
                and by part (4) of the lemma, we have $\subtypesTo{\Gamma, \types{x}{T'}}{T_s}{\cup_{i \in I} T_i}$,
                after which we can use \tsc{T-Match} to obtain the conclusion.
            \item \tsc{T-Sub}
                \begin{equation}
                    \derive
                        {\typesTo{\Gamma}{t}{T''} \quad \subtypesTo{\Gamma}{T''}{T}}
                        {\typesTo{\Gamma}{t}{T}}
                \end{equation}
                By the I.H., we have ${\typesTo{\Gamma, \types{x}{T'}}{t}{T''}}$.
                From part (4) of the lemma, we also have ${\subtypesTo{\Gamma, \types{x}{T'}}{T''}{T}}$.
                Then we get the conclusion by applying \tsc{T-Sub}.
        \end{itemize}
\end{enumerate}
\end{proof}

\subsection{Strengthening}
\begin{theorem}[Strengthening]
\label{strengthening}
If $\subtypesTo{\Gamma, \typesTo{x}{T'}}{T}{T'}$, then $\subtypesTo{\Gamma}{T}{T'}$
\end{theorem}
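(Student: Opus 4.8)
The plan is to prove the statement by a simultaneous induction that also covers the mutually-recursive judgements on which subtyping rests --- namely type validity $\typeJ{\Gamma}{T}$, and, as a derived consequence, type disjointness $\disjoint{\Gamma}{T_1}{T_2}$ --- since rule \textsc{ST-Match1} carries a disjointness side condition and rule \textsc{Type-TApp} carries subtyping premises. First I would record the base fact that environment validity strengthens: if $\envJ{\Gamma, \types{x}{T_0}}$ (where $\types{x}{T_0}$ is the binding being removed) then inversion on \textsc{Env-V} yields $\envJ{\Gamma}$, $\typeJ{\Gamma}{T_0}$ and $x \notin \dom{\Gamma}$. I would then prove, by mutual induction on derivations, the two claims: (i) if $\typeJ{\Gamma, \types{x}{T_0}}{T}$ then $\typeJ{\Gamma}{T}$; and (ii) if $\subtypesTo{\Gamma, \types{x}{T_0}}{T}{T'}$ then $\subtypesTo{\Gamma}{T}{T'}$.

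Most cases are routine. Rules with no environment-dependent premises (\textsc{ST-Refl}, \textsc{ST-$\top$}, \textsc{ST-Val}, \textsc{ST-$\cup$Comm}, \textsc{ST-$\cup$Assoc}, and the basic-type rules of \Cref{fig:type-valid}) are re-derived verbatim over $\Gamma$ using the strengthened $\envJ{\Gamma}$. Congruence-style rules (\textsc{ST-Trans}, \textsc{ST-List}, \textsc{ST-Rec}, \textsc{ST-$\cup$R}, \textsc{ST-Abs}, \textsc{ST-App}, \textsc{ST-Match2}, and the analogous type-validity rules) follow directly from the induction hypotheses on their premises. For \textsc{ST-Var} and \textsc{Type-TV}, the looked-up entry is a type-variable binding $\subtypes{X}{\cdot}$, which is a syntactically different kind of environment entry from $\types{x}{T_0}$, so it survives the removal and the rule reapplies (using strengthened $\envJ{}$). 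For the binder rules \textsc{ST-TAbs}, \textsc{Type-TAbs} and \textsc{Type-TApp}, the relevant subderivation lives in an environment of shape $(\Gamma, \types{x}{T_0}),\subtypes{X}{\cdot}$; here I would use \Cref{permutation} to move $\types{x}{T_0}$ to the end --- the permuted environment $(\Gamma,\subtypes{X}{\cdot}),\types{x}{T_0}$ is well-formed because \ourFormalLang types contain no free term variables, so $x \notin \dom{\Gamma,\subtypes{X}{\cdot}}$, using \Cref{weakening} to lift $\typeJ{\Gamma}{T_0}$ --- then apply the induction hypothesis to strip $\types{x}{T_0}$, and finally re-apply the rule.

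The one genuinely interesting case is \textsc{ST-Match1}, whose second premise is $\forall i \in I: i < k \Rightarrow \disjoint{\Gamma, \types{x}{T_0}}{T_s}{T_i}$. Since disjointness is \emph{defined} (Definition of disjointness of types) rather than inductively derived, I would establish its strengthening as a separate corollary of claims (i)--(ii) together with the already-proven \Cref{weakening}: the two type-validity conjuncts of $\disjoint{\Gamma, \types{x}{T_0}}{T_s}{T_i}$ strengthen by (i), while for the non-existence conjunct I argue by contradiction --- if some $T_3$ with $\typeJ{\Gamma}{T_3}$, $\subtypesTo{\Gamma}{T_3}{T_s}$ and $\subtypesTo{\Gamma}{T_3}{T_i}$ existed, then \Cref{weakening} would promote all three facts to $\Gamma, \types{x}{T_0}$, contradicting disjointness there. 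Combined with the induction hypothesis on the first premise $\subtypesTo{\Gamma, \types{x}{T_0}}{T_s}{T_k}$, rule \textsc{ST-Match1} re-derives $\eqtypesTo{\Gamma}{T_s \ttt{ match } \curl{T_i \Rightarrow T_i'}_{i \in I}}{T_k'}$.

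The main obstacle I anticipate is organising the mutual recursion cleanly: \textsc{ST-Match1} needs disjointness, disjointness is defined negatively via type validity and subtyping, and type validity (through \textsc{Type-TApp}) loops back to subtyping and subtyping-equivalence. The argument above threads this needle because disjointness is only ever \emph{consumed}, never \emph{produced}, by the subtyping and type-validity rules, so its strengthening can be obtained externally from (i), (ii) and weakening instead of being carried inside the induction. The remaining care point is simply to confirm, rule by rule, that a term-variable binding never occurs anywhere in a subtyping or type-validity rule except inside an environment-validity side condition --- which holds precisely because types in \ourFormalLang have no free term variables (singleton types range over closed ground values).
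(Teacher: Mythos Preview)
Your proposal is correct and rests on the same key observation as the paper's proof: none of the subtyping rules (nor the type-validity rules they depend on) ever inspect a term-variable binding $\types{x}{T_0}$ in the environment. The paper's proof is literally one line --- ``Immediate from inspection of the subtyping rules, as none of them are related to typing of term-level variables'' --- whereas you have unpacked that inspection into a full mutual induction, including the careful (and correct) contrapositive-via-weakening treatment of the disjointness side-condition in \textsc{ST-Match1}; your version is more rigorous, the paper's more economical.
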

\begin{proof}
Immediate from inspection of the subtyping rules, as none of them are related to typing of term-level variables.
\end{proof}

\subsection{Variable absence}
\begin{lemma}[Variable absence]\label{var-absence}
If $\envJ{\Gamma, \subtypes{X}{T}, \Gamma'}$, then $X$ does not appear anywhere in $\Gamma$.
\end{lemma}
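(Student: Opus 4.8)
If $\envJ{\Gamma, \subtypes{X}{T}, \Gamma'}$, then $X$ does not appear anywhere in $\Gamma$.

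The plan is to proceed by induction on the structure of $\Gamma'$, using the environment validity rules in \Cref{fig:typing-env-valid}. The base case is $\Gamma' = \emptyset$, so the hypothesis is $\envJ{\Gamma, \subtypes{X}{T}}$. The only rule that can derive this judgement is \textsc{Env-TV}, whose premises are $\typeJ{\Gamma}{T}$ and $X \notin \dom{\Gamma}$. From $X \notin \dom{\Gamma}$ we know $X$ is not bound in $\Gamma$; but this is weaker than the claim, since $X$ could in principle occur \emph{free} inside some type appearing in $\Gamma$. To rule this out I would invoke the invariant that a valid typing environment is well-scoped: by an auxiliary induction on the derivation of $\envJ{\Gamma}$ (peeling off the last binding via \textsc{Env-V} or \textsc{Env-TV}), every type $T'$ appearing in a binding of $\Gamma$ is valid in the \emph{preceding} prefix of $\Gamma$ (this is exactly the shape of the premises $\typeJ{\cdot}{T'}$ in those rules), and hence all its free type variables are already bound to its left. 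Since $X$ is not bound anywhere in $\Gamma$ (as $\dom{\Gamma}$ does not contain $X$ and the binders in $\Gamma$ are introduced left-to-right), $X$ cannot occur free in any type of $\Gamma$ either, which completes the base case.

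For the inductive step, $\Gamma' = \Gamma'', b$ where $b$ is either a term binding $\types{y}{T'}$ or a type binding $\subtypes{Y}{T'}$. In either case the derivation of $\envJ{\Gamma, \subtypes{X}{T}, \Gamma'', b}$ must end with \textsc{Env-V} or \textsc{Env-TV}, whose relevant premise is $\envJ{\Gamma, \subtypes{X}{T}, \Gamma''}$. Applying the induction hypothesis to this premise gives immediately that $X$ does not appear anywhere in $\Gamma$, which is the desired conclusion.

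The main obstacle is the base case: the environment-validity rules only give the syntactic condition $X \notin \dom{\Gamma}$, not the stronger "$X$ does not occur at all in $\Gamma$." Bridging this gap requires the well-scopedness invariant for environments described above, which in turn rests on the fact that each type-validity judgement $\typeJ{\Delta}{T'}$ only uses type variables bound in $\Delta$ — provable by a routine induction on \Cref{fig:type-valid}, the key case being \textsc{Type-TV}, which requires $X$ to appear in the environment. Everything else is bookkeeping. If the paper is content to read "appear" as "appear in the domain," the base case collapses to a one-line appeal to \textsc{Env-TV} and no auxiliary lemma is needed; I would state explicitly which reading is intended.
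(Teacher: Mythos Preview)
Your proposal is correct and follows essentially the same approach as the paper: induction that peels off $\Gamma'$ until the base case $\envJ{\Gamma,\subtypes{X}{T}}$, where \textsc{Env-TV} gives $X\notin\dom{\Gamma}$ and the well-scopedness of valid environments rules out any other occurrence. The paper's proof is by induction on the derivation of $\envJ{\Gamma,\subtypes{X}{T},\Gamma'}$ (equivalent to your induction on $\Gamma'$, since the rules are syntax-directed) and handles the base case with the one-line remark ``otherwise $X$ would be unbound''; your explicit auxiliary induction on $\envJ{\Gamma}$ and $\typeJ{\Delta}{T'}$ spells out exactly what that remark relies on.
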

\begin{proof}
By induction on the derivation of $\envJ{\Gamma, \subtypes{X}{T}, \Gamma'}$.
\begin{itemize}
    \item \tsc{Env-$\emptyset$}\\
        Could not have been used, as the environment is non-empty.
    \item \tsc{Env-V}\\
        \begin{equation}
            \derive
                {\typeJ{\Gamma, \subtypes{X}{T}, \Gamma''}{T'} \quad x \notin dom(\Gamma)}
                {\envJ{\Gamma, \subtypes{X}{T}, \Gamma'', \types{x}{T'}}}
        \end{equation}
        where $\Gamma' = \Gamma'', \types{x}{T''}$\\
        By the definition of the type well-formedness judgement, we have that $\envJ{\Gamma, \subtypes{X}{T}, \Gamma''}$.
        Then, by the I.H., we know that $X$ does not appear anywhere in $\Gamma$.
    \item \tsc{Env-TV}\\
        From the rule, we have $\envJ{\Gamma, \subtypes{X}{T}, \Gamma'}$.
        There are two possibilities:
        \begin{itemize}
            \item $\Gamma' = \emptyset$\\
                In this case, we have from the premises that $\typeJ{\Gamma}{T}$ and $X \notin dom(\Gamma)$.
                If $X$ is not in $dom(\Gamma)$, then it cannot appear \textit{anywhere} in $\Gamma$, as otherwise $X$ would be unbound.
            \item $\Gamma' = \Gamma'', \subtypes{X'}{T'}$\\
                In this case, we have from the premises that $\typeJ{\Gamma, \subtypes{X}{T}, \Gamma''}{T'}$.
                By the definition of the type well-formedness judgement, we have that $\envJ{\Gamma, \subtypes{X}{T}, \Gamma''}$.
                Then, by the I.H., we know that $X$ does not appear anywhere in $\Gamma$.
        \end{itemize}
\end{itemize}
\end{proof}

\subsection{Narrowing}
\begin{lemma}[Narrowing]\label{narrowing}\hfill
    \begin{enumerate}
        \item If $\envJ{\Gamma, \subtypes{X}{T}, \Gamma'}$ and $\subtypesTo{\Gamma}{T'}{T}$, then $\envJ{\Gamma, \subtypes{X}{T'}, \Gamma'}$.
        \item If $\typeJ{\Gamma, \subtypes{X}{T'}, \Gamma'}{T}$ and $\subtypesTo{\Gamma}{T''}{T'}$, then $\typeJ{\Gamma, \subtypes{X}{T''}, \Gamma'}{T}$.
        \item If $\subtypesTo{\Gamma, \subtypes{X}{T''}, \Gamma'}{T}{T'}$ and $\subtypesTo{\Gamma}{T'''}{T''}$, then $\subtypesTo{\Gamma, \subtypes{X}{T'''}, \Gamma'}{T}{T'}$.
        \item If $\typesTo{\Gamma, \subtypes{X}{T'}, \Gamma'}{t}$ and $\subtypesTo{\Gamma}{T''}{T'}$, then $\typesTo{\Gamma, \subtypes{X}{T''}, \Gamma'}{t}$.
    \end{enumerate}
\end{lemma}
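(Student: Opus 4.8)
The plan is to prove all four parts simultaneously by mutual induction on the given derivations, resolving the apparent circularity (environment validity, type validity, subtyping, and typing all presuppose one another in \Cref{def:type-system}) by the standard measure that ranks a judgement strictly above its premises and above the well-formedness judgements it presupposes. Before starting I record one fact used throughout: since $\subtypesTo{\Gamma}{T'}{T}$ presupposes $\typeJ{\Gamma}{T'}$ (and likewise for $T''$, $T'''$), the narrower bound is always well-formed in $\Gamma$ already.

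For part (1) I would induct on the structure of $\Gamma'$ (equivalently, on the derivation of $\envJ{\Gamma, \subtypes{X}{T}, \Gamma'}$). If $\Gamma' = \emptyset$, the goal $\envJ{\Gamma, \subtypes{X}{T'}}$ follows from \tsc{Env-TV}, using $\typeJ{\Gamma}{T'}$ (from the remark above) and $X \notin \dom{\Gamma}$ (inverted from the hypothesis). If $\Gamma' = \Gamma'', \types{y}{S}$ or $\Gamma'', \subtypes{Y}{S}$, then inverting \tsc{Env-V}/\tsc{Env-TV} gives $\envJ{\Gamma, \subtypes{X}{T}, \Gamma''}$ and $\typeJ{\Gamma, \subtypes{X}{T}, \Gamma''}{S}$; the induction hypothesis for (1) yields $\envJ{\Gamma, \subtypes{X}{T'}, \Gamma''}$, part (2) applied to the shorter suffix $\Gamma''$ re-validates $S$, and re-applying \tsc{Env-V}/\tsc{Env-TV} concludes.

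Parts (2) and (4) are largely mechanical: for each rule I would narrow the premises with the induction hypotheses and re-apply the same rule. Two patterns recur. (i) For rules that extend the context with a fresh binder --- \tsc{Type-TAbs}, the quantified-type side-condition of \tsc{Type-TApp}, \tsc{T-Let}, \tsc{T-Abs}, \tsc{T-TAbs}, \tsc{T-Match} --- I would apply the induction hypothesis to the longer context $\Gamma, \subtypes{X}{T''}, \Gamma', \ldots$ (narrowing inside it) and then use \cref{permutation} to restore the declared order. (ii) For rules carrying an embedded subtyping or disjointness obligation --- \tsc{Type-Match}, \tsc{T-Sub}, \tsc{T-TApp}, and the exhaustiveness premise of \tsc{T-Match} --- I would discharge that obligation with part (3), first unfolding $\disjoint{\Gamma}{T}{T'}$ into its type-validity and subtyping components (\Cref{def:type-disjoint}) and narrowing each. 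All remaining cases just narrow the premises and re-apply the rule.

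Part (3) is the real work, and \tsc{ST-Var} is the only case where changing the bound of $X$ does anything. There the derivation is $\subtypesTo{\Gamma, \subtypes{X}{T''}, \Gamma'}{Z}{S}$ with $\envJ{\Gamma, \subtypes{X}{T''}, \Gamma'}$ and $Z <: S$ read off the context. If $Z$ is a variable other than $X$ (bound in $\Gamma$ or $\Gamma'$), I re-apply \tsc{ST-Var} to the narrowed context, using part (1) for its well-formedness premise. If $Z = X$, then $S = T''$; part (1) gives $\envJ{\Gamma, \subtypes{X}{T'''}, \Gamma'}$, so \tsc{ST-Var} yields $\subtypesTo{\Gamma, \subtypes{X}{T'''}, \Gamma'}{X}{T'''}$, then \cref{weakening} lifts the hypothesis $\subtypesTo{\Gamma}{T'''}{T''}$ to $\subtypesTo{\Gamma, \subtypes{X}{T'''}, \Gamma'}{T'''}{T''}$, and \tsc{ST-Trans} gives $\subtypesTo{\Gamma, \subtypes{X}{T'''}, \Gamma'}{X}{T''}$, as required. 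The other subtyping cases are routine: \tsc{ST-Trans}, \tsc{ST-List}, \tsc{ST-$\cup$R}, \tsc{ST-Rec}, \tsc{ST-Abs}, \tsc{ST-Match2} narrow the premises and re-apply; \tsc{ST-TAbs} and \tsc{ST-App} handle the inner context extension / subtyping premise as in patterns (i)/(ii) above; \tsc{ST-Match1} narrows its subtyping premise and, componentwise, its disjointness premises; and \tsc{ST-Refl}, \tsc{ST-$\top$}, \tsc{ST-Val}, \tsc{ST-$\cup$Comm}, \tsc{ST-$\cup$Assoc} are immediate, as they never inspect variable bounds. I expect the main obstacle to be bookkeeping rather than any deep idea: keeping the mutual induction well-founded (so that part (1) may invoke part (2) on shorter suffixes and part (3)/\tsc{ST-Var} may invoke part (1), with no vicious cycle) and ensuring that every appeal to \cref{permutation} and \cref{weakening} in the binder and \tsc{ST-Var} cases is made against an already-narrowed context.
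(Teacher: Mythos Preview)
Your proposal takes essentially the same approach as the paper's proof: a simultaneous induction across the judgements, with the crucial \tsc{ST-Var} case handled exactly as the paper does (part~(1) for the narrowed environment, \tsc{ST-Var}, \cref{weakening} to lift $\subtypesTo{\Gamma}{T'''}{T''}$, then \tsc{ST-Trans}), and the disjointness premise of \tsc{ST-Match1} unfolded and ``narrowed componentwise'' just as the paper does. Two minor deviations: the paper proves parts (1)--(3) together and (4) by a separate induction (your all-four-at-once scheme is equivalent), and your appeal to \cref{permutation} in the binder-extension cases is unnecessary---since the bound being narrowed sits in the \emph{middle} of the context, treating the fresh binder as part of a longer suffix $\Gamma', y{:}S$ already yields the right order, which is what the paper does directly.
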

\begin{proof}
(1), (2) and (3) are proven simultaneously by induction on the derivations
$\envJ{\Gamma, \subtypes{X}{T}, \Gamma'}$, $\typeJ{\Gamma, \subtypes{X}{T'}, \Gamma'}{T}$ and $\subtypesTo{\Gamma}{T}{T'}$.
\begin{itemize}
    \item By induction on the derivation of $\envJ{\Gamma, \subtypes{X}{T}, \Gamma'}$.
        \begin{itemize}
            \item \tsc{Env-$\emptyset$}\\
                Does not apply, as the environment is non-empty.
            \item \tsc{Env-V}\\
                \begin{equation}
                    \derive
                        {\typeJ{\Gamma, \subtypes{X}{T}, \Gamma'}{T''} \quad x \notin dom(\Gamma, \subtypes{X}{T}, \Gamma')}
                        {\envJ{\Gamma, \subtypes{X}{T}, \Gamma', \types{x}{T''}}}
                \end{equation}
                By part (2) of the lemma, we have $\typeJ{\Gamma, \subtypes{X}{T'}, \Gamma'}{T''}$.
                The domain does not change, so we simply use \tsc{Env-V} to obtain the result.
            \item \tsc{Env-TV}\\
                \begin{equation}
                    \derive
                        {\typeJ{\Gamma, \subtypes{X}{T}, \Gamma'}{T''} \quad Y \notin dom(\Gamma, \subtypes{X'}{T}, \Gamma')}
                        {\envJ{\Gamma, \subtypes{X}{T}, \Gamma', \subtypes{Y}{T''}}}
                \end{equation}
                By part (2) of the lemma, we have $\typeJ{\Gamma, \subtypes{X}{T'}, \Gamma'}{T''}$.
                The domain does not change, so we simply use \tsc{Env-V} to obtain the result.
        \end{itemize}
    \item By induction on the derivation of $\typeJ{\Gamma, \subtypes{X}{T'}, \Gamma'}{T}$.
        \begin{itemize}
            \item \tsc{Type-Val}, \tsc{Type-Int}, \tsc{Type-Bool}, \tsc{Type-Unit},
                \tsc{Type-String}, \tsc{Type-Bytes}, \tsc{Type-TV}\\
                The conclusion follows from applying part (1) of the lemma, and then applying the relevant rule.
            \item \tsc{Type-SR}, \tsc{Type-Chan}, \tsc{Type-Abs}, \tsc{Type-Rec},
                \tsc{Type-List}, \tsc{Type-Union}, \tsc{Type-Match}\\
                The conclusion follows from applying the I.H. on the premises, and then applying the relevant rule.
            \item \tsc{Type-TAbs}\\
                By the I.H. (using environment $\Gamma, \subtypes{X}{T'}, \Gamma'', \subtypes{Y}{T_1}$), we have
                $\subtypesTo{\Gamma, \subtypes{X}{T''}, \Gamma'', \subtypes{Y}{T_1}}{T_2}$.
                The result then follows from \tsc{Type-TAbs}.
            \item \tsc{Type-TApp}\\
                By the I.H., we have $\typeJ{\Gamma, \subtypes{X}{T''}, \Gamma'}{T_1}$,
                $\typeJ{\Gamma, \subtypes{X}{T''}, \Gamma'}{T_2}$ and
                $\typeJ{\Gamma, \subtypes{X}{T''}, \Gamma'}{\quant{Y}{T_{11}} T_{12}}$.
                By part (3) of the lemma, we have
                $\eqtypesTo{\Gamma, \subtypes{X}{T''}, \Gamma'}{T_1}{\quant{X}{T_{11}} T_{12}}$ and
                $\subtypesTo{\Gamma, \subtypes{X}{T''}, \Gamma'}{T_2}{T_{11}}$.
                The result then follows from \tsc{Type-TApp}.
        \end{itemize}
    \item By induction on the derivation of $\subtypesTo{\Gamma, \subtypes{X}{T''}, \Gamma'}{T}{T'}$.
        \begin{itemize}
            \item \tsc{ST-Refl}, \tsc{ST-$\top$}, \tsc{ST-Val}, \tsc{ST-$\cup$Comm}, \tsc{ST-$\cup$Assoc}\\
                The conclusion is immediate by using the relevant rule with environment $(\Gamma, \subtypesTo{X}{T'''}, \Gamma')$.
            \item \tsc{ST-Var}\\
                \begin{equation}
                    \derive
                        {\envJ{\Gamma, \subtypes{X}{T''}, \Gamma'}}
                        {\subtypesTo{\Gamma, \subtypes{X}{T''}, \Gamma'}{Y}{T'}}
                \end{equation}
                where $T = Y$\\
                There are three possible cases, depending on the environment:
                \begin{itemize}
                    \item $\Gamma = \Gamma'', \subtypes{Y}{T'}, \Gamma'''$\\
                        By part (1) of the lemma, we have $\envJ{\Gamma'', \subtypes{Y}{T'}, \Gamma''', \subtypes{X}{T'''}, \Gamma'}$.
                        The result follows from \tsc{ST-Var}.
                    \item $\Gamma' = \Gamma'', \subtypes{Y}{T'}, \Gamma'''$\\
                        Analogous to the previous case.
                    \item $X = Y \quad T' = T''$\\
                        By part (1) of the lemma, we have $\envJ{\Gamma, \subtypesTo{Y}{T'''}, \Gamma'}$.
                        By \tsc{ST-Var}, we have $\subtypesTo{\Gamma, \subtypesTo{Y}{T'''}, \Gamma'}{Y}{T'''}$.
                        By \cref{weakening} with $\subtypesTo{\Gamma}{T'''}{T''}$, we have
                        $\subtypesTo{\Gamma, \subtypes{Y}{T'''}, \Gamma'}{T'''}{T''}$.
                        Now, from \tsc{ST-Trans} we have $\subtypesTo{\Gamma, \subtypes{Y}{T'''}, \Gamma'}{Y}{T''}$,
                        which, since $X = Y$ and $T' = T''$, is the desired result.
                \end{itemize}
            \item \tsc{ST-Trans}, \tsc{ST-List}, \tsc{ST-$\cup$R}, \tsc{ST-Rec},
                \tsc{ST-Match2}, \tsc{ST-Abs}, \tsc{ST-App}\\
                The conclusion is obtained by applying the I.H. on the premises and then applying the relevant rule.
            \item \tsc{ST-Match1}\\
                \begin{equation}
                    \derive
                        {\subtypesTo{\Gamma, \subtypes{X}{T''}, \Gamma'}{T_s}{T_k} \quad \forall i \in I : i < k \Rightarrow \disjoint{\Gamma, \subtypes{X}{T''}, \Gamma'}{T_s}{T_i}}
                        {\eqtypesTo{\Gamma, \subtypes{X}{T''}, \Gamma'}{\match{T_s}{T_i \Rightarrow T_i'}{i \in I}}{T_k'}}
                \end{equation}
                where $T = \match{T_s}{T_i \Rightarrow T_i'}{i \in I} \quad T' = T_k'$\\
                By the I.H., we have $\subtypesTo{\Gamma, \subtypes{X}{T'''}, \Gamma'}{T_s}{T_k}$.
                We expand the second premise to:
                \begin{align}
                    \forall i \in I: i < k \Rightarrow (\not\exists T_s' : & \typeJ{\Gamma, \subtypes{X}{T''}, \Gamma'}{T_s'}\\
                                                                   \wedge\ & \subtypesTo{\Gamma. \subtypes{X}{T''}, \Gamma'}{T_s'}{T_s}\\
                                                                   \wedge\ & \subtypesTo{\Gamma, \subtypes{X}{T''}, \Gamma'}{T_s'}{T_i})
                \end{align}
                By part (2) of the lemma, we have $\typeJ{\Gamma, \subtypes{X}{T'''}, \Gamma'}{T_s'}$.
                Then we can use the I.H. to show:
                \begin{align}
                    \forall i \in I: i < k \Rightarrow (\not\exists T_s' : & \typeJ{\Gamma, \subtypes{X}{T'''}, \Gamma'}{T_s'}\\
                                                                   \wedge\ & \subtypesTo{\Gamma, \subtypes{X}{T'''}, \Gamma'}{T_s'}{T_s}\\
                                                                   \wedge\ & \subtypesTo{\Gamma, \subtypes{X}{T'''}, \Gamma'}{T_s'}{T_i})
                \end{align}
                The result follows from \tsc{ST-Match1}.
            \item \tsc{ST-TAbs}
                \begin{equation}
                    \derive
                        {\subtypesTo{\Gamma, \subtypes{X}{T''}, \Gamma'}{T_1'}{T_1} \quad \subtypesTo{\Gamma, \subtypes{X}{T''}, \Gamma', \subtypes{Y}{T_1'}}{T_2}{T_2'}}
                        {\subtypesTo{\Gamma, \subtypes{X}{T''}, \Gamma'}{(\quant{Y}{T_1} T_2)}{(\quant{Y}{T_1'} T_2')}}
                \end{equation}
                where $T = (\quant{Y}{T_1} T_2) \quad T' = (\quant{Y}{T_1'} T_2')$\\
                The conclusion is obtained by applying the I.H. on the premises
                (using the environment $(\Gamma, \subtypes{X}{T''}, \Gamma', \subtypes{Y}{T_1'})$ with the right premise)
                and then applying \tsc{ST-TAbs}.
        \end{itemize}
    \item By induction on the derivation of $\typesTo{\Gamma, \subtypes{X}{T'}, \Gamma'}{t}{T}$.
        \begin{itemize}
            \item \tsc{T-Val}\\
                The conclusion is immediate by using \tsc{T-Val} with the environment $\Gamma, \subtypes{X}{T'}, \Gamma'$.
            \item \tsc{T-Rec}, \tsc{T-List}, \tsc{T-Head}, \tsc{T-Tail},
                \tsc{T-App} \tsc{T-Field}, \tsc{T-Sub}, \tsc{T-OpC},
                \tsc{T-OpR}, \tsc{T-OpI}, \tsc{T-OpM}, \tsc{T-OpD}\\
                The conclusion follows by applying the I.H. to the premises and then applying the relevant rule.
            \item \tsc{T-Var}\\
                \begin{equation}
                    \derive
                        {\envJ{\Gamma, \subtypes{X}{T'}, \Gamma'}}
                        {\typesTo{\Gamma, \subtypes{X}{T'}, \Gamma'}{x}{T}}
                \end{equation}
                where $t = x$\\
                There are two subcases:
                \begin{itemize}
                    \item $\Gamma = \Gamma'', \types{x}{T}, \Gamma'''$\\
                        We apply the 1st part of the lemma to obtain $envJ{\\Gamma'', \types{x}{T}, \Gamma''', \subtypes{X}{T'}, \Gamma'}$.
                        The result then follows from \tsc{T-Var}.
                    \item $\Gamma' = \Gamma'', \types{x}{T}, \Gamma'''$\\
                        Analogous to the previous case.
                \end{itemize}
            \item \tsc{T-Let}\\
                \begin{equation}
                    \derive
                        {\typesTo{\Gamma, \subtypes{X}{T'}, \Gamma'}{t_0}{T_0} \quad \typesTo{\Gamma, \subtypes{X}{T'}, \Gamma', \types{x}{T_0}}{t_1}{T_1}}
                        {\typesTo{\Gamma, \subtypes{X}{T'}, \Gamma'}{\letExp{x}{t_0} t_1}{T_1}}
                \end{equation}
                where $t = \letExp{x}{t_0} t_1 \quad T = T_1$\\
                By applying the I.H. on the premises
                (using the environment $\Gamma, \subtypes{X}{T'}, \Gamma', \typesTo{x}{T_0}$ in the right premise)
                the result then follows from \tsc{T-Let}.
            \item \tsc{T-Abs}\\
                \begin{equation}
                    \derive
                        {\typesTo{\Gamma, \subtypes{X}{T'}, \Gamma', \types{x}{T_1}}{t_0}{T_2}}
                        {\typesTo{\Gamma, \subtypes{X}{T'}, \Gamma'}{\abs{x}{T_1} t_0}{T_1 \rightarrow T_2}}
                \end{equation}
                where $t = \abs{x}{T_1} t_0 \quad T = T_1 \rightarrow T_2$\\
                By applying the I.H. on the premise
                (using the environment $\Gamma, \subtypes{X}{T'}, \Gamma', \typesTo{x}{T_1}$)
                the result then follows from \tsc{T-Abs}.
            \item \tsc{T-TAbs}\\
                \begin{equation}
                    \derive
                        {\typesTo{\Gamma, \subtypes{X}{T'}, \Gamma', \subtypes{Y}{T_1}}{t_0}{T_2}}
                        {\typesTo{\Gamma, \subtypes{X}{T'}, \Gamma'}{\tabs{Y}{T_1} t_0}{\quant{Y}{T_1} T_2}}
                \end{equation}
                where $t = \tabs{Y}{T_1} t_0 \quad T = \quant{Y}{T_1} T_2$\\
                By applying the I.H. on the premise
                (using the environment $\Gamma, \subtypes{X}{T'}, \Gamma', \subtypes{Y}{T_1}$)
                the result then follows from \tsc{T-TAbs}.
            \item \tsc{T-TApp}
                \begin{equation}
                    \derive
                        {\typesTo{\Gamma, \subtypes{X}{T'}, \Gamma'}{t_1}{\quant{X'}{T_1} T_0} \quad \subtypesTo{\Gamma, \subtypes{X}{T'}, \Gamma'}{T_2}{T_1}}
                        {\typesTo{\Gamma, \subtypes{X}{T'}, \Gamma'}{t_1\ T_2}{(\quant{X'}{T_1} T_0)\ T_2}}
                \end{equation}
                where $t = t_1\ T_2 \quad T = (\quant{X}{T_1} T_0)\ T_2$\\
                By the I.H. on the left premise, we have $\typesTo{\Gamma, \subtypes{X}{T''}, \Gamma'}{t_1}{\quant{X'}{T_1} T_0}$.
                By part (3) of the lemma on the right premise, we have $\subtypesTo{\Gamma, \subtypes{X}{T''}, \Gamma'}{T_2}{T_1}$.
                The result then follows from \tsc{T-TApp}.
            \item \tsc{T-Match}\\
                \begin{equation}
                    \derive
                        {\typesTo{\Gamma, \subtypes{X}{T'}, \Gamma'}{t_s}{T_s} \quad \forall i \in I : \typesTo{\Gamma, \subtypes{X}{T'}, \Gamma', \types{x_i}{T_i}}{t_i}{T_i'} \\
                         \subtypesTo{\Gamma, \subtypes{X}{T'}, \Gamma'}{T_s}{\cup_{i \in I} T_i}}
                        {\typesTo{\Gamma, \subtypes{X}{T'}, \Gamma'}{\match{t_s}{x_i : T_i \Rightarrow t_i}{i \in I}}{\match{T_s}{T_i \Rightarrow T_i'}{i \in I}}}
                \end{equation}
                where $t = \match{t_s}{x_i : T_i \Rightarrow t_i}{i \in I} \quad T = \match{T_s}{T_i \Rightarrow T_i'}{i \in I}$\\
                By applying the I.H. on the first two premises
                (using the environment $\Gamma, \subtypes{X}{T'}, \Gamma', \types{x_i}{T_i}$ on each of the case derivations)
                and by part (3) of the lemma on the last premise, the result follows from \tsc{T-Match}.
        \end{itemize}
\end{itemize}
\end{proof}

\subsection{Substitution}
\begin{theorem}[Substitution]
\label{substitution}\hfill
\begin{enumerate}
    \item If ${\envJ{\Gamma, \subtypes{X}{T}, \Gamma'}}$ and ${\subtypesTo{\Gamma}{T'}{T}}$,\\
        then ${\envJ{\Gamma, \subs{\Gamma'}{X}{T'}}}$.
    \item If ${\typeJ{\Gamma, \subtypes{X}{T'}, \Gamma'}{T}}$ and ${\subtypesTo{\Gamma}{T''}{T'}}$,\\
        then ${\typeJ{\Gamma, \subs{\Gamma'}{X}{T''}}{\subs{T}{X}{T''}}}$.
    \item If ${\subtypesTo{\Gamma, \subtypes{X}{T''}, \Gamma'}{T}{T'}}$ and ${\subtypesTo{\Gamma}{T'''}{T''}}$,\\
        then ${\subtypesTo{\Gamma, \subs{\Gamma'}{X}{T'''}}{\subs{T}{X}{T'''}}{\subs{T'}{X}{T'''}}}$.
    \item If ${\typesTo{\Gamma, \subtypes{X}{T'}, \Gamma'}{t}{T}}$ and ${\subtypesTo{\Gamma}{T''}{T'}}$,\\
        then ${\typesTo{\Gamma, \subs{\Gamma'}{X}{T''}}{\subs{t}{X}{T''}}{\subs{T}{X}{T''}}}$.
    \item If ${\typesTo{\Gamma, \types{x}{T'}, \Gamma'}{t}{T}}$ and ${\typesTo{\Gamma}{v}{T'}}$,\\
        then ${\typesTo{\Gamma, \Gamma'}{\subs{t}{x}{v}}{T}}$
\end{enumerate}
\end{theorem}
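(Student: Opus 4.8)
The plan is to prove the five parts in two stages, following the strategy of \citeN{BlanvillainBKO22}. Parts (1)--(4) all concern the \emph{type} substitution $[X \mapsto T'']$ and share the same inductive skeleton, so they are established simultaneously by a single induction on the derivations of $\envJ{\Gamma, \subtypes{X}{T}, \Gamma'}$, $\typeJ{\Gamma, \subtypes{X}{T'}, \Gamma'}{T}$, $\subtypesTo{\Gamma, \subtypes{X}{T''}, \Gamma'}{T}{T'}$, and $\typesTo{\Gamma, \subtypes{X}{T'}, \Gamma'}{t}{T}$, cross-calling between the four parts as the rules require. Part (5) --- the term substitution $[x \mapsto v]$ --- is then proven separately by induction on the derivation of $\typesTo{\Gamma, \types{x}{T'}, \Gamma'}{t}{T}$, reusing parts (1)--(4) together with \Cref{permutation,weakening,strengthening}. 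Throughout I would rely on the definitions of the three substitution operators in \Cref{app:definitions:substitutions}, which commute with every type-, term-, and judgement-former, so that for the large majority of rule cases the argument is: apply the induction hypotheses to the premises, then re-apply the same rule to the substituted judgements.

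The non-routine cases of parts (1)--(3) are the following. For \tsc{ST-Var} with conclusion $\subtypesTo{\Gamma, \subtypes{X}{T''}, \Gamma'}{X}{T''}$, the variable $X$ disappears under substitution: we obtain $\envJ{\Gamma, \subs{\Gamma'}{X}{T'''}}$ from part (1), note that $\subs{T''}{X}{T'''} = T''$ since $X \not\in \dom{\Gamma}$ implies $X$ does not occur in $T''$ (by \Cref{var-absence}), and then derive $\subtypesTo{\Gamma, \subs{\Gamma'}{X}{T'''}}{T'''}{T''}$ from the hypothesis $\subtypesTo{\Gamma}{T'''}{T''}$ by \Cref{weakening}. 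The cases of \tsc{ST-Var} for a variable bound in $\Gamma$ or $\Gamma'$ are immediate, the latter using that substitution also rewrites the recorded bound. The binder rules \tsc{Type-TAbs}, \tsc{ST-TAbs}, \tsc{T-TAbs} (and similarly \tsc{T-Abs}, \tsc{T-Let}, \tsc{T-Match}) introduce a fresh variable whose declared bound/type is itself substituted, so the induction hypothesis is applied on the extended environment and \Cref{permutation} is used to move the new binding past $\subs{\Gamma'}{X}{T'''}$. The one genuinely delicate case is \tsc{ST-Match1}, discussed below.

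For part (4) the only new work beyond invoking parts (1)--(3) is in \tsc{T-Var} (handled exactly as \tsc{ST-Var}, or the identity substitution when the variable differs from $X$), \tsc{T-Match} (whose exhaustiveness premise $\subtypesTo{\Gamma}{T_s}{\cup_{i} T_i}$ is transported by part (3), using that substitution commutes with $\cup$), \tsc{T-Sub} (via part (3)), and the five P4Runtime-operation rules \tsc{T-OpC}, \tsc{T-OpR}, \tsc{T-OpI}, \tsc{T-OpM}, \tsc{T-OpD}: these are immediate, since substitution distributes over $\ServerRef{\cdot,\cdot,\cdot}$, $\Chan{\cdot,\cdot,\cdot}$, and $\PPPPEntity$, so it suffices to apply the induction hypothesis to the premises and re-apply the operation rule. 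For part (5) the decisive case is \tsc{T-Var}: if $t = x$ then $T = T'$ and the goal $\typesTo{\Gamma, \Gamma'}{v}{T'}$ follows from $\typesTo{\Gamma}{v}{T'}$ by iterated \Cref{weakening}; if $t$ is any other variable the substitution is the identity. Every remaining case is discharged by the induction hypotheses and the corresponding rule, with \Cref{strengthening} used to drop the now-vacuous hypothesis $\types{x}{T'}$ from subtyping subderivations (e.g.\ in \tsc{T-Sub}, \tsc{T-TApp}, \tsc{T-Match}) and \Cref{permutation} used to reorder binders in \tsc{T-Let}, \tsc{T-Abs}, \tsc{T-TAbs}, \tsc{T-Match}; the operation rules are again immediate.

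The main obstacle is the treatment of \tsc{ST-Match1} in part (3): to re-derive the rule after substituting $[X \mapsto T''']$ one must show that each disjointness side-condition $\disjoint{\Gamma, \subtypes{X}{T''}, \Gamma'}{T_s}{T_i}$ is preserved, i.e.\ that $\disjoint{\Gamma, \subs{\Gamma'}{X}{T'''}}{\subs{T_s}{X}{T'''}}{\subs{T_i}{X}{T'''}}$ still holds. This is the hard part because \Cref{def:type-disjoint} phrases disjointness as a \emph{negative} statement (``there is no common subtype''): narrowing the bound of $X$ only ever adds subtyping facts, and replacing an abstract variable by a concrete type can in principle create new common subtypes, so a naive argument does not go through. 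The resolution --- and the bulk of the technical effort, mirroring \citeN{BlanvillainBKO22} --- is to prove, as an auxiliary statement threaded through the same simultaneous induction, that the disjointness judgement is stable under the substitutions of interest: a hypothetical common subtype of the substituted types is reflected back, via parts (1) and (3), to a common subtype of $T_s$ and $T_i$ in the original environment, contradicting the premise. Once this is in place, everything else is a mechanical (if lengthy) rule-by-rule case analysis of exactly the kind already carried out for \Cref{permutation,weakening,narrowing}.
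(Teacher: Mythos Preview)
Your proposal is correct and follows essentially the same approach as the paper: a simultaneous induction over the relevant derivations for the type-substitution parts, then a separate induction for the term-substitution part (5), with the bulk of the cases being ``apply the induction hypotheses, re-apply the rule.'' The only structural difference is that the paper bundles parts (1)--(3) in the simultaneous induction and proves (4) afterwards by its own induction (calling back to (1)--(3) for \tsc{T-Sub}, \tsc{T-TApp}, and \tsc{T-Match}), whereas you fold (4) into the same simultaneous proof; since (1)--(3) never invoke (4), both organisations work and yield the same case analysis.

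Two small points worth noting. First, your handling of the \tsc{ST-Var} case where the bound variable coincides with the substituted $X$ (using \Cref{var-absence} and \Cref{weakening}) is exactly what is needed, and in fact the paper's written proof omits this sub-case (it splits only on whether $\subtypes{X}{T''}$ lies to the left or right of the variable in question). Second, your discussion of \tsc{ST-Match1} is more candid than the paper's: the paper simply expands the disjointness predicate and asserts that the non-existence statement transfers to the substituted environment ``by the I.H.'', using the same template it applies in \Cref{permutation,weakening,narrowing}. Your proposal to thread a disjointness-stability argument through the simultaneous induction is a reasonable way to make that step precise, but be aware that the paper does not spell this out --- it treats the step as routine.
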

\begin{proof}
(1), (2) and (3) are proven simultaneously by induction on the derivations
${\envJ{\Gamma, \subtypes{X}{T}, \Gamma'}}$, ${\typeJ{\Gamma, \subtypes{X}{T'}, \Gamma'}{T}}$
and ${\subtypesTo{\Gamma, \subtypes{X}{T''}, \Gamma'}{T}{T'}}$.
\begin{enumerate}
    \item By induction on the derivation of ${\envJ{\Gamma, \subtypes{X}{T}, \Gamma'}}$.
        \begin{itemize}
            \item \tsc{Env-$\emptyset$}\\
                Could not have been used, as the environment is not empty.
            \item \tsc{Env-V}
                \begin{equation}
                    \derive
                        {\typeJ{\Gamma', \subtypes{X}{T}, \Gamma''}{T''} \quad x \notin dom(\Gamma, \subtypes{X}{T}, \Gamma'')}
                        {\envJ{\Gamma, \subtypes{X}{T}, \Gamma'', \types{x}{T''}}}
                \end{equation}
                where $\Gamma' = \Gamma'', \types{x}{T''}$\\
                By using part (2) of the lemma, we have ${\typeJ{\Gamma, \subs{\Gamma''}{X}{T'}}{\subs{T''}{X}{T'}}}$.
                None of the term-variables in $\Gamma'$ are altered by substitution, i.e.
                \begin{equation}
                    x \notin dom(\Gamma, \subtypes{X}{T}, \Gamma'') \Rightarrow x \notin dom(\Gamma, \subs{\Gamma''}{X}{T'})
                \end{equation}
                The result follows from \tsc{Env-V}.
            \item \tsc{Env-TV}\\
                \begin{equation}
                    \derive
                        {\typeJ{\Gamma, \subtypes{X}{T}, \Gamma''}{T''} \quad X' \notin dom(\Gamma, \subtypes{X}{T}, \Gamma'')}
                        {\envJ{\Gamma, \subtypes{X}{T}, \Gamma'', \subtypes{X'}{T''}}}
                \end{equation}
                where $\Gamma' = \Gamma'', \subtypes{X'}{T''}$\\
                By using part (2) of the lemma, we have ${\typeJ{\Gamma, \subs{\Gamma''}{X}{T}}{T''}}$.
                We know that $X \neq X'$ (by the right premise), so
                \begin{equation}
                    X' \notin dom(\Gamma, \subtypes{X}{T}, \Gamma') \Rightarrow X' \notin dom(\Gamma, \subs{\Gamma''}{X}{T'})
                \end{equation}
                The result follows from \tsc{Env-TV}.
        \end{itemize}
    \item By induction on the derivation of ${\typeJ{\Gamma, \subtypes{X}{T'}, \Gamma'}{T}}$.
        \begin{itemize}
            \item \tsc{Type-Int}, \tsc{Type-Bool}, \tsc{Type-Unit}, \tsc{Type-String}, \tsc{Type-Bytes}\\
                We apply part (1) of the lemma, and the result follows afterwards by applying the relevant rule.
            \item \tsc{Type-TV}\\
                \begin{equation}
                    \derive
                        {\envJ{\Gamma, \subtypes{X'}{T}, \Gamma'}}
                        {\typeJ{\Gamma, \subtypes{X'}{T}, \Gamma'}{X'}}
                \end{equation}
                We know that $X \neq X'$, as otherwise we could use part (1) of lemma to obtain
                $\envJ{\Gamma, \subs{\Gamma'}{X'}{T''}}$, an environment in which $X'$ is unbound.\\
                Then, either $\subtypes{X}{T'} \in \Gamma$ or $\subtypes{X}{T'} \in \Gamma'$.\\
                If $X \in \Gamma$, there is a $\Gamma_1$ and $\Gamma_2$ such that
                \begin{equation}
                    \envJ{\Gamma_1, \subtypes{X}{T'}, \Gamma_2, \subtypes{X'}{T}, \Gamma'}
                \end{equation}
                Then by part (1) of the lemma, we have
                \begin{equation}
                    \envJ{\Gamma_1, \subs{\Gamma_2}{X}{T''}, \subtypes{X'}{\subs{T}{X}{T''}, \subs{\Gamma'}{X}{T''}}}
                \end{equation}
                The result then follows from \tsc{Type-TV}.\\
                If $X \in \Gamma'$, there is a $\Gamma_1'$ and $\Gamma_2'$ such that
                \begin{equation}
                    \envJ{\Gamma, \subtypes{X'}{T}, \Gamma_1', \subtypes{X}{T'}, \Gamma_2'}
                \end{equation}
                Then by the 1st part of the lemma, we have
                \begin{equation}
                    \envJ{\Gamma, \subtypes{X'}{T}, \Gamma_1', \subs{\Gamma_2'}{X}{T''}}
                \end{equation}
                The result then also follows from \tsc{Type-TV}.\\
            \item \tsc{Type-SR}, \tsc{Type-Chan}, \tsc{Type-Abs}, \tsc{Type-Rec}, \tsc{Type-List},
                \tsc{Type-Union}, \tsc{Type-Match}\\
                The conclusion follows from applying the I.H. on the premises, and then applying the relevant rule.
            \item \tsc{Type-TAbs}\\
                \begin{equation}
                    \derive
                        {\typeJ{\Gamma, \subtypes{X}{T'}, \Gamma', \subtypes{X'}{T_1}}{T_2}}
                        {\typeJ{\Gamma, \subtypes{X}{T'}, \Gamma'}{\quant{X'}{T_1} T_2}}
                \end{equation}
                where $T = \quant{X'}{T_1} T_2$\\
                By the I.H. with environment $\Gamma, \subtypes{X}{T'}, \Gamma', \subtypes{X'}{T_1}$,
                we have $\typeJ{\Gamma, \subs{\Gamma'}{X}{T''}, \subtypes{X'}{\subs{T_1}{X}{T''}}}{\subs{T_2}{X}{T''}}$.
                With \tsc{Type-TAbs}, we get the desired conclusion.
            \item \tsc{Type-TApp}\\
                \begin{adjustwidth}{-3cm}{3cm}
                \begin{equation}
                    \derive
                        {\typeJ{\Gamma, \subtypes{X}{T'}, \Gamma'}{T_1} \quad \typeJ{\Gamma, \subtypes{X}{T'}, \Gamma'}{T_2}\\
                         \eqtypesTo{\Gamma, \subtypes{X}{T'}, \Gamma'}{T_1}{\quant{X'}{T_{11}} T_{12}} \quad \typeJ{\Gamma, \subtypes{X}{T'}, \Gamma'}{\quant{X'}{T_{11}} T_{12}} \quad \subtypesTo{\Gamma, \subtypes{X}{T'}, \Gamma'}{T_2}{T_{11}}}
                        {\typeJ{\Gamma, \subtypes{X}{T'}, \Gamma'}{T_1\ T_2}}
                \end{equation}
                \end{adjustwidth}
                where $T = T_1\ T_2$\\
                By the I.H., ${\typeJ{\Gamma, \subs{\Gamma'}{X}{T''}}{\subs{T_1}{X}{T''}}}$, ${\typeJ{\Gamma, \subs{\Gamma'}{X}{T''}}{\subs{T_2}{X}{T''}}}$
                and ${\typeJ{\Gamma, \subs{\Gamma'}{X}{T''}}{\quant{X'}{\subs{T_{11}}{X}{T''}} \subs{T_{12}}{X}{T''}}}$.
                Then, from part (3) of the lemma, we have\\
                ${\eqtypesTo{\Gamma, \subs{\Gamma'}{X}{T''}}{\subs{T_1}{X}{T''}}{\quant{X'}{\subs{T_{11}}{X}{T''}} \subs{T_{12}}{X}{T''}}}$
                and ${\subtypesTo{\Gamma, \subs{\Gamma'}{X}{T''}}{\subs{T_2}{X}{T''}}{\subs{T_1}{X}{T''}}}$.
                Finally, we just use \tsc{Type-TApp} to obtain the conclusion.
        \end{itemize}
    \item By induction on the derivation of ${\subtypesTo{\Gamma, \subtypes{X}{T''}, \Gamma'}{T}{T'}}$.
        \begin{itemize}
            \item \tsc{ST-Refl}, \tsc{ST-$\top$}, \tsc{ST-$\cup$Comm}, \tsc{ST-$\cup$Assoc}\\
                Immediate from using the relevant rule with the environment $(\Gamma, \subs{\Gamma'}{X}{T'''})$
            \item \tsc{ST-Val}\\
                where $T = \singleton{v_G}$\\
                Since $v_G$ is a ground value, $\singleton{v_G} = \subs{\singleton{v_G}}{X}{T'''}$.
                Now, we have to consider if $\memberOf{v_G}{T'} \Rightarrow \memberOf{v_G}{\subs{T'}{X}{T''}}$.
                Looking at the rules for $\in_G$, it is clear that type substitution has no effect on the judgement,
                so the conclusion follows from \tsc{ST-Val}.
            \item \tsc{ST-Var}\\
                \begin{equation}
                    \derive
                        {\envJ{\Gamma_1, \subtypes{X'}{T'}, \Gamma_2}}
                        {\subtypesTo{\Gamma_1, \subtypes{X'}{T'}, \Gamma_2}{X'}{T'}}
                \end{equation}
                where $T = X'$\\
                Either $\subtypes{X}{T''} \in \Gamma_1$ or $\subtypes{X}{T''} \in \Gamma_2$.\\
                If $\subtypes{X}{T''} \in \Gamma_1$, then there is a $\Gamma_{11}$ and $\Gamma_{12}$ such that
                \begin{equation}
                    \envJ{\Gamma_{11}, \subtypes{X}{T''}, \Gamma_{12}, \subtypes{X'}{T'}, \Gamma_2}
                \end{equation}
                where $\Gamma = \Gamma_{11} \quad \Gamma' = \Gamma_{12}, \subtypes{X'}{T'}, \Gamma_2$\\
                By part (1) of the lemma, we have
                \begin{equation}
                    \envJ{\Gamma_{11}, \subs{\Gamma_{12}}{X}{T'''}, \subtypes{X'}{\subs{T'}{X}{T'''}}, \subs{\Gamma_2}{X}{T'''}}
                \end{equation}
                The result then follows from \tsc{ST-Var} ($X = \subs{X}{X'}{T'''}$).\\
                If $\subtypes{X}{T''} \in \Gamma_2$, then there is a $\Gamma_{21}$ and $\Gamma_{22}$ such that
                \begin{equation}
                    \envJ{\Gamma_1, \subtypes{X'}{T'}, \Gamma_{21}, \subtypes{X}{T''}, \Gamma_{22}}
                \end{equation}
                where $\Gamma = \Gamma_1, \subtypes{X'}{T'}, \Gamma_{21} \quad \Gamma' = \Gamma_{22}$\\
                By part (1) of the lemma, we have
                \begin{equation}
                    \envJ{\Gamma_1, \subtypes{X'}{T'}, \Gamma_{21}, \subs{\Gamma_{22}}{X}{T'''}}
                \end{equation}
                Then from \tsc{ST-Var}, we can show $\subtypesTo{\Gamma_1, \subtypes{X'}{T'}, \Gamma_{21}, \subs{\Gamma_{22}}{X}{T'''}}{X'}{T'}$.
                We know that $X' = \subs{X'}{X}{T'''}$ and by \cref{var-absence}, that $T' = \subs{T'}{X}{T''}$, so
                $\subtypesTo{\Gamma_1, \subtypes{X'}{T'}, \Gamma_{21}, \subs{\Gamma_{22}}{X}{T'''}}{\subs{X'}{X}{T'''}}{\subs{T'}{X}{T'''}}$,
                which is what we wanted to show.
            \item \tsc{ST-Trans}, \tsc{ST-$\cup$L}, \tsc{ST-$\cup$R}, \tsc{ST-Rec}, \tsc{ST-List},
                  \tsc{ST-Match2}, \tsc{ST-Abs}, \tsc{ST-App}\\
                The conclusion follows from applying the I.H. on the premises and then applying the relevant rule.
            \item \tsc{ST-Match1}\\
                where $T = {T_s\ttt{ match }\curl{T_i \Rightarrow T_i'}_{i \in I}} \quad T' = {T_k'}$\\
                By the I.H. on the first premise, we have $\subtypesTo{\Gamma, \subs{\Gamma'}{X}{T'''}}{\subs{T_s}{X}{T'''}}{\subs{T_k}{X}{T'''}}$.
                We expand the second premise to:
                \begin{align}
                    \forall i \in I: i < k \Rightarrow (\not\exists T_s' : & \typeJ{\Gamma, \subtypes{X}{T''}, \Gamma'}{T_s'}\\
                                                                   \wedge\ & \subtypesTo{\Gamma. \subtypes{X}{T''}, \Gamma'}{T_s'}{T_s}\\
                                                                   \wedge\ & \subtypesTo{\Gamma, \subtypes{X}{T''}, \Gamma'}{T_s'}{T_i})
                \end{align}
                By part (2) of the lemma, we have $\typeJ{\Gamma, \subs{\Gamma'}{X}{T'''}}$.
                Then we can use the I.H. to show:
                \begin{align}
                    \forall i \in I: i < k \Rightarrow (\not\exists T_s' : & \typeJ{\Gamma, \subs{\Gamma'}{X}{T'''}}{T_s'}\\
                                                                   \wedge\ & \subtypesTo{\Gamma, \subs{\Gamma'}{X}{T'''}}{T_s'}{T_s}\\
                                                                   \wedge\ & \subtypesTo{\Gamma, \subs{\Gamma'}{X}{T'''}}{T_s'}{T_i})
                \end{align}
                The result follows from \tsc{ST-Match1}.
            \item \tsc{ST-TAbs}\\
                where $T = \quant{X'}{T_1} T_2 \quad T' = \quant{X'}{T_1'}{T_2'}$\\
                By the I.H. (with the environment $\Gamma, \subtypes{X}{T''}, \Gamma', \subtypes{X'}{T_1'}$ in the right premise), we have that
                ${\subtypesTo{\Gamma, \subs{\Gamma'}{X}{T'''}}{\subs{T_1'}{X}{T'''}}{\subs{T_1}{X}{T'''}}}$
                and ${\subtypesTo{\Gamma, \subs{\Gamma'}{X}{T'''}, \subtypes{X'}{\subs{T_1'}{X}{T'''}}}{\subs{T_2}{X}{T'''}}{\subs{T_2'}{X}{T'''}}}$.
                The result follows from \tsc{ST-TAbs}.
        \end{itemize}
    \item By induction on the derivation of ${\typesTo{\Gamma, \subtypes{X}{T'}, \Gamma'}{t}{T}}$.
        \begin{itemize}
            \item \tsc{T-Val}\\
                where $t = v \quad T = \singleton{v}$.
                $v = \subs{v}{X}{T''}$ and $\singleton{v} = \subs{\singleton{v}}{X}{T''}$,
                so the result is immediate from using \tsc{T-Val} with the new environment.
            \item \tsc{T-Rec}, \tsc{T-List}, \tsc{T-Head}, \tsc{T-Tail}, \tsc{T-App},
                  \tsc{T-Field}, \tsc{T-OpC}, \tsc{T-OpR}, \tsc{T-OpI}, \tsc{T-OpM}, \tsc{T-OpD}\\
                The result follows from the I.H. and the relevant rule.
            \item \tsc{T-Var}\\
                \begin{equation}
                    \derive
                        {\envJ{\Gamma_1, \types{x}{T}, \Gamma_2}}
                        {\typesTo{\Gamma_1, \types{x}{T}, \Gamma_2}{x}{T}}
                \end{equation}
                Either $\subtypes{X}{T'} \in \Gamma_1$ or $\subtypes{X}{T'} \in \Gamma_2$.\\
                If $\subtypes{X}{T'} \in \Gamma_1$, then there is a $\Gamma_{11}$ and $\Gamma_{12}$ such that
                \begin{equation}
                    \envJ{\Gamma_{11}, \subtypes{X}{T'}, \Gamma_{12}, \types{x}{T}, \Gamma_2}
                \end{equation}
                where $\Gamma = \Gamma_{11} \quad \Gamma' = \Gamma_{12}, \types{x}{T}, \Gamma_2$\\
                By part (1) of the lemma:
                \begin{equation}
                    \envJ{\Gamma_{11}, \subs{\Gamma_{12}}{X}{T''}, \types{x}{\subs{T}{X}{T''}}, \subs{\Gamma_2}{X}{T''}}
                \end{equation}
                The result then follows from \tsc{T-Var} (since $x = \subs{x}{X}{T''}$).
                If $\subtypes{X}{T'} \in \Gamma_2$, then there is a $\Gamma_{21}$ and $\Gamma_{22}$ such that
                \begin{equation}
                    \envJ{\Gamma_1, \types{x}{T}, \Gamma_{21}, \subtypes{X}{T'}, \Gamma_{22}}
                \end{equation}
                where $\Gamma = \Gamma_1, \types{x}{T}, \Gamma_{21} \quad \Gamma' = \Gamma_{22}$\\
                By part (1) of the lemma:
                \begin{equation}
                    \envJ{\Gamma_1, \types{x}{T}, \Gamma_{21}, \subs{\Gamma_{22}}{X}{T''}}
                \end{equation}
                From \tsc{T-Var}, we have $\typesTo{\Gamma_1, \types{x}{T}, \Gamma_{21}, \subs{\Gamma_{22}}{X}{T''}}{x}{T}$.
                By \cref{var-absence}, we know that $X$ is not in $T$, so $T = \subs{T}{X}{T''}$.
                This gives us $\typesTo{\Gamma_1, \types{x}{T}, \Gamma_{21}, \subs{\Gamma_{22}}{X}{T''}}{\subs{x}{X}{T''}}{\subs{T}{X}{T''}}$,
                which is what we wanted to show.
            \item \tsc{T-Let}\\
                By the I.H. (using the environment $\Gamma, \subtypes{X}{T'}, \Gamma', \types{x}{T_0}$ on the right premise),
                we have $\typesTo{\Gamma, \subs{\Gamma'}{X}{T''}}{\subs{t_0}{X}{T''}}{\subs{T_0}{X}{T''}}$
                and $\typesTo{\Gamma, \subs{\Gamma'}{X}{T''}, \typesTo{x}{\subs{T_0}{X}{T''}}}{\subs{t_1}{X}{T''}}{\subs{T_1}{X}{T''}}$.
                The result follows from \tsc{T-Let}.
            \item \tsc{T-Abs}\\
                By the I.H. (using the environment $\Gamma, \subtypes{X}{T'}, \Gamma', \types{x}{T_1}$),
                we have $\typesTo{\Gamma, \subs{\Gamma'}{X}{T''}, \types{x}{\subs{T_1}{X}{T''}}}{\subs{t_0}{X}{T''}}{\subs{T_2}{X}{T''}}$.
                The result follows from \tsc{T-Abs}.
            \item \tsc{T-TAbs}\\
                By the I.H. (using the environment $\Gamma, \subtypes{X}{T'}, \Gamma', \subtypes{X}{T_1}$),
                we have $\typesTo{\Gamma, \subs{\Gamma'}{X}{T''}, \subtypes{X}{\subs{T_1}{X}{T''}}}{\subs{t_0}{X}{T''}}{\subs{T_2}{X}{T''}}$.
                The result follows from \tsc{T-TAbs}.
            \item \tsc{T-TApp}\\
                By the I.H., we have $\typesTo{\Gamma, \subs{\Gamma}{X}{T''}}{\subs{t_1}{X}{T''}{\subs{(\quant{X'}{T_1} T_0)}{X}{T''}}}$.
                By part (3) of the lemma, $\subtypesTo{\Gamma, \subs{\Gamma'}{X}{T''}}{\subs{T_2}{X}{T''}}{\subs{T_1}{X}{T''}}$.
                The result follows from \tsc{T-TApp}.
            \item \tsc{T-Match}\\
                By the I.H. (using the environment $\Gamma, \subtypes{X}{T'}, \Gamma', \types{x_i}{T_i}$ on the right premise),
                we have $\typesTo{\Gamma, \subs{\Gamma}{X}{T''}}{\subs{t_s}{X}{T''}}{\subs{T_s}{X}{T''}}$
                and $\forall i \in I : \typesTo{\Gamma, \subs{\Gamma'}{X}{T''}, \types{x_i}{\subs{T_i}{X}{T''}}}{\subs{t_i}{X}{T''}}{\subs{T_i}{X}{T''}}$.
                By part (3) of the lemma, $\subtypesTo{\Gamma, \subs{\Gamma'}{X}{T''}}{T_s}{\cup_{i \in I} T_i}$,
                and the result then follows from \tsc{T-Match}.
            \item \tsc{T-Sub}
                By the I.H. on the left premise, we have $\typesTo{\Gamma, \subs{\Gamma'}{X}{T''}}{\subs{t}{X}{T''}}{\subs{T'''}{X}{T''}}$.
                By part (3) of the lemma, $\subtypesTo{\Gamma, \subs{\Gamma}{X}{T''}}{\subs{T'''}{X}{T''}}{\subs{T}{X}{T''}}$.
                The result follows from \tsc{T-Sub}.
        \end{itemize}
    \item By induction on the derivation of ${\typesTo{\Gamma, \types{x}{T'}}{t}{T}}$.
        \begin{itemize}
            \item \tsc{T-Val}\\
                where $t = v \quad T = \singleton{v}$.
                $v = \subs{v}{x}{T'}$ and $\singleton{v} = \subs{\singleton{v}}{x}{T'}$,
                so the result is immediate from using \tsc{T-Val} with the new environment.
            \item \tsc{T-Rec}, \tsc{T-List}, \tsc{T-Head}, \tsc{T-Tail}, \tsc{T-App},
                  \tsc{T-Field}, \tsc{T-OpC}, \tsc{T-OpR}, \tsc{T-OpI}, \tsc{T-OpM}, \tsc{T-OpD}\\
                The result follows from the I.H. and the relevant rule.
            \item \tsc{T-Var}\\
                \begin{equation}
                    \derive
                        {\envJ{\Gamma_1, \types{x'}{T}, \Gamma_2}}
                        {\typesTo{\Gamma_1, \types{x'}{T}, \Gamma_2}{x'}{T}}
                \end{equation}
                where ether $\types{x}{T'} \in \Gamma_1$ or $\types{x}{T'} \in \Gamma_2$.\\
                In any case, $x' = \subs{x'}{x}{T'}$, so the result follows from \tsc{T-Var}.
            \item \tsc{T-Let}\\
                By the I.H. (using the environment $\Gamma, \types{x}{T'}, \Gamma', \types{x'}{T_0}$ on the right premise),
                we have $\typesTo{\Gamma, \Gamma'}{\subs{t_0}{x}{T'}}{T_0}$
                and $\typesTo{\Gamma, \Gamma', \typesTo{x'}{T_0}}{\subs{t_1}{x}{T'}}{T_1}$.
                The result follows from \tsc{T-Let}.
            \item \tsc{T-Abs}\\
                By the I.H. (using the environment $\Gamma, \types{x}{T'}, \Gamma', \types{x'}{T_1}$),
                we have $\typesTo{\Gamma, \Gamma', \types{x'}{T_1}}{\subs{t_0}{x}{T'}}{T_2}$.
                The result follows from \tsc{T-Abs}.
            \item \tsc{T-TAbs}\\
                By the I.H. (using the environment $\Gamma, \types{x}{T'}, \Gamma', \types{x}{T_1}$),
                we have $\typesTo{\Gamma, \Gamma', \types{x'}{T_1}}{\subs{t_0}{x}{T'}}{T_2}$.
                The result follows from \tsc{T-TAbs}.
            \item \tsc{T-TApp}\\
                By the I.H., $\typesTo{\Gamma, \Gamma'}{\subs{t_1}{x}{T'}}{\quant{X}{T_1} T_0}$.
                The result follows from \cref{strengthening} and \tsc{T-TApp}
            \item \tsc{T-Match}\\
                By the I.H. (using the environment $\Gamma, \types{x}{T'}, \Gamma', \types{x_i}{T_i}$ on the right premise),
                we have $\typesTo{\Gamma, \Gamma}{\subs{t_s}{x}{T'}}{T_s}$
                and $\forall i \in I : \typesTo{\Gamma, \Gamma', \types{x_i}{T_i}}{\subs{t_i}{x}{T'}}{T_i}$.
                The result follows from \cref{strengthening} and \tsc{T-Match}.
            \item \tsc{T-Sub}
                By the I.H. on the left premise, we have $\typesTo{\Gamma, \Gamma'}{\subs{t}{x}{T'}}{T''}$.
                The right premise is $\subtypesTo{\Gamma, \types{x}{T'}, \Gamma'}{T''}{T'}$,
                but by \cref{strengthening}, $\subtypesTo{\Gamma, \Gamma'}{T''}{T'}$.
                The result follows from \tsc{T-Sub}.
        \end{itemize}
\end{enumerate}
\end{proof}

\subsection{Exclusivity}
\begin{lemma}[Exclusivity]\label{exclusivity}
Subtyping ($\subtypesTo{\Gamma}{T}{T'}$) and disjointness ($\disjoint{\Gamma}{T}{T'}$) are mutually exclusive.
\begin{enumerate}
    \item $\subtypesTo{\Gamma}{T}{T'} \Rightarrow \disjoint{\Gamma\not}{T}{T'}$
    \item $\disjoint{\Gamma}{T}{T'} \Rightarrow \subtypesTo{\Gamma\not}{T}{T'}$
\end{enumerate}
\end{lemma}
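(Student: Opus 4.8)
The plan is to exploit the fact that parts~(1) and~(2) are contrapositives of each other: $\subtypesTo{\Gamma}{T}{T'} \Rightarrow \disjoint{\Gamma\not}{T}{T'}$ is logically equivalent to $\disjoint{\Gamma}{T}{T'} \Rightarrow \subtypesTo{\Gamma\not}{T}{T'}$. So it suffices to prove part~(1) directly and obtain part~(2) for free. For part~(1) I would argue straight from \Cref{def:type-disjoint}: to show that $T$ and $T'$ are \emph{not} disjoint in $\Gamma$ it is enough to exhibit a witness type $T_3$ satisfying $\typeJ{\Gamma}{T_3}$, $\subtypesTo{\Gamma}{T_3}{T}$, and $\subtypesTo{\Gamma}{T_3}{T'}$, since the existence of such a $T_3$ contradicts clause~(2) of disjointness.

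Concretely, assume $\subtypesTo{\Gamma}{T}{T'}$. By the standing convention on the subtyping judgement in \Cref{def:type-system}, this presupposes $\typeJ{\Gamma}{T}$ and $\typeJ{\Gamma}{T'}$. I would then take $T_3 \triangleq T$: we have $\typeJ{\Gamma}{T}$ by that presupposition, $\subtypesTo{\Gamma}{T}{T}$ by rule \tsc{ST-Refl}, and $\subtypesTo{\Gamma}{T}{T'}$ by hypothesis. Hence clause~(2) of \Cref{def:type-disjoint} fails, so $\disjoint{\Gamma\not}{T}{T'}$, establishing part~(1). For part~(2): suppose $\disjoint{\Gamma}{T}{T'}$ and, toward a contradiction, also $\subtypesTo{\Gamma}{T}{T'}$; by part~(1) the latter yields $\disjoint{\Gamma\not}{T}{T'}$, contradicting the assumption, so $\subtypesTo{\Gamma\not}{T}{T'}$.

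There is essentially no real obstacle here — the argument is a one-line instantiation plus reflexivity. The only point deserving a word of care is the implicit well-formedness side condition: using $T_3 = T$ in \Cref{def:type-disjoint} requires $\typeJ{\Gamma}{T}$, which is exactly what the convention that $\subtypesTo{\Gamma}{T}{T'}$ is asserted only when $\typeJ{\Gamma}{T}$ and $\typeJ{\Gamma}{T'}$ hold gives us. (This mirrors the reasoning already used in \Cref{disjoint-value}.)
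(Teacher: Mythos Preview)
Your proposal is correct and matches the paper's approach: both take $T_3 = T$ as the witness, appeal to \textsc{ST-Refl} for $\subtypesTo{\Gamma}{T}{T}$, and use the standing well-formedness assumption on the subtyping judgement to discharge $\typeJ{\Gamma}{T}$. The only cosmetic difference is that you derive part~(2) from part~(1) via contraposition, whereas the paper repeats essentially the same witness argument a second time; your presentation is slightly tidier but not substantively different.
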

\begin{proof}
\begin{enumerate}
    To reiterate, the expanded definition of disjointness is:
    \begin{equation}
        \not\exists T'' : \typeJ{\Gamma}{T''} \wedge \subtypesTo{\Gamma}{T''}{T} \wedge \subtypesTo{\Gamma}{T''}{T'}
    \end{equation}
    \item $\subtypesTo{\Gamma}{T}{T'} \Rightarrow \disjoint{\Gamma\not}{T}{T'}$\\
        If $\subtypesTo{\Gamma}{T}{T'}$, then we know that $\typeJ{\Gamma}{T}$.
        From \tsc{ST-Refl}, we have:
        \begin{equation}
            \typeJ{\Gamma}{T} \wedge \subtypesTo{\Gamma}{T}{T} \wedge \subtypesTo{\Gamma}{T}{T'}
        \end{equation}
        which contradicts the definition of disjointness.
    \item $\disjoint{\Gamma}{T}{T'} \Rightarrow \subtypesTo{\Gamma\not}{T}{T'}$\\
        Suppose for the sake of contradiction that $\subtypesTo{\Gamma}{T}{T'}$.
        We then have that $\typeJ{\Gamma}{T}$, and from \tsc{ST-Refl}:
        \begin{equation}
            \typeJ{\Gamma}{T} \wedge \subtypesTo{\Gamma}{T}{T} \wedge \subtypesTo{\Gamma}{T}{T'}
        \end{equation}
        which contradicts our initial assumption, so it must be the case that $\subtypesTo{\Gamma\not}{T}{T'}$.
\end{enumerate}
\end{proof}

\subsection{Minimal types}
\begin{lemma}[Minimal types]\label{minimal-types}\hfill
\begin{enumerate}
    \item If $\typesTo{\Gamma}{v_G}{T}$, then $\subtypesTo{\Gamma}{\underline{v_G}}{T}$
    \item If $\typesTo{\Gamma}{\abs{x}{T_1} t_2}{T}$, then there is a $T_2$ such that $\subtypesTo{\Gamma}{T_1 \rightarrow T_2}{T}$
    \item If $\typesTo{\Gamma}{\tabs{X}{T_1} t_2}{T}$, then there is a $T_2$ such that $\subtypesTo{\Gamma}{\quant{X}{T_1} T_2}{T}$
\end{enumerate}
\end{lemma}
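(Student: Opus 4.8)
The plan is to establish all three parts by induction on the derivation of the typing judgement in the hypothesis, exploiting the fact that only very few typing rules can conclude a judgement whose subject is a ground value, a $\lambda$-abstraction, or a type abstraction.

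For part~(1), a ground value $v_G$ can only be typed by \textsc{T-Val}, by \textsc{T-Sub}, or --- when $v_G$ is itself a record or a non-empty list of ground values --- by \textsc{T-Rec} or \textsc{T-List}. If the last rule is \textsc{T-Val}, then $T = \singleton{v_G}$ and \textsc{ST-Refl} gives $\subtypesTo{\Gamma}{\singleton{v_G}}{T}$. If it is \textsc{T-Sub}, then $\typesTo{\Gamma}{v_G}{T''}$ and $\subtypesTo{\Gamma}{T''}{T}$; the induction hypothesis yields $\subtypesTo{\Gamma}{\singleton{v_G}}{T''}$, and \textsc{ST-Trans} concludes. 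The substantive cases are \textsc{T-Rec} and \textsc{T-List}: there each component $v_{Gi}$ of $v_G$ carries a premise $\typesTo{\Gamma}{v_{Gi}}{T_i}$, so the induction hypothesis gives $\subtypesTo{\Gamma}{\singleton{v_{Gi}}}{T_i}$, and it remains to lift these component facts to a statement about the singleton type of the whole value. The idea is to use \textsc{ST-Val}: it suffices to prove $\memberOf{v_G}{T}$, which by \textsc{In-Rec} (resp.\ \textsc{In-List2}) reduces to membership facts for the components, obtainable from the induction hypothesis together with the correspondence between singleton subtyping and the $\memberOf{\cdot}{\cdot}$ relation.

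Parts~(2) and~(3) are simpler. A $\lambda$-abstraction $\abs{x}{T_1}t_2$ is not a ground value (ground values contain neither $\lambda$- nor type abstractions, by \Cref{def:p4runtime-client-syntax}), and is syntactically distinct from records, lists, variables, applications, lets, field selections, matches, and P4Runtime operations; hence the only rules that can conclude $\typesTo{\Gamma}{\abs{x}{T_1}t_2}{T}$ are \textsc{T-Abs} and \textsc{T-Sub}. In the \textsc{T-Abs} case, $T = T_1 \rightarrow T_2$ for some $T_2$, and $\subtypesTo{\Gamma}{T_1 \rightarrow T_2}{T}$ holds by \textsc{ST-Refl}. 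In the \textsc{T-Sub} case, $\typesTo{\Gamma}{\abs{x}{T_1}t_2}{T''}$ and $\subtypesTo{\Gamma}{T''}{T}$; the induction hypothesis gives a $T_2$ with $\subtypesTo{\Gamma}{T_1 \rightarrow T_2}{T''}$, and \textsc{ST-Trans} finishes. Part~(3) is the same argument, with a type abstraction in place of the $\lambda$-abstraction, \textsc{T-TAbs} in place of \textsc{T-Abs}, and $\quant{X}{T_1}T_2$ in place of $T_1 \rightarrow T_2$.

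The main obstacle I expect is precisely the compound ground-value cases of part~(1): connecting the singleton type $\singleton{v_G}$ of a record or list with the structural type $\curl{\types{f_i}{T_i}}_{i \in I}$ (resp.\ $[T_0]$) produced by \textsc{T-Rec}/\textsc{T-List}, where the component types $T_i$ may themselves be unions, singletons, or the like. It is likely cleanest to isolate this as an auxiliary lemma --- roughly, that $\subtypesTo{\Gamma}{\singleton{v_G}}{T}$ holds whenever $T$ is a structural type each of whose component types lies above the corresponding component's singleton type --- established by a direct analysis of \textsc{ST-Val} and the $\memberOf{\cdot}{\cdot}$ rules; everything else is routine bookkeeping with \textsc{ST-Refl} and \textsc{ST-Trans}.
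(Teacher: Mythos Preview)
Your parts~(2) and~(3) coincide with the paper's proof: induction on the typing derivation, only \textsc{T-Abs}/\textsc{T-TAbs} and \textsc{T-Sub} can apply, handled by \textsc{ST-Refl} and the induction hypothesis plus \textsc{ST-Trans} respectively.

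For part~(1) you are \emph{more} careful than the paper. The paper's proof considers only \textsc{T-Val} and \textsc{T-Sub} and then asserts that ``the rest of the rules are not applicable to ground values.'' You correctly note that a ground value which happens to be a record or non-empty list of ground values can also be the conclusion of \textsc{T-Rec} or \textsc{T-List}, and you try to cover those cases. The obstacle you anticipate is real, though, and your sketched fix does not quite close it: the ``correspondence between singleton subtyping and the $\memberOf{\cdot}{\cdot}$ relation'' is one-directional --- \textsc{ST-Val} goes from $\memberOf{v_G}{T}$ to $\subtypesTo{\Gamma}{\singleton{v_G}}{T}$, but the rules in \Cref{sec:memberof-relation} give no way to derive $\memberOf{v_G}{\singleton{v_G}}$. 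Concretely, take $v_G=\{f=42\}$ typed by \textsc{T-Rec} at $T=\{f:\singleton{42}\}$ (the field via \textsc{T-Val}); the induction hypothesis yields $\subtypesTo{\Gamma}{\singleton{42}}{\singleton{42}}$, but $\memberOf{42}{\singleton{42}}$ is not derivable, so \textsc{In-Rec} does not apply, and a scan of the subtyping rules shows no derivation of $\subtypesTo{\Gamma}{\singleton{\{f=42\}}}{\{f:\singleton{42}\}}$. In short, the paper sidesteps these compound cases by fiat, while your more scrupulous case analysis exposes a gap that your proposed auxiliary lemma cannot bridge with the $\memberOf{\cdot}{\cdot}$ rules as given; a genuine fix would require either augmenting $\memberOf{\cdot}{\cdot}$ with a singleton rule or reformulating the lemma.
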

\begin{proof}
\begin{enumerate}
    \item By induction on the derivation of $\typesTo{\Gamma}{v_G}{T}$.
        \begin{itemize}
            \item \tsc{T-Val}\\
                where $T = \singleton{v_G}$\\
                Immediate from \tsc{ST-Refl}.
            \item \tsc{T-Sub}\\
                \begin{equation}
                    \derive
                        {\typesTo{\Gamma}{v_G}{T'} \quad \subtypesTo{\Gamma}{T'}{T}}
                        {\typesTo{\Gamma}{v_G}{T}}
                \end{equation}
                By the I.H., we have that $\subtypesTo{\Gamma}{\singleton{v_G}}{T'}$.
                We obtain the result by applying \tsc{ST-Trans}.
            \item The rest of the rules are not applicable to ground values.
        \end{itemize}
    \item By induction on the derivation of $\typesTo{\Gamma}{\abs{x}{T_1} t_2}{T}$.
        \begin{itemize}
            \item \tsc{T-Abs}\\
                where $T = T_1 \rightarrow T_2$\\
                Immediate from \tsc{ST-Refl}.
            \item \tsc{T-Sub}\\
                \begin{equation}
                    \derive
                        {\typesTo{\Gamma}{\abs{x}{T_1} t_2}{T'} \quad \subtypesTo{\Gamma}{T'}{T}}
                        {\typesTo{\Gamma}{\abs{x}{T_1} t_2}{T}}
                \end{equation}
                By the I.H., we have that there is a $T_2$ such that $\subtypesTo{\Gamma}{T_1 \rightarrow T_2}{T'}$.
                We obtain the result by applying \tsc{ST-Trans}.
            \item The rest of the rules are not applicable to term abstractions.
        \end{itemize}
    \item By induction on the derivation of $\typesTo{\Gamma}{\tabs{X}{T_1} t_2}{T}$.
        \begin{itemize}
            \item \tsc{T-TAbs}\\
                where $T = \quant{X}{T_1} T_2$\\
                Immediate from \tsc{ST-Refl}.
            \item \tsc{T-Sub}\\
                \begin{equation}
                    \derive
                        {\typesTo{\Gamma}{\tabs{X}{T_1} t_2}{T'} \quad \subtypesTo{\Gamma}{T'}{T}}
                        {\typesTo{\Gamma}{\tabs{X}{T_1} t_2}{T}}
                \end{equation}
                By the I.H., we have that there is a $T_2$ such that $\subtypesTo{\Gamma}{\quant{X}{T_1} T_2}{T'}$.
                We obtain the result by applying \tsc{ST-Trans}.
            \item The rest of the rules are not applicable to type abstractions.
        \end{itemize}
\end{enumerate}
\end{proof}

\subsection{Inversion}

We first introduce the following relation
\begin{gather*}
    \Rule
        {\Rule
            {\vdots}
            {\eqtypesTo{\Gamma}{T_1}{T_2}}
            {ST-App}}
        {\mutualTo{\Gamma}{T_1}{T_2}}
        {M-TApp}\qquad
    \Rule
        {\Rule
            {\vdots}
            {\eqtypesTo{\Gamma}{T_1}{T_2}}
            {ST-Match1}}
        {\mutualTo{\Gamma}{T_1}{T_2}}
        {M-Match}\\\\
    \Rule
        {\mutualTo{\Gamma}{T_1}{T_2} \quad \mutualTo{\Gamma}{T_2}{T_3}}
        {\mutualTo{\Gamma}{T_1}{T_3}}
        {M-Trans}
\end{gather*}
The relation $\mutualTo{\Gamma}{T}{T'}$ shows mutual subtyping between types $T$ and $T'$,
with the constraint that $T$ must be either a match type or a type application, and that
the derivation is constructed with pairwise application of \tsc{ST-App}, \tsc{ST-Match1}
and \tsc{ST-Trans}.

\begin{lemma}[Subtyping inversion]
    \label{subtyping-inversion}
    We prove a number of properties regarding subtyping inversion:
    \begin{enumerate}
        \item If $\mutualTo{\Gamma}{T}{T'}$, then either:
            \begin{enumerate}
                \item $T = \match{T_s}{T_i \Rightarrow T_i'}{i \in I}$, $\subtypesTo{\Gamma}{T_s}{T_k}$,
                    $\forall i \in I : i < k \Rightarrow \disjoint{\Gamma}{T_s}{T_i}$ and
                    $\mutualTo{\Gamma}{T_k'}{T'}$,
                \item $T = \match{T_s}{T_i \Rightarrow T_i'}{i \in I}$, $\subtypesTo{\Gamma}{T_s}{T_k}$,
                    $\forall i \in I : i < k \Rightarrow \disjoint{\Gamma}{T_s}{T_i}$ and $T_k' = T'$,
                \item $T = (\quant{X}{T_1} T_0)\ T_2$, $\subtypesTo{\Gamma}{T_2}{T_1}$ and
                    $\mutualTo{\Gamma}{\subs{T_0}{X}{T_2}}{T'}$,
                \item $T = (\quant{X}{T_1} T_0)\ T_2$, $\subtypesTo{\Gamma}{T_2}{T_1}$ and
                    $\subs{T_0}{X}{T_2} = T'$.
            \end{enumerate}
        \item If $\subtypesTo{\Gamma}{T}{\singleton{v_G}}$, or $\subtypesTo{\Gamma}{T}{T'}$ where $\mutualTo{\Gamma}{T'}{\singleton{v_G}}$,
            then either:
            \begin{enumerate}
                \item $\mutualTo{\Gamma}{T}{\singleton{v_G}}$,
                \item $T$ is ground type.
            \end{enumerate}
        \item If $\subtypesTo{\Gamma}{T}{X}$, or $\subtypesTo{\Gamma}{T}{T'}$ where $\mutualTo{\Gamma}{T'}{X}$,
            then either:
            \begin{enumerate}
                \item $\mutualTo{\Gamma}{T}{Y}$ for some type variable $Y$,
                \item $T$ is a type variable.
            \end{enumerate}
        \item If $\subtypesTo{\Gamma}{T}{T_1' \rightarrow T_2'}$
            or $\subtypesTo{\Gamma}{T}{T'}$ and $\mutualTo{\Gamma}{T'}{T_1' \rightarrow T_2'}$ then either:
            \begin{enumerate}
                \item $\mutualTo{\Gamma}{T}{T_1 \rightarrow T_2}$ for some $T_1$ and $T_2$
                    such that $\subtypesTo{\Gamma}{T_1'}{T_1}$ and $\subtypesTo{\Gamma}{T_2}{T_2'}$,
                \item $\mutualTo{\Gamma}{T}{X}$ for some type variable $X$,
                \item $T$ is a type variable,
                \item $T = T_1 \rightarrow T_2$ where $\subtypesTo{\Gamma}{T_1'}{T_1}$ and $\subtypesTo{\Gamma}{T_2}{T_2'}$.
            \end{enumerate}
        \item If $\subtypesTo{\Gamma}{T}{(\quant{X}{T_1'} T_2')}$ or $\subtypesTo{\Gamma}{T}{T'}$
            and $\mutualTo{\Gamma}{T'}{(\quant{X}{T_1'} T_2')}$, then either:
            \begin{enumerate}
                \item $\mutualTo{\Gamma}{T}{(\quant{X}{T_1} T_2)}$, for some $T_1$ and $T_2$ such that
                    $\subtypesTo{\Gamma}{T_1'}{T_1}$ and $\subtypesTo{\Gamma, \subtypes{X}{T_1'}}{T_2}{T_2'}$,
                \item $\mutualTo{\Gamma}{T}{X}$, for some type variable $X$,
                \item $T$ is a type variable,
                \item $T = \quant{X}{T_1''}{T_2}$ and $\subtypesTo{\Gamma, \subtypes{X}{T_1''}}{T_2}{T_2'}$.
            \end{enumerate}
        \item If $\subtypesTo{\Gamma}{T}{\curl{f_i : T_i'}_{i \in I}}$ or $\subtypesTo{\Gamma}{T}{T'}$
            and $\mutualTo{\Gamma}{T'}{\curl{f_i : T_i'}_{i \in I}}$, then either:
            \begin{enumerate}
                \item $\mutualTo{\Gamma}{T}{\curl{f_i : T_i}_{i \in I}}$ such that
                    $\forall i \in I : \subtypesTo{\Gamma}{T_i}{T_i'}$.
                \item $\mutualTo{\Gamma}{T}{X}$, for some type variable $X$,
                \item $T$ is a type variable,
                \item $\mutualTo{\Gamma}{T}{\singleton{v_G}}$ for some ground type $\singleton{v_G}$,
                \item $T$ is a ground type,
                \item $T = \curl{f_i : T_i}$ and $\forall i \in I : \subtypesTo{\Gamma}{T_i}{T_i'}$.
            \end{enumerate}
        \item If $\subtypesTo{\Gamma}{T}{\listT{T_l'}}$ or $\subtypesTo{\Gamma}{T}{T'}$
            and $\mutualTo{\Gamma}{T'}{\listT{T_l'}}$, then either:
            \begin{enumerate}
                \item $\mutualTo{\Gamma}{T}{\listT{T_l}}$, for some $T_l$ such that
                    $\subtypesTo{\Gamma}{T_l}{T_l'}$.
                \item $\mutualTo{\Gamma}{T}{X}$, for some type variable $X$,
                \item $T$ is a type variable,
                \item $\mutualTo{\Gamma}{T}{\singleton{v_G}}$ for some ground type $\singleton{v_G}$,
                \item $T$ is a ground type,
                \item $T = \listT{T_l}$ and $\subtypesTo{\Gamma}{T_l}{T_l'}$.
            \end{enumerate}
    \end{enumerate}
\end{lemma}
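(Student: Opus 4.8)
I would prove part~1 first, in isolation, and then prove parts~2--7 by a single induction on the height of the subtyping derivation, treating the two hypothesis forms (``$\subtypesTo{\Gamma}{T}{\text{shape}}$'' and ``$\subtypesTo{\Gamma}{T}{T'}$ with $\mutualTo{\Gamma}{T'}{\text{shape}}$'') uniformly --- which is precisely why the statement bundles them. For part~1 I would induct on the derivation of $\mutualTo{\Gamma}{T}{T'}$ and case-split on the last rule: \textsc{M-Match} and \textsc{M-TApp} immediately give cases (b) and (d) by reading off the premises of the embedded \textsc{ST-Match1} / \textsc{ST-App} instance; for \textsc{M-Trans} with $\mutualTo{\Gamma}{T}{T''}$ and $\mutualTo{\Gamma}{T''}{T'}$, I apply the IH to the first component --- if it lands in the ``$\cdot = T''$'' cases (b)/(d) I upgrade the equality into a $\mutualTo{}$-step using $\mutualTo{\Gamma}{T''}{T'}$, reaching (a)/(c); if it lands in (a)/(c) I extend the $\mutualTo{}$-chain with $\mutualTo{\Gamma}{T''}{T'}$ via \textsc{M-Trans}.

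\paragraph{The routine rules of parts 2--7.} Here most rules are dispatched quickly. \textsc{ST-Refl} yields either the ``$T$ has the target shape'' disjunct or ``$\mutualTo{\Gamma}{T}{\text{shape}}$'' (when the hypothesis is in the second form). \textsc{ST-Var} yields the type-variable disjunct. The structural rules \textsc{ST-Abs}, \textsc{ST-TAbs}, \textsc{ST-Rec}, \textsc{ST-List} have right-hand sides that are, respectively, a function, quantified, record, or list type, so each can only arise for the corresponding part, in the first hypothesis form, and then directly gives the ``$T = \ldots$'' disjunct with the side subtyping conditions read off the premises (noting contravariance on argument/domain positions). \textsc{ST-Val} forces $T = \singleton{v_G}$, a ground type; by inspection of the $\memberOf{\cdot}{\cdot}$ rules the membership premise $\memberOf{v_G}{T''}$ is impossible when $T''$ is a variable, function, quantified, or singleton type, so this rule only matters for parts 6 and 7 (via \textsc{In-Rec}, \textsc{In-List1}, \textsc{In-List2}) and there gives the ``$T$ is a ground type'' disjunct. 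The rules \textsc{ST-$\top$}, \textsc{ST-$\cup$Comm}, \textsc{ST-$\cup$Assoc}, \textsc{ST-$\cup$R} never produce one of the six target shapes, nor a type (a union or $\top$) from which $\mutualTo{}$ can even start, so they are vacuous. Finally \textsc{ST-Match1} / \textsc{ST-App} put $T$ in the form of a match type / type application whose reduct is (or is $\mutualTo{}$ to) the target, so \textsc{M-Match} / \textsc{M-TApp} --- composed with \textsc{M-Trans} in the second form --- give $\mutualTo{\Gamma}{T}{\text{shape}}$, with part~1 supplying exactly the disjointness/subtyping side conditions the disjuncts demand.

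\paragraph{The two delicate cases.} For \textsc{ST-Trans}, say $D$ ends with \textsc{ST-Trans} applied to $D_1 : \subtypesTo{\Gamma}{T}{T''}$ and $D_2 : \subtypesTo{\Gamma}{T''}{T'''}$ where $T'''$ is the shape or $\mutualTo{}$ to it. I first apply the IH to $D_2$ to classify $T''$. In every resulting case $T''$ is either (i) the target shape itself, possibly with side subtyping conditions --- then I recurse with the IH on $D_1$ and compose the conditions via \textsc{ST-Trans}; or (ii) a type variable, a ground type, or a type $\mutualTo{}$ to the shape --- then $D_1$ is already a derivation in one of the two hypothesis forms of part~3 (resp.\ part~2, resp.\ the current part), and the IH applies to it. For \textsc{ST-Match2} (which can occur only in the second hypothesis form, since a match type is never one of the target shapes), $T$ is the left match type, and from the premises I have $\subtypesTo{\Gamma}{T_s}{T_s'}$ and $\subtypesTo{\Gamma}{T_i'}{T_i''}$ for all $i$. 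Using a small auxiliary fact --- disjointness is downward-closed under subtyping, because a common subtype of $T_s$ and $T_i$ would, via \textsc{ST-Trans}, be a common subtype of $T_s'$ and $T_i$ --- I transfer the \textsc{ST-Match1}-style reduction from the right match type (characterised by part~1 applied to $\mutualTo{\Gamma}{T'}{\text{shape}}$) down to the left one, landing on its $k$-th continuation $T_k'$ with $\subtypesTo{\Gamma}{T_k'}{T_k''}$, and then invoke the IH on this strictly smaller premise $\subtypesTo{\Gamma}{T_i'}{T_i''}$ (for $i=k$) --- whose target $T_k''$ is the shape or $\mutualTo{}$ to it --- to finish, prefixing the resulting $\mutualTo{}$-chain with the \textsc{M-Match} step for $T$.

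\paragraph{Expected main obstacle.} I expect the difficulty to be the bookkeeping of the simultaneous induction: both \textsc{ST-Trans} and \textsc{ST-Match2} re-invoke the lemma --- across several of the seven parts --- on sub-derivations that are premises of the rule being analysed but not the ``obvious'' recursive call, so one must check termination (these are genuinely smaller) and that the needed part (2 or 3) is available when parts 4--7 call it, which is why I would state and prove parts 2--7 as one mutual induction. A secondary wrinkle is that part~2's ``$T$ is a ground type'' disjunct must be read so as to cover the type-variable case produced by \textsc{ST-Var} on a singleton-bounded variable (or one notes this configuration does not arise in the closed judgements to which the lemma is ultimately applied). Well-formedness side conditions $\typeJ{\Gamma}{\cdot}$ are threaded through routinely, and the \textsc{M}-relation lemmas needed about $\mutualTo{}$ (transitivity, compositionality with \textsc{ST-Trans}) are all immediate from its definition.
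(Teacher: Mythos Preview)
Your approach is essentially the paper's: part~1 by induction on the $\rightleftharpoons$-derivation, parts~2--7 by (mutual) induction on the size of the subtyping derivation, with part~1 used to analyse the $\rightleftharpoons$-hypothesis in the second form, and with \textsc{ST-Trans} and \textsc{ST-Match2} handled exactly as you describe (IH on the right premise first, then on the left; transfer of the \textsc{ST-Match1} reduction via downward closure of disjointness).

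One omission: \textsc{ST-Match1} and \textsc{ST-App} conclude an $=:=$ judgement, so in a subtyping derivation they can appear in \emph{either} direction. You only treat the case where $T$ is the match type (resp.\ type application). The paper also handles the ``second case'' where $T = T_k'$ (resp.\ $T = \subs{T_0}{X}{T_2}$) and $T'$ is the match type (resp.\ application). There the argument is: $T'$ is not one of the target shapes, so we must be in the second hypothesis form $\mutualTo{\Gamma}{T'}{\text{shape}}$; apply part~1 to this to get either $\mutualTo{\Gamma}{T_k'}{\text{shape}}$ or $T_k' = \text{shape}$, and since $T = T_k'$ you are done. You have all the ingredients, but the case is not covered by what you wrote.

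Your flag on \textsc{ST-Var} in part~2 is well-placed: the paper's justification there (appealing to the $\in_G$ relation) is garbled, and indeed a variable bounded by a singleton would derive $\subtypesTo{\Gamma}{X}{\singleton{v_G}}$ with $T=X$ neither a ground type nor in the image of $\rightleftharpoons$. As you note, this does not bite in the eventual applications (closed contexts), but it is a genuine wrinkle in the lemma as stated.
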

\begin{proof}
All proofs are by induction on the size of the derivation.
\begin{enumerate}
    \item $\mutualTo{\Gamma}{T}{T'}$ 
        \begin{itemize}
            \item \tsc{M-TApp}\\
                where $T = (\quant{X}{T_1} T_0)\ T_2 \quad T' = \subs{T_0}{X}{T_2}$\\
                We have case $d$ immediately from the premises.
            \item \tsc{M-Match}
                where $T = \match{T_s}{T_i \Rightarrow T_i'}{i \in I} \quad T' = T_k'$\\
                We have case $b$ immediately from the premises.
            \item \tsc{M-Trans}\\
                \begin{equation}
                    \derive
                        {\mutualTo{\Gamma}{T}{T''} \quad \mutualTo{\Gamma}{T''}{T'}}
                        {\mutualTo{\Gamma}{T}{T'}}
                \end{equation}
                By the I.H. on the left premise, we have four subcases:
                \begin{enumerate}
                    \item $T = \match{T_s}{T_i \Rightarrow T_i'}{i \in I}$, $\subtypesTo{\Gamma}{T_s}{T_k}$,
                        $\forall i \in I : i < k \Rightarrow \disjoint{\Gamma}{T_s}{T_i}$ and
                        $\mutualTo{\Gamma}{T_k'}{T''}$.\\
                        By \tsc{M-Trans} we have case $a$.
                    \item $T = \match{T_s}{T_i \Rightarrow T_i'}{i \in I}$, $\subtypesTo{\Gamma}{T_s}{T_k}$,
                        $\forall i \in I : i < k \Rightarrow \disjoint{\Gamma}{T_s}{T_i}$ and $T_k' = T''$\\
                        We immediately have case $a$.
                    \item $T = (\quant{X}{T_1} T_0)\ T_2$, $\subtypesTo{\Gamma}{T_2}{T_1}$ and
                        $\mutualTo{\Gamma}{\subs{T_0}{X}{T_2}}{T''}$.\\
                        By \tsc{M-Trans} we have case $c$.
                    \item $T = (\quant{X}{T_1} T_0)\ T_2$, $\subtypesTo{\Gamma}{T_2}{T_1}$ and
                        $\subs{T_0}{X}{T_2} = T''$.
                        We immediately have case $c$.
                \end{enumerate}
        \end{itemize}
    \item $\subtypesTo{\Gamma}{T}{T'}$ 
        \begin{itemize}
            \item \tsc{ST-Refl}
                where $T = T'$\\
                If $T'$ is a ground type, we immediately have case $b$.
                If $\mutualTo{\Gamma}{T'}{\singleton{v_G}}$, then we have case $a$.
            \item \tsc{ST-Val}\\
                where $T = \singleton{v_G}$\\
                Case $b$ is immediate.
            \item \tsc{ST-Var}\\
                From looking at the rules for the "$\in_G$" relation, it is clear that the type on the right-hand
                side cannot be a ground type, so this rule could not have been used.
            \item \tsc{ST-Trans}
                \begin{equation}
                    \derive
                        {\subtypesTo{\Gamma}{T}{T''} \quad \subtypesTo{\Gamma}{T''}{T'}}
                        {\subtypesTo{\Gamma}{T}{T'}}
                \end{equation}
                If $T'$ is a ground type $\singleton{v_G}$, then we can use the I.H. on the right premise
                to get that either $\mutualTo{\Gamma}{T''}{\singleton{v_G}}$ or $T''$ is a ground type.
                In any case, we can use the I.H. on the left premise to obtain the result.
                If $\mutualTo{\Gamma}{T'}{\singleton{v_G}}$ we can also use the I.H. on the right, then
                left premise to obtain the result.
            \item \tsc{ST-Match1}
                First case:
                \begin{equation}
                    \derive
                        {\subtypesTo{\Gamma}{T_s}{T_k} \quad \forall i \in I : i < k \Rightarrow \disjoint{\Gamma}{T_s}{T_i}}
                        {\subtypesTo{\Gamma}{\match{T_s}{T_i \Rightarrow T_i'}{i \in I}}{T_k'}}
                \end{equation}
                where $T = \match{T_s}{T_i \Rightarrow T_i'}{i \in I} \quad T' = T_k'$\\
                If $T' = \singleton{v_G}$, we have case $a$ from \tsc{M-Match}.\\
                If $\mutualTo{\Gamma}{T'}{\singleton{v_G}}$, we have case $a$ from \tsc{M-Match} and \tsc{M-Trans}.\\\\
                Second case:
                \begin{equation}
                    \derive
                        {\subtypesTo{\Gamma}{T_s}{T_k} \quad \forall i \in I : i < k \Rightarrow \disjoint{\Gamma}{T_s}{T_i}}
                        {\subtypesTo{\Gamma}{T_k'}{\match{T_s}{T_i \Rightarrow T_i'}{i \in I}}}
                \end{equation}
                where $T = T_k' \quad T' = \match{T_s}{T_i \Rightarrow T_i'}{i \in I}$\\
                We use the 1st part of the lemma on $\mutualTo{\Gamma}{T'}{\singleton{v_G}}$ to get four subcases:
                \begin{enumerate}
                    \item $T' = \match{T_s}{T_i \Rightarrow T_i'}{i \in I}$, $\subtypesTo{\Gamma}{T_s}{T_k}$,
                        $\forall i \in I : i < k \Rightarrow \disjoint{\Gamma}{T_s}{T_i}$ and
                        $\mutualTo{\Gamma}{T_k'}{\singleton{v_G}}$\\
                        By $T = T_k'$ we have $\mutualTo{\Gamma}{T}{\singleton{v_G}}$, which gives us case $a$.
                    \item $T' = \match{T_s}{T_i \Rightarrow T_i'}{i \in I}$, $\subtypesTo{\Gamma}{T_s}{T_k}$,
                        $\forall i \in I : i < k \Rightarrow \disjoint{\Gamma}{T_s}{T_i}$ and $T_k' = \singleton{v_G}$\\
                        Case $b$ is immediate.
                    \item Not applicable since $T'$ is a match type.
                    \item Not applicable since $T'$ is a match type.
                \end{enumerate}
            \item \tsc{ST-Match2}
                \begin{equation}
                    \derive
                        {\subtypesTo{\Gamma}{T_s}{T_s'} \quad \forall i \in I : \subtypesTo{\Gamma}{T_i'}{T_i''}}
                        {\subtypesTo{\Gamma}{\match{T_s}{T_i \Rightarrow T_i'}{i \in I}}{\match{T_s'}{T_i \Rightarrow T_i''}{i \in I}}}
                \end{equation}
                where $T = \match{T_s}{T_i \Rightarrow T_i'}{i \in I} \quad T' = \match{T_s'}{T_i \Rightarrow T_i''}{i \in I}$\\
                We use the 1st part of the lemma on $\mutualTo{\Gamma}{T'}{\singleton{v_G}}$ to get four subcases:
                \begin{enumerate}
                    \item $T' = \match{T_s'}{T_i \Rightarrow T_i''}{i \in I}$, $\subtypesTo{\Gamma}{T_s'}{T_k}$,
                        $\forall i \in I : i < k \Rightarrow \disjoint{\Gamma}{T_s'}{T_i}$ and
                        $\mutualTo{\Gamma}{T_k''}{\singleton{v_G}}$\\
                        By the definition of disjointness and \tsc{ST-Trans}, we have
                        $\forall i \in I : i < k \Rightarrow \disjoint{\Gamma}{T_s}{T_i}$.
                        Also from \tsc{ST-Trans}, we have $\subtypesTo{\Gamma}{T_s}{T_k}$.
                        Then, from \tsc{ST-Match1} and \tsc{M-Match}, we have $\mutualTo{\Gamma}{T}{T_k'}$.
                        Now, since $\subtypesTo{\Gamma}{T_k'}{T_k''}$ and $\mutualTo{\Gamma}{T_k''}{\singleton{v_G}}$,
                        we can use the I.H. to get that either $\mutualTo{\Gamma}{T_k'}{\singleton{v_G}}$,
                        or $T_k'$ is a ground type.
                        In the first case, we use \tsc{M-Trans} to get $\mutualTo{\Gamma}{T}{\singleton{v_G}}$ for case $a$.
                        In the second case, we have case $a$ from $\mutualTo{\Gamma}{T}{T_k'}$.
                    \item $T' = \match{T_s'}{T_i \Rightarrow T_i''}{i \in I}$, $\subtypesTo{\Gamma}{T_s'}{T_k}$,
                        $\forall i \in I : i < k \Rightarrow \disjoint{\Gamma}{T_s'}{T_i}$ and $T_k'' = \singleton{v_G}$\\
                        By the definition of disjointness and \tsc{ST-Trans}, we have
                        $\forall i \in I : i < k \Rightarrow \disjoint{\Gamma}{T_s}{T_i}$.
                        Also from \tsc{ST-Trans}, we have $\subtypesTo{\Gamma}{T_s}{T_k}$.
                        Then, from \tsc{ST-Match1} and \tsc{M-Match}, we have $\mutualTo{\Gamma}{T}{T_k'}$.
                        Since $T_k'' = \singleton{v_G}$, we have $\subtypesTo{\Gamma}{T_k'}{\singleton{v_G}}$, and then
                        we can use the I.H. to get that either $\mutualTo{\Gamma}{T_k'}{\singleton{v_G}}$,
                        or $T_k'$ is a ground type.
                        In the first case, we use \tsc{M-Trans} to get $\mutualTo{\Gamma}{T}{\singleton{v_G}}$ for case $a$.
                        In the second case, we have case $a$ from $\mutualTo{\Gamma}{T}{T_k'}$.
                    \item Not applicable since $T'$ is a match type.
                    \item Not applicable since $T'$ is a match type.
                \end{enumerate}
            \item \tsc{ST-App}
                First case:
                \begin{equation}
                    \derive
                        {\subtypesTo{\Gamma}{T_2}{T_1}}
                        {\subtypesTo{\Gamma}{(\quant{X}{T_1} T_0)\ T_2}{\subs{T_0}{X}{T_2}}}
                \end{equation}
                where $T = (\quant{X}{T_1} T_0)\ T_2 \quad T' = \subs{T_0}{X}{T_2}$\\
                If $T' = \singleton{v_G}$, then we have case $a$ from \tsc{M-TApp}.\\
                If $\mutualTo{\Gamma}{T'}{\singleton{v_G}}$, we have case $a$ from \tsc{M-TApp} and \tsc{M-Trans}.\\\\
                Second case:
                \begin{equation}
                    \derive
                        {\subtypesTo{\Gamma}{T_2}{T_1}}
                        {\subtypesTo{\Gamma}{\subs{T_0}{X}{T_2}}{(\quant{X}{T_1} T_0)\ T_2}}
                \end{equation}
                where $T = \subs{T_0}{X}{T_2} \quad T' = (\quant{X}{T_1} T_0)\ T_2$\\
                We use the 1st part of the lemma on $\mutualTo{\Gamma}{T'}{\singleton{v_G}}$ to get four subcases:
                \begin{enumerate}
                    \item Not applicable since $T'$ is a type application.
                    \item Not applicable since $T'$ is a type application
                    \item $T' = (\quant{X}{T_1} T_0)\ T_2$, $\subtypesTo{\Gamma}{T_2}{T_1}$ and
                        $\mutualTo{\Gamma}{\subs{T_0}{X}{T_2}}{\singleton{v_G}}$\\
                        Since $T = \subs{T_0}{X}{T_2}$, we have case $a$ immediately.
                    \item $T' = (\quant{X}{T_1} T_0)\ T_2$, $\subtypesTo{\Gamma}{T_2}{T_1}$ and
                        $\subs{T_0}{X}{T_2} = \singleton{v_G}$\\
                        Case $b$ is immediate.
                \end{enumerate}
            \item \tsc{ST-$\top$}, \tsc{ST-$\cup$Comm}, \tsc{ST-$\cup$Assoc}, \tsc{ST-List},
                \tsc{ST-$\cup$R}, \tsc{ST-Rec}, \tsc{ST-Abs}, \tsc{ST-TAbs}\\
                These rules could not have been used.
                $T'$ cannot be a ground type (by the structure), nor can we derive $\mutualTo{\Gamma}{T'}{X}$
                since $T'$ must be either a match type or type application.
        \end{itemize}
    \item $\subtypesTo{\Gamma}{T}{T'}$ 
        \begin{itemize}
            \item \tsc{ST-Refl}\\
                where $T = T'$\\
                If $T'$ is a type variable, then we immediately have case $b$.\\
                If $\mutualTo{\Gamma}{T'}{X}$, then we have case $a$.
            \item \tsc{ST-Val}\\
                where $T = \singleton{v_G}$\\
                From looking at the rules for the "$\in_G$" relation, it is clear that the type on the right-hand
                side cannot be a type variable, so this rule could not have been used.
            \item \tsc{ST-Var}\\
                where $T = X \quad \Gamma = \Gamma', \subtypes{X}{T'}, \Gamma''$\\
                Case 2 is immediate.
            \item \tsc{ST-Trans}\\
                \begin{equation}
                    \derive
                        {\subtypesTo{\Gamma}{T}{T''} \quad \subtypesTo{\Gamma}{T''}{T'}}
                        {\subtypesTo{\Gamma}{T}{T'}}
                \end{equation}
                If $T'$ is a type variable $X$ then we use the I.H. on the right premise to get that either
                $\mutualTo{\Gamma}{T''}{X}$ or $T''$ is a type variable. In any case, we can use the I.H.
                on the left premise to obtain the result.\\
                If $\mutualTo{\Gamma}{T'}{X}$ we can also use the I.H. on the right, then left premise to
                obtain the result.
            \item \tsc{ST-Match1}
                First case:
                \begin{equation}
                    \derive
                        {\subtypesTo{\Gamma}{T_s}{T_k} \quad \forall i \in I : i < k \Rightarrow \disjoint{\Gamma}{T_s}{T_i}}
                        {\subtypesTo{\Gamma}{\match{T_s}{T_i \Rightarrow T_i'}{i \in I}}{T_k'}}
                \end{equation}
                where $T = \match{T_s}{T_i \Rightarrow T_i'}{i \in I} \quad T' = T_k'$\\
                If $T' = X$, we have case 1 from \tsc{M-Match}.\\
                If $\mutualTo{\Gamma}{T'}{X}$, we have case $a$ from \tsc{M-Match} and \tsc{M-Trans}.\\\\
                Second case:
                \begin{equation}
                    \derive
                        {\subtypesTo{\Gamma}{T_s}{T_k} \quad \forall i \in I : i < k \Rightarrow \disjoint{\Gamma}{T_s}{T_i}}
                        {\subtypesTo{\Gamma}{T_k'}{\match{T_s}{T_i \Rightarrow T_i'}{i \in I}}}
                \end{equation}
                where $T = T_k' \quad T' = \match{T_s}{T_i \Rightarrow T_i'}{i \in I}$\\
                We use the 1st part of the lemma on $\mutualTo{\Gamma}{T'}{X}$ to get four subcases:
                \begin{enumerate}
                    \item $T' = \match{T_s}{T_i \Rightarrow T_i'}{i \in I}$, $\subtypesTo{\Gamma}{T_s}{T_k}$,
                        $\forall i \in I : i < k \Rightarrow \disjoint{\Gamma}{T_s}{T_i}$ and
                        $\mutualTo{\Gamma}{T_k'}{X}$\\
                        By $T = T_k'$ we have $\mutualTo{\Gamma}{T}{X}$, which gives us case $a$.
                    \item $T' = \match{T_s}{T_i \Rightarrow T_i'}{i \in I}$, $\subtypesTo{\Gamma}{T_s}{T_k}$,
                        $\forall i \in I : i < k \Rightarrow \disjoint{\Gamma}{T_s}{T_i}$ and $T_k' = X$\\
                        Case $b$ is immediate.
                    \item Not applicable since $T'$ is a match type.
                    \item Not applicable since $T'$ is a match type.
                \end{enumerate}
            \item \tsc{ST-Match2}
                \begin{equation}
                    \derive
                        {\subtypesTo{\Gamma}{T_s}{T_s'} \quad \forall i \in I : \subtypesTo{\Gamma}{T_i'}{T_i''}}
                        {\subtypesTo{\Gamma}{\match{T_s}{T_i \Rightarrow T_i'}{i \in I}}{\match{T_s'}{T_i \Rightarrow T_i''}{i \in I}}}
                \end{equation}
                where $T = \match{T_s}{T_i \Rightarrow T_i'}{i \in I} \quad T' = \match{T_s'}{T_i \Rightarrow T_i''}{i \in I}$\\
                We use the 1st part of the lemma on $\mutualTo{\Gamma}{T'}{X}$ to get four subcases:
                \begin{enumerate}
                    \item $T' = \match{T_s'}{T_i \Rightarrow T_i''}{i \in I}$, $\subtypesTo{\Gamma}{T_s'}{T_k}$,
                        $\forall i \in I : i < k \Rightarrow \disjoint{\Gamma}{T_s'}{T_i}$ and
                        $\mutualTo{\Gamma}{T_k''}{X}$\\
                        By the definition of disjointness and \tsc{ST-Trans}, we have
                        $\forall i \in I : i < k \Rightarrow \disjoint{\Gamma}{T_s}{T_i}$.
                        Also from \tsc{ST-Trans}, we have $\subtypesTo{\Gamma}{T_s}{T_k}$.
                        Then, from \tsc{ST-Match1} and \tsc{M-Match}, we have $\mutualTo{\Gamma}{T}{T_k'}$.
                        Now, since $\subtypesTo{\Gamma}{T_k'}{T_k''}$ and $\mutualTo{\Gamma}{T_k''}{X}$,
                        we can use the I.H. to get that either $\mutualTo{\Gamma}{T_k'}{Y}$ for some type variable $Y$,
                        or $T_k'$ is a type variable.
                        In the first case, we use \tsc{M-Trans} to get $\mutualTo{\Gamma}{T}{Y}$ for case $a$.
                        In the second case, we have case $a$ from $\mutualTo{\Gamma}{T}{T_k'}$.
                    \item $T' = \match{T_s'}{T_i \Rightarrow T_i''}{i \in I}$, $\subtypesTo{\Gamma}{T_s'}{T_k}$,
                        $\forall i \in I : i < k \Rightarrow \disjoint{\Gamma}{T_s'}{T_i}$ and $T_k'' = X$\\
                        By the definition of disjointness and \tsc{ST-Trans}, we have
                        $\forall i \in I : i < k \Rightarrow \disjoint{\Gamma}{T_s}{T_i}$.
                        Also from \tsc{ST-Trans}, we have $\subtypesTo{\Gamma}{T_s}{T_k}$.
                        Then, from \tsc{ST-Match1} and \tsc{M-Match}, we have $\mutualTo{\Gamma}{T}{T_k'}$.
                        Since $T_k'' = X$, we have $\subtypesTo{\Gamma}{T_k'}{X}$, and then
                        we can use the I.H. to get that either $\mutualTo{\Gamma}{T_k'}{Y}$ for some type variable $Y$,
                        or $T_k'$ is a type variable.
                        In the first case, we use \tsc{M-Trans} to get $\mutualTo{\Gamma}{T}{Y}$ for case $a$.
                        In the second case, we have case $a$ from $\mutualTo{\Gamma}{T}{T_k'}$.
                    \item Not applicable since $T'$ is a match type.
                    \item Not applicable since $T'$ is a match type.
                \end{enumerate}
            \item \tsc{ST-App}
                First case:
                \begin{equation}
                    \derive
                        {\subtypesTo{\Gamma}{T_2}{T_1}}
                        {\subtypesTo{\Gamma}{(\quant{Y}{T_1} T_0)\ T_2}{\subs{T_0}{Y}{T_2}}}
                \end{equation}
                where $T = (\quant{Y}{T_1} T_0)\ T_2 \quad T' = \subs{T_0}{Y}{T_2}$\\
                If $T' = X$, then we have case $a$ from \tsc{M-TApp}.\\
                If $\mutualTo{\Gamma}{T'}{X}$, we have case $a$ from \tsc{M-TApp} and \tsc{M-Trans}.\\\\
                Second case:
                \begin{equation}
                    \derive
                        {\subtypesTo{\Gamma}{T_2}{T_1}}
                        {\subtypesTo{\Gamma}{\subs{T_0}{Y}{T_2}}{(\quant{Y}{T_1} T_0)\ T_2}}
                \end{equation}
                where $T = \subs{T_0}{Y}{T_2} \quad T' = (\quant{Y}{T_1} T_0)\ T_2$\\
                We use the 1st part of the lemma on $\mutualTo{\Gamma}{T'}{X}$ to get four subcases:
                \begin{enumerate}
                    \item Not applicable since $T'$ is a type application.
                    \item Not applicable since $T'$ is a type application
                    \item $T' = (\quant{Y}{T_1} T_0)\ T_2$, $\subtypesTo{\Gamma}{T_2}{T_1}$ and
                        $\mutualTo{\Gamma}{\subs{T_0}{Y}{T_2}}{X}$\\
                        Since $T = \subs{T_0}{Y}{T_2}$, we have case $a$ immediately.
                    \item $T' = (\quant{Y}{T_1} T_0)\ T_2$, $\subtypesTo{\Gamma}{T_2}{T_1}$ and
                        $\subs{T_0}{Y}{T_2} = X$\\
                        Case $b$ is immediate.
                \end{enumerate}
            \item \tsc{ST-$\top$}, \tsc{ST-$\cup$Comm}, \tsc{ST-$\cup$Assoc}, \tsc{ST-List},
                \tsc{ST-$\cup$R}, \tsc{ST-Rec}, \tsc{ST-Abs}, \tsc{ST-TAbs}\\
                These rules could not have been used.
                $T'$ cannot be a type variable (by the structure), nor can we derive $\mutualTo{\Gamma}{T'}{X}$
                since $T'$ must be either a match type or type application.
        \end{itemize}
    \item $\subtypesTo{\Gamma}{T}{T'}$ 
        \begin{itemize}
            \item \tsc{ST-Refl}\\
                where $T = T'$\\
                If $T' = T_1' \rightarrow T_2'$ we have case $d$ since by \tsc{ST-Refl} since $T_1 = T_1'$ and $T_2 = T_2'$.\\
                If $\mutualTo{\Gamma}{T'}{T_1' \rightarrow T_2'}$ we have case $a$ by \tsc{ST-Refl}.
            \item \tsc{ST-Val}\\
                By looking at the $\in_G$ relation rules, $T'$ cannot be an arrow type, a match type or a type application,
                so this rule could not have been used.
            \item \tsc{ST-Var}\\
                where $X = T \quad \Gamma = \Gamma', \subtypes{X}{T'}, \Gamma''$\\
                Case $c$ is immediate.
            \item \tsc{ST-Trans}
                \begin{equation}
                    \derive
                        {\subtypesTo{\Gamma}{T}{T''} \quad \subtypesTo{\Gamma}{T''}{T'}}
                        {\subtypesTo{\Gamma}{T}{T'}}
                \end{equation}
                By the I.H. on the right premise, we have four subcases:
                \begin{enumerate}
                    \item $\mutualTo{\Gamma}{T''}{T_1'' \rightarrow T_2''}$ for some $T_1''$ and $T_2''$
                        such that $\subtypesTo{\Gamma}{T_1'}{T_1''}$ and $\subtypesTo{\Gamma}{T_2''}{T_2'}$.\\
                        The result follows by using the I.H. on the left premise. If the result is case $a$ or $d$,
                        we have $\subtypesTo{\Gamma}{T_1'}{T_1}$ and $\subtypesTo{\Gamma}{T_2}{T_2'}$ by
                        \tsc{ST-Trans}.
                    \item $\mutualTo{\Gamma}{T''}{X}$ for some type variable $X$.\\
                        The result follows from using the 3rd part of the lemma on the left premise.
                    \item $T''$ is a type variable.
                        The result follows from using the 3rd part of the lemma on the left premise.
                    \item $T'' = T_1'' \rightarrow T_2''$ for some $T_1''$ and $T_2''$
                        such that $\subtypesTo{\Gamma}{T_1'}{T_1''}$ and $\subtypesTo{\Gamma}{T_2''}{T_2'}$.\\
                        The result follows by using the I.H. on the left premise. If the result is case $a$ or $d$,
                        we have $\subtypesTo{\Gamma}{T_1'}{T_1}$ and $\subtypesTo{\Gamma}{T_2}{T_2'}$ by
                        \tsc{ST-Trans}.
                \end{enumerate}
            \item \tsc{ST-Abs}
                \begin{equation}
                    \derive
                        {\subtypesTo{\Gamma}{T_1'}{T_1} \quad \subtypesTo{\Gamma}{T_2}{T_2'}}
                        {\subtypesTo{\Gamma}{T_1 \rightarrow T_2}{T_1' \rightarrow T_2'}}
                \end{equation}
                where $T = T_1 \rightarrow T_2 \quad T' = T_1' \rightarrow T_2'$\\
                Case $d$ is immediate.
            \item \tsc{ST-Match1}
                First case:
                \begin{equation}
                    \derive
                        {\subtypesTo{\Gamma}{T_s}{T_k} \quad \forall i \in I : i < k \Rightarrow \disjoint{\Gamma}{T_s}{T_i}}
                        {\subtypesTo{\Gamma}{\match{T_s}{T_i \Rightarrow T_i'}{i \in I}}{T_k'}}
                \end{equation}
                where $T = \match{T_s}{T_i \Rightarrow T_i'}{i \in I} \quad T' = T_k'$\\
                If $T' = T_1' \rightarrow T_2'$, we have case $a$ from \tsc{M-Match} and \tsc{ST-Refl}.\\
                If $\mutualTo{\Gamma}{T'}{T_1' \rightarrow T_2'}$, we have $\mutualTo{\Gamma}{T}{T_1' \rightarrow T_2'}$
                from \tsc{M-Match} and \tsc{M-Trans}. We then have case $a$ from \tsc{ST-Refl}.\\\\
                Second case:
                \begin{equation}
                    \derive
                        {\subtypesTo{\Gamma}{T_s}{T_k} \quad \forall i \in I : i < k \Rightarrow \disjoint{\Gamma}{T_s}{T_i}}
                        {\subtypesTo{\Gamma}{T_k'}{\match{T_s}{T_i \Rightarrow T_i'}{i \in I}}}
                \end{equation}
                where $T = T_k' \quad T' = \match{T_s}{T_i \Rightarrow T_i'}{i \in I}$\\
                We use the 1st part of the lemma on $\mutualTo{\Gamma}{T'}{T_1' \rightarrow T_2'}$ to get four subcases:
                \begin{enumerate}
                    \item $T' = \match{T_s}{T_i \Rightarrow T_i'}{i \in I}$, $\subtypesTo{\Gamma}{T_s}{T_k}$,
                        $\forall i \in I : i < k \Rightarrow \disjoint{\Gamma}{T_s}{T_i}$ and
                        $\mutualTo{\Gamma}{T_k'}{T_1' \rightarrow T_2'}$\\
                        By $T = T_k'$ we have $\mutualTo{\Gamma}{T}{T_1' \rightarrow T_2'}$.
                        Case $a$ then follows from \tsc{ST-Refl}.
                    \item $T' = \match{T_s}{T_i \Rightarrow T_i'}{i \in I}$, $\subtypesTo{\Gamma}{T_s}{T_k}$,
                        $\forall i \in I : i < k \Rightarrow \disjoint{\Gamma}{T_s}{T_i}$ and $T_k' = T_1' \rightarrow T_2'$\\
                        Case $d$ follows from \tsc{ST-Refl} with $T_1 = T_1'$ and $T_2 = T_2'$.
                    \item Not applicable since $T'$ is a match type.
                    \item Not applicable since $T'$ is a match type.
                \end{enumerate}
            \item \tsc{ST-Match2}
                \begin{equation}
                    \derive
                        {\subtypesTo{\Gamma}{T_s}{T_s'} \quad \forall i \in I : \subtypesTo{\Gamma}{T_i'}{T_i''}}
                        {\subtypesTo{\Gamma}{\match{T_s}{T_i \Rightarrow T_i'}{i \in I}}{\match{T_s'}{T_i \Rightarrow T_i''}{i \in I}}}
                \end{equation}
                where $T = \match{T_s}{T_i \Rightarrow T_i'}{i \in I} \quad T' = \match{T_s'}{T_i \Rightarrow T_i''}{i \in I}$\\
                We use the 1st part of the lemma on $\mutualTo{\Gamma}{T'}{T_1' \rightarrow T_2'}$ to get four subcases:
                \begin{enumerate}
                    \item $T' = \match{T_s'}{T_i \Rightarrow T_i''}{i \in I}$, $\subtypesTo{\Gamma}{T_s'}{T_k}$,
                        $\forall i \in I : i < k \Rightarrow \disjoint{\Gamma}{T_s'}{T_i}$ and
                        $\mutualTo{\Gamma}{T_k''}{T_1' \rightarrow T_2'}$\\
                        By the definition of disjointness and \tsc{ST-Trans}, we have
                        $\forall i \in I : i < k \Rightarrow \disjoint{\Gamma}{T_s}{T_i}$
                        Also by \tsc{ST-Trans}, $\subtypesTo{\Gamma}{T_s}{T_k}$.
                        We then have from \tsc{ST-Match1} and \tsc{M-Match} that $\mutualTo{\Gamma}{T}{T_k'}$.
                        Now, we use the I.H. on $\subtypesTo{\Gamma}{T_k'}{T_k''}$ and $\mutualTo{\Gamma}{T_k''}{T_1' \rightarrow T_2'}$
                        and obtain 4 subsubcases:
                        \begin{enumerate}
                            \item $\mutualTo{\Gamma}{T_k'}{T_1 \rightarrow T_2}$ for some $T_1$ and $T_2$ such that
                                $\subtypesTo{\Gamma}{T_1'}{T_1}$ and $\subtypesTo{\Gamma}{T_2}{T_2'}$\\
                                We have case $a$ from \tsc{M-Trans}.
                            \item $\mutualTo{\Gamma}{T_k'}{X}$\\
                                We have case $b$ from \tsc{M-Trans}.
                            \item $T_k'$ is a type variable\\
                                Case $b$ is immediate.
                            \item $T_k' = T_1 \rightarrow T_2$ for some $T_1$ and $T_2$ such that
                                $\subtypesTo{\Gamma}{T_1'}{T_1}$ and $\subtypesTo{\Gamma}{T_2}{T_2'}$\\
                                Case $a$ is immediate.
                        \end{enumerate}
                    \item $T' = \match{T_s'}{T_i \Rightarrow T_i''}{i \in I}$, $\subtypesTo{\Gamma}{T_s'}{T_k}$,
                        $\forall i \in I : i < k \Rightarrow \disjoint{\Gamma}{T_s'}{T_i}$ and $T_k'' = T_1' \rightarrow T_2'$\\
                        By the definition of disjointness and \tsc{ST-Trans}, we have
                        $\forall i \in I : i < k \Rightarrow \disjoint{\Gamma}{T_s}{T_i}$
                        Also by \tsc{ST-Trans}, $\subtypesTo{\Gamma}{T_s}{T_k}$.
                        We then have from \tsc{ST-Match1} and \tsc{M-Match} that $\mutualTo{\Gamma}{T}{T_k'}$.
                        Now, we use the I.H. on $\subtypesTo{\Gamma}{T_k'}{T_k''}$ and obtain four subcases:
                        \begin{enumerate}
                            \item $\mutualTo{\Gamma}{T_k'}{T_1 \rightarrow T_2}$ for some $T_1$ and $T_2$ such that
                                $\subtypesTo{\Gamma}{T_1'}{T_1}$ and $\subtypesTo{\Gamma}{T_2}{T_2'}$\\
                                We have case $a$ from \tsc{M-Trans}.
                            \item $\mutualTo{\Gamma}{T_k'}{X}$\\
                                We have case $b$ from \tsc{M-Trans}.
                            \item $T_k'$ is a type variable\\
                                Case $b$ is immediate.
                            \item $T_k' = T_1 \rightarrow T_2$ for some $T_1$ and $T_2$ such that
                                $\subtypesTo{\Gamma}{T_1'}{T_1}$ and $\subtypesTo{\Gamma}{T_2}{T_2'}$\\
                                Case $a$ is immediate.
                        \end{enumerate}
                    \item Not applicable since $T'$ is a match type.
                    \item Not applicable since $T'$ is a match type.
                \end{enumerate}
            \item \tsc{ST-App}
                First case:
                \begin{equation}
                    \derive
                        {\subtypesTo{\Gamma}{T_2}{T_1}}
                        {\subtypesTo{\Gamma}{(\quant{X}{T_1} T_0)\ T_2}{\subs{T_0}{X}{T_2}}}
                \end{equation}
                where $T = (\quant{X}{T_1} T_0)\ T_2 \quad T' = \subs{T_0}{X}{T_2}$\\
                If $T' = T_1' \rightarrow T_2'$, then we have case $a$ from \tsc{M-TApp} and \tsc{ST-Refl}.\\
                If $\mutualTo{\Gamma}{T'}{T_1' \rightarrow T_2'}$, we have case $a$ from
                \tsc{M-TApp}, \tsc{M-Trans} and \tsc{ST-Refl}.\\\\
                Second case:
                \begin{equation}
                    \derive
                        {\subtypesTo{\Gamma}{T_2}{T_1}}
                        {\subtypesTo{\Gamma}{\subs{T_0}{X}{T_2}}{(\quant{X}{T_1} T_0)\ T_2}}
                \end{equation}
                where $T = \subs{T_0}{X}{T_2} \quad T' = (\quant{X}{T_1} T_0)\ T_2$\\
                We use the 1st part of the lemma on $\mutualTo{\Gamma}{T'}{T_1' \rightarrow T_2'}$ to get four subcases:
                \begin{enumerate}
                    \item Not applicable since $T'$ is a type application.
                    \item Not applicable since $T'$ is a type application
                    \item $T' = (\quant{X}{T_1} T_0)\ T_2$, $\subtypesTo{\Gamma}{T_2}{T_1}$ and
                        $\mutualTo{\Gamma}{\subs{T_0}{Y}{T_2}}{T_1' \rightarrow T_2'}$\\
                        Since $T = \subs{T_0}{Y}{T_2}$, we have case $a$ by \tsc{ST-Refl}.
                    \item $T' = (\quant{X}{T_1} T_0)\ T_2$, $\subtypesTo{\Gamma}{T_2}{T_1}$ and
                        $\subs{T_0}{X}{T_2} = T_1' \rightarrow T_2'$\\
                        We have case $d$ from \tsc{ST-Refl}.
                \end{enumerate}
            \item \tsc{ST-$\top$}, \tsc{ST-$\cup$Comm}, \tsc{ST-$\cup$Assoc}, \tsc{ST-List},
                \tsc{ST-$\cup$R}, \tsc{ST-Rec}, \tsc{ST-TAbs}\\
                These rules could not have been used.
                $T'$ cannot be a term abstraction (by the structure), nor can we derive $\mutualTo{\Gamma}{T'}{T_1' \rightarrow T_2}$
                since $T'$ must be either a match type or type application.
        \end{itemize}
    \item $\subtypesTo{\Gamma}{T}{T'}$ 
        \begin{itemize}
            \item \tsc{ST-Refl}
                where $T = T'$\\
                If $T' = (\quant{X}{T_1'} T_2')$, case $d$ follows from \tsc{ST-Refl}.
                If $\mutualTo{\Gamma}{T'}{(\quant{X}{T_1'} T_2')}$, case $a$ follows from \tsc{ST-Refl}.
            \item \tsc{ST-Val}
                By looking at the $\in_G$ relation rules, $T'$ cannot be a universal quantification, a match type or a type application,
                so this rule could not have been used.
            \item \tsc{ST-Var}
                where $T = X \quad \Gamma = \Gamma', \subtypes{X}{T'}, \Gamma''$\\
                Case $c$ is immediate.
            \item \tsc{ST-Trans}
                \begin{equation}
                    \derive
                        {\subtypesTo{\Gamma}{T}{T''} \quad \subtypesTo{\Gamma}{T''}{T'}}
                        {\subtypesTo{\Gamma}{T}{T'}}
                \end{equation}
                By the I.H. on the right premise, we have four subcases:
                \begin{enumerate}
                    \item $\mutualTo{\Gamma}{T''}{(\quant{X}{T_1''} T_2'')}$ for some $T_1''$ and $T_2''$
                        such that $\subtypesTo{\Gamma}{T_1'}{T_1''}$ and $\subtypesTo{\Gamma, \subtypes{X}{T_1'}}{T_2''}{T_2'}$.\\
                        The result follows by using the I.H. on the left premise.
                        If the result is case $a$ or $d$,
                        we have $\subtypesTo{\Gamma}{T_1'}{T_1}$ and $\subtypesTo{\Gamma, \subtypes{X}{T_1''}}{T_2}{T_2'}$ by
                        by \cref{narrowing} and \tsc{ST-Trans}.
                    \item $\mutualTo{\Gamma}{T''}{X}$ for some type variable $X$.\\
                        The result follows from using the 3rd part of the lemma on the left premise.
                    \item $T''$ is a type variable.
                        The result follows from using the 3rd part of the lemma on the left premise.
                    \item $T'' = (\quant{X}{T_1''} T_2'')$ for some $T_1''$ and $T_2''$
                        such that $\subtypesTo{\Gamma}{T_1'}{T_1''}$ and $\subtypesTo{\Gamma}{T_2''}{T_2'}$.\\
                        The result follows by using the I.H. on the left premise. If the result is case $a$ or $d$,
                        we have $\subtypesTo{\Gamma}{T_1'}{T_1}$ and $\subtypesTo{\Gamma}{T_2}{T_2'}$ by
                        \cref{narrowing} and \tsc{ST-Trans}.
                \end{enumerate}
            \item \tsc{ST-TAbs}\\
                Case $d$ is immediate.
            \item \tsc{ST-Match1}
                First case:
                \begin{equation}
                    \derive
                        {\subtypesTo{\Gamma}{T_s}{T_k} \quad \forall i \in I : i < k \Rightarrow \disjoint{\Gamma}{T_s}{T_i}}
                        {\subtypesTo{\Gamma}{\match{T_s}{T_i \Rightarrow T_i'}{i \in I}}{T_k'}}
                \end{equation}
                where $T = \match{T_s}{T_i \Rightarrow T_i'}{i \in I} \quad T' = T_k'$\\
                If $T' = \quant{X}{T_1'} T_2'$, we have case $a$ from \tsc{M-Match} and \tsc{ST-Refl}.\\
                If $\mutualTo{\Gamma}{T'}{\quant{X}{T_1'} T_2'}$, we have $\mutualTo{\Gamma}{T}{\quant{X}{T_1'} T_2'}$
                from \tsc{M-Match} and \tsc{M-Trans}. We then have case $a$ from \tsc{ST-Refl}.\\\\
                Second case:
                \begin{equation}
                    \derive
                        {\subtypesTo{\Gamma}{T_s}{T_k} \quad \forall i \in I : i < k \Rightarrow \disjoint{\Gamma}{T_s}{T_i}}
                        {\subtypesTo{\Gamma}{T_k'}{\match{T_s}{T_i \Rightarrow T_i'}{i \in I}}}
                \end{equation}
                where $T = T_k' \quad T' = \match{T_s}{T_i \Rightarrow T_i'}{i \in I}$\\
                We use the 1st part of the lemma on $\mutualTo{\Gamma}{T'}{\quant{X}{T_1'} T_2'}$ to get four subcases:
                \begin{enumerate}
                    \item $T' = \match{T_s}{T_i \Rightarrow T_i'}{i \in I}$, $\subtypesTo{\Gamma}{T_s}{T_k}$,
                        $\forall i \in I : i < k \Rightarrow \disjoint{\Gamma}{T_s}{T_i}$ and
                        $\mutualTo{\Gamma}{T_k'}{\quant{X}{T_1'} T_2'}$\\
                        By $T = T_k'$ we have $\mutualTo{\Gamma}{T}{\quant{X}{T_1'} T_2'}$.
                        Case $a$ then follows from \tsc{ST-Refl}.
                    \item $T' = \match{T_s}{T_i \Rightarrow T_i'}{i \in I}$, $\subtypesTo{\Gamma}{T_s}{T_k}$,
                        $\forall i \in I : i < k \Rightarrow \disjoint{\Gamma}{T_s}{T_i}$ and $T_k' = \quant{X}{T_1'} T_2'$\\
                        Case $d$ follows from \tsc{ST-Refl} with $T_1 = T_1'$ and $T_2 = T_2'$.
                    \item Not applicable since $T'$ is a match type.
                    \item Not applicable since $T'$ is a match type.
                \end{enumerate}
            \item \tsc{ST-Match2}
                \begin{equation}
                    \derive
                        {\subtypesTo{\Gamma}{T_s}{T_s'} \quad \forall i \in I : \subtypesTo{\Gamma}{T_i'}{T_i''}}
                        {\subtypesTo{\Gamma}{\match{T_s}{T_i \Rightarrow T_i'}{i \in I}}{\match{T_s'}{T_i \Rightarrow T_i''}{i \in I}}}
                \end{equation}
                where $T = \match{T_s}{T_i \Rightarrow T_i'}{i \in I} \quad T' = \match{T_s'}{T_i \Rightarrow T_i''}{i \in I}$\\
                We use the 1st part of the lemma on $\mutualTo{\Gamma}{T'}{\quant{X}{T_1'} T_2'}$ to get four subcases:
                \begin{enumerate}
                    \item $T' = \match{T_s'}{T_i \Rightarrow T_i''}{i \in I}$, $\subtypesTo{\Gamma}{T_s'}{T_k}$,
                        $\forall i \in I : i < k \Rightarrow \disjoint{\Gamma}{T_s'}{T_i}$ and
                        $\mutualTo{\Gamma}{T_k''}{\quant{X}{T_1'} T_2'}$\\
                        By the definition of disjointness and \tsc{ST-Trans}, we have
                        $\forall i \in I : i < k \Rightarrow \disjoint{\Gamma}{T_s}{T_i}$
                        Also by \tsc{ST-Trans}, $\subtypesTo{\Gamma}{T_s}{T_k}$.
                        We then have from \tsc{ST-Match1} and \tsc{M-Match} that $\mutualTo{\Gamma}{T}{T_k'}$.
                        Now, we use the I.H. on $\subtypesTo{\Gamma}{T_k'}{T_k''}$ and $\mutualTo{\Gamma}{T_k''}{\quant{X}{T_1'} T_2'}$
                        and obtain four subsubcases:
                        \begin{enumerate}
                            \item $\mutualTo{\Gamma}{T_k'}{\quant{X}{T_1} T_2}$ for some $T_1$ and $T_2$ such that
                                $\subtypesTo{\Gamma}{T_1'}{T_1}$ and $\subtypesTo{\Gamma, \subtypes{X}{T_1'}}{T_2}{T_2'}$\\
                                We have case $a$ from \tsc{M-Trans}.
                            \item $\mutualTo{\Gamma}{T_k'}{X}$\\
                                We have case $b$ from \tsc{M-Trans}.
                            \item $T_k'$ is a type variable\\
                                Case $b$ is immediate.
                            \item $T_k' = \quant{X}{T_1} T_2$ for some $T_1$ and $T_2$ such that
                                $\subtypesTo{\Gamma}{T_1'}{T_1}$ and $\subtypesTo{\Gamma, \subtypes{X}{T_1'}}{T_2}{T_2'}$\\
                                Case $a$ is immediate.
                        \end{enumerate}
                    \item $T' = \match{T_s'}{T_i \Rightarrow T_i''}{i \in I}$, $\subtypesTo{\Gamma}{T_s'}{T_k}$,
                        $\forall i \in I : i < k \Rightarrow \disjoint{\Gamma}{T_s'}{T_i}$ and $T_k'' = \quant{X}{T_1'} T_2'$\\
                        By the definition of disjointness and \tsc{ST-Trans}, we have
                        $\forall i \in I : i < k \Rightarrow \disjoint{\Gamma}{T_s}{T_i}$
                        Also by \tsc{ST-Trans}, $\subtypesTo{\Gamma}{T_s}{T_k}$.
                        We then have from \tsc{ST-Match1} and \tsc{M-Match} that $\mutualTo{\Gamma}{T}{T_k'}$.
                        Now, we use the I.H. on $\subtypesTo{\Gamma}{T_k'}{T_k''}$ and obtain 4 subcases:
                        \begin{enumerate}
                            \item $\mutualTo{\Gamma}{T_k'}{\quant{X}{T_1} T_2}$ for some $T_1$ and $T_2$ such that
                                $\subtypesTo{\Gamma}{T_1'}{T_1}$ and $\subtypesTo{\Gamma, \subtypes{X}{T_1'}}{T_2}{T_2'}$\\
                                We have case $a$ from \tsc{M-Trans}.
                            \item $\mutualTo{\Gamma}{T_k'}{X}$\\
                                We have case $b$ from \tsc{M-Trans}.
                            \item $T_k'$ is a type variable\\
                                Case $b$ is immediate.
                            \item $T_k' = \quant{X}{T_1} T_2$ for some $T_1$ and $T_2$ such that
                                $\subtypesTo{\Gamma}{T_1'}{T_1}$ and $\subtypesTo{\Gamma, \subtypes{X}{T_1'}}{T_2}{T_2'}$\\
                                Case $a$ is immediate.
                        \end{enumerate}
                    \item Not applicable since $T'$ is a match type.
                    \item Not applicable since $T'$ is a match type.
                \end{enumerate}
            \item \tsc{ST-App}
                First case:
                \begin{equation}
                    \derive
                        {\subtypesTo{\Gamma}{T_2}{T_1}}
                        {\subtypesTo{\Gamma}{(\quant{X}{T_1} T_0)\ T_2}{\subs{T_0}{X}{T_2}}}
                \end{equation}
                where $T = (\quant{X}{T_1} T_0)\ T_2 \quad T' = \subs{T_0}{X}{T_2}$\\
                If $T' = \quant{Y}{T_1'} T_2'$, then we have case $a$ from \tsc{M-TApp} and \tsc{ST-Refl}.\\
                If $\mutualTo{\Gamma}{T'}{T_1' \rightarrow T_2'}$, we have case $a$ from
                \tsc{M-TApp}, \tsc{M-Trans} and \tsc{ST-Refl}.\\\\
                Second case:
                \begin{equation}
                    \derive
                        {\subtypesTo{\Gamma}{T_2}{T_1}}
                        {\subtypesTo{\Gamma}{\subs{T_0}{X}{T_2}}{(\quant{X}{T_1} T_0)\ T_2}}
                \end{equation}
                where $T = \subs{T_0}{X}{T_2} \quad T' = (\quant{X}{T_1} T_0)\ T_2$\\
                We use the 1st part of the lemma on $\mutualTo{\Gamma}{T'}{T_1' \rightarrow T_2'}$ to get four subcases:
                \begin{enumerate}
                    \item Not applicable since $T'$ is a type application.
                    \item Not applicable since $T'$ is a type application
                    \item $T' = (\quant{X}{T_1} T_0)\ T_2$, $\subtypesTo{\Gamma}{T_2}{T_1}$ and
                        $\mutualTo{\Gamma}{\subs{T_0}{Y}{T_2}}{T_1' \rightarrow T_2'}$\\
                        Since $T = \subs{T_0}{Y}{T_2}$, we have case $a$ by \tsc{ST-Refl}.
                    \item $T' = (\quant{X}{T_1} T_0)\ T_2$, $\subtypesTo{\Gamma}{T_2}{T_1}$ and
                        $\subs{T_0}{X}{T_2} = T_1' \rightarrow T_2'$\\
                        We have case $d$ from \tsc{ST-Refl}.
                \end{enumerate}
            \item \tsc{ST-$\top$}, \tsc{ST-$\cup$Comm}, \tsc{ST-$\cup$Assoc}, \tsc{ST-List},
                \tsc{ST-$\cup$R}, \tsc{ST-Rec}, \tsc{ST-Abs}
                These rules could not have been used.
                $T'$ cannot be a type abstraction (by the structure), nor can we derive $\mutualTo{\Gamma}{T'}{\quant{X}{T_1'} T_2'}$
                since $T'$ must be either a match type or type application.
        \end{itemize}
    \item $\subtypesTo{\Gamma}{T}{T'}$ 
        \begin{itemize}
            \item \tsc{ST-Refl}\\
                where $T = T'$\\
                If $T' = \curl{f_i : T_i'}_{i \in I}$, case $f$ follows from \tsc{ST-Refl}.
                If $\mutualTo{\Gamma}{T'}{\curl{f_i : T_i'}}$, case $a$ follows from \tsc{ST-Refl}.
            \item \tsc{ST-Val}\\
                \begin{equation}
                    \derive
                        {\singleton{v_G} \in T'}
                        {\subtypesTo{\Gamma}{\singleton{v_G}}{T'}}
                \end{equation}
                where $T = \singleton{v_G}$\\
                Case $e$ is immediate.
            \item \tsc{ST-Var}\\
                where $T = X \quad \Gamma = \Gamma', \subtypes{X}{T'}, \Gamma''$\\
                Case $c$ is immediate.
            \item \tsc{ST-Trans}\\
                \begin{equation}
                    \derive
                        {\subtypesTo{\Gamma}{T}{T''} \quad \subtypesTo{\Gamma}{T''}{T'}}
                        {\subtypesTo{\Gamma}{T}{T'}}
                \end{equation}
                By the I.H. on the right premise, we have six subcases:
                \begin{enumerate}
                    \item $\mutualTo{\Gamma}{T''}{\curl{f_i : T_i''}_{i \in I}}$ such that
                        $\forall i \in I : \subtypesTo{\Gamma}{T_i''}{T_i'}$\\
                        The result follows by using the I.H. on the left premise.
                        If the result is case $a$ or $f$,
                        we have $\forall i \in I : \subtypesTo{\Gamma}{T_i}{T_i'}$ by \tsc{ST-Trans}.
                    \item $\mutualTo{\Gamma}{T''}{X}$ for some type variable $X$.\\
                        The result follows from using the 3rd part of the lemma on the left premise.
                    \item $T''$ is a type variable.
                        The result follows from using the 3rd part of the lemma on the left premise.
                    \item $\mutualTo{\Gamma}{T''}{\singleton{v_G}}$ for some ground type $\singleton{v_G}$.\\
                        The result follows from using the 2nd part of the lemma on the left premise.
                    \item $T''$ is a ground type.\\
                        The result follows from using the 2nd part of the lemma on the left premise.
                    \item $T'' = \curl{f_i : T_i''}_{i \in I}$ such that
                        $\forall i \in I : \subtypesTo{\Gamma}{T_i''}{T_i'}$\\
                        The result follows by using the I.H. on the left premise.
                        If the result is case $a$ or $f$,
                        we have $\forall i \in I : \subtypesTo{\Gamma}{T_i}{T_i'}$ by \tsc{ST-Trans}.
                \end{enumerate}
            \item \tsc{ST-Rec}\\
                Case $f$ is immediate.
            \item \tsc{ST-Match1}\\
                First case:
                \begin{equation}
                    \derive
                        {\subtypesTo{\Gamma}{T_s}{T_k} \quad \forall j \in J : j < k \Rightarrow \disjoint{\Gamma}{T_s}{T_j}}
                        {\subtypesTo{\Gamma}{\match{T_s}{T_j \Rightarrow T_j'}{j \in J}}{T_k'}}
                \end{equation}
                where $T = \match{T_s}{T_j \Rightarrow T_j'}{j \in J} \quad T' = T_k'$\\
                If $T' = \curl{f_i : T_i'}_{i \in I}$, we have case $a$ from \tsc{M-Match} and \tsc{ST-Refl}.\\
                If $\mutualTo{\Gamma}{T'}{\curl{f_i : T_i'}_{i \in I}}$, we have $\mutualTo{\Gamma}{T}{\curl{f_i : T_i'}_{i \in I}}$
                from \tsc{M-Match} and \tsc{M-Trans}. We then have case $a$ from \tsc{ST-Refl}.\\\\
                Second case:
                \begin{equation}
                    \derive
                        {\subtypesTo{\Gamma}{T_s}{T_k} \quad \forall j \in J : j < k \Rightarrow \disjoint{\Gamma}{T_s}{T_j}}
                        {\subtypesTo{\Gamma}{T_k'}{\match{T_s}{T_j \Rightarrow T_j'}{j \in J}}}
                \end{equation}
                where $T = T_k' \quad T' = \match{T_s}{T_j \Rightarrow T_j'}{j \in J}$\\
                We use the 1st part of the lemma on $\mutualTo{\Gamma}{T'}{\curl{f_i : T_i'}_{i \in I}}$ to get four subcases:
                \begin{enumerate}
                    \item $T' = \match{T_s}{T_j \Rightarrow T_j'}{j \in J}$, $\subtypesTo{\Gamma}{T_s}{T_k}$,
                        $\forall j \in J : j < k \Rightarrow \disjoint{\Gamma}{T_s}{T_j}$ and
                        $\mutualTo{\Gamma}{T_k'}{\curl{f_i : T_i'}_{i \in I}}$\\
                        By $T = T_k'$ we have $\mutualTo{\Gamma}{T}{\curl{f_i : T_i'}_{i \in I}}$.
                        Case $a$ then follows from \tsc{ST-Refl}.
                    \item $T' = \match{T_s}{T_j \Rightarrow T_j'}{j \in J}$, $\subtypesTo{\Gamma}{T_s}{T_k}$,
                        $\forall j \in J : j < k \Rightarrow \disjoint{\Gamma}{T_s}{T_j}$ and $T_k' = \curl{f_i : T_i'}_{i \in I}$\\
                        Case $f$ follows from \tsc{ST-Refl}.
                    \item Not applicable since $T'$ is a match type.
                    \item Not applicable since $T'$ is a match type.
                \end{enumerate}
            \item \tsc{ST-Match2}\\
                \begin{equation}
                    \derive
                        {\subtypesTo{\Gamma}{T_s}{T_s'} \quad \forall j \in J : \subtypesTo{\Gamma}{T_j'}{T_j''}}
                        {\subtypesTo{\Gamma}{\match{T_s}{T_j \Rightarrow T_j'}{j \in J}}{\match{T_s'}{T_j \Rightarrow T_j''}{j \in J}}}
                \end{equation}
                where $T = \match{T_s}{T_j \Rightarrow T_j'}{j \in J} \quad T' = \match{T_s'}{T_j \Rightarrow T_j''}{j \in J}$\\
                We use the 1st part of the lemma on $\mutualTo{\Gamma}{T'}{\curl{f_i : T_i'}_{i \in I}}$ to get four subcases:
                \begin{enumerate}
                    \item $T' = \match{T_s'}{T_j \Rightarrow T_j''}{j \in J}$, $\subtypesTo{\Gamma}{T_s'}{T_k}$,
                        $\forall j \in J : j < k \Rightarrow \disjoint{\Gamma}{T_s'}{T_j}$ and
                        $\mutualTo{\Gamma}{T_k''}{\curl{f_i : T_i'}_{i \in I}}$\\
                        By the definition of disjointness and \tsc{ST-Trans}, we have
                        $\forall j \in J : j < k \Rightarrow \disjoint{\Gamma}{T_s}{T_j}$
                        Also by \tsc{ST-Trans}, $\subtypesTo{\Gamma}{T_s}{T_k}$.
                        We then have from \tsc{ST-Match1} and \tsc{M-Match} that $\mutualTo{\Gamma}{T}{T_k'}$.
                        Now, we use the I.H. on $\subtypesTo{\Gamma}{T_k'}{T_k''}$ and $\mutualTo{\Gamma}{T_k''}{\curl{f_i : T_i'}_{i \in I}}$
                        and obtain six subsubcases:
                        \begin{enumerate}
                            \item $\mutualTo{\Gamma}{T_k'}{\curl{f_i : T_i}_{i \in I}}$ such that
                                $\forall i \in I : \subtypesTo{\Gamma}{T_i}{T_i'}$\\
                                We have case $a$ from \tsc{M-Trans}.
                            \item $\mutualTo{\Gamma}{T_k'}{X}$\\
                                We have case $b$ from \tsc{M-Trans}.
                            \item $T_k'$ is a type variable\\
                                Case $b$ is immediate.
                            \item $\mutualTo{\Gamma}{T_k'}{\singleton{v_G}}$\\
                                We have case $d$ from \tsc{M-Trans}.
                            \item $T_k'$ is a ground variable\\
                                Case $d$ is immediate.
                            \item $T_k' = \curl{f_i : T_i}_{i \in I}$ such that
                                $\forall i \in I : \subtypesTo{\Gamma}{T_i}{T_i'}$\\
                                Case $a$ is immediate.
                        \end{enumerate}
                    \item $T' = \match{T_s'}{T_j \Rightarrow T_j''}{j \in I}$, $\subtypesTo{\Gamma}{T_s'}{T_k}$,
                        $\forall j \in J : j < k \Rightarrow \disjoint{\Gamma}{T_s'}{T_j}$ and $T_k'' = \curl{f_i : T_i}_{i \in I}$\\
                        By the definition of disjointness and \tsc{ST-Trans}, we have
                        $\forall j \in J : j < k \Rightarrow \disjoint{\Gamma}{T_s}{T_j}$
                        Also by \tsc{ST-Trans}, $\subtypesTo{\Gamma}{T_s}{T_k}$.
                        We then have from \tsc{ST-Match1} and \tsc{M-Match} that $\mutualTo{\Gamma}{T}{T_k'}$.
                        Now, we use the I.H. on $\subtypesTo{\Gamma}{T_k'}{T_k''}$ and obtain six subcases:
                        \begin{enumerate}
                            \item $\mutualTo{\Gamma}{T_k'}{\curl{f_i : T_i}_{i \in I}}$ such that
                                $\forall i \in I : \subtypesTo{\Gamma}{T_i}{T_i'}$\\
                                We have case $a$ from \tsc{M-Trans}.
                            \item $\mutualTo{\Gamma}{T_k'}{X}$\\
                                We have case $b$ from \tsc{M-Trans}.
                            \item $T_k'$ is a type variable\\
                                Case $b$ is immediate.
                            \item $\mutualTo{\Gamma}{T_k'}{\singleton{v_G}}$\\
                                We have case $d$ from \tsc{M-Trans}.
                            \item $T_k'$ is a ground variable\\
                                Case $d$ is immediate.
                            \item $T_k' = \curl{f_i : T_i}_{i \in I}$ such that
                                $\forall i \in I : \subtypesTo{\Gamma}{T_i}{T_i'}$\\
                                Case $a$ is immediate.
                        \end{enumerate}
                    \item Not applicable since $T'$ is a match type.
                    \item Not applicable since $T'$ is a match type.
                \end{enumerate}
            \item \tsc{ST-App}\\
                First case:
                \begin{equation}
                    \derive
                        {\subtypesTo{\Gamma}{T_2}{T_1}}
                        {\subtypesTo{\Gamma}{(\quant{X}{T_1} T_0)\ T_2}{\subs{T_0}{X}{T_2}}}
                \end{equation}
                where $T = (\quant{X}{T_1} T_0)\ T_2 \quad T' = \subs{T_0}{X}{T_2}$\\
                If $T' = \curl{f_i : T_i'}_{i \in I}$, then we have case $a$ from \tsc{M-TApp} and \tsc{ST-Refl}.\\
                If $\mutualTo{\Gamma}{T'}{\curl{f_i : T_i'}_{i \in I}}$, we have case $a$ from
                \tsc{M-TApp}, \tsc{M-Trans} and \tsc{ST-Refl}.\\\\
                Second case:
                \begin{equation}
                    \derive
                        {\subtypesTo{\Gamma}{T_2}{T_1}}
                        {\subtypesTo{\Gamma}{\subs{T_0}{X}{T_2}}{(\quant{X}{T_1} T_0)\ T_2}}
                \end{equation}
                where $T = \subs{T_0}{X}{T_2} \quad T' = (\quant{X}{T_1} T_0)\ T_2$\\
                We use the 1st part of the lemma on $\mutualTo{\Gamma}{T'}{\curl{f_i : T_i'}_{i \in I}}$ to get four subcases:
                \begin{enumerate}
                    \item Not applicable since $T'$ is a type application.
                    \item Not applicable since $T'$ is a type application
                    \item $T' = (\quant{X}{T_1} T_0)\ T_2$, $\subtypesTo{\Gamma}{T_2}{T_1}$ and
                        $\mutualTo{\Gamma}{\subs{T_0}{X}{T_2}}{\curl{f_i : T_i'}_{i \in I}}$\\
                        Since $T = \subs{T_0}{X}{T_2}$, we have case $a$ by \tsc{ST-Refl}.
                    \item $T' = (\quant{X}{T_1} T_0)\ T_2$, $\subtypesTo{\Gamma}{T_2}{T_1}$ and
                        $\subs{T_0}{X}{T_2} = \curl{f_i : T_i'}_{i \in I}$\\
                        We have case $f$ from \tsc{ST-Refl}.
                \end{enumerate}
            \item \tsc{ST-$\top$}, \tsc{ST-$\cup$Comm}, \tsc{ST-$\cup$Assoc}, \tsc{ST-List},
                \tsc{ST-$\cup$R}, \tsc{ST-Abs}, \tsc{ST-TAbs}
                These rules could not have been used.
                $T'$ cannot be a record (by the structure), nor can we derive $\mutualTo{\Gamma}{T'}{\curl{f_i : T_i'}_{i \in I}}$
                since $T'$ must be either a match type or type application.
        \end{itemize}
    \item $\subtypesTo{\Gamma}{T}{T'}$ 
        \begin{itemize}
            \item \tsc{ST-Refl}\\
                where $T = T'$\\
                If $T' = \listT{T_l'}$, case $f$ follows from \tsc{ST-Refl}.
                If $\mutualTo{\Gamma}{T'}{\listT{T_l'}}$, case $a$ follows from \tsc{ST-Refl}.
            \item \tsc{ST-Val}\\
                where $T = \singleton{v_G}$\\
                Case $e$ is immediate.
            \item \tsc{ST-Var}\\
                Case $c$ is immediate.
            \item \tsc{ST-Trans}\\
                \begin{equation}
                    \derive
                        {\subtypesTo{\Gamma}{T}{T''} \quad \subtypesTo{\Gamma}{T''}{T'}}
                        {\subtypesTo{\Gamma}{T}{T'}}
                \end{equation}
                By the I.H. on the right premise, we have six subcases:
                \begin{enumerate}
                    \item $\mutualTo{\Gamma}{T''}{\listT{T_l''}}$ such that
                        $\subtypesTo{\Gamma}{T_l''}{T_l'}$\\
                        The result follows by using the I.H. on the left premise.
                        If the result is case $a$ or $f$,
                        we have $\subtypesTo{\Gamma}{T_l}{T_l'}$ by \tsc{ST-Trans}.
                    \item $\mutualTo{\Gamma}{T''}{X}$ for some type variable $X$.\\
                        The result follows from using the 3rd part of the lemma on the left premise.
                    \item $T''$ is a type variable.
                        The result follows from using the 3rd part of the lemma on the left premise.
                    \item $\mutualTo{\Gamma}{T''}{\singleton{v_G}}$ for some ground type $\singleton{v_G}$.\\
                        The result follows from using the 2nd part of the lemma on the left premise.
                    \item $T''$ is a ground type.\\
                        The result follows from using the 2nd part of the lemma on the left premise.
                    \item $T'' = \listT{T_l''}$ such that
                        $\subtypesTo{\Gamma}{T_l''}{T_l'}$\\
                        The result follows by using the I.H. on the left premise.
                        If the result is case $a$ or $f$,
                        we have $\subtypesTo{\Gamma}{T_l}{T_l'}$ by \tsc{ST-Trans}.
                \end{enumerate}
            \item \tsc{ST-List}\\
                Case $f$ is immediate.
            \item \tsc{ST-Match1}\\
                First case:
                \begin{equation}
                    \derive
                        {\subtypesTo{\Gamma}{T_s}{T_k} \quad \forall i \in I : i < k \Rightarrow \disjoint{\Gamma}{T_s}{T_i}}
                        {\subtypesTo{\Gamma}{\match{T_s}{T_i \Rightarrow T_i'}{i \in I}}{T_k'}}
                \end{equation}
                where $T = \match{T_s}{T_i \Rightarrow T_i'}{i \in I} \quad T' = T_k'$\\
                If $T' = \listT{T_l'}$, we have case 1 from \tsc{M-Match} and \tsc{ST-Refl}.\\
                If $\mutualTo{\Gamma}{T'}{\listT{T_l'}}$, we have $\mutualTo{\Gamma}{T}{\listT{T_l'}}$
                from \tsc{M-Match} and \tsc{M-Trans}. We then have case $a$ from \tsc{ST-Refl}.\\\\
                Second case:
                \begin{equation}
                    \derive
                        {\subtypesTo{\Gamma}{T_s}{T_k} \quad \forall i \in I : i < k \Rightarrow \disjoint{\Gamma}{T_s}{T_i}}
                        {\subtypesTo{\Gamma}{T_k'}{\match{T_s}{T_i \Rightarrow T_i'}{i \in I}}}
                \end{equation}
                where $T = T_k' \quad T' = \match{T_s}{T_i \Rightarrow T_i'}{i \in I}$\\
                We use the 1st part of the lemma on $\mutualTo{\Gamma}{T'}{\listT{T_l'}}$ to get four subcases:
                \begin{enumerate}
                    \item $T' = \match{T_s}{T_i \Rightarrow T_i'}{i \in I}$, $\subtypesTo{\Gamma}{T_s}{T_k}$,
                        $\forall i \in I : i < k \Rightarrow \disjoint{\Gamma}{T_s}{T_i}$ and
                        $\mutualTo{\Gamma}{T_k'}{\listT{T_l'}}$\\
                        By $T = T_k'$ we have $\mutualTo{\Gamma}{T}{\listT{T_l'}}$.
                        Case $a$ then follows from \tsc{ST-Refl}.
                    \item $T' = \match{T_s}{T_i \Rightarrow T_i'}{i \in I}$, $\subtypesTo{\Gamma}{T_s}{T_k}$,
                        $\forall i \in I : i < k \Rightarrow \disjoint{\Gamma}{T_s}{T_i}$ and $T_k' = \listT{T_l'}$\\
                        Case $f$ follows from \tsc{ST-Refl}.
                    \item Not applicable since $T'$ is a match type.
                    \item Not applicable since $T'$ is a match type.
                \end{enumerate}
            \item \tsc{ST-Match2}\\
                \begin{equation}
                    \derive
                        {\subtypesTo{\Gamma}{T_s}{T_s'} \quad \forall i \in I : \subtypesTo{\Gamma}{T_i'}{T_i''}}
                        {\subtypesTo{\Gamma}{\match{T_s}{T_i \Rightarrow T_i'}{i \in I}}{\match{T_s'}{T_i \Rightarrow T_i''}{i \in I}}}
                \end{equation}
                where $T = \match{T_s}{T_i \Rightarrow T_i'}{i \in I} \quad T' = \match{T_s'}{T_i \Rightarrow T_i''}{i \in I}$\\
                We use the 1st part of the lemma on $\mutualTo{\Gamma}{T'}{\listT{T_l'}}$ to get four subcases:
                \begin{enumerate}
                    \item $T' = \match{T_s'}{T_i \Rightarrow T_i''}{i \in I}$, $\subtypesTo{\Gamma}{T_s'}{T_k}$,
                        $\forall i \in I : i < k \Rightarrow \disjoint{\Gamma}{T_s'}{T_i}$ and
                        $\mutualTo{\Gamma}{T_k''}{\listT{T_l'}}$\\
                        By the definition of disjointness and \tsc{ST-Trans}, we have
                        $\forall i \in I : i < k \Rightarrow \disjoint{\Gamma}{T_s}{T_i}$
                        Also by \tsc{ST-Trans}, $\subtypesTo{\Gamma}{T_s}{T_k}$.
                        We then have from \tsc{ST-Match1} and \tsc{M-Match} that $\mutualTo{\Gamma}{T}{T_k'}$.
                        Now, we use the I.H. on $\subtypesTo{\Gamma}{T_k'}{T_k''}$ and $\mutualTo{\Gamma}{T_k''}{\listT{T_l'}}$
                        and obtain six subsubcases:
                        \begin{enumerate}
                            \item $\mutualTo{\Gamma}{T_k'}{\listT{T_l}}$ for some $T_l$ such that
                                $\subtypesTo{\Gamma}{T_l}{T_l'}$\\
                                We have case $a$ from \tsc{M-Trans}.
                            \item $\mutualTo{\Gamma}{T_k'}{X}$\\
                                We have case $b$ from \tsc{M-Trans}.
                            \item $T_k'$ is a type variable\\
                                Case $b$ is immediate.
                            \item $\mutualTo{\Gamma}{T_k'}{\singleton{v_G}}$\\
                                We have case $d$ from \tsc{M-Trans}.
                            \item $T_k'$ is a ground variable\\
                                Case $d$ is immediate.
                            \item $T_k' = \listT{T_l}$ for some $T_l$ such that $\subtypesTo{\Gamma}{T_l}{T_l'}$\\
                                Case $1$ is immediate.
                        \end{enumerate}
                    \item $T' = \match{T_s'}{T_i \Rightarrow T_i''}{i \in I}$, $\subtypesTo{\Gamma}{T_s'}{T_k}$,
                        $\forall i \in I : i < k \Rightarrow \disjoint{\Gamma}{T_s'}{T_i}$ and $T_k'' = \listT{T_l'}$\\
                        By the definition of disjointness and \tsc{ST-Trans}, we have
                        $\forall i \in I : i < k \Rightarrow \disjoint{\Gamma}{T_s}{T_i}$
                        Also by \tsc{ST-Trans}, $\subtypesTo{\Gamma}{T_s}{T_k}$.
                        We then have from \tsc{ST-Match1} and \tsc{M-Match} that $\mutualTo{\Gamma}{T}{T_k'}$.
                        Now, we use the I.H. on $\subtypesTo{\Gamma}{T_k'}{T_k''}$ and obtain six subcases:
                        \begin{enumerate}
                            \item $\mutualTo{\Gamma}{T_k'}{\listT{T_l}}$ for some $T_l$ such that
                                $\subtypesTo{\Gamma}{T_l}{T_l'}$\\
                                We have case $a$ from \tsc{M-Trans}.
                            \item $\mutualTo{\Gamma}{T_k'}{X}$\\
                                We have case $b$ from \tsc{M-Trans}.
                            \item $T_k'$ is a type variable\\
                                Case $b$ is immediate.
                            \item $\mutualTo{\Gamma}{T_k'}{\singleton{v_G}}$\\
                                We have case $d$ from \tsc{M-Trans}.
                            \item $T_k'$ is a ground variable\\
                                Case $d$ is immediate.
                            \item $T_k' = \listT{T_l}$ for some $T_l$ such that $\subtypesTo{\Gamma}{T_l}{T_l'}$\\
                                Case $a$ is immediate.
                        \end{enumerate}
                    \item Not applicable since $T'$ is a match type.
                    \item Not applicable since $T'$ is a match type.
                \end{enumerate}
            \item \tsc{ST-App}\\
            First case:
                \begin{equation}
                    \derive
                        {\subtypesTo{\Gamma}{T_2}{T_1}}
                        {\subtypesTo{\Gamma}{(\quant{X}{T_1} T_0)\ T_2}{\subs{T_0}{X}{T_2}}}
                \end{equation}
                where $T = (\quant{X}{T_1} T_0)\ T_2 \quad T' = \subs{T_0}{X}{T_2}$\\
                If $T' = \listT{T_l'}$, then we have case $a$ from \tsc{M-TApp} and \tsc{ST-Refl}.\\
                If $\mutualTo{\Gamma}{T'}{\listT{T_l'}}$, we have case $a$ from
                \tsc{M-TApp}, \tsc{M-Trans} and \tsc{ST-Refl}.\\\\
                Second case:
                \begin{equation}
                    \derive
                        {\subtypesTo{\Gamma}{T_2}{T_1}}
                        {\subtypesTo{\Gamma}{\subs{T_0}{X}{T_2}}{(\quant{X}{T_1} T_0)\ T_2}}
                \end{equation}
                where $T = \subs{T_0}{X}{T_2} \quad T' = (\quant{X}{T_1} T_0)\ T_2$\\
                We use the 1st part of the lemma on $\mutualTo{\Gamma}{T'}{\listT{T_l'}}$ to get four subcases:
                \begin{enumerate}
                    \item Not applicable since $T'$ is a type application.
                    \item Not applicable since $T'$ is a type application
                    \item $T' = (\quant{X}{T_1} T_0)\ T_2$, $\subtypesTo{\Gamma}{T_2}{T_1}$ and
                        $\mutualTo{\Gamma}{\subs{T_0}{X}{T_2}}{\listT{T_l'}}$\\
                        Since $T = \subs{T_0}{X}{T_2}$, we have case $a$ by \tsc{ST-Refl}.
                    \item $T' = (\quant{X}{T_1} T_0)\ T_2$, $\subtypesTo{\Gamma}{T_2}{T_1}$ and
                        $\subs{T_0}{X}{T_2} = \listT{T_l'}$\\
                        We have case $f$ from \tsc{ST-Refl}.
                \end{enumerate}
            \item \tsc{ST-$\top$}, \tsc{ST-$\cup$Comm}, \tsc{ST-$\cup$Assoc}, \tsc{ST-Rec},
                \tsc{ST-$\cup$R}, \tsc{ST-Abs}, \tsc{ST-TAbs}
                These rules could not have been used.
                $T'$ cannot be a list (by the structure), nor can we derive $\mutualTo{\Gamma}{T'}{[T_l']}$
                since $T'$ must be either a match type or type application.
        \end{itemize}
\end{enumerate}
\end{proof}

\noindent
\begin{lemma}[Typing inversion]
\label{typing-inversion}\hfill
\begin{enumerate}
    \item If $\typesTo{\Gamma}{\abs{x}{T_1} t_2}{T}$ and $\subtypesTo{\Gamma}{T}{T_1' \rightarrow T_2'}$
        then $\subtypesTo{\Gamma}{T_1'}{T_1}$ and there is a $T_2$ such that $\typesTo{\Gamma, \types{x}{T_1}}{t_2}{T_2}$
        and $\subtypesTo{\Gamma}{T_2}{T_2'}$.
    \item If $\typesTo{\Gamma}{\tabs{X}{T_1} t_2}{T}$ and $\subtypesTo{\Gamma}{T}{\quant{X}{T_1'} T_2'}$
        then $T_1 = T_1'$ and there is a $T_2$ such that $\typesTo{\Gamma, \subtypes{X}{T_1}}{t_2}{T_2}$
        and $\subtypesTo{\Gamma, \subtypes{X}{T_1}}{T_2}{T_2'}$.
    \item If $\typesTo{\Gamma}{\concat{t}{t'}}{T}$ and $\subtypesTo{\Gamma}{T}{\listT{T'}}$
        then there is a $T''$ such that $\subtypesTo{\Gamma}{T''}{T'}$, ${\typesTo{\Gamma}{t}{T''}}$
        and ${\typesTo{\Gamma}{t'}{\listT{T''}}}$.
    \item If $\typesTo{\Gamma}{\curl{f_i = t_i}_{i \in I}}{T}$ and $\subtypesTo{\Gamma}{T}{\curl{\types{f_i}{T_i'}}_{i \in I}}$
        then $\forall i \in I, \exists T_i : \typesTo{\Gamma}{t_i}{T_i} \wedge \subtypesTo{\Gamma}{T_i}{T_i'}$.
\end{enumerate}
\end{lemma}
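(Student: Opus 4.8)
The plan is to prove all four parts by induction on the typing derivation $\typesTo{\Gamma}{t}{T}$, mirroring the shape of the other inversion arguments in this appendix. The key observation is that for a term $t$ of a fixed syntactic form a derivation of $\typesTo{\Gamma}{t}{T}$ can only end with one of a handful of rules: the introduction rule matching that form (\tsc{T-Abs} for a lambda, \tsc{T-TAbs} for a type abstraction, \tsc{T-List} for a cons, \tsc{T-Rec} for a record), the subsumption rule \tsc{T-Sub}, or --- when $t$ is a ground value --- the rule \tsc{T-Val}. Parts 1 and 2 cannot trigger \tsc{T-Val}, since lambda and type abstractions are never ground values (by \cref{def:p4runtime-client-syntax}), so those two parts are the cleanest; parts 3 and 4 must also treat the \tsc{T-Val} case.

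For the \textbf{introduction-rule case}, the intro rule fixes $T$ to a concrete shape and records the component typings: $T = T_1 \rightarrow T_2$ with $\typesTo{\Gamma, \types{x}{T_1}}{t_2}{T_2}$; $T = \quant{X}{T_1} T_2$ with $\typesTo{\Gamma, \subtypes{X}{T_1}}{t_2}{T_2}$; $T = \listT{T_0}$ with $\typesTo{\Gamma}{t}{T_0}$ and $\typesTo{\Gamma}{t'}{\listT{T_0}}$; or $T = \curl{f_i : T_i}_{i \in I}$ with $\typesTo{\Gamma}{t_i}{T_i}$ for each $i$. I would then feed the subtyping hypothesis $\subtypesTo{\Gamma}{T}{T^{\star}}$ (with $T^{\star}$ the arrow, quantifier, list, or record type from the statement) into the matching clause of \cref{subtyping-inversion} --- clause~4 for arrows, clause~5 for quantifiers, clause~7 for lists, clause~6 for records. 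Because $T$ is itself an arrow/universal/list/record type, and therefore neither a type variable, nor a ground or singleton type, nor a match type, nor a type application, \emph{all} the ``indirect'' alternatives in those clauses (the ones about the relation $\rightleftharpoons$, about type variables, or about ground types) are vacuous, leaving only the ``direct'' alternative; that alternative hands us exactly the variance facts required ($\subtypesTo{\Gamma}{T_1'}{T_1}$ and $\subtypesTo{\Gamma}{T_2}{T_2'}$; the bound and body relations for quantifiers; $\subtypesTo{\Gamma}{T_0}{T'}$, so we take $T'' := T_0$; and $\subtypesTo{\Gamma}{T_i}{T_i'}$ for records), which together with the recorded component typings closes the case. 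The \textbf{subsumption case} is immediate: from $\typesTo{\Gamma}{t}{T_0}$, $\subtypesTo{\Gamma}{T_0}{T}$ and $\subtypesTo{\Gamma}{T}{T^{\star}}$ we get $\subtypesTo{\Gamma}{T_0}{T^{\star}}$ by \tsc{ST-Trans}, and the induction hypothesis applied to the sub-derivation finishes it.

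The \textbf{ground-value case} of parts 3 and 4 is where the real work lies. If $t = \concat{w}{w'}$ (resp. $t = \curl{f_i = w_i}_{i \in I}$) and the last rule is \tsc{T-Val}, then $T = \singleton{t}$ and the hypothesis reads $\subtypesTo{\Gamma}{\singleton{t}}{T^{\star}}$ with $T^{\star}$ a list (resp. record) type. Here \cref{subtyping-inversion} only tells us that $\singleton{t}$ is a ground type --- no help --- and, crucially, one must \emph{not} take the witness $T''$ to be the target element type $T'$, since $\subtypesTo{\Gamma}{\singleton{t}}{\listT{T'}}$ does not imply $\memberOf{t}{\listT{T'}}$ in general (for instance $\subtypesTo{\Gamma}{\singleton{\concat{1}{\nil}}}{\listT{\top}}$ holds but $\memberOf{\concat{1}{\nil}}{\listT{\top}}$ does not). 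The plan is to run an inner induction on the derivation of $\subtypesTo{\Gamma}{\singleton{t}}{T^{\star}}$ which, using the fact that the $\in_G$ relation of \cref{sec:memberof-relation} has \emph{no} rules for singleton types, type variables, match types, or type applications, produces --- via a use of \tsc{ST-Val} --- an intermediate type $U$ with $\memberOf{t}{U}$ and $\subtypesTo{\Gamma}{U}{T^{\star}}$; then \cref{subtyping-inversion}(7) (resp. (6)) pins $U$ to a list type $\listT{T''}$ with $\subtypesTo{\Gamma}{T''}{T'}$ (resp. a record type $\curl{f_i : T_i}_{i \in I}$ with $\subtypesTo{\Gamma}{T_i}{T_i'}$ for every $i$). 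Inverting $\in_G$ by \tsc{In-List2} (resp. \tsc{In-Rec}) yields $\memberOf{w}{T''}$ and $\memberOf{w'}{\listT{T''}}$ (resp. $\memberOf{w_i}{T_i}$), and a final use of \tsc{T-Val} followed by \tsc{T-Sub} (via \tsc{ST-Val}, whose well-formedness side conditions are inherited from those of the subtyping hypothesis) turns these into $\typesTo{\Gamma}{w}{T''}$ and $\typesTo{\Gamma}{w'}{\listT{T''}}$ (resp. $\typesTo{\Gamma}{w_i}{T_i}$), as needed. This inner subtyping analysis, together with making the right choice of witness type, is the only genuinely non-routine step; everything else is a mechanical replay of the introduction/subsumption pattern.
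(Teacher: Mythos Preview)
Your treatment of the introduction-rule and \tsc{T-Sub} cases is exactly the paper's proof: read off the component typings from the intro rule, feed the resulting subtyping fact into the matching clause of \cref{subtyping-inversion} (ruling out the ``indirect'' alternatives because an arrow, universal, list, or record type is neither a variable nor a ground type nor on the left of $\rightleftharpoons$), and for \tsc{T-Sub} compose via \tsc{ST-Trans} and invoke the induction hypothesis.

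Where you diverge from the paper is in isolating the \tsc{T-Val} case for parts~3 and~4. The paper disposes of all remaining rules with ``in any other case, the term cannot be a list concatenation'' (resp.\ ``a record''); but as you observe, a cons or record whose components are ground values \emph{is} itself a ground value by \cref{def:p4runtime-client-syntax}, so \tsc{T-Val} is a legitimate last rule that the paper's proof simply does not address. Your plan for closing this gap --- an inner induction on the derivation of $\subtypesTo{\Gamma}{\singleton{t}}{T^{\star}}$ to produce a $U$ with $\memberOf{t}{U}$ and $\subtypesTo{\Gamma}{U}{T^{\star}}$, followed by inversion of $\in_G$ --- is the natural completion, and your warning that one must \emph{not} take $T'' := T'$ (the heterogeneous-list example) is exactly the right caution. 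The one place that inner induction needs real care is the \tsc{ST-Trans} step when the intermediate type is a match type or a type application: there one must combine the $\rightleftharpoons$-information coming out of \cref{subtyping-inversion}(7)/(6) with parts~(2) and~(3) of the same lemma to obtain contradictions from judgments such as $\subtypesTo{\Gamma}{\listT{T_l}}{\singleton{v}}$ or $\subtypesTo{\Gamma}{\singleton{v}}{X}$. You are right to call this the non-routine part, and overall your argument is strictly more thorough than the paper's own at this point.
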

\begin{proof}
\begin{enumerate}
    \item By induction on the derivation of $\typesTo{\Gamma}{\abs{x}{T_1} t_2}{T}$.
        \begin{itemize}
            \item \tsc{T-Abs}
                \begin{equation}
                    \derive
                        {\typesTo{\Gamma, \types{x}{T_1}}{t_2}{T_2''}}
                        {\typesTo{\Gamma}{\abs{x}{T_1} t_2}{T_1 \rightarrow T_2''}}
                \end{equation}
                We have that $\subtypesTo{\Gamma}{T_1 \rightarrow T_2''}{T_1' \rightarrow T_2'}$.
                By \cref{subtyping-inversion}, we have $\subtypesTo{\Gamma}{T_1'}{T_1}$ and $\subtypesTo{\Gamma}{T_2''}{T_2'}$.
                We take $T_2 = T_2''$ to obtain the desired result.
            \item \tsc{T-Sub}
                \begin{equation}
                    \derive
                        {\typesTo{\Gamma}{\abs{x}{T_1} t_2}{T'} \subtypes{\Gamma}{T'}{T}}
                        {\typesTo{\Gamma}{\abs{x}{T_1} t_2}{T}}
                \end{equation}
                By \tsc{ST-Trans}, $\subtypesTo{\Gamma}{T'}{T_1' \rightarrow T_2'}$.
                The conclusion is then immediate after applying the I.H..
            \item In any other case, the term cannot be an abstraction.
        \end{itemize}
    \item By induction on the derivation of $\typesTo{\Gamma}{\tabs{X}{T_1} t_2}{T}$.
        \begin{itemize}
            \item \tsc{T-TAbs}
                \begin{equation}
                    \derive
                        {\typesTo{\Gamma, \subtypes{X}{T_1}}{t_2}{T_2''}}
                        {\typesTo{\Gamma}{(\tabs{X}{T_1} t_2)}{(\quant{X}{T_1} T_2'')}}
                \end{equation}
                We have that $\subtypesTo{(\quant{X}{T_1} T_2'')}{(\quant{X}{T_1'} T_2')}$.
                By \cref{subtyping-inversion}, we have $T_1 = T_1'$ and $\subtypesTo{\Gamma, \subtypes{X}{T_1}}{T_2''}{T_2'}$.
                We take $T_2 = T_2''$ to obtain the desired result.
            \item \tsc{T-Sub}
                \begin{equation}
                    \derive
                        {\typesTo{\Gamma}{\tabs{X}{T_1} t_2}{T'} \subtypes{\Gamma}{T'}{T}}
                        {\typesTo{\Gamma}{\tabs{X}{T_1} t_2}{T}}
                \end{equation}
                By \tsc{ST-Trans}, $\subtypesTo{\Gamma}{T'}{\quant{X}{T_1'} T_2'}$.
                The conclusion is then immediate after applying the I.H..
            \item In any other case, the term cannot be a type abstraction.
        \end{itemize}
    \item By induction on the derivation of $\typesTo{\Gamma}{\concat{t}{t'}}{T}$.
        \begin{itemize}
            \item \tsc{T-List}
                \begin{equation}
                    \derive
                        {\typesTo{\Gamma}{t}{T'''} \quad \typesTo{\Gamma}{t'}{\listT{T'''}}}
                        {\typesTo{\Gamma}{\concat{t}{t'}}{\listT{T'''}}}
                \end{equation}
                We have $\subtypesTo{\Gamma}{\listT{T'''}}{\listT{T'}}$.
                By \cref{subtyping-inversion}, we have $\subtypesTo{\Gamma}{T'''}{T'}$,
                $\typesTo{\Gamma}{t}{T'''}$ and $\typesTo{\Gamma}{t'}{\listT{T'''}}$.
                We take $T'' = T'''$ to obtain the desired result.
            \item \tsc{T-Sub}
                \begin{equation}
                    \derive
                        {\typesTo{\Gamma}{\concat{t}{t'}}{T'''} \quad \subtypesTo{\Gamma}{T'''}{T}}
                        {\typesTo{\Gamma}{\concat{t}{t'}}{T}}
                \end{equation}
                By \tsc{ST-Trans}, $\subtypesTo{\Gamma}{T'''}{\listT{T'}}$.
                The conclusion is then immediate after applying the I.H..
            \item In any other case, the term cannot be a list concatenation.
        \end{itemize}
    \item By induction on the derivation of $\typesTo{\Gamma}{\curl{f_i = t_i}_{i \in I}}{T}$.
        \begin{itemize}
            \item \tsc{T-Rec}
                \begin{equation}
                    \derive
                        {\forall i \in I : \typesTo{\Gamma}{t_i}{T_i''}}
                        {\typesTo{\Gamma}{\curl{f_i = t_i}_{i \in I}}{\curl{f_i : T_i''}_{i \in I}}}
                \end{equation}
                We have $\subtypesTo{\Gamma}{\curl{f_i : T_i''}_{i \in I}}{\curl{\types{f_i}{T_i}}_{i \in I}}$.
                By \cref{subtyping-inversion}, we have
                $\forall i \in I : \typesTo{\Gamma}{t_i}{T_i''} \wedge \subtypesTo{\Gamma}{T_i''}{T_i'}$.
                By taking each $T_i$ to be $T_i''$, we obtain the conclusion.
            \item \tsc{T-Sub}
                \begin{equation}
                    \derive
                        {\typesTo{\Gamma}{\curl{f_i = t_i}_{i \in I}}{T'} \quad \subtypesTo{\Gamma}{T'}{T}}
                        {\typesTo{\Gamma}{\curl{f_i = t_i}_{i \in I}}{T}}
                \end{equation}
                By \tsc{ST-Trans}, $\subtypesTo{\Gamma}{T'}{\curl{\types{f_i}{T_i'}}_{i \in I}}$.
                The conclusion is then immediate after applying the I.H..
            \item In any other case, the term cannot be a record.
        \end{itemize}
\end{enumerate}
\end{proof}

\begin{lemma}[Type preservation for \ourFormalLang terms]
    \label{term-pres}
    If ${\typesTo{\Gamma}{t}{T}}$ and ${t \rightarrow t'}$, then ${\typesTo{\Gamma}{t'}{T}}$
\end{lemma}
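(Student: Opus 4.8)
The plan is to prove \cref{term-pres} by induction on the derivation of the transition \(t \reduce{\alpha} t'\) (covering all reduction rules of \cref{fig:p4runtime-client-semantics}, internal or labelled), with an inner case analysis on the last rule applied. In every case I first invert the typing derivation \(\typesTo{\Gamma}{t}{T}\) --- peeling off any trailing uses of \tsc{T-Sub} and applying the appropriate generation lemma --- to expose the sub-derivations I need, then rebuild a typing derivation for \(t'\), and finally re-apply \tsc{T-Sub} to recover the original type \(T\).

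The \(\beta\)-style rules \tsc{E-App}, \tsc{E-TApp}, \tsc{E-Let}, \tsc{E-Field}, \tsc{E-Head1/2}, \tsc{E-Tail1/2} are routine: using \cref{typing-inversion} (for abstractions, type abstractions, list concatenations and records, plus the analogous easy inversions for field selection and list destructors), together with the substitution lemmas \cref{substitution}(4) and \cref{substitution}(5) and the record/list typing rules, and closing with \tsc{T-Sub}. For the P4Runtime operation rules \tsc{E-Connect}, \tsc{E-Read}, \tsc{E-Insert}, \tsc{E-Modify}, \tsc{E-Delete}, inversion of the corresponding rule \tsc{T-OpC}/\tsc{T-OpR}/\tsc{T-OpI}/\tsc{T-OpM}/\tsc{T-OpD} (possibly through \tsc{T-Sub}) shows that the reduct is a value \(v'\), and the side condition of the reduction rule guarantees \(v' \in T_0\) for exactly the result type \(T_0\) predicted by the operation's typing rule --- for \tsc{E-Connect} this uses that a server-address value \(a_{T_m,T_a,T_p}\) has minimal type \(\ServerRef{T_m,T_a,T_p}\) by rule \tsc{In-ServerRef} and \cref{minimal-types}, which forces the type parameters of the returned channel. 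It then remains to turn the runtime membership \(v' \in T_0\) into the typing \(\typesTo{\Gamma}{v'}{T_0}\): for the boolean results of \tsc{E-Insert}/\tsc{E-Modify}/\tsc{E-Delete} and the channel result of \tsc{E-Connect} this follows from \tsc{T-Val}, the \(\memberOf{\cdot}{\cdot}\) relation (\tsc{In-Bool}, \tsc{In-Chan}) and \tsc{ST-Val}; for the list-of-entities result of \tsc{E-Read} it follows from \cref{def:config-enc-equiv} applied element-wise together with \tsc{In-List1/2} and \tsc{In-Rec}. A final \tsc{T-Sub} closes these cases.

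The main case, and the main obstacle, is \tsc{E-Match}: here \(t = \match{v}{x_i : T_i \Rightarrow t_i}{i \in I}\), \(t' = \subs{t_k}{x_k}{v}\), with side conditions \(v \in T_k\) and \(\forall j \in I: j < k \Rightarrow v \notin T_j\). Inverting \tsc{T-Match} (through \tsc{T-Sub}) yields \(\typesTo{\Gamma}{v}{T_s}\), \(\typesTo{\Gamma, \types{x_i}{T_i}}{t_i}{T_i'}\) for all \(i\), \(\subtypesTo{\Gamma}{T_s}{\cup_{i\in I} T_i}\), and \(\subtypesTo{\Gamma}{\match{T_s}{T_i\Rightarrow T_i'}{i\in I}}{T}\). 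Let \(T_v\) be the minimal type of \(v\) given by \cref{minimal-types} (for a ground value \(v_G\) this is \(\singleton{v_G}\); for an abstraction, an arrow or universal type), so \(\subtypesTo{\Gamma}{T_v}{T_s}\). The crux is converting the dynamic conditions into static ones: from \(v \in T_k\) obtain \(\subtypesTo{\Gamma}{T_v}{T_k}\) (via the \(\memberOf{\cdot}{\cdot}\) relation and \tsc{ST-Val}), and from \(v \notin T_j\) obtain \(\subtypesTo{\Gamma\not}{\singleton{v}}{T_j}\) and hence \(\disjoint{\Gamma}{T_v}{T_j}\) by \cref{disjoint-value}. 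With these, \tsc{ST-Match1} gives \(\eqtypesTo{\Gamma}{\match{T_v}{T_i\Rightarrow T_i'}{i\in I}}{T_k'}\), and \tsc{ST-Match2} (covariance in the scrutinee, using \(\subtypesTo{\Gamma}{T_v}{T_s}\)) gives \(\subtypesTo{\Gamma}{\match{T_v}{T_i\Rightarrow T_i'}{i\in I}}{\match{T_s}{T_i\Rightarrow T_i'}{i\in I}}\); chaining these with \(\subtypesTo{\Gamma}{\match{T_s}{T_i\Rightarrow T_i'}{i\in I}}{T}\) yields \(\subtypesTo{\Gamma}{T_k'}{T}\). Since \(\subtypesTo{\Gamma}{T_v}{T_k}\) also gives \(\typesTo{\Gamma}{v}{T_k}\) by \tsc{T-Sub}, the substitution lemma \cref{substitution}(5) gives \(\typesTo{\Gamma}{\subs{t_k}{x_k}{v}}{T_k'}\), and a last \tsc{T-Sub} concludes. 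The delicate point I expect to spend the most effort on is making the correspondence between the assumed runtime predicate ``\(v \in T\)'', its negation, and the derivable judgements \(\subtypesTo{\Gamma}{T_v}{T}\) and \(\disjoint{\Gamma}{T_v}{T}\) fully precise --- in particular for non-ground values, where singleton types are unavailable and one must argue through minimal arrow and universal types; this may require an auxiliary agreement lemma between the membership predicate and subtyping. Finally, the context rule \tsc{E-$\mbC$} is handled by induction: a case analysis on the shape of \(\mbC\), in each case using the relevant generation lemma to assign a type to the subterm occupying the hole, applying the inductive hypothesis (which preserves that type exactly), and reassembling with the same typing rule; no weakening is needed because evaluation contexts bind no variables around the hole, and the cases \(\letExp{x}{\mbC}{t}\) and \(\mbC\ttt{ match }\{\ldots\}\) go through because the hole's type appears transparently in the result type. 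Closing with \tsc{T-Sub} completes the proof.
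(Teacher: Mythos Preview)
Your proposal is correct and covers the same ground as the paper's proof, but organises the induction differently: the paper proceeds by induction on the \emph{typing} derivation $\typesTo{\Gamma}{t}{T}$ and, for each typing rule, case-splits on the possible reductions of $t$ (handling the \tsc{E-$\mbC$} instances inline, one per compound construct), whereas you induct on the \emph{reduction} derivation and invert typing at each step. Both are standard; the paper's organisation avoids needing a full battery of generation lemmas (it only states \cref{typing-inversion} for abstractions, type abstractions, lists and records), while yours makes the context case uniform. The crucial \tsc{E-Match} argument is identical in substance: the paper also uses \cref{minimal-types} to obtain $\subtypesTo{\Gamma}{\singleton{v_s}}{T_s}$, \cref{disjoint-value} to turn the negative side conditions into disjointness, then \tsc{ST-Match1} and \tsc{ST-Match2} to derive $\subtypesTo{\Gamma}{T_k'}{T}$, and \cref{substitution}(5) to conclude. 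Your worry about non-ground scrutinees is legitimate in principle, but note that the paper sidesteps it by silently rewriting the premises of \tsc{E-Match} as singleton-type subtyping judgements $\subtypesTo{\emptyset}{\singleton{v_s}}{T_k}$ and $\subtypesTo{\emptyset\not}{\singleton{v_s}}{T_j}$, i.e.\ it tacitly identifies the abstract membership predicate ``$v\in T$'' of \Cref{def:p4runtime-client-semantics} with singleton subtyping; you would need to make the same identification (or argue that match scrutinees are effectively ground) to close that case, rather than develop a separate agreement lemma for arrow and universal minimal types.
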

\begin{proof}
The proof is by induction on the derivation of ${\typesTo{\Gamma}{t}{T}}$.
\begin{itemize}
    \item \tsc{T-Val}, \tsc{T-Var}, \tsc{T-Abs}, \tsc{T-TAbs}\\
        These rules type values which have no applicable reduction rules,
        so the premise is false and the conclusion holds vacuously.
    \item \tsc{T-Rec}
        \begin{equation}
            \derive
                {\forall i \in I : \typesTo{\Gamma}{t_i}{T_i}}
                {\typesTo{\Gamma}{\curl{f_i = t_i}_{i \in I}}{\curl{f_i : T_i}_{i \in I}}}
        \end{equation}
        where $t = \curl{f_i = t_i}_{i \in I} \quad T = \curl{f_i : T_i}_{i \in I}$\\
        If all fields $t_i$ are values, there is no reduction rule, so the premise is false.\\
        If there is at least one field which is not a value, $t$ must have been reduced using the context reduction rule \tsc{E-$\mbC$}.
        Let this field be $t_k$ such that there is no $j \in I, j < k$ where $t_j$ is not a value.
        Then the reduction is:
        \begin{equation}
            \derive
                {t'' \reduce{\alpha} t'''}
                {\mbC[t''] \reduce{\alpha} \mbC[t''']}
        \end{equation}
        where $t'' = t_k$, $\mbC[t''] = t$, $\mbC[t'''] = t'$ and
        \begin{equation*}
            \mbC = \curl{f_i = \gamma_i}_{i \in I} \quad \tx{where } \begin{array}{c}\forall i \in i\end{array}:\begin{cases}i < k \Rightarrow \gamma_i = v_i\\i = k \Rightarrow \gamma_i = \hole\\i > k \Rightarrow \gamma_i = t_i\end{cases}\\
        \end{equation*}
        By the I.H., we have that $\typesTo{\Gamma}{t'''}{T_k}$. Then, we can immediately show from \tsc{T-Rec} that:
        \begin{equation}
            \derive
                {\forall i \in I : \typesTo{\Gamma}{t_i}{T_i}}
                {\typesTo{\Gamma}{t'}{\curl{f_i : T_i}_{i \in I}}}
        \end{equation}
    \item \tsc{T-List}
        \begin{equation}
            \derive
                {\typesTo{\Gamma}{t_0}{T_0} \quad \typesTo{\Gamma}{t_1}{[T_0]}}
                {\typesTo{\Gamma}{\concat{t_0}{t_1}}{[T_0]}}
        \end{equation}
        where $t = \concat{t_0}{t_1} \quad T = [T_0]$\\
        If both $t_0$ and $t_1$ are values, there is no reduction rule, so the premise is false.\\
        Otherwise, we must be able to reduce either $t_0$ or $t_1$ further, which can only be done using the context reduction rule \tsc{E-$\mbC$}:
        \begin{equation}
            \derive
                {t'' \reduce{\alpha} t'''}
                {\mbC[t''] \reduce{\alpha} \mbC[t''']}
        \end{equation}
        where one of the following assignments apply:
        \begin{align}
            t'' = t_0 \quad \mbC[t''] = t \quad \mbC[t'''] = t' \quad \mbC = \concat{\hole}{t_1} \label{T-list-1:term-prog}\\
            t'' = t_1 \quad \mbC[t''] = t \quad \mbC[t'''] = t' \quad \mbC = \concat{t_0}{\hole} \label{T-list-2:term-prog}
        \end{align}
        If we use \cref{T-list-1:term-prog}, we have by the I.H. that $\typesTo{\Gamma}{t'''}{T_0}$. Then, we can show by \tsc{T-List} that:
        \begin{equation}
            \derive
                {\typesTo{\Gamma}{t'''}{T_0} \quad \typesTo{\Gamma}{t_1}{[T_0]}}
                {\typesTo{\Gamma}{t'}{[T_0]}}
        \end{equation}
        Otherwise, if we use \cref{T-list-2:term-prog}, we have by the I.H. that $\typesTo{\Gamma}{t'''}{[T_0]}$.
        Then, we can show by \tsc{T-List} that:
        \begin{equation}
            \derive
                {\typesTo{\Gamma}{t_0}{T_0} \quad \typesTo{\Gamma}{t'''}{[T_0]}}
                {\typesTo{\Gamma}{t'}{[T_0]}}
        \end{equation}
    \item \tsc{T-Head}
        \begin{equation}
            \derive
                {\typesTo{\Gamma}{t_0}{\listT{T}}}
                {\typesTo{\Gamma}{\listHead{t_0}}{\tx{Option}\ T}}
        \end{equation}
        where ${t = \listHead{t_0}}$\\
        If $t_0$ is a value, the only reduction rule that applies is either \tsc{E-Head1} or \tsc{E-Head2}.\\
        If \tsc{E-Head1} was used, we have the following reduction:
        \begin{equation}
            \derive{}{\listHead{\concat{v_0}{v_0'}} \reduce{\tau} \curl{\tx{some} = v_0}}
        \end{equation}
        where $t_0 = \concat{v_0}{v_0'}$\\
        Now, from \cref{typing-inversion} on $\typesTo{\Gamma}{t_0}{\listT{T}}$, we have that
        ${\typesTo{\Gamma}{v_0}{T}}$ and ${\typesTo{\Gamma}{v_0'}{\listT{T}}}$.
        From \tsc{T-Rec} we have that ${\typesTo{\Gamma}{\curl{\tx{some} = v_0}}{\curl{\types{\tx{some}}{T}}}}$.
        From \tsc{ST-Refl}, \tsc{ST-$\cup$R} and \tsc{T-Sub}, we also have ${\typesTo{\Gamma}{\curl{\tx{some} = v_0}}{\tx{Option}\ T}}$,
        which gives us the desired result.\\
        If \tsc{E-Head2} was used, we have the following reduction:
        \begin{equation}
            \derive{}{\listHead{\nil} \reduce{\tau} \curl{\tx{none} = \unit}}
        \end{equation}
        where $t_0 = \nil$\\
        From \tsc{T-Val}, \tsc{ST-Val} and \tsc{Type-Unit} we have ${\typesTo{\Gamma}{\unit}{\tx{Unit}}}$.
        From \tsc{T-Rec}, ${\typesTo{\Gamma}{\curl{\tx{none} = \unit}}{\curl{\types{\tx{none}}{\tx{Unit}}}}}$.
        From \tsc{ST-Refl}, \tsc{ST-$\cup$R} and \tsc{T-Sub}, we have ${\typesTo{\Gamma}{\curl{\tx{none} = \unit}}{\tx{Option}\ T}}$,
        which is the desired result.\\
        Otherwise, if $t_0$ is not a value, we must be able to reduce it further, which can only be done using the context reduction rule \tsc{E-$\mbC$}:
        \begin{equation}
            \derive
                {t'' \reduce{\alpha} t'''}
                {\mbC[t''] \reduce{\alpha} \mbC[t''']}
        \end{equation}
        where $t'' = t_0 \quad \mbC[t''] = t \quad \mbC[t'''] = t' \quad \mbC = \listHead{\hole}$\\
        By the I.H., $\typesTo{\Gamma}{t'''}{[T]}$. Then, we can show by \tsc{T-Head} that:
        \begin{equation}
            \derive
                {\typesTo{\Gamma}{t'''}{[T]}}
                {\typesTo{\Gamma}{t'}{T}}
        \end{equation}
    \item \tsc{T-Tail}\\
        Analogous to the case for \tsc{T-Head}.
    \item \tsc{T-Let}
        \begin{equation}
            \derive
                {\typesTo{\Gamma}{t_0}{T_0} \quad \typesTo{\Gamma, x : T_0}{t_1}{T_1}}
                {\typesTo{\Gamma}{\letExp{x}{t_0} t_1}{T_1}} \label{T-let-1:term-prog}
        \end{equation}
        where $t = \letExp{x}{t_0} t_1 \quad T = T_1$\\
        If $t_0$ is a value, then the only applicable reduction rule is \tsc{E-Let}:
        \begin{equation}
            \derive{}{\letExp{x}{t_0} t_1 \reduce{\tau} \subs{t_1}{x}{t_0}}
        \end{equation}
        From \cref{T-let-1:term-prog}, we have by \cref{substitution} that $\typesTo{\Gamma}{\subs{t_1}{x}{t_0}}{T_1}$,
        which is what we want to show.\\
        Otherwise, we must be able to reduce $t_0$ further, which can only be done using the context reduction rule \tsc{E-$\mbC$}:
        \begin{equation}
            \derive
                {t'' \reduce{\alpha} t'''}
                {\mbC[t''] \reduce{\alpha} \mbC[t''']}
        \end{equation}
        where $t'' = t_0 \quad \mbC[t''] = t \quad \mbC[t'''] = t' \quad \mbC = \letExp{x}{\hole} t_1$\\
        By the I.H., $\typesTo{\Gamma}{t'''}{T_0}$. Then, we can show by \tsc{T-Let} that
        \begin{equation}
            \derive
                {\typesTo{\Gamma}{t'''}{T_0} \quad \typesTo{\Gamma, x : T_0}{t_1}{T_1}}
                {\typesTo{\Gamma}{t'}{T_1}}
        \end{equation}
    \item \tsc{T-App}
        \begin{equation}
            \derive
                {\typesTo{\Gamma}{t_1}{T_1 \rightarrow T_2} \quad \typesTo{\Gamma}{t_2}{T_1}}
                {\typesTo{\Gamma}{t_1\ t_2}{T_2}}
        \end{equation}
        where $t = t_1\ t_2 \quad T = T_2$\\
        We first consider the case where both $t_1$ and $t_2$ are values.
        Then, the only applicable reduction rule is \tsc{E-App}:
        \begin{equation}
            \derive{}{(\abs{x}{T_1'} t_0)\ t_2 \reduce{\tau} \subs{t_0}{x}{t_2}}
        \end{equation}
        where $t_1 = \abs{x}{T_1'} t_0$\\
        From \cref{typing-inversion} on ${\typesTo{\Gamma}{\abs{x}{T_1'} t_0}{T_1 \rightarrow T_2}}$ we have that
        ${\subtypesTo{\Gamma}{T_1}{T_1'}}$ and ${\typesTo{\Gamma, \types{x}{T_1'}}{t_0}{T_2}}$.
        First from \tsc{T-Sub} we have ${\typesTo{\Gamma}{t_2}{T_1'}}$.
        Now by \cref{substitution} we have that ${\typesTo{\Gamma}{\subs{t_0}{x}{t_2}}{T_2}}$, which is what we wanted to show.\\
        Otherwise, either $t_1$ or $t_2$ must be able to reduce further. Here, we could only have used the context reduction rule \tsc{E-$\mbC$}:
        \begin{equation}
            \derive
                {t'' \reduce{\alpha} t'''}
                {\mbC[t''] \reduce{\alpha} \mbC[t''']}
        \end{equation}
        where one of the following assignments applies:
        \begin{align}
            t'' = t_1 \quad \mbC[t''] = t \quad \mbC[t'''] = t' \quad \mbC = \hole\ t_2 \label{T-app-1:term-prog}\\
            t'' = t_2 \quad \mbC[t''] = t \quad \mbC[t'''] = t' \quad \mbC = (\abs{x}{T_1'} t_0)\ \hole \label{T-app-2:term-prog}
        \end{align}
        If \cref{T-app-1:term-prog} applies, we have by the I.H. that $\typesTo{\Gamma}{t'''}{T_1 \rightarrow T_2}$.
        Then, we can show by \tsc{T-App} that:
        \begin{equation}
            \derive
                {\typesTo{\Gamma}{t'''}{T_1 \rightarrow T_2} \quad \typesTo{\Gamma}{t_2}{T_1}}
                {\typesTo{\Gamma}{t'}{T_2}}
        \end{equation}
        Otherwise, if \cref{T-app-2:term-prog} applies, we have by the I.H. that $\typesTo{\Gamma}{t'''}{T_1}$.
        Then, we can again show by \tsc{T-App} that:
        \begin{equation}
            \derive
                {\typesTo{\Gamma}{t'''}{T_1 \rightarrow T_2} \quad \typesTo{\Gamma}{t_2}{T_1}}
                {\typesTo{\Gamma}{t'''\ t_2}{T_2}}
        \end{equation}
    \item \tsc{T-TApp}
        \begin{equation}
            \derive
                {\typesTo{\Gamma}{t_1}{\quant{X}{T_1} T_0} \quad \subtypesTo{\Gamma}{T_2}{T_1}}
                {\typesTo{\Gamma}{t_1\ T_2}{\subs{T_0}{X}{T_2}}}
        \end{equation}
        where $t = t_1\ T_2 \quad T = \subs{T_0}{X}{T_2}$\\
        If $t_1$ is a value, the only reduction rule that could have been used is \tsc{E-TApp}:
        \begin{equation}
            \derive{}{(\tabs{X}{T_1'} t_0)\ T_2 \rightarrow \subs{t_0}{X}{T_2}}
        \end{equation}
        where $t_1 = \tabs{X}{T_1'} t_0$\\
        From \cref{typing-inversion} on $\typesTo{\Gamma}{\tabs{X}{T_1'} t_0}{\quant{X}{T_1} T_0}$ we have that
        $T_1 = T_1'$ and $\typesTo{\Gamma, \subtypes{X}{T_1'}}{t_0}{T_0}$.
        By \cref{substitution}, we have ${\typesTo{\subs{\Gamma}{X}{T_2}}{\subs{t_0}{X}{T_2}}{\subs{T_0}{X}{T_2}}}$.
        Since $X$ cannot be in $\Gamma$, we have that $\typesTo{\Gamma}{\subs{t_0}{X}{T_2}}{\subs{T_0}{X}{T_2}}$, which is the desired result.\\
        Otherwise, if $t_1$ is not a value, it must be able to reduce further. The only applicable rule is \tsc{E-$\mbC$}:
        \begin{equation}
            \derive
                {t'' \reduce{\alpha} t'''}
                {\mbC[t''] \reduce{\alpha} \mbC[t''']}
        \end{equation}
        where $t'' = t_1 \quad \mbC[t''] = t \quad \mbC[t'''] = t' \quad \mbC = \hole\ T_2$\\
        We have by the I.H. that ${\typesTo{\Gamma}{t'''}{\quant{X}{T_1} T_0}}$, then we can show by \tsc{T-TApp} that
        \begin{equation}
            \derive
                {\typesTo{\Gamma}{t'''}{\quant{X}{T_1} T_0} \quad \subtypesTo{\Gamma}{T_2}{T_1}}
                {\typesTo{\Gamma}{t'}{\subs{T_0}{X}{T_2}}}
        \end{equation}
    \item \tsc{T-Field}
        \begin{equation}
            \derive
                {\typesTo{\Gamma}{t_r}{\curl{f_i : T_i}_{i \in I}} \quad k \in I}
                {\typesTo{\Gamma}{t_r.f_k}{T_k}}
        \end{equation}
        where $t = t_r.f_k \quad T = T_k$\\
        If $t_r$ is a value, then the only applicable reduction rule is \tsc{E-Field}:
        \begin{equation}
            \derive
                {t_r = \curl{f_i = v_i}_{i \in I} \quad k \in I}
                {t_r.f_k \reduce{\tau} v_k}
        \end{equation}
        From \cref{typing-inversion}, we have that $\typesTo{\Gamma}{v_k}{T_k}$, which is what we wanted to show.\\
        Otherwise, if $t_r$ is not a value, then the only rule that could have been used is \tsc{E-$\mbC$}:
        \begin{equation}
            \derive
                {t'' \reduce{\alpha} t'''}
                {\mbC[t''] \reduce{\alpha} \mbC[t''']}
        \end{equation}
        where $t'' = t_r \quad \mbC[t''] = t \quad \mbC[t'''] = t' \quad \mbC = \hole.f_k$\\
        By the I.H., $\typesTo{\Gamma}{t'''}{\curl{f_i : T_i}_{i \in I}}$. Then we have from \tsc{T-Field} that
        \begin{equation}
            \derive
                {\typesTo{\Gamma}{t'''}{\curl{f_i : T_i}_{i \in I}} \quad k \in I}
                {\typesTo{\Gamma}{t'}{T_k}}
        \end{equation}
    \item \tsc{T-Match}
        \begin{equation}
            \derive
                {\typesTo{\Gamma}{t_s}{T_s} \quad \forall i \in I: \typesTo{\Gamma, x_i:T_i}{t_i}{T_i'} \quad \subtypesTo{\Gamma}{T_s}{\cup_{i \in I} T_i}}
                {\typesTo{\Gamma}{t_s \ttt{ match }\curl{x_i : T_i \Rightarrow t_i}_{i \in I}}{T_s \ttt{ match } \curl{T_i \Rightarrow T_i'}_{i \in I}}}
        \end{equation}
        where ${t = t_s \ttt{ match }\curl{x_i : T_i \Rightarrow t_i}_{i \in I} \quad T = T_s \ttt{ match } \curl{T_i \Rightarrow T_i'}_{i \in I}}$\\
        For convenience, we enumerate each judgement from the typing derivation (omitting the subtyping derivation as it is unnecessary):
        \begin{align}
            &\typesTo{\Gamma}{t_s}{T_s\ttt{ match }\curl{T_i \Rightarrow T_i'}_{i \in I}} \label{T-eq-match:pres-match}\\
            &\typesTo{\Gamma}{t_s}{T_s} \label{v-type-Ts:pres-match} \\
            &\forall i \in I : (\typesTo{\Gamma, x_i : T_i}{t_i}{T_i'}) \label{all-ti-type-TiP:pres-match}
        \end{align}
        If $t_s$ is a value $v_s$, then the only applicable reduction rule is \tsc{E-Match}:
        \begin{equation}
            \derive
                {k \in I \quad \subtypesTo{\emptyset}{\singleton{v_s}}{T_k} \quad \forall j \in I: j < k \Rightarrow \subtypesTo{\emptyset\not}{\singleton{v_s}}{T_j}}
                {v_s \ttt{ match }\curl{x_i:T_i \Rightarrow t_i}_{i \in I} \reduce{\tau} \subs{t_k}{x_k}{v_s}}
        \end{equation}
        First from the premise of \tsc{E-Match}, we have that
        \begin{align}
            \subtypesTo{\Gamma}{\underline{v_s}}{T_k} \label{v-sub-Tk:pres-match}
        \end{align}
        Then, from \cref{minimal-types} and \cref{v-type-Ts:pres-match}, we have
        \begin{align}
            \subtypesTo{\Gamma}{\underline{v_s}}{T_s} \label{v-sub-Ts:pres-match}
        \end{align}
        Suppose that ${T' = \underline{v_s}\ttt{ match }\curl{T_i \Rightarrow T_i'}_{i \in I}}$.
        Now, from \cref{v-sub-Tk:pres-match} and \cref{disjoint-value} with the premises of \tsc{E-Match},
        we can use \tsc{ST-Match1} to conclude
        \begin{align}
            \subtypesTo{\Gamma}{T_k'}{T'} \label{TkP-sub-TP:pres-match}
        \end{align}
        We can then use \cref{v-sub-Ts:pres-match} with \tsc{ST-Refl} and \tsc{ST-Match2} to conclude
        \begin{align}
            \subtypesTo{\Gamma}{T'}{T_s\ttt{ match }\curl{T_i \Rightarrow T_i'}_{i \in I}} \label{TP-sub-match:pres-match}
        \end{align}
        Which means by \tsc{ST-Trans} and \cref{T-eq-match:pres-match}
        \begin{align}
            \subtypesTo{\Gamma}{T'}{T} \label{TP-sub-T:pres-match}
        \end{align}
        Also by \tsc{ST-Trans} with \cref{TkP-sub-TP:pres-match}:
        \begin{align}
            \subtypesTo{\Gamma}{T_k'}{T} \label{TkP-sub-T:pres-match}
        \end{align}
        Now, from \cref{v-sub-Tk:pres-match} with \tsc{T-Val} and \tsc{T-Sub}, we have that
        \begin{align}
            \typesTo{\Gamma}{v_s}{T_k}
        \end{align}
        Then, we can take $k \in I$ and \cref{all-ti-type-TiP:pres-match} such that ${\typesTo{x_k : T_k}{t_k}{T_k'}}$. Together with \cref{substitution} we can show
        \begin{align}
            \typesTo{\Gamma}{\subs{t_k}{x_k}{v_s}}{T_k'} \label{tksub-type-TkP:pres-match}
        \end{align}
        By applying \tsc{T-Sub} to \cref{tksub-type-TkP:pres-match} and \cref{TkP-sub-T:pres-match}, we can finally show that
        \begin{align}
            \typesTo{\Gamma}{\subs{t_k}{x_k}{v_s}}{T}
        \end{align}
        Otherwise, if $t_s$ is not a value, the only applicable reduction rule is \tsc{E-$\mbC$}:
        \begin{equation}
            \derive
                {t'' \reduce{\alpha} t'''}
                {\mbC[t''] \reduce{\alpha} \mbC[t''']}
        \end{equation}
        where ${t'' = t_s \quad \mbC[t''] = t \quad \mbC[t'''] = t' \quad \mbC = \match{\hole}{x_i : T_i \Rightarrow t_i}{i \in I}}$\\
        By the I.H., we have that $\typesTo{\Gamma}{t'''}{T_s}$. Then we have from \tsc{T-Match} that:
        \begin{equation}
            \derive
                {\typesTo{\Gamma}{t'''}{T_s} \quad \forall i \in I: \typesTo{\Gamma, x_i:T_i}{t_i}{T_i'}}
                {\typesTo{\Gamma}{t'}{T_s \ttt{ match } \curl{T_i \Rightarrow T_i'}_{i \in I}}}
        \end{equation}
    \item \tsc{T-Sub}
        \begin{equation}
            \derive
                {\typesTo{\Gamma}{t}{T'} \quad \subtypesTo{\Gamma}{T'}{T}}
                {\typesTo{\Gamma}{t}{T}}
        \end{equation}
        We can apply the I.H. on $\typesTo{\Gamma}{t}{T'}$ and get that $\typesTo{\Gamma}{t'}{T'}$.
        Then, from \tsc{T-Sub}, we have $\typesTo{\Gamma}{t'}{T}$.
    \item \tsc{T-OpC}
        \begin{equation}
            \derive
                {\typesTo{\Gamma}{a}{\tx{ServerRef}[T_m,T_a,T_p]}}
                {\typesTo{\Gamma}{\ttt{Connect}(a)}{\tx{Chan}[T_m,T_a,T_p]}}
        \end{equation}
        where $t = \ttt{Connect}(a) \quad T = \tx{Chan}[T_m,T_a,T_p]$\\
        $t$ is not a value, so there must be a reduction. The only applicable rule is \tsc{E-Connect}:
        \begin{equation}
            \derive
                {a \in \tx{ServerRef}[T_m,T_a,T_p] \quad s \in \tx{Chan}[T_m,T_a,T_p]}
                {\ttt{Connect}(a) \reduceTwo{\tx{connect}(a)}{s} s}
        \end{equation}
        where $t' = s$\\
        From \tsc{T-Val}, \tsc{ST-Val} and \tsc{T-Sub}, we have that ${\typesTo{\Gamma}{t'}{\tx{Chan}[T_m,T_a,T_p]}}$.
    \item \tsc{T-OpR}
        \begin{equation}
            \derive
                {\typesTo{\Gamma}{s}{\tx{Chan}[T_m,T_a,T_p]} \quad \typesTo{\Gamma}{t}{\PPPPEntity\ T_m\ T_a\ T_p\ X_n\ X_a}}
                {\typesTo{\Gamma}{\ttt{Read}(s,t)}{[\PPPPEntity\ T_m\ T_a\ T_p\ X_n\ X_a]}}
        \end{equation}
        where $t = \ttt{Read}(s,t) \quad T = [\PPPPEntity\ T_m\ T_a\ T_p\ X_n\ X_a]$\\
        $t$ is not a value, so there must be a reduction. The only applicable rule is \tsc{E-Read}:
        \begin{equation}
            \derive
                {s \in \tx{Chan}[T_m,T_a,T_p] \quad t \in \PPPPEntity\ T_m\ T_a\ T_p\ X_n\ X_a}
                {\ttt{Read}(s,t) \reduceTwo{\tx{read}(s,t)}{r} r}
        \end{equation}
        where $t' = r$\\
        From \tsc{T-Val}, \tsc{ST-Val} and \tsc{T-Sub}, we have that ${\typesTo{\Gamma}{t'}{\tx{Chan}[T_m,T_a,T_p]}}$.
    \item \tsc{T-OpI}
        \begin{equation}
            \derive
                {\typesTo{\Gamma}{s}{\tx{Chan}[T_m,T_a,T_p]} \quad \typesTo{\Gamma}{t}{\PPPPEntity\ T_m\ T_a\ T_p\ X_n\ X_a}}
                {\typesTo{\Gamma}{\ttt{Insert}(s,t)}{\tx{Bool}}}
        \end{equation}
        where $t = \ttt{Insert}(s,t) \quad T = \tx{Bool}$\\
        $t$ is not a value, so there must be a reduction. The only applicable rule is \tsc{E-Insert}:
        \begin{equation}
            \derive
                {b \in \tx{Bool}}
                {\ttt{Insert}(s,t) \reduceTwo{\tx{insert}(s,t)}{b} b}
        \end{equation}
        where $t' = b$\\
        From \tsc{T-Val}, \tsc{ST-Val} and \tsc{T-Sub}, we have that ${\typesTo{\Gamma}{t'}{\tx{Bool}}}$.
    \item \tsc{T-OpM}, \tsc{T-OpD}\\
        Analogous to \tsc{T-OpI}.
\end{itemize}
\end{proof}

\begin{theorem}[Type preservation with one client and one server]
    \label{theorem:preservation-one-client-server}
    If ${\typesTo{\Gamma}{t}{T}}$ and ${S | t \rightarrow S' | t'}$, then ${\typesTo{\Gamma}{t'}{T}}$
\end{theorem}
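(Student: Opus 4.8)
The plan is to reduce this statement to the already-established \Cref{term-pres} (type preservation for \ourFormalLang terms in isolation), by observing that the server component of a network step never interferes with the typing of the client term. So the proof is a short case analysis on how the network transition $S \mid t \reduce{\tau} S' \mid t'$ was derived.

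First I would case-split on that derivation according to the network semantics of \Cref{def:p4runtime-network-semantics}, i.e.~rules \textsc{Net-$\alpha$}, \textsc{Net-Comm}, and their symmetric variants. Since the overall label is $\tau$, the \textsc{Net-$\alpha$} case forces either $t \reduce{\tau} t'$ with $S' = S$, or $S \reduce{\tau} S'$ with $t' = t$; the latter cannot occur, because every server transition rule in \Cref{fig:p4runtime-server-semantics} emits a non-$\tau$ co-label, so servers have no autonomous $\tau$-steps. In the \textsc{Net-Comm} case there is a non-$\tau$ label $\alpha$ (one of $\reduceTwoLabel{\tx{connect}(a)}{s}$, $\reduceTwoLabel{\tx{read}(s,v)}{v'}$, $\reduceTwoLabel{\tx{insert}(s,v)}{v'}$, $\reduceTwoLabel{\tx{modify}(s,v)}{v'}$, $\reduceTwoLabel{\tx{delete}(s,v)}{v'}$) such that $t \reduce{\alpha} t'$ and $S \reduce{\dual{\alpha}} S'$ (or the symmetric pairing); crucially, the client-side reductum $t'$ is determined solely by the client transition $t \reduce{\alpha} t'$, since the synchronisation performs no extra rewriting on the client term.

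Hence in every non-vacuous case we have $t \reduce{\alpha} t'$ for some label $\alpha$ (possibly $\tau$), and otherwise $t' = t$, in which case the conclusion is immediate. When $t \reduce{\alpha} t'$, I would invoke \Cref{term-pres}, whose proof already covers the P4Runtime operation rules \textsc{E-Connect}, \textsc{E-Read}, \textsc{E-Insert}, \textsc{E-Modify}, and \textsc{E-Delete} alongside the internal-computation rules, to conclude $\typesTo{\Gamma}{t'}{T}$.

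This argument is essentially bookkeeping, and I do not expect a genuine obstacle: the substantive typing work was already carried out in \Cref{term-pres}. The only points that require care are (i) ruling out a spurious independent $\tau$-step of the server, and (ii) checking that under \textsc{Net-Comm} the client term evolves exactly as prescribed by its own LTS — so that \Cref{term-pres} applies verbatim — which holds because the conformance side-conditions in the \textsc{Sv-*} rules constrain only the server, not the shape of the client's transition.
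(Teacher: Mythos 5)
Your proposal is correct and follows essentially the same route as the paper: a case split on the derivation of the network step (\textsc{Net-$\alpha$} vs.\ \textsc{Net-Comm}), in each case extracting a client-side transition $t \reduce{\alpha} t'$ and concluding by \Cref{term-pres}. Your extra observations --- that servers have no autonomous $\tau$-steps and that \textsc{Net-Comm} does not rewrite the client term beyond its own LTS step --- are sound refinements of details the paper leaves implicit.
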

\begin{proof}
The proof is by cases on the derivation of ${t | S \rightarrow t' | S'}$:
\begin{itemize}
    \item \tsc{Net-$\alpha$}
        \begin{equation}
            \derive
                {t \reduce{\tau} t'}
                {S | t \rightarrow S | t'}
        \end{equation}
        where $S' = S$\\
        Immediate by applying \cref{term-pres}.
    \item \tsc{Net-Comm}
        \begin{equation}
            \derive
                {S \reduce{\dual{\alpha}} S' \quad t \reduce{\alpha} t'}
                {S | t \rightarrow S' | t'}
        \end{equation}
        Immediate by applying \cref{term-pres}.
\end{itemize}
\end{proof}

\begin{proposition}[Network semantics and composition]
  \label{lem:network-semantics-comp}
  Assume $N \reduce{\beta} N'$, where $\beta$ renges over $\alpha$ (from
  \Cref{def:p4runtime-client-semantics}) and $\dual{\alpha}$ (from
  \Cref{def:p4runtime-server-semantics}).  Then, we have either:
  \begin{enumerate}
  \item\label{item:network-semantics-comp-tau}
    $\beta = \tau$, and either:
    \begin{itemize}
      \item $\exists t, t', N_0$ such that $N \equiv t \mid N_0$ and $t \reduce{\tau} t'$ and $N' \equiv t' \mid N_0$, or
      \item $\exists t, S, t', S', N_0$ such that $N \equiv t \mid S \mid N_0$ and $t \mid S \reduce{\tau} t' \mid S'$ and $N' \equiv t' \mid S' \mid N_0$;
    \end{itemize}
  \item \label{item:network-semantics-comp-not-tau}
    $\beta \neq \tau$, and either:
    \begin{itemize}
       \item $\exists t, t', N_0$ such that $N \equiv t \mid N_0$ and $t \reduce{\beta} t'$ and $N' \equiv t' \mid N_0$, or
       \item $\exists S, S', N_0$ such that $N \equiv S \mid N_0$ and $S \reduce{\beta} S'$ and $N' \equiv t' \mid S' \mid N_0$;
    \end{itemize}
  \end{enumerate}
\end{proposition}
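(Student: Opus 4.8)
The plan is to argue by induction on the derivation of $N \reduce{\beta} N'$ under the network semantics of \Cref{def:p4runtime-network-semantics}. Such a derivation is built from two base cases --- a step $t \reduce{\beta} t'$ of the client LTS on a single-client network, and a step $S \reduce{\beta} S'$ of the server LTS on a single-server network (\Cref{fig:p4runtime-client-semantics,fig:p4runtime-server-semantics}) --- and the two compositional rules \textsc{Net-$\alpha$} and \textsc{Net-Comm}, plus their symmetric variants. It is convenient to allow the residual network $N_0$ in the statement to be empty (so ``$N \equiv t \mid N_0$'' reads as ``$N \equiv t$'' when $N_0$ is empty); with this reading the base cases are immediate: if $N = t$ and $t \reduce{\beta} t'$ then $N' = t'$, giving case~\ref{item:network-semantics-comp-tau} (first bullet) when $\beta = \tau$ and case~\ref{item:network-semantics-comp-not-tau} (first bullet) otherwise; and if $N = S$ and $S \reduce{\beta} S'$, then --- since by inspection of \Cref{fig:p4runtime-server-semantics} every server step carries a barred label $\dual{\alpha} \neq \tau$ --- we land in case~\ref{item:network-semantics-comp-not-tau} (second bullet).

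For \textsc{Net-$\alpha$} we have $N = N_1 \mid N_2$, $N' = N_1' \mid N_2$, with a strictly shorter derivation of $N_1 \reduce{\beta} N_1'$ (the symmetric rule reduces to this case after first using commutativity of $\equiv$). The induction hypothesis gives a decomposition $N_1 \equiv D \mid M$, $N_1' \equiv D' \mid M$, where $D$ and $D'$ are the ``active'' parts of one of the four sub-cases --- a client $t$ with $t \reduce{\tau} t'$, a client-server pair with $t \mid S \reduce{\tau} t' \mid S'$, a client $t$ with $t \reduce{\beta} t'$, or a server $S$ with $S \reduce{\beta} S'$. Since $\equiv$ is a congruence satisfying the associativity axiom (\Cref{def:network-congruence}), $N \equiv D \mid (M \mid N_2)$ and $N' \equiv D' \mid (M \mid N_2)$, so the same sub-case holds for $N$, $N'$ with residual $M \mid N_2$ and the transition on $D$ unchanged.

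The substantive case is \textsc{Net-Comm}: here $\beta = \tau$, $N = N_1 \mid N_2$, $N' = N_1' \mid N_2'$, and there are strictly shorter derivations of $N_1 \reduce{\dual{\alpha}} N_1'$ and $N_2 \reduce{\alpha} N_2'$ for some non-$\tau$ action label $\alpha$. Here I would first record two elementary observations: by inspection of \Cref{fig:p4runtime-client-semantics} no client term ever fires a barred label, and by inspection of \Cref{fig:p4runtime-server-semantics} every server step carries a barred label. Applying the induction hypothesis to $N_1 \reduce{\dual{\alpha}} N_1'$ --- which falls under case~\ref{item:network-semantics-comp-not-tau}, as $\dual{\alpha} \neq \tau$ --- and discarding the client alternative (a client cannot fire $\dual{\alpha}$), we get $N_1 \equiv S \mid M_1$, $S \reduce{\dual{\alpha}} S'$, $N_1' \equiv S' \mid M_1$; symmetrically, applying it to $N_2 \reduce{\alpha} N_2'$ and discarding the server alternative (a server cannot fire the un-barred $\alpha$) gives $N_2 \equiv t \mid M_2$, $t \reduce{\alpha} t'$, $N_2' \equiv t' \mid M_2$. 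Then \textsc{Net-Comm} applied to $t$ and $S$ yields $t \mid S \reduce{\tau} t' \mid S'$, and congruence with the commutativity and associativity axioms gives $N \equiv t \mid S \mid (M_1 \mid M_2)$ and $N' \equiv t' \mid S' \mid (M_1 \mid M_2)$ --- precisely case~\ref{item:network-semantics-comp-tau} (second bullet). The symmetric variant of \textsc{Net-Comm} is handled identically.

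The main obstacle is \textsc{Net-Comm}: it is the only case where one must \emph{choose} the correct disjunct of the induction hypothesis, and the choice rests entirely on the separation of label classes (barred labels come only from servers, un-barred ones only from clients). Everything else is routine bookkeeping with the congruence axioms --- note in particular that no step above requires $\reduce{}$ to be compatible with $\equiv$; only the congruence and the two axioms of \Cref{def:network-congruence} are used. Two minor points warrant care: the convention that $N_0$ may be empty (which could instead be handled by adding a unit network, or by phrasing the conclusion as ``$N \equiv t$ or $N \equiv t \mid N_0$''), and the implicit assumption that the label $\alpha$ synchronised by \textsc{Net-Comm} is a genuine I/O label and not $\tau$, so that the two residuals cannot both change simultaneously; I would also read the last clause of case~\ref{item:network-semantics-comp-not-tau} as ``$N' \equiv S' \mid N_0$''.
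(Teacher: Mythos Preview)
Your proposal is correct and follows exactly the approach the paper indicates: induction on the derivation of $N \reduce{\beta} N'$ according to \Cref{def:p4runtime-network-semantics}, together with \Cref{def:network-congruence}. The paper's own proof is a one-line sketch (``By induction on the derivation of the transition \ldots and by \Cref{def:network-congruence}'', with an author note that more details would not hurt), so your write-up in fact supplies the missing case analysis --- including the key observation for \textsc{Net-Comm} that barred labels arise only from servers and unbarred non-$\tau$ labels only from clients, and the bookkeeping conventions about empty $N_0$ and the apparent typo in the second bullet of case~\ref{item:network-semantics-comp-not-tau}.
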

\begin{proof}
  By induction on the derivation of the transition $N \reduce{\alpha} N'$,
  according to \Cref{def:p4runtime-network-semantics}, and by
  \Cref{def:network-congruence}.
  \notein{Note}{More details would not hurt!}
\end{proof}

\begin{corollary}[Type preservation (one step)]
    \label{theorem:preservation-one-step}
    If\, ${\typesTo{\Gamma}{t}{T}}$ and $t \mid N \reduce{} t' \mid N'$, then ${\typesTo{\Gamma}{t'}{T}}$.
\end{corollary}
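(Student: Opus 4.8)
The plan is to reduce this one-step network transition to the two simpler situations already treated earlier: a standalone term step $t \reduce{} t'$ (handled by \Cref{term-pres}), and a single-client/single-server step $t \mid S \reduce{} t' \mid S'$ (handled by \Cref{theorem:preservation-one-client-server}). Since the top-level network is syntactically $t \mid N$, I would start by case-analysing the last rule used to derive $t \mid N \reduce{} t' \mid N'$, which by \Cref{def:p4runtime-network-semantics} is either \textsc{Net-$\alpha$} (or its symmetric variant) or \textsc{Net-Comm}, in both cases with label $\tau$.

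\textbf{Case \textsc{Net-$\alpha$}.} If the active component is $t$, then the label forces $\alpha = \tau$, so $t \reduce{} t'$ and $N' = N$; \Cref{term-pres} gives ${\typesTo{\Gamma}{t'}{T}}$ directly. If instead the active component is $N$ (the symmetric variant), then $t' = t$ and the goal holds trivially.

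\textbf{Case \textsc{Net-Comm}.} Here $t$ and $N$ synchronise on dual labels. By inspection of the \ourFormalLang client semantics (\Cref{def:p4runtime-client-semantics}), a client term emits only labels of the form $\alpha$ (namely $\tx{connect}$, $\tx{read}$, $\tx{insert}$, $\tx{modify}$, $\tx{delete}$) and never a barred label $\dual{\alpha}$; dually, by \Cref{def:p4runtime-server-semantics} a server emits only $\dual{\alpha}$. Hence in this synchronisation $t$ must sit on the $\alpha$ side, i.e.\ $t \reduce{\alpha} t'$ with $\alpha \neq \tau$, while $N \reduce{\dual{\alpha}} N'$. Applying \Cref{lem:network-semantics-comp} (its non-$\tau$ case) to $N \reduce{\dual{\alpha}} N'$, and using again that only servers can emit $\dual{\alpha}$, I obtain a server $S$, a residual $N_0$, and $S'$ such that $N \equiv S \mid N_0$, $S \reduce{\dual{\alpha}} S'$, and $N' \equiv S' \mid N_0$. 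Then \textsc{Net-Comm} yields $t \mid S \reduce{} t' \mid S'$, and \Cref{theorem:preservation-one-client-server} gives ${\typesTo{\Gamma}{t'}{T}}$.

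The only mildly delicate point is the polarity argument inside the \textsc{Net-Comm} case: that client and server transition labels carry disjoint polarities ($\alpha$ versus $\dual{\alpha}$) and that \textsc{Net-$\alpha$}/\textsc{Net-Comm} propagate the label of the active component, which together pin down that $t$ must be on the $\alpha$ side of any synchronisation (and that a synchronisation internal to $N$ is subsumed by the \textsc{Net-$\alpha$} case with $t' = t$). Everything else is a direct appeal to already-established results, so I do not expect a real obstacle: this corollary is essentially a packaging step, and no induction on the structure of $N$ is needed, since a single case split on the top-level transition already exposes either a purely internal client step or the relevant client–server pair.
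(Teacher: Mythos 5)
Your proposal is correct and follows essentially the same route as the paper, whose own proof of this corollary is just a one-line appeal to \Cref{theorem:preservation-one-client-server} and \Cref{lem:network-semantics-comp}; your case split on \textsc{Net-$\alpha$} versus \textsc{Net-Comm}, the polarity argument pinning the client to the $\alpha$ side, and the extraction of a server via the composition lemma are exactly the details the paper leaves implicit. No gaps.
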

\begin{proof}
  Direct consequence of \Cref{theorem:preservation-one-client-server} and
  \Cref{lem:network-semantics-comp}.
  \notein{Note}{More details would not hurt!}
\end{proof}

\lemPreservation*
\begin{proof}
  By hypothesis we have, for some $n \ge 0$:
  \[
    t \mid N = N_0 \reduce{} N_1 \reduce{} \cdots \reduce{} N_n = t' \mid N'
  \]
  We prove the thesis by induction on $n$.
  \begin{itemize}
    \item Base case $n = 0$. We have $t = t'$, hence the statement holds trivially;
    \item Inductive case $n = m + 1$. We have $t \mid N = N_0 \reduce{} N_1 \reduce{} \cdots \reduce{} N_m = t'' \mid N'' \reduce{} t' \mid N'$.
      By the induction hypothesis, we have ${\typesTo{\Gamma}{t''}{T}}$.
      Therefore, by \Cref{theorem:preservation-one-step}, we conclude ${\typesTo{\Gamma}{t'}{T}}$.
  \end{itemize}
\end{proof}

\begin{lemma}[Channel consistency]
    \label{chan-cons}
    If ${\typeJ{\emptyset}{t}}$, $\typesTo{\emptyset}{t}{T}$ and $t \reduce{\alpha} t'$\\
    and $\forall a \in t : \exists T^a_m, T^a_a, T^a_p : a \in \tx{ServerRef}[T^a_m, T^a_a, T^a_p]$\\
    and $\forall s \in t : \exists T^s_m, T^s_a, T^s_p : s \in \tx{Chan}[T^s_m, T^s_a, T^s_p]$\\\\
    Then $\alpha$ must be one of the following labels:
    \begin{align*}
        \cdot\ & \alpha = "\tau"\\
        & \tx{where } \forall a : a \in t' \Rightarrow a \in t \wedge \forall s : s \in t' \Rightarrow s \in t\\
        \cdot\ & \alpha = "\tx{connect}(a') = s'"\\
        & \tx{where } \forall a \in t' : a \in t \wedge \forall s \in t' : (s \in t \vee s = s') \wedge s' \in \tx{Chan}[T^{a'}_m, T^{a'}_a, T^{a'}_p]\\
        \cdot\ & \alpha = ("\tx{read}(s',v) = r"\ |\ "\tx{insert}(s',v) = b"\ |\ "\tx{update}(s',v) = b"\ |\ "\tx{delete}(s',v) = b")\\
        & \tx{where } \exists X_n, X_a : \typesTo{\emptyset}{v}{\PPPPEntity\ T^{s'}_m\ T^{s'}_a\ T^{s'}_p\ X_n\ X_a}\\
        & \tx{and } s' \in t \wedge \forall a : a \in t \Leftrightarrow a \in t' \wedge \forall s \in t' : s \in t
    \end{align*}
\end{lemma}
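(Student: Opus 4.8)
The plan is to prove \Cref{chan-cons} (Channel consistency) by structural induction on the derivation of the transition $t \reduce{\alpha} t'$, using the rules in \Cref{fig:p4runtime-client-semantics}. The statement essentially tracks two invariants along a single client step: first, that every server address $a$ and every channel $s$ still occurring in the residual term $t'$ either occurred already in $t$ or was freshly introduced by the step (and if fresh, it is a channel of the type dictated by the address it was obtained from); second, that whenever $\alpha$ is a P4Runtime interaction label on some channel $s'$, that channel already occurs in $t$ and the argument value $v$ has the $\PPPPEntity$ type instantiated with the very type parameters $T^{s'}_m, T^{s'}_a, T^{s'}_p$ carried by $s'$.

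First I would dispatch the base cases. For the purely computational rules \textsc{E-App}, \textsc{E-TApp}, \textsc{E-Let}, \textsc{E-Field}, \textsc{E-Head1/2}, \textsc{E-Tail1/2}, and \textsc{E-Match}, the label is $\tau$ and the residual term is a sub-term (or substitution instance) of $t$, so no new addresses or channels can appear; a small lemma about substitution --- that $\subs{t}{x}{v}$ introduces no addresses/channels beyond those in $t$ and $v$, and here $v$ is a sub-term of $t$ --- closes these. For \textsc{E-Connect}, the label is $\reduceTwoLabel{\tx{connect}(a)}{s}$; the premise of the rule says $s \in \Chan{T_m,T_a,T_p}$ where $a \in \ServerRef{T_m,T_a,T_p}$, and by the hypothesis on $t$ the address $a$ has exactly the type $\ServerRef{T^a_m,T^a_a,T^a_p}$, so these type parameters coincide and the ``fresh channel'' clause holds; $t' = s$ contains no address and only the new channel. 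For \textsc{E-Read}, \textsc{E-Insert}, \textsc{E-Modify}, \textsc{E-Delete}: the label mentions a channel $s$ and a value $v$ that appear literally in the redex $\ttt{Read}(s,v)$ etc., so $s \in t$; to get $\typesTo{\emptyset}{v}{\PPPPEntity\ T^{s}_m\ T^{s}_a\ T^{s}_p\ X_n\ X_a}$ I invert the typing derivation of $\typesTo{\emptyset}{t}{T}$ at the operation node (rules \textsc{T-OpR}/\textsc{T-OpI}/\textsc{T-OpM}/\textsc{T-OpD} together with \textsc{T-Sub}), which forces $v$ to have $\PPPPEntity$ applied to the type arguments of the channel's $\Chan{\cdots}$ type --- and those are exactly $T^s_m,T^s_a,T^s_p$ by uniqueness of the channel's type (the channel value carries its subscripts). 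The residual term $t'$ is a value ($v'$, $r$, or $b$), which contains no address and no new channel beyond possibly a ground value, so the address/channel bookkeeping clauses hold.

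The inductive case is the context rule \textsc{E-$\mbC$}: $t = \mbC[t_0] \reduce{\alpha} \mbC[t_0'] = t'$ with $t_0 \reduce{\alpha} t_0'$. Here I would first note that well-typedness of $\mbC[t_0]$ yields well-typedness of the hole subterm $t_0$ at some type $T_0$ (a standard context-decomposition lemma for \ourFormalLang, which also requires that the hole be closed --- true under the Barendregt convention assumed in the appendix), and that the address/channel hypotheses restrict to $t_0$ since addresses and channels occurring in $t_0$ occur in $\mbC[t_0]$. Applying the induction hypothesis to $t_0 \reduce{\alpha} t_0'$ gives the three-way case analysis on $\alpha$ for $t_0$ and $t_0'$; I then lift each conclusion back through the context: addresses/channels in $\mbC[t_0']$ are those in $\mbC$ (a subset of those in $\mbC[t_0]$, hence in $t$) together with those in $t_0'$, and the fresh-channel / value-typing clauses transfer verbatim because $\mbC$ is unchanged.

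The main obstacle I anticipate is the typing-inversion step in the P4Runtime-operation base cases: extracting, from $\typesTo{\emptyset}{\ttt{Read}(s,v)}{T}$ (and analogues), that $v$ is typed by $\PPPPEntity$ applied to precisely the channel's own type parameters. Because \textsc{T-Sub} can intervene and because the channel's type in the typing context could be reached through a chain of subtyping and possibly match-type/type-application reductions, I will need the subtyping-inversion machinery (\Cref{subtyping-inversion}) to conclude that anything the channel value is typed at reduces back to $\Chan{T^s_m,T^s_a,T^s_p}$, and that $\Chan{\cdot,\cdot,\cdot}$ subtyping is essentially invariant in its arguments (there is no structural subtyping rule for $\Chan{\cdots}$ in \Cref{fig:subtyping-rules}, only reflexivity and the match/app equivalences), so the channel's subscript types are rigidly determined. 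Once that rigidity is established the rest is routine bookkeeping over the evaluation-context grammar.
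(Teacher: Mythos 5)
Your proposal matches the paper's proof: both proceed by induction on the derivation of $t \reduce{\alpha} t'$, dispatch the computational rules via the observation that substitution introduces no new addresses or channels, read the fresh channel's type off the premises of \textsc{E-Connect} together with the rigidity of the address's type subscripts, and lift everything through \textsc{E-}$\mbC$ by noting the surrounding context is unchanged. The one divergence is that, for the operation cases, the paper never inverts the typing derivation of $\ttt{Read}(s,v)$ (so it does not need the subtyping-inversion machinery you flag as the main obstacle): it obtains $\typesTo{\emptyset}{v}{\PPPPEntity\ T^{s'}_m\ T^{s'}_a\ T^{s'}_p\ X_n\ X_a}$ directly from the reduction rule's side conditions and the lemma's own hypothesis that every channel occurring in $t$ inhabits $\Chan{T^{s}_m,T^{s}_a,T^{s}_p}$, converting membership into typing via \textsc{T-Val}, \textsc{ST-Val}, and \textsc{T-Sub}.
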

\begin{proof}
The proof is by induction on the derivation of $t \reduce{\alpha} t'$. We consider each case:
\begin{itemize}
    \item \tsc{E-App}
        \begin{equation}
            \derive
                {}
                {(\lambda x : T_1 . t_0) v \reduce{\tau} \subs{t_0}{x}{v}}
        \end{equation}
        where $t = (\lambda x : T_1 . t_0) v \quad t' = \subs{t_0}{x}{v} \quad \alpha = \tau$\\
        $t_0$ contains a subset of the values in $t$, in fact, the only possible change is $v$ subsituting some variable. As such, we have that:
        \begin{align}
            &\forall a : a \in t_0 \cup \curl{v} \Leftrightarrow a \in t\\
        \wedge\ &\forall s : s \in t_0 \cup \curl{v} \Leftrightarrow s \in t
        \end{align}
        The substitution $\subs{t_0}{x}{v}$ only replaces free variables; any existing values are unchanged from $t_0$.
        However, if there are no free variables $x$ in $t_0$, $\subs{t_0}{x}{v}$ will not (necessarily) contain any $v$.
        We can then conclude:
        \begin{align}
            &\forall a : a \in \subs{t_0}{x}{v} \Rightarrow a \in t_0 \cup \curl{v}\\
        \wedge\ &\forall s : s \in \subs{t_0}{x}{v} \Rightarrow s \in t_0 \cup \curl{v}
        \end{align}
        From this we have:
        \begin{equation}
            \forall a : a \in t' \Rightarrow a \in t \wedge \forall s : s \in t' \Rightarrow s \in t
        \end{equation}
        which is what we want to show for $\alpha = \tau$.
    \item \tsc{E-TApp}
        \begin{equation}
            \derive
                {}
                {(\lambda X<:T_1. t_0)\ T_2 \reduce{\tau} \subs{t_0}{X}{T_2}}
        \end{equation}
        where $t = (\lambda X <: T_1. t_0)\ T_2 \quad t' = \subs{t_0}{X}{T_2} \quad \alpha = \tau$\\
        By the structure of $t$, we know that
        \begin{align}
            &\forall a : a \in t_0 \Leftrightarrow a \in t\\
        \wedge\ &\forall s : s \in t_0 \Leftrightarrow s \in t
        \end{align}
        Type substitution does not change the values in $t_0$, so all values remain unchanged:
        \begin{align}
            &\forall a : a \in \subs{t_0}{X}{T_1} \Leftrightarrow a \in t_0\\
        \wedge\ &\forall s : s \in \subs{t_0}{X}{T_1} \Leftrightarrow s \in t_0
        \end{align}
        Now we can directly show that:
        \begin{align}
            &\forall a : a \in t \Leftrightarrow a \in t'\\
        \wedge\ &\forall s : s \in t \Leftrightarrow s \in t'
        \end{align}
        from which we obtain the conclusion we want to show for $\alpha = \tau$.
    \item \tsc{E-Let}
        \begin{equation}
            \derive
                {}
                {\letExp{x}{v} t_0 \reduce{\tau} \subs{t_0}{x}{v}}
        \end{equation}
        where $t = \letExp{x}{v} t \quad \subs{t}{x}{v} \quad \alpha = \tau$\\
        This case is analogous to the one for \tsc{E-App}.
    \item \tsc{E-Field}
        \begin{equation}
            \derive
                {v = \curl{f_i=v_i}_{i \in I} \quad k \in I}
                {v.f_k \reduce{\tau} v_k}
        \end{equation}
        where $t = v.f_k \quad t' = v_k \quad \alpha = \tau$\\
        Since $v_k$ is the value of one of the fields in $v$, and $v$ may have more fields, we have that:
        \begin{align}
            &\forall a : a \in v_k \Rightarrow a \in v\\
        \wedge\ &\forall s : s \in v_k \Rightarrow s \in v
        \end{align}
        Which gives us what we want to show.
    \item \tsc{E-Head}
        \begin{equation}
            \derive{}{\listHead{\concat{v_0}{v_1}} \reduce{\tau} v_0}
        \end{equation}
        where $t = \listHead{\concat{v_0}{v_1}} \quad t' = v_0 \quad \alpha = \tau$\\
        Since the entire list $\concat{v_0}{v_1}$ must have at least as many values as $v_0$, we have:
        \begin{align}
            &\forall a : a \in v_0 \Rightarrow a \in \concat{v_0}{v_1}\\
        \wedge\ &\forall s : s \in v_0 \Rightarrow s \in \concat{v_0}{v_1}
        \end{align}
        Which is what we want to show.
    \item \tsc{E-Tail}
        \begin{equation}
            \derive{}{\listTail{\concat{v_0}{v_1}} \reduce{\tau} v_1}
        \end{equation}
        where $t = \listTail{\concat{v_0}{v_1}} \quad t' = v_1 \quad \alpha = \tau$\\
        Same argument as for \tsc{E-Head}, but with $t' = v_1$ instead.
    \item \tsc{E-Match}
        \begin{equation}
            \derive
                {k \in I \quad \subtypesTo{\emptyset}{\singleton{v}}{T_k} \quad \forall j \in I: j < k \Rightarrow \subtypesTo{\emptyset\not}{\singleton{v}}{T_j}}
                {v \ttt{ match }\curl{x_i:T_i \Rightarrow t_i}_{i \in I} \reduce{\tau} \subs{t_k}{x_k}{v}}
        \end{equation}
        where $t = v \ttt{ match }\curl{x_i:T_i \Rightarrow t_i}_{i \in I} \quad t' = \subs{t_k}{x_k}{v} \quad \alpha = \tau$\\
        $t_k$ is only one of the branches of $t$,
        and the substitution $\subs{t_k}{x_k}{v}$ does not introduce any new variables,
        so we immediately have
        \begin{align}
            &\forall a : a \in \subs{t_k}{x_k}{v} \Rightarrow a \in t\\
        \wedge\ &\forall s : s \in \subs{t_k}{x_k}{v} \Rightarrow s \in t
        \end{align}
        Which is what we wanted to show.
    \item \tsc{E-Connect}
        \begin{equation}
            \derive
                {a' \in \tx{ServerRef}[T_m, T_a, T_p] \quad s' \in \tx{Chan}[T_m, T_a, T_p]}
                {\ttt{Connect}(a') \reduce{\tx{connect}(a')\ =\ s'} s'}
        \end{equation}
        where $t = \ttt{Connect}(a') \quad t' = s' \quad \alpha = "\tx{connect}(a') = s'"$\\
        We want to make the following conclusion:
        \begin{align}
            & \forall a \in t' : a \in t \label{econn-1:chan-cons}\\
            \wedge\ & \forall s \in t' : (s \in t \vee s = s') \label{econn-2:chan-cons}\\
            \wedge\ & s' \in \tx{Chan}[T^{a'}_m, T^{a'}_a, T^{a'}_p] \label{econn-3:chan-cons}
        \end{align}
        \Cref{econn-1:chan-cons} holds vacuously, as there is no $a$ in $t'$.\\
        \Cref{econn-2:chan-cons} holds, as the only $s$ in $t'$ is $s'$.\\
        Now, from the premise of the lemma, we have that
        \begin{equation}
            \forall a \in t : \exists T^a_m, T^a_a, T^a_p : a \in \tx{ServerRef}[T^a_m, T^a_a, T^a_p] \label{econn-prem:chan-cons}
        \end{equation}
        Since $s'$ uses the same internal channel types as $a'$, we must have (from the premises of \tsc{E-Connect})
        that $s' \in \tx{Chan}[T^{a'}_m, T^{a'}_a, T^{a'}_p]$, which shows that \cref{econn-3:chan-cons} holds.
    \item \tsc{E-Read}
        \begin{equation}
            \derive
                {s' \in \tx{Chan}[T_m,T_a,T_p] \quad v \in \PPPPEntity\ T_m\ T_a\ T_p\ X_n\ X_a}
                {\ttt{Read}(s',v) \reduce{\tx{read}(s',v)\ =\ r} r}
        \end{equation}
        where $t = \ttt{Read}(s',v) \quad t' = r \quad {\alpha = "\tx{read}(s',v)\ =\ r"}$\\
        We want to make the following conclusion:
        \begin{align}
            & \typesTo{\emptyset}{v}{\PPPPEntity\ T^{s'}_m\ T^{s'}_a\ T^{s'}_p\ X_n\ X_a} \label{eread-1:chan-cons}\\
            \wedge\ & s' \in t \label{eread-0:chan-cons}\\
            \wedge\ & \forall a : a \in t \Leftrightarrow a \in t' \label{eread-2:chan-cons}\\
            \wedge\ & \forall s \in t' : s \in t \label{eread-3:chan-cons}
        \end{align}
        \Cref{eread-0:chan-cons} is immediate, as $s' \in \ttt{Read}(s', v)$.\\
        From \tsc{T-Val}, \tsc{ST-Val} and \tsc{T-Sub} we have that
        \begin{align}
            &\typesTo{\emptyset}{s'}{\tx{Chan}[T_m, T_a, T_p]} \label{eread-inv-1:chan-cons}\\
            &\typesTo{\emptyset}{v}{\PPPPEntity\ T^{s'}_m\ T^{s'}_a\ T^{s'}_p\ X_n\ X_a} \label{eread-inv-2:chan-cons}
        \end{align}
        Now, from the premise of the lemma, we have that
        \begin{equation}
            \forall s \in t : \exists T^s_m, T^s_a, T^s_p : s \in \tx{Chan}[T^s_m, T^s_a, T^s_p] \label{econn-prem:chan-cons}
        \end{equation}
        So from \cref{eread-inv-1:chan-cons} it must be the case that
        \begin{equation}
            \typesTo{\emptyset}{s'}{\tx{Chan}[T^{s'}_m, T^{s'}_a, T^{s'}_p]}
        \end{equation}
        \Cref{eread-1:chan-cons} is identical to \cref{eread-inv-2:chan-cons}.\\
        Since there are no $a$ in $t$ or $t'$, \cref{eread-2:chan-cons} holds vacuously.\\
        Also, since there are no $s$ in $t'$, \cref{eread-3:chan-cons} also holds vacuously.
    \item \tsc{E-Insert}
        \begin{equation}
            \derive
                {}
                {\ttt{Insert}(s',v) \reduce{\tx{insert}(s',v)\ =\ b} b}
        \end{equation}
        where $t = \ttt{Insert}(s',v) \quad t' = b \quad {\alpha = "\tx{insert}(s',v)\ =\ b"}$\\
        This case is analogous to the one for \tsc{E-Read}.
    \item \tsc{E-Modify}
        \begin{equation}
            \derive
                {}
                {\ttt{Modify}(s',v) \reduce{\tx{modify}(s',v)\ =\ b} b}
        \end{equation}
        where $t = \ttt{Modify}(s',v) \quad t' = b \quad {\alpha = "\tx{modify}(s',v)\ =\ b"}$\\
        This case is analogous to the one for \tsc{E-Read}.
    \item \tsc{E-Delete}
        \begin{equation}
            \derive
                {}
                {\ttt{Delete}(s',v) \reduce{\tx{delete}(s',v)\ =\ b} b}
        \end{equation}
        where $t = \ttt{Delete}(s',v) \quad t' = b \quad {\alpha = "\tx{delete}(s',v)\ =\ b"}$\\
        This case is analogous to the one for \tsc{E-Read}.
    \item \tsc{E-$\mbC$}
        \begin{equation}
            \derive
                {t'' \reduce{\alpha} t'''}
                {\mbC[t''] \reduce{\alpha} \mbC[t''']}
        \end{equation}
        where $t = \mbC[t''] \quad t' = \mbC[t''']$\\
        Since the context $\mbC$ surrounding $t''$ and $t'''$ is the same before and after reduction, we know that all $a$ and $s$ in $\mbC$ are unchanged. Formally,
        \begin{align}
            \forall s : s \in \mbC \Rightarrow s \in \mbC[t''] \wedge s \in \mbC[t'''] \label{ctx-same-s:chan-cons}\\
            \forall a : a \in \mbC \Rightarrow a \in \mbC[t''] \wedge a \in \mbC[t'''] \label{ctx-same-a:chan-cons}
        \end{align}
        By the I.H. on the derivation of ${t'' \reduce{\alpha} t'''}$, we know that $t'''$ satisfies the properties determined by $\alpha$.
        We consider each case of $\alpha$
        \begin{itemize}
            \item $\alpha = "\tau"$\\
                By the I.H. on $t'' \reduce{\alpha} t'''$:
                \begin{align}
                    &\forall s : s \in t''' \Rightarrow s \in t'' \label{ctx-tau-s:chan-cons}\\
                    \wedge\ &\forall a : a \in t''' \Rightarrow a \in t'' \label{ctx-tau-a:chan-cons}
                \end{align}
                From \cref{ctx-tau-s:chan-cons} and \cref{ctx-same-s:chan-cons} we get
                \begin{equation}
                    \forall s : s \in \mbC[t'''] \Rightarrow s \in \mbC[t'']
                \end{equation}
                And likewise with \cref{ctx-tau-a:chan-cons} and \cref{ctx-same-a:chan-cons}:
                \begin{equation}
                    \forall a : a \in \mbC[t'''] \Rightarrow a \in \mbC[t'']
                \end{equation}
                which is what we wanted to show.
            \item $\alpha = "\tx{connect}(a) = s'"$\\
                By the I.H. on $t'' \reduce{\alpha} t'''$:
                \begin{align}
                    \forall s \in t''' : (s \in t'' \vee s = s') \label{ctx-conn-1:chan-cons}\\
                    \wedge\ & s' \in \tx{Chan}[T^a_m, T^a_a, T^a_p] \label{ctx-conn-2:chan-cons}\\
                    \wedge\ & \forall a \in t''' : a \in t'' \label{ctx-conn-3:chan-cons}
                \end{align}
                From \cref{ctx-conn-1:chan-cons} and \cref{ctx-same-s:chan-cons}, we have
                \begin{equation}
                    \forall s \in \mbC[t''']: (s \in \mbC[t''] \vee s = s')
                \end{equation}
                \Cref{ctx-conn-2:chan-cons} is exactly the same as the conclusion we want to show.\\
                From \cref{ctx-conn-3:chan-cons} and \cref{ctx-same-a:chan-cons}, we have
                \begin{equation}
                    \forall a \in \mbC[t'''] : a \in \mbC[t'']
                \end{equation}
                which concludes this case.
            \item ${\alpha = ("\tx{read}(s',v) = r"\ |\ "\tx{insert}(s,v) = b"\ |\ "\tx{update}(s',v) = b"\ |\ "\tx{delete}(s',v) = b")}$
                By the I.H. on $t'' \reduce{\alpha} t'''$:
                \begin{align}
                    & \typesTo{\emptyset}{v}{\PPPPEntity\ T^{s'}_m\ T^{s'}_a\ T^{s'}_p\ X_n\ X_a} \label{ctx-op-1:chan-cons}\\
                    \wedge\ &s' \in t''  \label{ctx-op-0:chan-cons}\\
                    \wedge\ &\forall a : a \in t'' \Leftrightarrow a \in t'''\label{ctx-op-2:chan-cons}\\
                    \wedge\ &\forall s \in t''' : s \in t'' \label{ctx-op-3:chan-cons}
                \end{align}
                Clearly, if $s' \in t''$ from \cref{ctx-op-0:chan-cons} then $s' \in \mbC[t'']$.\\
                \Cref{ctx-op-1:chan-cons} is the same as the desired conclusion.\\
                From \cref{ctx-op-2:chan-cons} and \cref{ctx-same-a:chan-cons}, we have
                \begin{equation}
                    \forall a : a \in \mbC[t''] \Leftrightarrow a \in \mbC[t''']
                \end{equation}
                From \cref{ctx-op-3:chan-cons} and \cref{ctx-same-s:chan-cons}, we have
                \begin{equation}
                    \forall s \in \mbC[t'''] : s \in \mbC[t'']
                \end{equation}
        \end{itemize}
\end{itemize}
\end{proof}

\begin{theorem}[Term progress]
    \label{term-progress}
    If ${\typeJ{\emptyset}{t}}$ and ${\typesTo{\emptyset}{t}{T}}$\\
    and $\forall a \in t : \exists T^a_m, T^a_a, T^a_p : a \in \tx{ServerRef}[T^a_m, T^a_a, T^a_p]$\\
    and $\forall s \in t : \exists T^s_m, T^s_a, T^s_p : s \in \tx{Chan}[T^s_m, T^s_a, T^s_p]$\\\\
    Then, either $t$ is a value or there is a reduction $t \reduce{\alpha} t'$ such that:
    \begin{align*}
        \cdot\ & \alpha = "\tau"\\
        & \tx{where } \forall a : a \in t' \Rightarrow a \in t \wedge \forall s : s \in t' \Rightarrow s \in t\\
        \cdot\ & \alpha = "\tx{connect}(a') = s'"\\
        & \tx{where } \forall a \in t' : a \in t \wedge \forall s \in t' : (s \in t \vee s = s') \wedge s' \in \tx{Chan}[T^{a'}_m, T^{a'}_a, T^{a'}_p]\\
        \cdot\ & \alpha = ("\tx{read}(s',v) = r"\ |\ "\tx{insert}(s',v) = b"\ |\ "\tx{update}(s',v) = b"\ |\ "\tx{delete}(s',v) = b")\\
        & \tx{where } \typesTo{\emptyset}{v}{\PPPPEntity\ T^{s'}_m\ T^{s'}_a\ T^{s'}_p\ X_n\ X_a}\\
        & \tx{and } s' \in t \wedge \forall a : a \in t \Leftrightarrow a \in t' \wedge \forall s \in t' : s \in t
    \end{align*}
\end{theorem}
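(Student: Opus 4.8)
The plan is to prove \Cref{term-progress} by induction on the derivation of $\typesTo{\emptyset}{t}{T}$, establishing in every case the dichotomy ``$t$ is a value, or $t \reduce{\alpha} t'$ for some $\alpha$ and $t'$''. Crucially, the side conditions that the statement attaches to each shape of $\alpha$ need \emph{not} be re-derived inside this induction: as soon as a transition $t \reduce{\alpha} t'$ has been exhibited, the hypotheses $\typesTo{\emptyset}{t}{T}$, $\typeJ{\emptyset}{t}$, and the two well-formedness assumptions on the server addresses and channels occurring in $t$ are exactly the premises of \Cref{chan-cons}, whose conclusion is precisely the disjunction of $\alpha$-shapes listed here. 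So the substance of the argument is the plain ``value-or-reducible'' progress statement, carried out with the canonical-forms facts of \Cref{minimal-types} and the evaluation contexts $\mbC$.

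The value-introduction rules \tsc{T-Val}, \tsc{T-Abs}, \tsc{T-TAbs} make $t$ a value outright; \tsc{T-Var} is vacuous because the typing environment is empty; and \tsc{T-Sub} is immediate from the induction hypothesis. For the structural rules \tsc{T-App}, \tsc{T-TApp}, \tsc{T-Field}, \tsc{T-Head}, \tsc{T-Tail}, \tsc{T-Let}, \tsc{T-Rec}, \tsc{T-List} I would apply the induction hypothesis to the immediate subterms from left to right: if some subterm is not a value it steps, and then $t$ steps by \tsc{E-$\mbC$} with the matching context ($\mbC\,t$, $v\,\mbC$, $\mbC.f$, $\listHead\mbC$, $\listTail\mbC$, the $\ttt{let}$ context, or the record context); otherwise all relevant subterms are values, and \Cref{minimal-types} forces their shape — a value of a function type is a $\lambda$-abstraction, a value of a universally quantified type is a type abstraction, a value of a record type $\curl{f_i:T_i}_{i\in I}$ is a record value exposing every $f_i$, a value of a list type is $\nil$ or a cons — so the corresponding head-reduction rule (\tsc{E-App}, \tsc{E-TApp}, \tsc{E-Field}, \tsc{E-Head1}/\tsc{E-Head2}, \tsc{E-Tail1}/\tsc{E-Tail2}, \tsc{E-Let}) fires; and in the \tsc{T-Rec}/\tsc{T-List} cases with all subterms values, $t$ is itself a value.

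The P4Runtime-operation rules \tsc{T-OpC}, \tsc{T-OpR}, \tsc{T-OpI}, \tsc{T-OpM}, \tsc{T-OpD} deserve a separate remark: syntactically, the arguments of an operation are restricted to the category $w$ of values or variables, so in the closed term $t$ every operation argument is already a closed value — there is no evaluation context descending into an operation, and none is needed. Canonical forms then identify the arguments: for \tsc{T-OpC} a server-address value $a_{T_m,T_a,T_p}$, so \tsc{E-Connect} fires after choosing a fresh channel in $\Chan{T_m,T_a,T_p}$ (such channels always exist); for the other four, a channel value and a value of type $\PPPPEntity\ T_m\ T_a\ T_p\ X_n\ X_a$, so \tsc{E-Read}/\tsc{E-Insert}/\tsc{E-Modify}/\tsc{E-Delete} fires, yielding a result value whose existence is guaranteed because the result types are inhabited ($\tx{Bool}$ by $\ttt{true}$ or $\ttt{false}$, and a list of $\PPPPEntity$-instances at least by $\nil$).

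The main obstacle is the \tsc{T-Match} case with a value scrutinee, $t = \match{v_s}{x_i:T_i\Rightarrow t_i}{i\in I}$, where I must show \tsc{E-Match} applies, i.e. produce a $k\in I$ with $v_s \in T_k$ and $v_s \notin T_j$ for all $j<k$. From $\typesTo{\emptyset}{v_s}{T_s}$ and \Cref{minimal-types} I get $\subtypesTo{\emptyset}{\singleton{v_s}}{T_s}$, and composing with the exhaustiveness premise $\subtypesTo{\emptyset}{T_s}{\cup_{i\in I}T_i}$ via \tsc{ST-Trans} I obtain $\subtypesTo{\emptyset}{\singleton{v_s}}{\cup_{i\in I}T_i}$. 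I would then invoke a small inversion fact — distributing a singleton over a union under subtyping — to conclude $\subtypesTo{\emptyset}{\singleton{v_s}}{T_j}$ for some $j$, hence $v_s\in T_j$ by the coherence between the semantics' assumed membership predicate ``$v\in T$'' and the $\memberOf{\cdot}{\cdot}$ relation / singleton subtyping used in the typing rules; picking $k$ the least such index then gives $v_s\notin T_j$ for $j<k$ by minimality together with \Cref{exclusivity} and \Cref{disjoint-value}. (If the scrutinee is a non-value $t_s$ instead, the induction hypothesis and the match evaluation context dispatch it.) The two delicate points are this singleton-over-union distributivity — not stated as part of \Cref{subtyping-inversion} but routine and analogous to it — and pinning down the exact agreement between the axiomatic ``$v\in T$'' of the operational semantics and the $\memberOf{\cdot}{\cdot}$ relation; once those are fixed, the remaining cases are mechanical.
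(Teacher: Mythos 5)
Your proposal is correct and follows essentially the same route as the paper's proof: induction on the typing derivation to establish ``value or reducible'' (using the evaluation contexts, canonical forms via \Cref{minimal-types}, and the observation that P4Runtime operation arguments are syntactically values in a closed term), then discharging the label-specific side conditions in one stroke via \Cref{chan-cons}. Even the two points you flag as delicate in the \tsc{T-Match} case --- extracting a single disjunct $T_k$ from $\subtypesTo{\emptyset}{\singleton{v_s}}{\cup_{i\in I}T_i}$ by inverting \tsc{ST-$\cup$R}, and identifying the operational predicate $v\in T$ with $\subtypesTo{\emptyset}{\singleton{v}}{T}$ --- are handled at exactly the same level of (in)formality in the paper itself.
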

\begin{proof}
We show the theorem by a proof by induction on the derivation of $\typesTo{\emptyset}{t}{T}$.
\begin{itemize}
    \item \tsc{T-Val}\\
        \begin{align}
            \derive{}{\typesTo{\emptyset}{v}{\singleton{v}}}
        \end{align}
        where $t = v \quad T = \singleton{v}$\\
        $t$ is always a value, so the conclusion is immediate.
    \item \tsc{T-Rec}\\
        \begin{align}
            \derive
                {\forall i \in I : \typesTo{\emptyset}{t_i}{T_i}}
                {\typesTo{\emptyset}{\curl{f_i = t_i}_{i \in I}}{\curl{f_i : T_i}_{i \in I}}}
        \end{align}
        where $t = \curl{f_i = t_i}_{i \in I} \quad T = \curl{f_i : T_i}_{i \in I}$\\
        By the I.H., all $t_i$ are either values or reduce to some $t_i'$.\\
        If all of them are values, $t$ is also a value by definition, and so we have the conclusion.\\
        If at least one of them, say $t_k$ where $k \in I$, is not a value, then by the I.H. there is a
        reduction $t_k \reduce{\alpha} t_k'$.
        We can then apply the context rule \tsc{E-$\mbC$} to get a reduction
        \begin{equation}
            \derive
                {t_k \reduce{\alpha} t_k'}
                {\mbC[t_k] \reduce{\alpha} \mbC[t_k']}
        \end{equation}
        where $t = \mbC[t_k], t' = \mbC[t_k']$ and
        \begin{equation}
            \mbC\ =\ \curl{f_i = \gamma_i}_{i \in I}\tx{ where }\forall i \in I :
            \begin{cases}
                i < k \Rightarrow \gamma_i = v_i\\
                i = k \Rightarrow \gamma_i = \hole\\
                i > k \Rightarrow \gamma_i = t_i
            \end{cases}
        \end{equation}
        Since we have a reduction, we can use \cref{chan-cons} to obtain the desired conclusion.
    \item \tsc{T-List}
        \begin{equation}
            \derive
                {\typesTo{\emptyset}{t_0}{T_0} \quad \typesTo{\emptyset}{t_1}{[T_0]}}
                {\typesTo{\emptyset}{\concat{t_0}{t_1}}{[T_0]}}
        \end{equation}
        where $t = \concat{t_0}{t_1} \quad T = [T_0]$\\
        By the I.H., $t_0$ and $t_1$ are either values or there exists reductions
        $t_0 \reduce{\alpha} t_0'$ and $t_1 \reduce{\alpha} t_1'$ respectively.
        If $t_0$ is not a value, we can apply \tsc{E-$\mbC$} to get a reduction of $t = \mbC[t_0]$, where:
        \begin{equation}
            \mbC\ =\ \concat{\hole}{t_1}
        \end{equation}
        If $t_0$ is a value and $t_1$ is not, we can again use the context rule \tsc{E-$\mbC$} to get a reduction of $t = \mbC[t_1]$, where:
        \begin{equation}
            \mbC\ =\ \concat{t_0}{\hole}
        \end{equation}
        In any reduction case, we can use \cref{chan-cons} to obtain the desired result.\\
        If both are values, $t$ is also a value, and so we again have the desired result.
    \item \tsc{T-Head}
        \begin{equation}
            \derive
                {\typesTo{\emptyset}{t_0}{[T_0]}}
                {\typesTo{\emptyset}{\listHead{t_0}}{T_0}}
        \end{equation}
        where $t = \listHead{t_0} \quad T = T_0$\\
        By the I.H., either $t_0$ is a value or there exists a reduction $t_0 \reduce{\alpha} t_0'$.
        If $t_0$ is not a value, we can apply \tsc{E-$\mbC$} to get a reduction of $t = \mbC[t_0]$, where:
        \begin{equation}
            \mbC = \listHead{\hole}
        \end{equation}
        Otherwise, we can use \tsc{E-Head} to obtain a reduction.
        In both reduction cases, we apply \cref{chan-cons} to obtain the desired result.
    \item \tsc{T-Tail}
        Analogous to the case for \tsc{T-Head}.
    \item \tsc{T-Var}
        \begin{align}
            \derive
                {x : T \in \emptyset}
                {\typesTo{\emptyset}{x}{T}}
        \end{align}
        where $t = x$\\
        Vacuously holds, as $x : T \not\in \emptyset$.
    \item \tsc{T-Let}
        \begin{align}
            \derive
                {\typesTo{\emptyset}{t_0}{T_0} \quad \typesTo{x : T_0}{t_1}{T_1}}
                {\typesTo{\emptyset}{\letExp{x}{t_0} t_1}{T_1}}
        \end{align}
        where $t = \letExp{x}{t_0} t_1 \quad T = T_1$\\
        By the I.H., either $t_0$ is a value or reduces to some $t_0'$.\\
        If $t_0$ is not a value, there is a reduction $t_0 \reduce{\alpha} t_0'$.
        We can apply \tsc{E-$\mbC$} to get a reduction of $t = \mbC[t_0]$, where
        \begin{equation}
            \mbC\ =\ \letExp{x}{\hole} t_1
        \end{equation}
        Otherwise, if $t_0$ is a value, $t$ can be reduced with the \tsc{E-Let} rule.\\
        Since $t$ can always be reduced, we can apply \cref{chan-cons} to obtain the conclusion.
    \item \tsc{T-Abs}
        \begin{align}
            \derive
                {\typesTo{x : T_1}{t_2}{T_2}}
                {\typesTo{\emptyset}{\lambda x : T_1. t_2}{T_1 \rightarrow T_2}}
        \end{align}
        where $t = \lambda x : T_1. t_2 \quad T = T_1 \rightarrow T_2$\\
        $t$ is a value, so the conclusion is immediate.
    \item \tsc{T-App}
        \begin{align}
            \derive
                {\typesTo{\emptyset}{t_1}{T_1 \rightarrow T_2} \quad \typesTo{\emptyset}{t_2}{T_1}}
                {\typesTo{\emptyset}{t_1\ t_2}{T_2}}
        \end{align}
        where $t = t_1\ t_2 \quad T = T_2$\\
        By the I.H., either $t_1$ is a value or reduces to some $t_1'$.\\
        If $t_1$ is not a value, we can apply \tsc{E-$\mbC$} to get a reduction of $t = \mbC[t_1]$, where
        \begin{equation}
            \mbC\ =\ \hole\ t_2
        \end{equation}
        If $t_1$ is a value, it must have the form $v_1 = \lambda x : T_1'. t_0$,
        where ${\subtypesTo{\emptyset}{T_1'}{T_1}}$.
        Then, by the I.H., either $t_2$ is a value or it reduces to some $t_2'$.\\
        If $t_2$ is not a value, we can again apply \tsc{E-$\mbC$} to get a reduction of $t = \mbC[t_2]$, where
        \begin{equation}
            \mbC\ =\ v_1\ \hole
        \end{equation}
        If $t_2$ is a value, $t$ can be reduced with the \tsc{E-App} rule.\\
        Since $t$ can always be reduced, we can apply \cref{chan-cons} to obtain the conclusion.
    \item \tsc{T-TAbs}
        \begin{align}
            \derive
                {\typesTo{X <: T_1}{t_2}{T_2}}
                {\typesTo{\emptyset}{(\lambda X <: T_1. t_2)}{(\forall X <: T_1 . T_2)}}
        \end{align}
        where $t = \lambda X <: T_1. t_2 \quad T = \forall X <: T_1 . T_2$\\
        $t$ is a value, so the conclusion is immediate.
    \item \tsc{T-TApp}
        \begin{align}
            \derive
                {\typesTo{\emptyset}{t_1}{\forall X <: T_1. T_0} \quad \subtypesTo{T_2}{T_1}}
                {\typesTo{\emptyset}{t_1\ T_2}{\subs{T_0}{X}{T_2}}}
        \end{align}
        where $t = t_1\ T_2 \quad T = \subs{T_0}{X}{T_1'}$\\
        By the I.H., either $t_1$ is a value or reduces to some $t_1'$.\\
        If $t_1$ is not a value, we can apply \tsc{E-$\mbC$} to get a reduction of $t = \mbC[t_1]$, where
        \begin{equation}
            \mbC\ =\ \hole\ T_2
        \end{equation}
        If $t_1$ is a value, it must have the form $v_1 = \lambda X <: T_1'. t_0$,
        where ${\subtypesTo{\emptyset}{T_1}{T_1'}}$.
        Then $t$ can be reduced with the \tsc{E-TApp} rule.\\
        Since $t$ can always be reduced, we can apply \cref{chan-cons} to get the conclusion.
    \item \tsc{T-Field}
        \begin{align}
            \derive
                {\typesTo{\emptyset}{t_r}{\curl{f_i : T_i}_{i \in I}} \quad k \in I}
                {\typesTo{\emptyset}{t_r.f_k}{T_k}}
        \end{align}
        where $t = t_r.f_k \quad T_k$\\
        By the I.H., either $t_r$ is a value (all fields are values) or it reduces to some $t_r'$
        (at least one field is not a value).\\
        If $t_r$ is not a value, we can apply \tsc{E-$\mbC$} to get a reduction of $t = \mbC[t_r]$, where
        \begin{equation}
            \mbC\ =\ \hole.f_k
        \end{equation}
        If $t_0$ is a value, $t$ can be reduced with the \tsc{E-Field} rule.\\
        Since $t$ can always be reduced, we can apply \cref{chan-cons} to obtain the conclusion.
    \item \tsc{T-Match}
        \begin{align}
            \derive
                {\typesTo{\emptyset}{t_s}{T_s} \quad \forall i \in I : \typesTo{x_i : T_i}{t_i}{T_i} \quad \subtypesTo{\Gamma}{T_s}{\cup_{i \in I} T_i}}
                {\typesTo{\emptyset}{\match{t_s}{x_i : T_i \Rightarrow t_i}{i \in I}}{\match{T_s}{T_i \Rightarrow T_i'}{i \in I}}}
        \end{align}
        where $t = \match{t_s}{x_i : T_i \Rightarrow t_i}{i \in I} \quad T = \match{T_s}{T_i \Rightarrow T_i'}{i \in I}$\\
        By the I.H., either $t_s$ is a value or it reduces to some $t_s'$.\\
        If $t_s$ is not a value, we can apply \tsc{E-$\mbC$} to get a reduction of $t = \mbC[t_s]$, where
        \begin{equation}
            \mbC\ =\ \match{\hole}{x_i : T_i \Rightarrow t_i}{i \in I}
        \end{equation}
        If $t_s$ is a value ($v_s$), the only other rule that could potentially be applied is \tsc{E-Match}.
        We now show that this rule \textit{can} in fact be applied.\\
        First, from \cref{minimal-types} and we have that
        \begin{equation}
            \subtypesTo{\emptyset}{\singleton{v_s}}{T_s} \label{v-sub-Ts:prog-match}
        \end{equation}
        From the premises and \tsc{ST-Trans}:
        \begin{equation}
            \subtypesTo{\emptyset}{\singleton{v_s}}{\cup_{i \in I} T_i} \label{v-sub-UTi:prog-match}
        \end{equation}
        The subtyping derivation of \cref{v-sub-UTi:prog-match} must have used \tsc{ST-$\cup$R},
        so we must be able to go up the derivation tree and find at least one of the unioned types which alone is a supertype of $\singleton{v_s}$.
        We pick the supertype $T_k$ such that there is no $j < k, j \in I$ for which $T_j$ is a supertype of $T_s$. Now, we have that
        \begin{equation}
            \subtypesTo{\emptyset}{\singleton{v_s}}{T_k} \label{v-sub-Tk:prog-match}
        \end{equation}
        and
        \begin{align}
            \forall j \in I : j < k \Rightarrow \subtypesTo{\emptyset\not}{\singleton{v_s}}{T_j}
        \end{align}
        Now, we can apply \tsc{E-Match} to achieve the reduction.\\
        Since we have now shown that $t$ can always be reduced, we can apply \cref{chan-cons} to obtain the desired conclusion.
    \item \tsc{T-Sub}
        \begin{align}
            \derive
                {\typesTo{\emptyset}{t}{T'} \quad \subtypesTo{\emptyset}{T'}{T}}
                {\typesTo{\emptyset}{t}{T}}
        \end{align}
        Since the derivation tree is finite, we can apply the I.H. to the derivation $\typesTo{\emptyset}{t}{T'}$.
        This tells us that either $t$ is a value or it can be reduced to some $t'$, which is what we want to show.
    \item \tsc{T-OpC}
        \begin{align}
            \derive
                {\typesTo{\emptyset}{a}{\tx{ServerRef}[T_m,T_a,T_p]}}
                {\typesTo{\emptyset}{\ttt{Connect}(a)}{\tx{Chan}[T_m,T_a,T_p]}}
        \end{align}
        where $t = \ttt{Connect}(a) \quad T = \tx{Chan}[T_m,T_a,T_p]$\\
        By definition, $a$ must be a value (or a variable, but this would immediately make the premise false).
        $\ttt{Connect}(a)$ is not a value, so there must be a reduction to some $t'$.
        Since ${\typesTo{\emptyset}{a}{\tx{ServerRef}[T_m,T_a,T_p]}}$, we have ${a \in \tx{ServerRef}[T_m,T_a,T_p]}$ by definition.
        Also, from \cref{term-pres}, we have that ${\typesTo{\emptyset}{t'}{\tx{Chan}[T_m,T_a,T_p]}}$, which again gives us that ${t' \in \tx{Chan}[T_m,T_a,T_p]}$.
        Now, we can use \tsc{E-Connect} for the reduction $t \reduce{\tx{connect}(a)\ =\ s} t'$, which allows us to use \cref{chan-cons} to get the conclusion.
    \item \tsc{T-OpR}
        \begin{equation}
            \derive
                {\typesTo{\emptyset}{s}{\tx{Chan}[T_m,T_a,T_p]} \quad \typesTo{\emptyset}{v}{\PPPPEntity\ T_m\ T_a\ T_p\ X_n\ X_a}}
                {\typesTo{\emptyset}{\ttt{Read}(s,v)}{[\PPPPEntity\ T_m\ T_a\ T_p\ X_n\ X_a]}}
        \end{equation}
        where $t = \ttt{Read}(s,v) \quad T = [\PPPPEntity\ T_m\ T_a\ T_p\ X_n\ X_a]$\\
        By definition, $s$ and $v$ are both values, but $\ttt{Read}(s,v)$ is not a value, so there must be a reduction to some $t'$.
        Since $\typesTo{\emptyset}{s}{\tx{Chan}[T_m,T_a,T_p]}$, we have ${s \in \tx{Chan}[T_m,T_a,T_p]}$ by definition.\\
        The same applies for ${\typesTo{\emptyset}{v}{\PPPPEntity\ T_m\ T_a\ T_p\ X_n\ X_a}}$ and\\
        ${v \in \PPPPEntity\ T_m\ T_a\ T_p\ X_n\ X_a}$.
        Now, the only applicable reduction is \tsc{E-Read}, which gives us the reduction ${t \reduceTwo{\tx{read}(s,v)}{r}}$.
        We then use \cref{chan-cons} to get the desired result.
    \item \tsc{T-OpI}
        \begin{equation}
            \derive
                {\typesTo{\emptyset}{s}{\tx{Chan}[T_m,T_a,T_p]} \quad \typesTo{\emptyset}{v}{\PPPPEntity\ T_m\ T_a\ T_p\ X_n\ X_a}}
                {\typesTo{\emptyset}{\ttt{Insert}(s,v)}{\tx{Bool}}}
        \end{equation}
        where $t = \ttt{Insert}(s,v) \quad T = \tx{Unit}$\\
        By definition, all of the arguments must be values (otherwise the premise is false).
        $\ttt{Insert}(s,v)$ is not a value, so there must be a reduction to some $t'$.
        The only applicable reduction rule is \tsc{E-Insert}, so we use it to obtain the reduction ${t \reduce{\tx{insert}(s,v)\ =\ \unit} t'}$.
        We then use \cref{chan-cons} to get the conclusion.
    \item \tsc{T-OpM}
        \begin{align}
            \derive
                {\typesTo{\emptyset}{s}{\tx{Chan}[T_m,T_a,T_p]} \quad \typesTo{\emptyset}{v}{\PPPPEntity\ T_m\ T_a\ T_p\ X_n\ X_a}}
                {\typesTo{\emptyset}{\ttt{Modify}(s,v)}{\tx{Bool}}}
        \end{align}
        where $t = \ttt{Modify}(s,v) \quad T = \tx{Unit}$\\
        This case is analogous to the one for \tsc{T-OpI}.
    \item \tsc{T-OpD}
        \begin{align}
            \derive
                {\typesTo{\emptyset}{s}{\tx{Chan}[T_m,T_a,T_p]} \quad \typesTo{\emptyset}{v}{\PPPPEntity\ T_m\ T_a\ T_p\ X_n\ X_a}}
                {\typesTo{\emptyset}{\ttt{Delete}(s,v)}{\tx{Bool}}}
        \end{align}
        where $t = \ttt{Delete}(s,v) \quad T = \tx{Unit}$\\
        This case is analogous to the one for \tsc{T-OpI}.
\end{itemize}
\end{proof}

\newcommand{\tm}[1]{#1.\mathit{tm}}
\newcommand{\ta}[1]{#1.\mathit{ta}}
\newcommand{\ap}[1]{#1.\mathit{ap}}
In order to reduce verbosity, we define:
\begin{align*}
    C.\mathit{table\_matches} &= \tm{C}\\
    C.\mathit{table\_actions} &= \ta{C}\\
    C.\mathit{action\_params} &= \ap{C}
\end{align*}

\begin{theorem}[Progress (with explicit invariants)]
    \label{theorem:progress-general}
    Assume $\typesTo{\emptyset}{t}{T}$ and $\S = \langle C, E, a', K \rangle$\\
    and $\forall a \in t : a = a' \wedge a \in \tx{ServerRef}[\encoded{\tm{C}}, \encoded{\ta{C}}, \encoded{\ap{C}}]$\\
    and $\forall s \in t : s \in K \wedge s \in \tx{Chan}[\encoded{\tm{C}}, \encoded{\ta{C}}, \encoded{\ap{C}}]$\\\\
    Then, either $t$ is a value or $t | S \rightarrow t' | S'$ where $\mathcal{S}' = \langle C', E', a'', K' \rangle$\\
    and $\forall a \in t' : a = a'' \wedge a \in \tx{ServerRef}[\encoded{\tm{C}}, \encoded{\ta{C}}, \encoded{\ap{C}}]$\\
    and $\forall s \in t' : s \in K' \wedge s \in \tx{Chan}[\encoded{\tm{C}}, \encoded{\ta{C}}, \encoded{\ap{C}}]$
\end{theorem}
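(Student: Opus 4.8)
The plan is to reduce the statement to \Cref{term-progress} (term progress for \ourFormalLang clients) and \Cref{def:config-enc-equiv} (conformance of well-typed values), combined with a case analysis on the transition that $t$ can perform. Two small observations are used throughout. First, a server transition never alters the server's configuration (all rules in \Cref{fig:p4runtime-server-semantics} leave the first component unchanged), so it suffices to produce an $\mathcal{S}'$ whose configuration is again $C$ (and whose address is again $a'$). Second, the type subscript of a server-address value $a_{T_m,T_a,T_p}$ (resp.\ a channel value $s_{T_m,T_a,T_p}$) is uniquely determined: by rule \textsc{In-ServerRef} (resp.\ \textsc{In-Chan}) such a value belongs to $\tx{ServerRef}[T_m,T_a,T_p]$ (resp.\ $\tx{Chan}[T_m,T_a,T_p]$) and to no other type of that form. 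Hence the hypotheses ``every $a\in t$ lies in $\tx{ServerRef}[\encoded{\tm{C}},\encoded{\ta{C}},\encoded{\ap{C}}]$'' and ``every $s\in t$ lies in $\tx{Chan}[\encoded{\tm{C}},\encoded{\ta{C}},\encoded{\ap{C}}]$'' furnish exactly the (unique) witnesses required by the side conditions of \Cref{term-progress}.

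Applying \Cref{term-progress} to $t$, either $t$ is a value (and we are done), or $t \reduce{\alpha} t'$ where $\alpha$ has one of the three shapes described in that theorem. \textbf{If $\alpha = \tau$:} rule \textsc{Net-$\alpha$} gives $t\mid S \reduce{\tau} t'\mid S$, so take $\mathcal{S}' = S$; \Cref{term-progress} guarantees every $a\in t'$ lies in $t$ and every $s\in t'$ lies in $t$, so both invariants are inherited verbatim. \textbf{If $\alpha$ is a connection label $\reduceTwoLabel{\tx{connect}(a')}{s'}$:} by uniqueness of subscripts, $a' \in \tx{ServerRef}[\encoded{\tm{C}},\encoded{\ta{C}},\encoded{\ap{C}}]$ forces the channel $s'$ produced by \textsc{E-Connect} to lie in $\tx{Chan}[\encoded{\tm{C}},\encoded{\ta{C}},\encoded{\ap{C}}]$; choosing $s'$ also fresh for $K$, rule \textsc{Sv-Connect} fires $\dual{\alpha}$, and \textsc{Net-Comm} yields $t\mid S \reduce{\tau} t'\mid \mathcal{S}'$ with $\mathcal{S}' = \langle C, E, a', K\cup\{s'\}\rangle$. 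The address invariant for $t'$ follows since every $a\in t'$ lies in $t$; the channel invariant follows since \Cref{term-progress} gives every $s\in t'$ either in $t$ (hence in $K\subseteq K'$) or equal to $s'$ (which is in $K'$ with the required type). \textbf{If $\alpha$ is a $\tx{read}/\tx{insert}/\tx{modify}/\tx{delete}$ label on channel $s'$ with argument $v$:} \Cref{term-progress} gives $s'\in t$ --- hence $s'\in K$ by hypothesis --- and $\typesTo{\emptyset}{v}{\PPPPEntity\ T^{s'}_m\ T^{s'}_a\ T^{s'}_p\ X_n\ X_a}$ with $s'\in\tx{Chan}[T^{s'}_m,T^{s'}_a,T^{s'}_p]$; by uniqueness of subscripts and the channel hypothesis, $(T^{s'}_m,T^{s'}_a,T^{s'}_p) = (\encoded{\tm{C}},\encoded{\ta{C}},\encoded{\ap{C}})$, so by \Cref{def:config-enc-equiv} we obtain $\mathit{Conforms}(v,C)$. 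The matching server rule (\textsc{Sv-Read}, \textsc{Sv-Insert}, \textsc{Sv-Modify} or \textsc{Sv-Delete}) is therefore applicable --- using that the internal evaluation predicate always yields some well-typed response --- and \textsc{Net-Comm} gives $t\mid S \reduce{\tau} t'\mid \mathcal{S}'$ with $\mathcal{S}' = \langle C, E', a', K\rangle$ (where $E'=E$ for a read). The address and channel invariants for $t'$ follow from \Cref{term-progress} (every $a$ is in $t$ iff it is in $t'$, and every $s\in t'$ is in $t$) together with $K'=K$.

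\textbf{Main obstacle.} There is no conceptual difficulty, only bookkeeping: the crux is (i) identifying the type subscripts delivered by term progress with the encoded-configuration types $\encoded{\tm{C}},\encoded{\ta{C}},\encoded{\ap{C}}$, which is precisely what the uniqueness of address/channel subscripts buys us; and (ii) in the connection case, re-choosing the freshly generated channel so that it is fresh for both the client and the server, which is legitimate because \textsc{E-Connect} and \textsc{Sv-Connect} are both parametric in the fresh channel. Everything else is a mechanical combination of \Cref{term-progress}, \Cref{def:config-enc-equiv}, and the network rules \textsc{Net-$\alpha$}/\textsc{Net-Comm}, together with the observation that server transitions preserve the configuration $C$ and the address $a'$.
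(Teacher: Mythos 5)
Your proposal is correct and follows essentially the same route as the paper's proof: instantiate the type parameters of \Cref{term-progress} with $\encoded{\tm{C}},\encoded{\ta{C}},\encoded{\ap{C}}$, case-split on the label $\alpha$, pair each client transition with the matching server rule (\textsc{Sv-Connect}, \textsc{Sv-Read}, etc.) via \textsc{Net-$\alpha$}/\textsc{Net-Comm}, and propagate the address/channel invariants using the containment facts that \Cref{term-progress} provides. Your explicit appeal to the uniqueness of address/channel type subscripts and to \Cref{def:config-enc-equiv} for the conformance step only makes precise what the paper's proof states more informally.
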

\begin{proof}
If we take
\begin{align}
    T^a_m, T^s_m &= \encoded{\tm{C}} \nonumber \\
    T^a_a, T^s_a &= \encoded{\ta{C}} \label{encoding:server-prog} \\
    T^a_p, T^s_p &= \encoded{\ap{C}} \nonumber
\end{align}
Then we have that
\begin{gather}
    \forall a \in t : \exists T^a_m, T^a_a, T^a_p : a \in \tx{ServerRef}[T^a_m, T^a_a, T^a_p] \label{a-premise:server-prog}\\
    \forall s \in t : \exists T^s_m, T^s_a, T^s_p : s \in \tx{Chan}[T^s_m, T^s_a, T^s_p] \label{s-premise:server-prog}
\end{gather}
Which means that we can use \cref{term-progress} to show that either $t$ is a value or there is a reduction ${t \reduce{\alpha} t'}$.\\
If $t$ is a value, we are done. Otherwise, we must show that there exists a reduction ${t | S \rightarrow t' | S'}$.
The proof is by cases on the $\alpha$ label of the $t$ reduction:
\begin{itemize}
    \item $\alpha = "\tau"$\\
        where
        \begin{equation}
            \forall a : a \in t' \Rightarrow a \in t \wedge \forall s : s \in t' \Rightarrow s \in t \label{tau:server-prog}
        \end{equation}
        Since the server cannot reduce with label $\dual{\tau}$ by any rule,
        the reduction must have used the \tsc{Net-$\alpha$} rule with an
        internal $\tau$-transition $t \reduce{\tau} t'$ of the client,
        hence we have $S = S'$.
        From \cref{tau:server-prog}, $t'$ does not gain any new $a$ or $s$,
        so the properties in the premise which hold for $t$ must also hold for $t'$ in the conclusion.
    \item $\alpha = "\tx{connect}(a') = s'"$\\
        where
        \begin{align}
            &\forall a \in t' : a \in t \label{conn-1:server-prog}\\
    \wedge\ &\forall s \in t' : (s \in t \vee s = s') \label{conn-2:server-prog}\\
    \wedge\ &s' \in \tx{Chan}[T^{a'}_m, T^{a'}_a, T^{a'}_p] \label{conn-3:server-prog}
        \end{align}
        The only rule that can reduce the server with label $\dual{\tx{connect}(a') = s'}$ is \tsc{Sv-Connect}:
        \begin{equation}
            \derive
                {s' \tx{ is a fresh variable} \quad s' \in \tx{Chan}[\encoded{\tm{C}}, \encoded{\ta{C}}, \encoded{\ap{C}}]}
                {\serverRed{C, E, a', K}{\tx{connect}(a')}{s'}{C, E, a', K \cup \curl{s'}}}
        \end{equation}
        such that $C = C' \quad E = E' \quad a' = a'' \quad K' = K \cup \curl{s'}$\\
        Then, we can conclude:
        \begin{adjustwidth}{-20mm}{20mm}
        \begin{align}
             &\forall a \in t : \exists T^a_m, T^a_a, T^a_p : a \in \tx{ServerRef}[T^a_m, T^a_a, T^a_p]\qquad&\\
\Rightarrow\ &\forall a \in t' : \exists T^a_m, T^a_a, T^a_p : a \in \tx{ServerRef}[T^a_m, T^a_a, T^a_p]\qquad& (\tx{by \cref{conn-1:server-prog}})\\
\Rightarrow\ &\forall a \in t' : a \in \tx{ServerRef}[\encoded{\tm{C}}, \encoded{\ta{C}}, \encoded{\ap{C}}]\qquad& (\tx{by \cref{encoding:server-prog}})\\
\Rightarrow\ &\forall a \in t' : a = a'' \wedge a \in \tx{ServerRef}[\encoded{\tm{C'}}, \encoded{\ta{C'}}, \encoded{\ap{C'}}]\qquad& (\tx{by \tsc{Sv-Connect}})
        \end{align}
        \end{adjustwidth}
        We can also conclude:
        \begin{adjustwidth}{-20mm}{20mm}
        \begin{align}
             &\forall s \in t' : (s \in t \vee s = s')\qquad&\\
\Rightarrow\ &\forall s \in t' : (s \in K \vee s = s')\qquad& \tx{by the theorem premise}\\
\Rightarrow\ &\forall s \in t' : s \in K' \qquad& \tx{by the definition of }K'\\
\Rightarrow\ &\forall s \in t' : s \in K' \wedge s \in \tx{Chan}[\encoded{\tm{C}}, \encoded{\ta{C}}, \encoded{\ap{C}}] \qquad& \tx{by \tsc{Sv-Connect}}\\
\Rightarrow\ &\forall s \in t' : s \in K' \wedge s \in \tx{Chan}[\encoded{\tm{C'}}, \encoded{\ta{C'}}, \encoded{\ap{C'}}] \qquad& \tx{by the definition of }C'\\
        \end{align}
        \end{adjustwidth}
    \item $\alpha = "\tx{read}(s',v) = r"$\\
        where
        \begin{align}
            &\typesTo{\emptyset}{v}{\PPPPEntity\ T^{s'}_m\ T^{s'}_a\ T^{s'}_p\ X_n\ X_a} \label{read-1:server-prog}\\
    \wedge\ &s' \in t \label{read-0:server-prog}\\
    \wedge\ &\forall a : a \in t \Leftrightarrow a \in t' \label{read-2:server-prog}\\
    \wedge\ &\forall s \in t' : s \in t \label{read-3:server-prog}
        \end{align}
        The only rule that can reduce the server with label $\dual{\tx{read}(s', v) = r}$ is \tsc{Sv-Read}:
        \begin{equation}
            \derive
                {s' \in K \quad \mathit{Conforms}(v,C) \quad \serverRedInt{C}{E}{\tx{read}(v)}{v'}}
                {\serverRed{C, E, a', K}{\tx{read}(s',v)}{v'}{C, E, a', K}}
        \end{equation}
        such that $S = S'$\\
        From \cref{read-0:server-prog} and the premises of the theorem, we have $s' \in K$.\\
        Then, we also have that $s \in \tx{Chan}[\encoded{\tm{C}}, \encoded{\ta{C}}, \encoded{\ap{C}}]$,
        which means that\\
        $\typesTo{\emptyset}{v}{\PPPPEntity\ \encoded{\tm{C}}, \encoded{\ta{C}}, \encoded{\ap{C}}\ X_n\ X_a}$.
        From \cref{def:p4runtime-client-semantics}, we have\\
        $v \in \PPPPEntity\ \encoded{\tm{C}}, \encoded{\ta{C}}, \encoded{\ap{C}}\ X_n\ X_a$,
        which is the same as $\mathit{Conforms}(v,C)$.\\
        $\serverRedInt{C}{E}{\tx{read}(v)}{v'}$ always yields a well-typed response,
        so this premise always holds.\\
        \todoin{Proof wording}{Does the reasoning above look OK?}
        Then, we can conclude:
        \begin{adjustwidth}{-20mm}{20mm}
        \begin{align}
             &\forall a \in t : \exists T^a_m, T^a_a, T^a_p : a \in \tx{ServerRef}[T^a_m, T^a_a, T^a_p]\qquad&\\
\Rightarrow\ &\forall a \in t' : \exists T^a_m, T^a_a, T^a_p : a \in \tx{ServerRef}[T^a_m, T^a_a, T^a_p]\qquad& (\tx{by \cref{read-2:server-prog}})\\
\Rightarrow\ &\forall a \in t' : a \in \tx{ServerRef}[\encoded{C.m_m}, \encoded{C.m_a}, \encoded{C.m_p}]\qquad& (\tx{by \cref{encoding:server-prog}})\\
\Rightarrow\ &\forall a \in t' : a = a'' \wedge a \in \tx{ServerRef}[\encoded{C'.m_m}, \encoded{C'.m_a}, \encoded{C'.m_p}]\qquad& (\tx{by \tsc{Sv-Read}})
        \end{align}
        \end{adjustwidth}
        We can also conclude:
        \begin{adjustwidth}{-20mm}{20mm}
        \begin{align}
             &\forall s \in t' : s \in t \qquad&\\
\Rightarrow\ &\forall s \in t' : s \in K \wedge s \in \tx{Chan}[,\encoded{C.m_m}, \encoded{C.m_a}, \encoded{C.m_p}] \qquad& \tx{by the theorem premise}\\
\Rightarrow\ &\forall s \in t' : s \in K' \wedge s \in \tx{Chan}[\encoded{C'.m_m}, \encoded{C'.m_a}, \encoded{C'.m_p}] \qquad& \tx{by the definition of }S'\\
        \end{align}
        \end{adjustwidth}
    \item $\alpha = "\tx{insert}(s',v) = b"$\\
        Analogous to the case for $\tx{read}(s',v) = r$, only using the \tsc{Sv-Insert} rule.
    \item $\alpha = "\tx{modify}(s',v) = b"$\\
        Analogous to the case for $\tx{read}(s',v) = r$, only using the \tsc{Sv-Modify} rule.
    \item $\alpha = "\tx{delete}(s',v) = b"$\\
        Analogous to the case for $\tx{read}(s',v) = r$, only using the \tsc{Sv-Delete} rule.
\end{itemize}
\end{proof}

\progress*
\begin{proof}
  Direct consequence of \Cref{theorem:progress-general} (which embeds in its
  statements the conditions of \Cref{def:well-typed-network}),
  \Cref{lem:network-semantics-comp} and \Cref{def:network-congruence}.
  \notein{Note}{More details would not hurt!}
\end{proof}


\end{document}